\newtheorem{definition}{Definition}
\newtheorem{proposition}[definition]{Proposition}
\newtheorem{theorem}[definition]{Theorem}
\newtheorem{corollary}[definition]{Corollary}
\newtheorem{lemma}[definition]{Lemma}
\newtheorem{remark}[definition]{Remark}
\definecolor{shadecolor}{gray}{0.95}
\newcommand{\bcl}{\xi}
\newcommand{\pccsn}[1]{\pccs_{#1}}
\newcommand{\prov}{\stackrel{\et}{\sim}}
\newcommand{\termmon}{\mathcal T_{\!/\et}}
\newcommand*{\ntodo}[1]{\hspace{1pt}\newline\todo[color=green,inline]{#1}}
\newcommand{\distr}{\varphi}
\newcommand{\ddistr}{\Phi}
\newcommand{\distrb}{\psi}
\newcommand{\distrc}{\theta}
\newcommand{\R}{\mathrel{\mathcal R}}
\newcommand{\supp}[1]{\stackrel{#1}{\mapsto}}
\newcommand{\terms}[2]{\mathcal T(#1,#2)}
\newcommand{\ics}{\kappa}
\newcommand{\alga}{\mathbb A}
\newcommand{\qet}{\mathtt{QTh}}
\newcommand{\et}{\mathtt{Th}}
\newcommand{\sigcs}{\Sigma_{CS}}
\newcommand{\sigpcs}{\Sigma_{CS}^{\star}}
\newcommand{\point}{\star}
\newcommand{\bpoint}{\ast}
\newcommand{\pplus}[1]{+_{#1}}
\DeclareMathOperator*{\bigplus}{\scalerel*{+}{\textstyle{\sum}}}
\newcommand{\pccs}{Proc}
\newcommand{\ndplusccs}{\mathrel{\overline\oplus}}
\newcommand{\pplusccs}[1]{\mathrel{\overline{+}_{#1}}}
\newcommand{\dirac}[1]{\delta_{#1}}
\newcommand{\haus}{H}
\newcommand{\kant}{K}
\newcommand{\hk}{\haus\kant}
\newcommand{\fpset}{\mathcal P}
\newcommand{\pset}{\mathcal P}
\newcommand{\dset}{\mathcal D}
\newcommand{\cset}{\mathcal C}
\newcommand{\oneset}{\mathbf{1}}
\newcommand{\moncb}{\cset+\oneset}
\newcommand{\moncbp}[1]{\cset(#1)+\oneset}
\newcommand{\moncd}{\cset^{\downarrow}}
\newcommand{\moncdp}[1]{\moncd(#1)}
\newcommand{\lcset}{\hat{\cset}}
\newcommand{\lpset}{\hat{\pset}}
\newcommand{\ldset}{\hat{\dset}}
\newcommand{\lmoncd}{\widehat\moncd}
\newcommand{\ldownarrow}{\widehat{\cset^\downarrow}}
\newcommand{\onemet}{\hat\oneset}
\newcommand{\donemet}{d_{\hat\oneset}}
\newcommand{\EM}{\mathbf E\mathbf M}
\newcommand{\Cat}{\mathbf C}
\newcommand{\Sets}{\mathbf{Set}}
\newcommand{\Met}{\mathbf{1Met}}
\newcommand{\wms}{\textnormal{\texttt{WMS}}}
\newcommand{\kompactd}[2]{\mathtt{Comp}(#1,#2)}
\newcommand{\conv}{cc}
\newcommand{\support}{supp}
\newcommand{\mon}{M}
\newcommand{\nil}{\mathbf{nil}}
\newcommand{\trans}{\rightarrowtail}
\newcommand{\onehalf}{\frac 1 2}
\newcommand{\ema}{Eilenberg-Moore\ }
\newcommand{\ub}{\mathtt{UB}} 
\newcommand{\acat}{\mathbf A}
\newcommand{\qacat}{\mathbf {QA}}
\newcommand{\cplus}{\oplus} 
\newcommand{\bigcplus}{\bigoplus} 
\newcommand{\bigpplus}{\bigplus} 
\newcommand{\etsl}{\et_{SL}}
\newcommand{\etca}{\et_{CA}}
\newcommand{\etcs}{\et_{CS}}
\newcommand{\etpcs}{\et_{CS}^{\star}}
\newcommand{\etcsbb}{\et_{CS}^{\bot,BH}}
\newcommand{\etcsb}{\et_{CS}^\bot}
\newcommand{\qetcs}{\qet_{CS}}
\newcommand{\qetpcs}{\qet_{CS}^{\star}}
\newcommand{\qetcsbb}{\qet_{CS}^{\bot, BH}}
\newcommand{\qetcsb}{\qet_{CS}^{\bot}}
\newcommand{\qbot}{\bot_Q}
\newcommand{\qbh}{BH_Q}
\newcommand{\vvcut}[1]{}
\newcommand{\finaltodo}[1]{}
\newcommand{\inl}{\mathsf{inl}}
\newcommand{\inr}{\mathsf{inr}}
\newcommand{\id}{\mathrm{id}}
\newcommand{\one}{\mathds{1}}
\newcommand{\cc}[1]{cc\left\{ #1 \right\}}
\newcommand{\mPne}{\mathcal{P}_{\text{ne}}}
\newcommand{\vvm}
{}
\begin{document}
%
\title{Combining Nondeterminism, Probability, and Termination: Equational and Metric Reasoning}



\author{\IEEEauthorblockN{Matteo Mio}
\IEEEauthorblockA{CNRS \& LIP, ENS Lyon, 
France}
\and
\IEEEauthorblockN{Ralph Sarkis}
\IEEEauthorblockA{ENS Lyon, 
France}
\and
\IEEEauthorblockN{Valeria Vignudelli}
\IEEEauthorblockA{CNRS \& LIP, ENS Lyon, 
France}
\thanks{This is an extended version, with appendix, of a paper accepted at LICS 2021.}
}



%



\IEEEoverridecommandlockouts

\maketitle

\begin{abstract}
We study monads resulting from the combination of nondeterministic and probabilistic behaviour with the possibility of termination, which is essential in program semantics. Our main contributions are presentation results for the monads, providing  
equational reasoning tools for establishing equivalences and distances of programs.
\end{abstract}


%
\IEEEpeerreviewmaketitle

\section{Introduction}\label{introduction:section}


In the theory of programming languages the categorical concept of \emph{monad} is used to handle computational effects \cite{Moggi-89,Moggi-91}. As main examples, the \emph{non--empty finite powerset monad} ($\pset:\Sets\rightarrow\Sets$) and the \emph{finitely supported probability distribution monad} ($\dset:\Sets\rightarrow\Sets$) are used to handle nondeterministic and probabilistic behaviours, respectively. The \emph{non--empty convex sets of probability distributions monad} ($\cset:\Sets\rightarrow\Sets$) has been identified in several works (see, e.g., \cite{DBLP:journals/entcs/TixKP09a,DBLP:conf/fossacs/Goubault-Larrecq08a,DBLP:conf/csl/Goubault-Larrecq07,DBLP:journals/entcs/Mislove06,Mislove00,Jacobs08,DBLP:conf/fossacs/Mio14,BSV19,DBLP:conf/lics/0003P20,MV20}) 
as a convenient way to handle the combination of nondeterminism and probability. Liftings of these monads to the category of (1--bounded) metric spaces have been investigated using the technical machinery of Hausdorff and Kantorovich metric liftings:  ($\lpset:\Met\rightarrow\Met$),   ($\ldset:\Met\rightarrow\Met$) and more recently ($\lcset:\Met\rightarrow\Met$)  \cite{breugel2005,BaldanBKK18,MV20}. The category $\Met$  is a natural setting when it is desirable to switch from the concept of \emph{program equivalence} to that of \emph{program distance}.

Monads are tightly connected with equational theories. Mathematically, this connection emerges from the categorical 
notion of Eilenberg-Moore (EM) algebras. 
For every monad $M$ there is an associated category $\EM(M)$ of Eilenberg-Moore algebras for $M$ and, in many interesting cases, this can be presented by (i.e., proved isomorphic to) a well--known category of algebras (in the standard sense of universal algebra, i.e., models of an equational theory and their homomorphisms). For instance, $\EM(\pset)$ is isomorphic to the category of semilattices and semilattice--homomorphisms.  This is the mathematical fact underlying the ubiquity of semilattices in mathematical treatments of nondeterminism and is the basis of several advanced techniques for reasoning about nondeterministic programs (e.g., \emph{bisimulation up--to techniques}   \cite{PS11,DBLP:journals/acta/BonchiPPR17,BonchiKP18}.) 
Other important examples include the presentations of the monads $\dset$ and $\cset$ by the equational theories of 
convex (a.k.a. barycentric) algebras  \cite{swirszcz:1974,Doberkat0608,jacobs:2010} and 
convex semilattices \cite{BSV19,BSV20ar}, respectively.
 Recently, presentation results have been obtained also for the $\Met$ variants of these monads, $\lpset$, $\ldset$ and $\lcset$, 
 using the framework of quantitative algebras and quantitative equational theories of 
 \cite{radu2016,DBLP:conf/lics/MardarePP17,DBLP:conf/lics/BacciMPP18,BacciBLM18,DBLP:journals/entcs/Bacci0LM18}.
These three 
$\Met$ 
monads 
are presented by the quantitative equational theories of quantitative semilattices \cite{radu2016}, quantitative convex algebras (referred to as  barycentric algebras in \cite[\S 8]{radu2016}) and quantitative convex semilattices \cite{MV20}, respectively. 

These presentation results provide equational methods for reasoning about equivalences and distances of programs whose semantics is modelled as a transition system (i.e., a coalgebra) of type $\textnormal{States} \rightarrow F(\textnormal{States})
$, for $F\in\{\pset,\dset,\cset,\lpset,\ldset,\lcset  \}$.\footnote{Or, if labels $L$ are considered, systems of type $\textnormal{States} \rightarrow \big(F(\textnormal{States})\big)^L$.} 
 However these functors may not be appropriate for all modelling purposes. Indeed, for all six functors above, the final $F$--coalgebra has the singleton set as carrier, which means that all states of an $F$--coalgebra are behaviourally equivalent. Usually, what is needed is some kind of \emph{behavioural observation} such as a termination state. This is generally achieved by using the functor $F+\oneset$ (where $+$ and $\oneset$ are the coproduct and the terminal object, respectively): a state can either transition to $F(\textnormal{States})$ or terminate by reaching $\oneset$. 
Even if the functor $F$ carries a monad structure and 
the functor $F +\mathbf{1}$ is similar to $F$,  separate work is needed to answer questions such as: is there a monad having $F +\mathbf{1}$ as underlying functor? How is this related to  the monad $F$? What is its presentation? For some specific cases the answers are well--known. 
For instance, the functor $\pset +\mathbf{1}$ (possibly empty finite powerset) carries a monad structure which is presented by semilattices with bottom (i.e., semilattices with a designated element $\point$, representing termination, 
such that $x\oplus \star = x$).  

\textbf{Contributions.}
\begin{enumerate}
\item We describe in Section \ref{sec:cplusone} a $\Sets$ monad whose underlying functor is $\cset +\oneset$ (possibly empty convex sets of probability distributions) and prove that it is presented by the theory of convex semilattices extended with the 
bottom axiom $x\oplus \star = x$ and the black--hole axiom $x +_p \point = \point$ (see \cite{MOW03,SW2018}). 
Transition systems of type $\cset+\oneset$ are well--known in the literature as (simple) convex Segala systems \cite{Seg95:thesis,BSV04:tcs,Sokolova11} 
and  are widely used to model the semantics of nondeterministic and probabilistic programs. Hence, this result provides equational reasoning methods for an important class of systems.
\item The black--hole axiom annihilates probabilistic termination, thus it is not appropriate in all modelling situations.
So, we investigate in Section \ref{sec:cdownarrow} a monad $\cset^\downarrow$ presented by the weaker theory of convex semilattices with bottom (but without black--hole). This equational theory has already found applications in the study of trace semantics of nondeterministic and probabilistic programs in \cite{BSV19}. 
\item In an attempt to find a $\Met$ monad structure $M$ on the Hausdorff--Kantorovich metric lifting of the $\Sets$ functor $\cset+\oneset$, we prove 
in Section \ref{sec:met:cplusone} some negative results. First, no such $M$ exists having as multiplication the same operation of the $\Sets$ monad $\cset + \oneset$. Secondly, $M$ cannot be presented by the quantitative equational theory of convex semilattices with bottom and black--hole, since this theory is trivial.
\item In Section \ref{sec:met:cdownarrow}, we identify the $\Met$ monad $\lmoncd$ which is the Hausdorff--Kantorovich metric lifting of the $\Sets$ monad $\cset^\downarrow$ of point (2). We exhibit a presentation of this monad via the quantitative equational theory of convex semilattices with bottom.
\end{enumerate}
We conclude with some examples of applications of our results to program equivalences and distances in Section \ref{examples:section}. 
Full proofs of the results presented in this paper and additional background material are available in the Appendix.

\section{Background}\label{sec:back}



We present some definitions and results regarding monads. We assume the reader is familiar with basic concepts of category theory (see, e.g., \cite{Awodey}). 
Facts easily derivable from known results in the literature are systematically marked as ``Proposition'' throughout the paper.

\begin{definition}[Monad]\label{monad:main_definition}
Given a category $\Cat$, a monad on $\Cat$ is a triple $(\mon, \eta, \mu)$ composed of a functor $\mon\colon\Cat \rightarrow \Cat$ together with two
natural transformations: a unit $\eta\colon id_{\Cat}
\Rightarrow \mon$, where $id_{\Cat}$ is the identity functor on $\Cat$, and a multiplication $\mu \colon \mon^{2} \Rightarrow
\mon$, satisfying  
$\mu \circ \eta\mon = \mu \circ \mon\eta = id_{\Cat} $ and 
$  \mu\circ \mon\mu = \mu \circ\mu\mon$.
\end{definition}

If $\Cat$ has coproducts, $A_1,A_2,B\in\Cat$, $f_1:A_1\rightarrow B$ and $f_2:A_2\rightarrow B$,  we denote with $[f_1,f_2]:A_1+A_2\rightarrow B$ the unique morphism such that $f_{1}=[f_{1},f_{2}] \circ \inl$ and $f_{2}=[f_{1},f_{2}] \circ \inr$, where $\inl\!:\! A_1\! \rightarrow\! A_1 + A_2$ and  $\inr\!:\! A_2 \!\rightarrow\! A_1 + A_2$ are the canonical injections. We denote with $\oneset_\Cat$ the terminal object of $\Cat$, if it exists.



\begin{proposition}\label{prop:terminationcat}
Let $\Cat$ be a category having coproducts and a terminal object. The $\Cat$ monad $+\oneset$ is defined as the triple $(\cdot + \mathbf{1}_{\Cat}, \eta^{+ \mathbf{1}_{\Cat}},\mu^{+ \mathbf{1}_{\Cat}})$ whose functor $(\cdot + \mathbf{1}_{\Cat})$ is defined on objects as $A\mapsto  A+\mathbf{1}_{\Cat}$ and on arrows as $f\mapsto [\inl\circ f,\inr]$, with unit $\eta^{+ \mathbf{1}_{\Cat}}=\inl$ and with multiplication $\mu^{+ \mathbf{1}_{\Cat}}= [[\inl,\inr],\inr]$.
\end{proposition}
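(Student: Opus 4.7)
My plan is to verify directly that the triple $(\mon,\eta,\mu)$ with $\mon = \cdot+\mathbf{1}_\Cat$, $\eta_A = \inl_A$, and $\mu_A = [[\inl_A,\inr_A],\inr_A]$ fulfils all the conditions in Definition~\ref{monad:main_definition}. Every calculation reduces to two elementary coproduct rules, $[f,g]\circ \inl = f$ and $[f,g]\circ \inr = g$, together with the universal-property principle that a morphism out of $X+Y$ is determined by its precompositions with $\inl$ and $\inr$. I will therefore argue each equation by testing it on the relevant coproduct injections.

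First, I would check that $\mon$ is a functor. For $f\colon A\to B$ and $g\colon B\to C$, the equation $\mon(g\circ f)=\mon g\circ \mon f$ is obtained by composing both sides with $\inl_A$ (yielding $\inl_C\circ g\circ f$) and with $\inr_A$ (yielding $\inr_C$). Preservation of identities is immediate, since $\mon\id_A=[\inl_A,\inr_A]=\id_{A+\mathbf{1}_\Cat}$. Naturality of $\eta=\inl$ is also direct: $\mon f\circ \inl_A=[\inl_B\circ f,\inr_B]\circ \inl_A=\inl_B\circ f=\eta_B\circ f$.

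Next, I would establish naturality of $\mu$. Unfolding $\mon^{2} f=[\inl_{\mon B}\circ \mon f,\inr_{\mon B}]$, I test $\mu_B\circ \mon^{2} f$ and $\mon f\circ \mu_A$ against each of the two injections $\inl_{\mon A},\inr_{\mon A}$ of $\mon^{2} A$; both precompositions collapse to $\mon f$ and $\inr_B$ respectively, so the two morphisms coincide. For the monad laws, the left unit law reads $[[\inl,\inr],\inr]\circ\inl_{\mon A} = [\inl_A,\inr_A] = \id_{\mon A}$. The right unit law requires unfolding $\mon\eta_A = [\inl_{\mon A}\circ \inl_A,\inr_{\mon A}]$; composing with $\mu_A$ and testing against the two injections of $\mon A=A+\mathbf{1}_\Cat$ gives back $\inl_A$ and $\inr_A$, hence again $\id_{\mon A}$. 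Associativity $\mu\circ \mon\mu = \mu\circ \mu\mon$ follows by the same technique on the three injections of the triple coproduct $((A+\mathbf{1}_\Cat)+\mathbf{1}_\Cat)+\mathbf{1}_\Cat$; both composites reduce to $[[\id_{\mon A},\inr_A],\inr_A]$.

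There is no substantial obstacle here: the construction is the well-known ``maybe/exception'' monad transported from $\Sets$ to any category with coproducts and a terminal object, and the verification is pure coproduct bookkeeping. The only point requiring mild care is distinguishing, in $\mon^{2}$ and $\mon^{3}$, which copy of $\inl$ or $\inr$ (landing into which nested coproduct) is meant at each step.
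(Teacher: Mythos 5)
Your verification is correct: each monad law for $\cdot+\oneset_{\Cat}$ reduces, exactly as you describe, to testing both sides of an equation between morphisms out of a coproduct against the canonical injections, and all the computations you outline check out. The paper offers no explicit proof of this proposition (it is flagged as easily derivable from known results), and your direct coproduct-bookkeeping argument is precisely the routine verification it implicitly relies on.
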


Monads can be combined together using the notion of monad distributive law.

\begin{definition}[Monad distributive law]\label{def:monadlaw}
    Let $(M, \eta, \mu)$ and $(\widehat{M}, \widehat{\eta}, \widehat{\mu})$ be two monads on $\Cat$. A natural transformation $\lambda: M \widehat{M}\Rightarrow \widehat{M}M$ is called a \emph{monad distributive law of $M$ over  $\widehat{M}$} if it satisfies the equations
    $\lambda \circ M\widehat\eta=\widehat\eta M$, $\lambda \circ \eta \widehat M=\widehat M \eta $, $\lambda \circ \mu \widehat M= \widehat M \mu \circ \lambda M \circ M \lambda$ and $\lambda \circ M \widehat \mu= \widehat \mu M \circ \widehat M \lambda \circ \lambda \widehat M$.
\end{definition}

\begin{proposition}\label{prop-distlawcomposition}
If $\lambda: M \widehat{M} \Rightarrow \widehat{M}M$ is a monad distributive law, then $(\overline{M},\overline{\eta}, \overline{\mu})$ is a monad with $\overline M=\widehat{M}M$, $\overline{\eta} = \widehat{\eta} \diamond \eta$ and $\overline{\mu} = (\widehat{\mu} \diamond \mu) \circ \widehat{M}\lambda M$.\footnote{\label{footnotediamond} For any pair of natural transformations $f:F_{1}\Rightarrow F_{2}$ and $g:G_{1}\Rightarrow G_{2}$, we let $f \diamond g =  G_{2} f\circ g F_{1}=g F_{2} \circ G_{1} f$ (\cite{Riehl}, Lemma 1.4.7).}
\end{proposition}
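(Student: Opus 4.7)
The plan is to prove this classical result (due to Beck) by verifying the three axioms of a monad for $(\overline M, \overline\eta, \overline\mu)$. The ingredients available are: the monad laws for $M$ and $\widehat M$; the naturality of $\eta, \mu, \widehat\eta, \widehat\mu, \lambda$; and the four equations imposed on the distributive law $\lambda$.

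First I would unfold the abbreviations using the definition of $\diamond$ in the footnote: $\overline\eta = \widehat M\eta \circ \widehat\eta = \widehat\eta M \circ \eta$, and $\overline\mu = \widehat\mu M \circ \widehat M\widehat M \mu \circ \widehat M \lambda M = \widehat M \mu \circ \widehat\mu MM \circ \widehat M \lambda M$. Naturality of $\overline\eta$ and $\overline\mu$ is then immediate from the naturality of the building blocks, since horizontal/vertical composites of natural transformations are natural.

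Next I would establish the two unit laws $\overline\mu \circ \overline\eta \overline M = \id_{\overline M} = \overline\mu \circ \overline M \overline\eta$. Expanding $\overline\eta \overline M = \widehat\eta M \widehat M M \circ \eta \widehat M M$, the first unit equation of $\lambda$ ($\lambda \circ \eta \widehat M = \widehat M \eta$) absorbs the $\eta\widehat M$ factor into the $\widehat M \lambda M$ block, and the monad unit laws for $M$ and $\widehat M$ then collapse the remaining composite to the identity; the right unit law is symmetric, using $\lambda \circ M \widehat\eta = \widehat\eta M$.

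The hard part will be associativity: $\overline\mu \circ \overline\mu \overline M = \overline\mu \circ \overline M \overline\mu$, which lives on $\widehat M M \widehat M M \widehat M M$. Both sides expand into long composites of six factors. I would prove it by a single diagram chase in which $\lambda$ is pushed past pairs of $\mu$'s and $\widehat\mu$'s using the two ``pentagon'' equations $\lambda \circ \mu \widehat M = \widehat M \mu \circ \lambda M \circ M \lambda$ and $\lambda \circ M \widehat\mu = \widehat\mu M \circ \widehat M\lambda \circ \lambda\widehat M$, interleaved with naturality squares for $\mu, \widehat\mu, \lambda$ (to move whiskered natural transformations past each other) and the associativities of $M$ and $\widehat M$. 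Concretely, I would start from $\overline\mu \circ \overline M\overline\mu$, apply the $\lambda \circ M\widehat\mu$ equation to rewrite the inner $\widehat M(\widehat M\lambda M)\widehat M M$ block, then use naturality of $\widehat\mu$ and $\mu$ and the associativities of the two monads to regroup factors until the expression matches the expansion of $\overline\mu \circ \overline\mu\overline M$ (which, by the $\lambda \circ \mu\widehat M$ equation, is rewritten analogously). The two rewriting sequences meet in the middle.

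The main obstacle is purely notational: keeping track of which of the six $\widehat M$'s or $M$'s each natural transformation is being whiskered against. I would either present the argument as a commutative diagram on $\widehat M M \widehat M M \widehat M M \to \widehat M M$ decomposed into hexagonal/pentagonal cells (one per axiom invoked), or use 2-categorical string-diagram notation to make the bookkeeping mechanical. Either way, no new idea is needed beyond the four distributive-law equations and the monad laws for $M$ and $\widehat M$; this is why the result is flagged as a ``Proposition'' in the paper and can reasonably be deferred to the appendix.
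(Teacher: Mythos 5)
Your proof is correct and follows the standard route: this is Beck's classical theorem, and the paper (which marks it as a ``Proposition'' precisely because it is a known fact from the literature) gives no independent proof beyond deferring to that same verification of the unit laws via the two unit axioms of $\lambda$ and of associativity via the two multiplication axioms of $\lambda$ interleaved with naturality. Your unfolding of $\overline{\eta}$ and $\overline{\mu}$ matches the intended reading of the $\diamond$ notation, and the outlined diagram chase is exactly the argument the paper relies on.
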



\begin{corollary}\label{eq:iota}
Let $\Cat$ have  coproducts and a terminal object and $M:\Cat\rightarrow \Cat$ be a monad. Then there is a $\Cat$ monad structure $(\mon(+\oneset), \eta^{\mon(+\oneset)}, \mu^{\mon(+\oneset)})$ on the functor $\mon(+\oneset)$, given by Proposition \ref{prop-distlawcomposition} using the monad distributive law $\iota\colon \mon+\oneset \Rightarrow \mon(\cdot+\oneset)$ defined as  $\iota_X =  {[\mon \inl, \eta^{\mon}_{X+\oneset}\circ \inr]}$.
\end{corollary}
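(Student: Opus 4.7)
The plan is to apply Proposition~\ref{prop-distlawcomposition} with the $+\oneset$ monad of Proposition~\ref{prop:terminationcat} playing the role of $M$ in Definition~\ref{def:monadlaw} and the given monad $(\mon,\eta^\mon,\mu^\mon)$ playing the role of $\widehat{M}$. Under this identification a candidate distributive law must have type $(\cdot+\oneset)\mon \Rightarrow \mon(\cdot+\oneset)$, which is exactly the type of $\iota$, and the resulting composite monad has underlying functor $\widehat{M}M = \mon(\cdot+\oneset)$ with unit and multiplication given by the formulae in Proposition~\ref{prop-distlawcomposition}. Hence the only real task is to verify that $\iota$ is a monad distributive law.

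Naturality of $\iota$ follows directly from the universal property of the coproduct: after precomposition with $\inl$ and with $\inr$ separately, the $\inl$-component is $\mon\inl$ (natural because $\inl$ is a natural transformation) and the $\inr$-component is $\eta^\mon_{X+\oneset}\circ \inr$ (natural because $\inr$ is a transformation out of the constant functor $\oneset$ and $\eta^\mon$ is natural).

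Each of the four equations of Definition~\ref{def:monadlaw} has a coproduct (or nested coproduct) as its common domain, so I would reduce each to a small number of equations by precomposing with the canonical injections. The two unit equations become $\iota\circ \inl = \mon\inl$ (immediate from the definition of $\iota$) and $\iota \circ (\cdot+\oneset)(\eta^\mon) = \eta^\mon(\cdot+\oneset)$ (reducing on $\inl$ to naturality of $\eta^\mon$ at $\inl$, and on $\inr$ to the defining $\inr$-component of $\iota$). For the two multiplication equations, each $\inr$-component reduces to an identity on $\oneset$ (sometimes invoking naturality of $\eta^\mon$ or the unit law $\mu^\mon\circ \eta^\mon_{\mon X} = \id_{\mon X}$), and each $\inl$-component reduces to naturality of $\eta^\mon$ or $\mu^\mon$.

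The main obstacle is purely notational. The two multiplication equations involve whiskered composites on nested coproducts such as $(\mon X+\oneset)+\oneset$ or $\mon\mon X+\oneset$, so one must carefully expand the $\diamond$ of Proposition~\ref{prop-distlawcomposition} together with the whiskerings $\mon\iota$, $\iota\mon$ and $\iota(\cdot+\oneset)$ in order to see how each composite injection feeds into each component. Once the diagrams are drawn, however, the verification uses only the universal properties of the coproduct and of the terminal object together with the monad laws for $\mon$.
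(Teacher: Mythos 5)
Your proposal is correct and follows exactly the route the paper intends (and leaves implicit): take $M=(\cdot+\oneset)$ and $\widehat M=\mon$ in Definition~\ref{def:monadlaw}, verify that $\iota$ is a distributive law of the termination monad over $\mon$ by precomposing each axiom with the coproduct injections, and then invoke Proposition~\ref{prop-distlawcomposition}. The injection-by-injection reductions you sketch do close as claimed, using only naturality of $\eta^{\mon}$ and $\mu^{\mon}$ and the unit law $\mu^{\mon}\circ\eta^{\mon}\mon=\id$.
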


A monad $\mon$ has an associated category of $M$--algebras. 
\begin{definition}[$M$--algebras]\label{def:algebra-of-a-monad}
Let $(\mon,\eta,\mu)$ be a monad on $\Cat$. An algebra for $\mon$ (or $\mon$--algebra) is a pair $(A,\alpha)$ where $A\in\Cat$ is an object and $\alpha:\mon (A)\rightarrow A$ is a morphism such that 
(1) $ \alpha \circ  \eta_A = id_A$ and 
(2) $\alpha\circ \mon \alpha= \alpha \circ \mu_A $ hold.
Given two $\mon$--algebras $(A,\alpha)$ and $(A^\prime,\alpha^\prime)$, an \emph{$\mon$--algebra morphism} is an arrow $f:A\rightarrow A^\prime$ in $\Cat$ such that
$
 f\circ \alpha = \alpha^\prime \circ \mon(f)  
$.
The category of $M$--algebras and their morphisms, denoted $\EM(M)$, is called the \ema category for $\mon$. 
\end{definition}

\begin{definition}[Monad map]\label{def:monadmap}
    Let $(\mon , \eta, \mu)$ and $(\widehat{\mon }, \widehat{\eta}, \widehat{\mu})$ be two monads. A natural transformation $\sigma: \mon  \Rightarrow \widehat{\mon }$ is called a \emph{monad map} if it satisfies the laws $\widehat \eta=\sigma \circ \eta$ and $\sigma \circ \mu=\widehat \mu \circ (\sigma \diamond \sigma)$ (see footnote \ref{footnotediamond}).
\end{definition}
\begin{proposition}[Theorem 6.3\cite{TTT}]\label{prop-MonmapFunctor}
    If $\sigma: \mon \Rightarrow \widehat{\mon}$ is a monad map, then $U^{\sigma} = (A, \alpha) \mapsto (A, \alpha \circ \sigma_A)$ is a functor $\EM(\widehat{\mon}) \rightarrow \EM(\mon)$.
\end{proposition}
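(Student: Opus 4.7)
The plan is to verify the three standard things required of a functor: that $U^{\sigma}$ sends $\widehat M$-algebras to $M$-algebras (well-definedness on objects), that it sends $\widehat M$-algebra morphisms to $M$-algebra morphisms (well-definedness on arrows), and that it preserves identities and composition. The last point is immediate because $U^{\sigma}$ acts as the identity on the underlying morphisms, so the work concentrates on the first two.

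First, given an $\widehat M$-algebra $(A,\alpha)$, I would check that $(A, \alpha \circ \sigma_A)$ satisfies the two axioms of Definition~\ref{def:algebra-of-a-monad}. The unit law follows from a two-step calculation: $\alpha \circ \sigma_A \circ \eta_A = \alpha \circ \widehat\eta_A = \id_A$, where the first equality uses the monad map law $\widehat\eta = \sigma \circ \eta$ and the second uses that $(A,\alpha)$ is an $\widehat M$-algebra. The multiplication law is where the bulk of the computation lies. Starting from $(\alpha \circ \sigma_A) \circ M(\alpha \circ \sigma_A) = \alpha \circ \sigma_A \circ M\alpha \circ M\sigma_A$, I would first push $\sigma$ past $M\alpha$ using naturality of $\sigma$ (with $\alpha \colon \widehat M A \to A$) to obtain $\alpha \circ \widehat M\alpha \circ \sigma_{\widehat M A} \circ M\sigma_A$. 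The composite $\sigma_{\widehat M A} \circ M\sigma_A$ is precisely $(\sigma \diamond \sigma)_A$ in the sense of footnote~\ref{footnotediamond}, so after applying the $\widehat M$-algebra law $\alpha \circ \widehat M\alpha = \alpha \circ \widehat\mu_A$ and then the monad map law $\widehat\mu \circ (\sigma \diamond \sigma) = \sigma \circ \mu$, the expression collapses to $\alpha \circ \sigma_A \circ \mu_A$, as required.

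Next, for an $\widehat M$-algebra morphism $f \colon (A,\alpha) \to (A',\alpha')$, I would show that $f$ also serves as an $M$-algebra morphism $(A, \alpha \circ \sigma_A) \to (A', \alpha' \circ \sigma_{A'})$. The required identity $f \circ (\alpha \circ \sigma_A) = (\alpha' \circ \sigma_{A'}) \circ Mf$ reduces by the $\widehat M$-morphism condition to $\alpha' \circ \widehat M f \circ \sigma_A$, and then naturality of $\sigma$ at $f$ converts $\widehat M f \circ \sigma_A$ into $\sigma_{A'} \circ Mf$, closing the diagram. Preservation of identities and composition is automatic since the action on arrows is trivial.

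I do not expect any serious obstacle; the only subtle step is recognising $\sigma_{\widehat M A} \circ M \sigma_A$ as the component of $\sigma \diamond \sigma$ at $A$, using the second of the two equivalent forms given in footnote~\ref{footnotediamond}, so that the monad map axiom on $\mu$ can be invoked. Once that identification is made, the argument is a direct string diagram/equational chase driven entirely by the monad map equations and the $\widehat M$-algebra axioms.
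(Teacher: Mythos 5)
Your argument is correct and complete: the unit law, the multiplication law via naturality of $\sigma$ and the identification $\sigma_{\widehat M A}\circ M\sigma_A=(\sigma\diamond\sigma)_A$, and the morphism condition are exactly the standard verification. The paper itself gives no proof of this proposition --- it simply cites Theorem 6.3 of \cite{TTT} --- and your proof is the standard one found there, so there is nothing to compare beyond noting that you have filled in the details the paper delegates to the reference.
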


\subsection{Monads on $\Sets$ and Equational Theories}
\label{sec:set:back}

In this section, we restrict our attention to monads on the category $\Sets$, which has coproducts  (disjoint unions) and a terminal object (the singleton set $\{\point\}$). Hence, by Proposition \ref{prop:terminationcat} the $+\oneset$ monad, which we refer to as \emph{termination monad} (also known as \emph{maybe}, or \emph{lift}, or \emph{partiality monad} \cite{Moggi-91}), is well defined in $\Sets$. When no confusion arises, we omit explicit mentioning of the injections, and write for example $(f + \oneset)(x)=f(x)$ for $x\in X$ and $(f + \oneset) (\star)= \star$.

We now introduce the $\Sets$ monad $\cset$ of non--empty finitely generated convex sets of finitely supported probability distributions. This requires a number of definitions and notations regarding sets and probability distributions.




A probability distribution (respectively, subdistribution)
on a set $X$ is a function $\distr: X\rightarrow [0,1]$ such that $\sum_{x\in X} \distr(x) = 1$ (respectively,  $\sum_{x\in X} \distr(x)\leq 1$.) The \emph{support} of $\distr$ is defined as  $\support(\distr) = \{ x\in X \mid \distr(x) \neq 0\}$. In this paper, we only consider probability distributions with finite support, so we just refer to them as {distributions}. The Dirac distribution $\dirac x$ is defined as $\dirac x (x^\prime)=1$ if $x^\prime = x$ and $\dirac x (x^\prime)=0$ otherwise. 
We often denote a distribution having $\support(\distr) = \{x_{1},\dots, x_{n}\}$  by the expression $\sum_{i=1}^{n} p_{i} x_{i}$, with $p_i = \distr(x_i)$. We denote with $\dset(X)$ the set of finitely supported probability distributions on $X$.  This becomes a $\Sets$ functor by defining, for any $f:X\rightarrow Y$ in $\Sets$, the arrow $\dset{(f)}: \dset(X)\rightarrow \dset(Y)$ as the \emph{pushforward distribution}, i.e., $\dset{(f)}(\distr)\! =\!   \big(y\mapsto \sum_{x \in f^{-1}(y)} \distr(x) \big)$.


Given a set $S\subseteq\dset (X)$ of distributions, we denote with $\conv(S)$ the \emph{convex closure} of $S$, i.e., the set of distributions $\distr$ that are convex combinations  $\sum_{i=1}^{n} p_{i} \cdot \distr_{i}$ 
 of distributions $\distr_i\in S$.  Clearly $S\subseteq \conv(S)$.
We say that a convex set $S \subseteq\dset (X)$ is \emph{finitely generated} if there exists a finite set $S^\prime\subseteq \dset (X)$ 
such that $S=\conv(S^\prime)$. The finite set $S^\prime$ is referred to as a \emph{base} of $S$. 
Given a finitely generated convex set $S\subseteq \dset (X)$, there exists a minimal (with respect to the inclusion order)
base which we denote as $\ub(S)$ and refer to as the \emph{unique base} of $S$ (see, e.g., \cite{BSV20ar}). 
We denote with $\cset(X)$ the set defined as: $$\{  S\subseteq\dset(X) \mid\textnormal{$S\neq \emptyset$  and $S$ is convex and finitely generated}\}.$$
This can be turned into a $\Sets$ functor  by defining for every $f:X\rightarrow Y$ the arrow $\cset{(f)}: \cset(X)\rightarrow \cset(Y)$ as $\cset{(f)}(S)= \{ \dset{(f)}(\distr) \mid \distr\in S \} $. We are now ready to define a monad on the $\Sets$ functor $\cset$ (see \cite{BSV19}).


\begin{definition}[Monad $\cset$]\label{def:set:cset}
The \emph{non-empty finitely generated convex sets of distributions} $\Sets$ monad is the triple $(\cset, \eta^{\cset}, \mu^{\cset})$ consisting of the functor $\cset$, unit $\eta^\cset_X (x) = \{\dirac x\}$ and  multiplication 
defined, for any $S\in \cset\cset(X)$, as
\[
{\mu^\cset_X(S)= \bigcup_{\distr \in S}  \wms(\distr)}
\]
where, for any $\distr\in \dset\cset (X)$ of the form $\sum_{i=1}^{n} p_{i} S_{i}$, with $S_i\in\cset (X)$, the \emph{weighted Minkowski sum} operation 
$\wms: \dset\cset (X)\rightarrow \cset (X)$ is defined as follows:
\[
\wms(\distr) = \{ \sum_{i=1}^{n} p_{i} \cdot \distr_{i} \mid \textnormal{for each $1\leq i \leq n$, $\distr_i\in S_i$}\}.
\]
\end{definition}
As a consequence of Proposition \ref{prop-distlawcomposition} and Corollary \ref{eq:iota}, there is also a $\Sets$ monad $(\cset(\cdot+\oneset), \eta^{\cset(+\oneset)}, \mu^{\cset(+\oneset)})$ on the composition of $\cset$ and $+\oneset$.

\begin{proposition}[Monad $\cset(+\oneset)$]\label{def:set:subcset}
There is a $\Sets$ monad $(\cset(\cdot+\oneset), \eta^{\cset(+\oneset)}, \mu^{\cset(+\oneset)})$.
\end{proposition}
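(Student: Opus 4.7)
The plan is to apply Corollary \ref{eq:iota} with $\Cat = \Sets$ and $\mon = \cset$. The category $\Sets$ has coproducts (disjoint unions) and a terminal object (the singleton $\{\star\}$), and Definition \ref{def:set:cset} has already established that $(\cset, \eta^\cset, \mu^\cset)$ is a $\Sets$ monad. These are precisely the hypotheses of Corollary \ref{eq:iota}, so the desired monad structure $(\cset(\cdot+\oneset), \eta^{\cset(+\oneset)}, \mu^{\cset(+\oneset)})$ on the composite functor $\cset(\cdot+\oneset)$ is obtained at once. In effect this proposition is a direct instantiation: the substantive work (namely, that $\iota$ is a genuine monad distributive law in the sense of Definition \ref{def:monadlaw}, for any monad on any category with coproducts and terminal object) is packaged inside Corollary \ref{eq:iota}.

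For concreteness I would then spell out the resulting data on a set $X$. The distributive law specialises to $\iota_X : \cset(X) + \oneset \to \cset(X + \oneset)$ given by $\iota_X(S) = \{\dset(\inl)(\distr) \mid \distr \in S\}$ for $S \in \cset(X)$ and $\iota_X(\star) = \{\dirac{\inr(\star)}\}$, using the functorial action of $\cset$ described after Definition \ref{def:set:cset}. Applying the recipe of Proposition \ref{prop-distlawcomposition}, the unit unfolds to $\eta^{\cset(+\oneset)}_X(x) = \{\dirac{\inl(x)}\}$ for $x \in X$ and $\eta^{\cset(+\oneset)}_X(\star) = \{\dirac{\inr(\star)}\}$, while the multiplication $\mu^{\cset(+\oneset)} = (\mu^\cset \diamond \mu^{+\oneset}) \circ \cset\,\iota(\cdot + \oneset)$ first uses $\iota$ to push the extra $\oneset$-summand through $\cset$ and then collapses the two nested layers of each monad via their own multiplications $\mu^\cset$ and $\mu^{+\oneset}$.

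I do not anticipate any real obstacle. The only conceptual step is recognising that Corollary \ref{eq:iota} already does all the work; the rest is bookkeeping. If one wanted a self-contained proof instead, the main calculation would be verifying the four coherence axioms of Definition \ref{def:monadlaw} for $\iota$ (the two unit axioms are immediate from the definition by cases on the coproduct, and the two multiplication axioms reduce, on the $\cset(X)$ summand, to the functoriality of $\cset$ applied to $\inl$, and on the $\oneset$ summand, to naturality of $\eta^\cset$), but this is precisely the content that Corollary \ref{eq:iota} abstracts away.
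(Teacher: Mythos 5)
Your proposal is correct and is exactly the paper's argument: the paper obtains this monad precisely by instantiating Corollary~\ref{eq:iota} (together with Proposition~\ref{prop-distlawcomposition}) at $\Cat=\Sets$ and $\mon=\cset$. One cosmetic slip in your ``for concreteness'' unfolding: the unit has type $\eta^{\cset(+\oneset)}_X\colon X\to\cset(X+\oneset)$, so the clause $\eta^{\cset(+\oneset)}_X(\star)=\{\dirac{\inr(\star)}\}$ does not typecheck (its domain is $X$, not $X+\oneset$), but this does not affect the proof.
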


\begin{remark}\label{rem:subdistr}
There is a bijective correspondence between distributions $\distr$ on $X + \oneset$ and subdistributions $\distr^\prime$ on $X$ (i.e., $\distr(x)= \distr^\prime(x)$ for $x\in X$ and $\distr(\star)=1- \sum_{x\in X} \distr^\prime(x)$). We will use this identification and often refer to the monad $\cset(\cdot+\oneset)$ as the \emph{non--empty finitely generated convex sets of subdistributions} monad \cite{BSV19}.
\end{remark}

In \cite{BSV19}, presentation theorems for the monads $\cset$ and $\cset(+\oneset)$ are given in terms of the equational theories of convex semilattices and pointed convex semilattices, respectively. 
We assume the reader is familiar with the basic notions of universal algebra such as: signature, algebras for a signature, homomorphisms, etc. The textbook \cite{univalgebrabook} is a standard reference.

\begin{definition}[Convex Semilattices]\label{def:convexsemilattices:set}
The theory $\etcs$ of convex semilattices has signature $\sigcs=(\{\oplus\}\cup \{+_p\}_{p\in(0,1)})$ and the following axioms:

{
\begin{tabular}{l l}
$(A)$ $ \ $ 
&
$ x \oplus ( y \oplus z) = (x \oplus y) \oplus z \ \ \ \ $  \\
$(C)$ 
& 
$ x \oplus y = y \oplus x$\\
$(I) $
& 
$ x\oplus x = x$\\
$(A_{p})$ $ \ $ 
& 
$ (x+_qy)+_pz = x+_{pq}(y+_{\frac{p(1-q)}{1-pq}}z)$\\
$(C_{p}) $
& 
$ x+_py  =  y+_{1-p}x$\\
$(I_{p}) $
& 
$ x+_px  =  x$\\
$(D)$ $ \ $& $ x +_p (y \oplus z) = (x+_p y) \oplus (x +_p z)$
\end{tabular}
}

A convex semilattice is a $\sigcs$-algebra satisfying the equations in $\etcs$.
We let $\acat(\etcs)$ denote the category of convex semilattices and their homomorphisms.
\end{definition}

\begin{definition}[Pointed convex semilattices]\label{def:pointedcs:set}
The theory $\etpcs$ of pointed convex semilattices has signature  $\sigpcs=(\{\oplus\}\cup \{+_p\}_{p\in(0,1)}\cup\{\star\})$ and the same axioms of the theory of convex semilattices. We denote with $\acat(\etpcs)$ the category of pointed convex semilattices and their homomorphisms.
\end{definition}

The presentation theorems for $\cset$ and $\cset(+\oneset$) in \cite{BSV19} can now be formally stated as the following isomorphisms of categories.
\begin{proposition}\label{prop-knownpresentations}
\begin{enumerate}
\item The theory $\etcs$ of convex semilattices is a presentation of the monad $\cset$, i.e., $\EM(\cset)\cong \acat(\etcs)$.
\item The theory $\etpcs$ of pointed convex semilattices is a presentation of the monad $\cset(+\oneset)$, i.e., $\EM(\cset(+\oneset))\cong \acat(\etpcs)$.
\end{enumerate}
\end{proposition}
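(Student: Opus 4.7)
My plan is to establish both isomorphisms by exhibiting mutually inverse functors between the relevant categories, which is the standard route for presentation theorems. The bulk of the work lies in part (1); part (2) will then come essentially for free from (1) together with Corollary \ref{eq:iota}.

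For (1), I first define $F\colon\EM(\cset)\to\acat(\etcs)$ by endowing the carrier of each algebra $(A,\alpha)$ with the interpretations
\[
x\oplus y \;:=\; \alpha\bigl(\conv(\{\dirac x,\dirac y\})\bigr), \qquad x+_p y \;:=\; \alpha\bigl(\{p\dirac x+(1-p)\dirac y\}\bigr),
\]
and verifying the axioms of $\etcs$ one by one. The idempotences $(I),(I_p)$ fall out of the unit law $\alpha\circ\eta^{\cset}_A=\id_A$; associativity $(A),(A_p)$ and commutativity $(C),(C_p)$ follow from the multiplication law $\alpha\circ\cset(\alpha)=\alpha\circ\mu^\cset_A$ applied to appropriately chosen elements of $\cset\cset(A)$. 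For distributivity $(D)$ I unfold the definition of $\wms$: taking $S=\{\,p\,\dirac{T_1}+(1-p)\,\dirac{T_2}\,\}\in\cset\cset(A)$ with $T_1=\{\dirac x\}$ and $T_2=\conv(\{\dirac y,\dirac z\})$, one computes $\alpha(\cset(\alpha)(S))=x+_p(y\oplus z)$ while $\alpha(\mu^\cset_A(S))=(x+_p y)\oplus(x+_p z)$, so $(D)$ is a direct consequence of the multiplication law.

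Going back, I define $G\colon\acat(\etcs)\to\EM(\cset)$ by sending a convex semilattice $A$ to $(A,\alpha)$, where $\alpha(S)$ is built from the unique base $\ub(S)$: iterate $\oplus$ over the finite set $\ub(S)$ (well--defined up to bracketing and ordering by $(A),(C),(I)$), and evaluate each $\distr=\sum_i p_i x_i\in\ub(S)$ by an iterated $+_p$ combination (well--defined by $(A_p),(C_p),(I_p)$). The unit law for $\alpha$ follows immediately from $(I_p)$ on the Dirac case. The hard part, and the main obstacle, will be the multiplication law $\alpha\circ\cset(\alpha)=\alpha\circ\mu^\cset_A$: for $S\in\cset\cset(A)$, I must show that evaluating each inner convex set by $\alpha$ and combining the outputs by $\oplus$ agrees with first performing the weighted Minkowski construction underlying $\mu^\cset_A$ and then evaluating by $\alpha$. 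The distributive axiom $(D)$ is precisely what lets $+_p$ commute past $\oplus$ in the same pattern as $\wms$ distributes over unions, while $(I),(A),(C)$ absorb the discrepancy between the unique base of $\mu^\cset_A(S)$ and the naive enumeration of Minkowski combinations. Once $F$ and $G$ agree on objects, extending them consistently on morphisms is routine: an $\etcs$-homomorphism commutes with $\oplus$ and $+_p$ by definition, and this propagates to commutation with $\alpha$ on all of $\cset(A)$ by an induction on the unique base.

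For (2), the signature $\sigpcs$ extends $\sigcs$ only by a constant $\star$ with no new axiom involving it, so a pointed convex semilattice is exactly a convex semilattice equipped with a distinguished element. On the monad side, Corollary \ref{eq:iota} realises $\cset(+\oneset)$ as a composite monad built from a distributive law, and the standard theory of algebras for composite monads yields that an Eilenberg--Moore algebra for $\cset(+\oneset)$ unfolds to a $\cset$-algebra $(A,\alpha)$ together with a chosen element $\point_A\in A$ (the image of $\oneset$ under the algebra map), with no further constraint beyond those already handled in (1). The functors $F$ and $G$ from (1) will therefore lift to mutually inverse functors between $\EM(\cset(+\oneset))$ and $\acat(\etpcs)$ simply by tracking this extra constant throughout.
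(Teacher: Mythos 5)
Your proposal is correct and takes essentially the same approach as the paper: the paper records this result as a Proposition citing \cite{BSV19,BSV20ar}, and the functors it displays immediately afterwards in Equation \eqref{eq:setiso} ($P$ sending $(A,\alpha)$ to $\oplus^{\alpha}$, $\pplus{p}^{\alpha}$, $\star^{\alpha}$, and $P^{-1}$ evaluating over $\ub(S)$) are exactly your $F$ and $G$, and your treatment of part (2) via the free constant added by the $+\oneset$ distributive law is the standard and correct one. The only caveat is that the multiplication-law verification for $G$ --- which you rightly identify as the main obstacle --- is precisely the substantial content of the cited proofs and remains only sketched in your write-up.
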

The isomorphism in Proposition \ref{prop-knownpresentations}.2 is given by a pair of functors
\vvm
\begin{equation}\label{eq:setiso}
\begin{split}
&P: \EM(\cset(+\oneset))\rightarrow \acat(\etpcs) \\   
&P^{-1}: \acat(\etpcs)\rightarrow \EM(\cset(+\oneset))
\end{split}
\vvm
\end{equation}
On objects $(A,\alpha) \in \EM(\cset(+\oneset))$, we have $P(A,\alpha)=(A, \oplus^{\alpha}, \{\pplus p^{\alpha}\}_{p\in (0,1)}, \star^{\alpha} )$ where, for all $a_{1},a_{2}\in A$:
\begin{align*}
&a_{1} \oplus^{\alpha} a_{2}= \alpha(\conv\{\dirac{a_{1}},\dirac{a_{2}}\})\qquad \star^{\alpha}=\alpha(\{\dirac \star\}) \\
&\qquad a_{1} \pplus p^{\alpha} a_{2}= \alpha(\{p \,{a_{1}} + (1-p) \,{a_{2}}\}).
\end{align*}
Given a pointed convex semilattice $\alga = (A, \oplus^{\alga}, +^{\alga}_p, \point^{\alga})\in \etpcs$, $P^{-1}(\alga)$ is the $\cset(\cdot+\mathbf{1})$--algebra $(A, \alpha)$ where $\alpha : \cset(A+\mathbf{1}) \rightarrow A$ is defined by 
\[\alpha(S)=\textstyle{\bigcplus^{\alga}_{\distr\in \ub(S) } (\bigpplus^{\alga}_{b \in \support(\distr) } \distr(b)\,b)}\]
where $\bigcplus_{i\in I} x_{i}$ and $\bigpplus_{i\in I} p_{i} \,x$ are respectively notations for the binary operations $\cplus^{\alga}$ and $\pplus p^{\alga}$ extended to operations of arity $I$, for $I$ finite (see, e.g., \cite{stone:1949,BSS17}), $\ub(S)$ is the unique base of $S$, and $b$ ranges over $A\cup\{\star\}$.

Using the above presentation result and the well-known fact that, for any monad $(M, \eta, \mu)$, the free $M$--algebra generated by $X$ is 
$(M(X), \mu_{X})$,
we can identify (up to isomorphism) the free pointed convex semilattices.


\begin{proposition}\label{prop:freepcs}
The free pointed convex semilattice on $X$ is  (up to isomorphism): $$(\cset(X+\oneset), \oplus^{\cset(+\oneset)}, \{\pplus p^{\cset(+\oneset)}\}_{p\in (0,1)}, \star^{\cset(+\oneset)})$$ 
where, for all $S_1,S_2\in\cset(X+\oneset)$: 
\begin{align*}
&S_1\oplus^{\cset(+\oneset)} S_2 = \conv(S_1\cup S_2) \qquad \star^{\cset(+\oneset)}=\{\dirac \star\} \\
&\qquad  S_1\pplus p^{\cset(+\oneset)} S_2 = \wms( p \,S_1  + (1-p) S_2).
\end{align*}
\end{proposition}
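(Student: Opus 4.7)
The plan is to combine the presentation isomorphism $P:\EM(\cset(+\oneset))\to\acat(\etpcs)$ from Proposition~\ref{prop-knownpresentations}.2 with the standard fact, recalled just before the statement, that the free $M$-algebra on $X$ is $(M(X),\mu_X)$. Applied to $M=\cset(+\oneset)$, this tells us that $(\cset(X+\oneset),\mu^{\cset(+\oneset)}_X)$ is the free $\cset(+\oneset)$-algebra on $X$. Because $P$ and $P^{-1}$ strictly preserve the underlying carrier (cf.~\eqref{eq:setiso}), i.e., they commute with the forgetful functors to $\Sets$, the image $P(\cset(X+\oneset),\mu^{\cset(+\oneset)}_X)$ is automatically the free pointed convex semilattice on $X$. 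The task therefore reduces to identifying its four operations with those displayed in the statement.

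Substituting $\alpha=\mu^{\cset(+\oneset)}_X$ into the formulas listed after~\eqref{eq:setiso} and using Corollary~\ref{eq:iota} together with Proposition~\ref{prop-distlawcomposition}, the multiplication unfolds to the composite
\[
\mu^{\cset(+\oneset)}_X \;=\; \cset(\mu^{+\oneset}_X)\,\circ\, \mu^\cset_{X+\oneset+\oneset}\,\circ\, \cset(\iota_{X+\oneset}),
\]
where $\iota_{X+\oneset}$ embeds each $S\in\cset(X+\oneset)$ as its pushforward $S'$ along $\inl:X+\oneset\hookrightarrow X+\oneset+\oneset$ and sends the outer $\star$ to the singleton $\{\dirac_\star\}\in\cset(X+\oneset+\oneset)$, while $\mu^{+\oneset}_X$ identifies the two copies of $\star$. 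Evaluating this composite at the three relevant inputs $\{\dirac_\star\}$, $\conv\{\dirac_{S_1},\dirac_{S_2}\}$, and $\{pS_1+(1-p)S_2\}$ in $\cset(\cset(X+\oneset)+\oneset)$ becomes routine. For $\oplus^\mu$, the central step is the identity
\[
\mu^\cset(\conv\{\dirac_{S_1'},\dirac_{S_2'}\}) \;=\; \textstyle\bigcup_{p\in[0,1]}\{p\distr_1+(1-p)\distr_2 \mid \distr_i\in S_i'\},
\]
 which one checks coincides with $\conv(S_1'\cup S_2')$; pushing forward through $\mu^{+\oneset}_X$ then returns $\conv(S_1\cup S_2)$. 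For $+^{\mu}_p$, the singleton input collapses $\mu^\cset$ to a single $\wms$, yielding $\wms(pS_1+(1-p)S_2)$. For $\star^\mu$, the chain of maps sends $\{\dirac_\star\}$ to $\{\dirac_\star\}\in\cset(X+\oneset)$, as required.

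The main obstacle is purely organisational: the input of $\mu^{\cset(+\oneset)}_X$ carries two distinct termination symbols, an \emph{outer} $\star$ at level $\cset(X+\oneset)+\oneset$ and an \emph{inner} $\star$ already inside $X+\oneset$, and they must be tracked through $\iota_{X+\oneset}$ and $\mu^\cset_{X+\oneset+\oneset}$ before being correctly merged by $\mu^{+\oneset}_X$. Once the types are aligned, each of the three identifications reduces to an elementary application of the weighted Minkowski-sum definition of $\mu^\cset$, and no non-trivial convex-analytic content is needed.
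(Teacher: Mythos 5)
Your proposal is correct and follows exactly the route the paper intends: combine the presentation isomorphism $P$ of Proposition~\ref{prop-knownpresentations}.2 with the fact that the free Eilenberg--Moore algebra on $X$ is $(\cset(X+\oneset),\mu^{\cset(+\oneset)}_X)$, then unfold $\mu^{\cset(+\oneset)}_X$ via the distributive law $\iota$ to identify $\oplus^{\alpha}$, $+_p^{\alpha}$ and $\star^{\alpha}$ with convex union, weighted Minkowski sum and $\{\dirac\star\}$. The computations, including the identification $\bigcup_{p\in[0,1]}\wms(pS_1'+(1-p)S_2')=\conv(S_1'\cup S_2')$ and the bookkeeping of the inner versus outer $\star$, are all sound and match the explicit description the paper gives after~\eqref{eq:setiso} and in the appendix.
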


\vvcut{
Having presentation of $\Sets$ monads in terms of categories of algebras of equational theories is mathematically convenient for a variety of reasons. In what follows, we present in some details one application which is quite useful, especially in the field of program semantics: representation theorems for free term algebras.

We first recall the standard categorical definition of free objects.
\ntodo{all these definitions and propositions could be written shortly here, and put in appendix. We could directly write here the three examples. Also, the notation for multi-ary nondeterministic plus could be now entirely avoided in main text, and directly introduced in appendix. For the multi-ary probabilistic plus it's harder to get rid of it (but it could be put in previous section)}

\begin{definition}
A concrete category over $\Sets$ is a pair $(\Cat, F)$ where $\Cat$ is a category and $F:\Cat \rightarrow \Sets$ is a faithful functor.
\end{definition}
Examples of concrete categories over $\Sets$ are given by those categories, such as the category $\acat(\et)$ of algebras of a given equational theory $\et$ and the category $\EM(\mon)$ of \ema algebras of a $\Sets$ monad $\mon$, whose objects are sets (usually equipped with additional structure) and morphisms are set--theoretic functions (usually respecting the structure) and $F:\Cat \rightarrow \Sets$ is the forgetful functor.

\begin{definition}[Free objects]
Let $(\Cat,F)$ be a concrete category over $\Sets$, let $X$ be a set (of generators) in $\Sets$, let $A\in\Cat$ be an object in $\Cat$ and let $i:X\rightarrow F(A)$ be an injective map between sets (called the canonical insertion). We say that $A$ is the \emph{free object on $X$ (with respect to $i$)} if and only if for any object $B\in\Cat$ and any map between sets $f:X\rightarrow F(B)$, there exists a unique morphism $g:A\rightarrow B$ such that $f=F(g)\circ i$. 
\end{definition}

It follows from the above definition and its universal property that, in a given concrete category over $\Sets$, the free object over $X$ is unique up--to isomorphism, if it exists.

The following two results are well--known and provide descriptions of free objects in $\acat(\et)$, the category of algebras for a given equational theory $\et$, and in $\EM(\mon)$, the category of \ema algebras for a $\Sets$--monad $\mon$.

\begin{proposition}
Let $\mon:\Sets\rightarrow\Sets$ be a monad and let $X\in\Sets$ be a set. The free object on $X$ in the category $\EM(\mon)$ is the \ema algebra $\mu^{\mon}_X:\mon\mon X\rightarrow \mon X$.
\end{proposition}

\begin{proposition}\label{term:algebra:set:initial:theorem}
Let $\Sigma$ be a signature, $X$ a set and $\et\subseteq {\terms X \Sigma} \times {\terms X \Sigma}$ an equational theory of type $\Sigma$. The free object on $X$ in the category $\acat(\et)$ is the term algebra ${\terms X \Sigma}_{/\et}$ defined as:
\begin{itemize}
\item the carrier is ${\terms X \Sigma}_{/\et}$, the set of $\Sigma$--terms constructed from the set of generators $X$ taken modulo the equations of the theory $\et$, and
\item the operations, for each $f\in\Sigma$ are defined on equivalences classes, that is as follows: 
$$f([t_1]_{/\et},\dots, [t_n]_{/\et}) = [f(t_1,\dots, t_n)]_{/\et}.$$ 
The definition does not depend on any specific choice of representatives for the equivalence classes.
\end{itemize}
\end{proposition}

The two results above, together with the fact that free objects are unique up--to isomorphism, can be used to derive the following corollary. 
\begin{corollary}\label{prop:free-algebras}
Let $\mon$ be a monad on $\Sets$ and let $F:\acat(\et)\cong \EM(\mon)$ be a presentation of $\mon$ in terms of the equational theory $\et$ of type $\Sigma$. Then the free term algebra ${\terms X \Sigma }_{/\et}$ and the free \ema algebra $\mu^{\mon}_X:\mon\mon(X)\rightarrow \mon(X)$ are isomorphic (via $F$).
\end{corollary}

In other words, a presentation theorem for $\mon$ provides automatically representation results for free term algebras and these are often quite handy, especially in the field of program semantics.
}

\vvcut
{\begin{example}\label{example:presentation:set:1}
Consider the presentation of the monad $\fpset$ in terms of the theory of semilattices. This implies that the free semilattice generated by $X$ is isomorphic to the semilattice $(\fpset X,\oplus)$ where $X_1 \oplus X_2 = X_1\cup X_2$, for all $X_1,X_2\subseteq X$. In other words, the set ${\terms {\{\oplus\}} X }_{/\etsl}$ of semilattice terms modulo the equational theory of semilattices can be identified with $\fpset (X)$.
The isomorphism is given by the function 
$\isl : \fpset(X) \to {\terms {\{\oplus\}} X }_{/\etsl} $
defined as 
$\isl(\{x_{i}\}_{i\in I}) = [\bigcplus_{i\in I} x_{i}]_{/\etsl} $,
with $\bigcplus_{i\in I} x_{i}$ defined inductively on the cardinality of $I$ as follows:
$\bigcplus_{i\in \{1\}} x_{i}=x_{1}$ and 
$\bigcplus_{i\in \{1,...n+1\}} x_i =  ( \bigcplus_{i\in \{1,...n\}} x_i ) \cplus x_{n+1}$.
Note that the specific choice made in the ordering of the elements of the set is not relevant, so we assume that an arbitrary choice is made.
\end{example}

\begin{example}\label{example:presentation:set:2}
The presentation of the monad $\dset$ in terms of the theory of convex algebras implies that the free convex algebra generated by $X$ is isomorphic to the convex algebra $(\dset X,+_p)$ where $\distr_1 +_p \distr_2 = p\cdot\distr_1 + (1-p)\cdot\distr_2)$, for all $\distr_1,\distr_2\in \dset{X}$. In other words, the set ${\terms {\{\pplus p\}_{p\in(0,1)}} X }_{/\etca}$ of convex algebra terms modulo the equational theory of convex algebras can be identified with the set $\dset(X)$ of finitely supported probability distributions over $X$.
The isomorphism is given by the function 
$\ica : \dset(X) \to {\terms {\{\pplus p\}_{p\in(0,1)}} X }_{/\etca} $
defined as 
$\ica(\distr) = [\bigpplus_{x \in \support(\distr) } \distr(x) \, x]_{/\etca} $,
with $\bigpplus_{x \in \support(\distr) } \distr(x) \,x$ defined by induction on the cardinality of $\support(\distr) =\{x_{1},...,x_{n}\}$ as follows:
$\bigpplus_{i\in \{1\}} \delta(x_{i}) \, x_{i} = x_{1}$ and 
$\bigpplus_{i\in \{1,...n+1\}} \distr(x_{i}) \, x_{i} = \big( \bigpplus_{i\in \{1,...n\}} \distrb(x_{i}) \, x_i\big) \pplus{(1-\distr(x_{n+1}))} x_{n+1}$,
with $\distrb$ the distribution such that $\distrb(x_{i}) =\sum_{i\in \{1,...n\}} \frac {\distr(x_{i})} {1-\distr(x_{n+1})}$.
Again, we are assuming that an ordering of the elements of the support of the distribution is chosen.
\end{example}

\begin{example}\label{example:presentation:set:3}
Lastly, the presentation of the monad $\cset$ in terms of the theory of convex semilattices implies that the free convex semilattice generated by $X$ is isomorphic with the convex semilattice $(\cset X ,\oplus,+_p)$ where $S_1\oplus S_2 = \conv(S_1\cup S_2)$ (convex union) and $S_1 +_p S_2 = \wms ( p S_1 + (1-p)S_2)$ (weighted Minkowski sum), for all $S_1,S_2\in\cset X$. In other words, the set ${\terms {\{\oplus\}\cup \{\pplus p\}_{p\in(0,1)}} X }_{/\etcs}$ of  convex semilattice terms modulo the equational theory of convex semilattices can be identified with the set $\cset (X)$ of finitely generated convex sets of finitely supported probability distributions on $X$.
The isomorphism explicitly given in \cite{BSV20ar} by the function 
$\ics : \cset(X) \to {\terms {\{\oplus\}\cup \{\pplus p\}_{p\in(0,1)}} X }_{/\etcs} $
defined as 
$\ics(S) = [\bigcplus_{\distr\in \ub(S) } (\bigpplus_{x \in \support(\distr) } \distr(x)\,x)]_{/\etcs}$ 
A chosen ordering of the elements of $\dset(X)$ is also assumed in the definition. 
\end{example}
}

\vvcut{

In the sequel, $\mathbf{C}$ will denote a category.
\begin{definition}[Monad]
    A \textbf{monad} is a triple comprised of an endofunctor $M: \mathbf{C} \rightsquigarrow \mathbf{C}$ and two natural transformations $\eta: \id_{\mathbf{C}}\Rightarrow M$ and $\mu: M^2 \Rightarrow M$ called the \textbf{unit} and \textbf{multiplication} respectively that make \eqref{diag-unitmonad} and \eqref{diag-multmonad} commute.\\
    \begin{minipage}{0.48\textwidth}
        \begin{equation}\label{diag-unitmonad}
            \begin{tikzcd}
                M \arrow[rd, Rightarrow, "\one_M"'] \arrow[r, Rightarrow, "M\eta"] & M^2 \arrow[d, Rightarrow, "\mu"] & M \arrow[ld, Rightarrow, "\one_M"] \arrow[l, Rightarrow, "\eta M"'] \\ & M &
            \end{tikzcd} 
        \end{equation}
    \end{minipage}
    \begin{minipage}{0.48\textwidth}
        \begin{equation}\label{diag-multmonad}
            \begin{tikzcd}
                M^3 \arrow[d, Rightarrow, "\mu M"'] \arrow[r, Rightarrow, "M\mu"] & M^2 \arrow[d, Rightarrow, "\mu"] \\
                M^2 \arrow[r, Rightarrow, "\mu"'] & M
            \end{tikzcd}
        \end{equation}
    \end{minipage}
\end{definition}
\begin{example}
    Suppose $\mathbf{C}$ has (binary) coproducts and a terminal object $\mathbf{1}$, then $(\cdot + \mathbf{1}): \mathbf{C} \rightsquigarrow \mathbf{C}$ is a monad. This functor sends an object $X$ to the coproduct $X+\mathbf{1}$ and a morphism $f: X \rightarrow Y$ to \[f+\mathbf{1} := [\inl^{Y+\mathbf{1}} \circ f, \inr^{Y+\mathbf{1}}]: X+ \mathbf{1} \rightarrow Y + \mathbf{1}.\]
        The components of the unit are given by the coprojections, i.e.: $\eta_X = \inl^{X+\mathbf{1}} : X \rightarrow X+ \mathbf{1}$, and the components of the multiplication are \[\mu_X = [\inl^{X+\mathbf{1}}, \inr^{X+\mathbf{1}}, \inr^{X+\mathbf{1}}]: X+ \mathbf{1} + \mathbf{1} \rightarrow X + \mathbf{1}.\]
\end{example}
\begin{example}
    The covariant powerset functor $\pset:\textbf{Set}\rightsquigarrow \textbf{Set}$ is a monad with the following unit and multiplicaion:
        \[ \eta_X: X \rightarrow \pset(X) = x \mapsto \{x\} \text{ and } \mu_X: \pset(\pset(X)) \rightarrow \pset(X) = S \mapsto \bigcup_{A \in S} A. \]
\end{example}
\begin{example}
    The functor $\dset: \textbf{Set} \rightarrow \textbf{Set}$ sends a set $X$ to the set of finitely supported distributions on $X$, i.e.:
        \[\dset(X) := \{\varphi \in [0,1]^X \mid \sum_{x \in X} \varphi(x) = 1 \text{ and } \varphi(x) \neq 0 \text{ for finitely many $x$'s}\}.\]
        It sends a function $f: X \rightarrow Y$ to $\dset(f) : \dset X \rightarrow \dset Y$ defined by \[\dset(f)(\varphi) := \sum_{x \in \supp \varphi} \varphi(x)f(x) = \sum_{y \in Y} \left( \sum_{x \in f^{-1}(y)}\varphi(x) \right)y. \]
        The weight of $\dset(f)(\varphi)$ at a point $y \in Y$ is equal to the total weight of $\varphi$ on the preimage of $y$ under $f$. This functor is a monad with unit $\eta_X = x \mapsto \dirac{x}$, where $\dirac{x}$ is the Dirac distribution at $x$, and multiplication defined by 
        \[\mu_X(\Phi) = x \mapsto \sum_{\phi \in \text{supp}(\Phi)} \Phi(\phi)\phi(x).\] 
\end{example}
\begin{definition}[$M$-algebra]
    Let $(M,\eta, \mu)$ be a monad on $\mathbf{C}$, an $M$-\textbf{algebra} is a pair $(A, \alpha)$ consisting of an object $A \in \mathbf{C}_0$ and morphism $\alpha : MA \rightarrow A$ making \eqref{diag-algunit} and \eqref{diag-algmult} commute.\\
    \begin{minipage}{0.48\textwidth}
        \begin{equation}\label{diag-algunit}
            \begin{tikzcd}
                A \arrow[rd, "\id_A"'] \arrow[r, "\eta_A"] & MA \arrow[d, "\alpha"] \\ & A
            \end{tikzcd}
        \end{equation}
    \end{minipage}
    \begin{minipage}{0.48\textwidth}
        \begin{equation}\label{diag-algmult}
            \begin{tikzcd}
                M^2A \arrow[d, "M(\alpha)"'] \arrow[r, "\mu_A"] & MA \arrow[d, "\alpha"] \\
                MA \arrow[r, "\alpha"']  & A  
                \end{tikzcd}
        \end{equation}
    \end{minipage}
\end{definition}
\begin{definition}[$M$-algebra homomorphism]
    Given two $M$-algebras $(A, \alpha)$ and $(B, \beta)$, an $M$-algebra homomorphism $f: (A, \alpha) \rightarrow (B, \beta)$ is a morphism $f:A \rightarrow B$ making \eqref{diag-alghom} commute.
    \begin{equation}\label{diag-alghom}
        \begin{tikzcd}
            MA \arrow[d, "\alpha"'] \arrow[r, "M(f)"] & MB \arrow[d, "\beta"] \\
            A \arrow[r, "f"'] & B
        \end{tikzcd}
    \end{equation}
\end{definition}
The category of $M$-algebras and their homomorphisms is called the \textbf{Eilenberg-Moore} category of $M$ and denoted $\EM(M)$.

\begin{definition}[Algebraic theory]
    An \textbf{algebraic signature} is a set $\Sigma$ of operation symbols each with an arity in $\mathbb{N}$, we denote $f:n \in \Sigma$ for an $n$-ary operation $f$ in $\Sigma$. Given a set $X$, one constructs the set of $\Sigma$-terms with variables in $X$, denoted $T_{\Sigma}(X)$ by iterating operations symbols:
    \begin{align*}
        \forall x \in X, &\ x \in T_{\Sigma}(X)\\
        \forall t_1, \dots, t_n \in T_{\Sigma}(X), f:n \in \Sigma, &\ f(t_1, \dots, t_n) \in T_{\Sigma}(X).
    \end{align*}
    An \textbf{equation} over $\Sigma$ is an equality between two terms over a set of dummy variables. If $E$ is a set of equations over $\Sigma$, the tuple $(\Sigma, E)$ is called an \textbf{algebraic theory}.
\end{definition}
\begin{definition}[$(\Sigma, E)$-algebras]
    A $(\Sigma, E)$-\textbf{algebra} is a set $A$ along with functions $f^A: A^n \rightarrow A$ for all $f:n \in \Sigma$ such that the equations in $E$ are satisfied, that is, all equations in $E$ hold when the dummy variables are quantified over all of $A$ and operations symbols are replaces by the corresponding functions.
\end{definition}
\begin{definition}[$(\Sigma, E)$-algebra homomorphisms]
    Given two $(\Sigma, E)$-algebras $A$ and $B$, a homomorphism between them is a map $h: A \rightarrow B$ commuting with all operations in $\Sigma$, that is $\forall f:n \in \Sigma, h\circ f^A = f^B \circ h^n$.
\end{definition}
The $(\Sigma, E)$-algebras and their homomorphisms form a category, denoted $\acat(\Sigma,E)$ (also called \textbf{variety} in the field of universal algebra).
\begin{example}
    The theory of \textbf{semilattices}, denoted \textsf{S}, contains a single binary operation ($\Sigma_{\textsf{S}} = \left\{ \oplus: 2\right\}$) and the following equations:
    \begin{align*}
        x \oplus x &= x &&\text{Idemptoence: $I$}\\
        x \oplus y &= y \oplus x &&\text{Commutativity: $C$}\\
        (x\oplus y) \oplus z &= x\oplus (y \oplus z). &&\text{Associativity: $A$}
    \end{align*}
\end{example}
\begin{example}
    The theory of \textbf{convex algebras}, denoted \textsf{CA}, contains a binary operation $+_p$ for each $p \in (0,1)$ and satisfies the following equations:
    \begin{align*}
        x +_p x &= x &&\mbox{Idemptoence: $I_p$}\\
        x +_p y &= y +_{\overline{p}} x &&\mbox{Skew-commutativity: $C_p$}\\
        (x+_q y) +_p z &= x+_{pq} (y +_{\frac{p \overline{q}}{\overline{pq}}} z) &&\mbox{Skew-associativity: $A_p$}
    \end{align*}
\end{example}
\begin{definition}[Monad presentation]
    A monad $M$ is \textbf{presented} by an algebraic theory $(\Sigma, E)$ if the Eilenberg-Moore category of $M$ is \textit{isomorphic} to the category of $(\Sigma, E)$-algebras.
\end{definition}
\begin{example}[$\mPne$]
    The monad $\mPne$ sending a set $X$ to the set of finite non-empty subsets of $X$ (a straightforward variant of $\pset$) is presented by the theory of semilattices.
\end{example}
\begin{example}[$\dset$]
    The monad $\dset$ is presented by the theory \textsf{CA}.
\end{example}
\begin{theorem}\label{prop-Mp1}
    For any monad $M$, there is a monad structure on the composition $M(\cdot+\mathbf{1})$. Moreover, if $M$ is presented by $(\Sigma, E)$ the monad $M(\cdot+\mathbf{1})$ is presented by $(\Sigma \cup \{* : 0\}, E)$, that is, the new theory only has an additional constant which is neutral with respect to the equations.
\end{theorem}
We often qualify theories with an added constant as \textbf{pointed}. For instance, the theories presented by $\mPne(\cdot+\mathbf{1})$ and $\dset(\cdot+\mathbf{1})$ are those of \textbf{pointed semilattices} (\textsf{PS}) and \textbf{pointed convex algebras} (\textsf{PCA}) respectively.

\begin{definition}[Monad map]
    Let $(M, \eta, \mu)$ and $(\widehat{M}, \widehat{\eta}, \widehat{\mu})$ be two monads, a natural transformation $\sigma: M \Rightarrow \widehat{M}$ is called a \textbf{monad map} if it makes \eqref{diag-monmap1} and \eqref{diag-monmap2} commute.\footnote{Recall that $\sigma \diamond \sigma = M\sigma \cdot \sigma \widehat{M} = \sigma M \cdot \widehat{M} \sigma$.}\\
    \begin{minipage}{0.48\textwidth}
        \begin{equation}\label{diag-monmap1}
            \begin{tikzcd}
                \id_{\mathbf{C}} \arrow[rd, "\widehat{\eta}"', Rightarrow] \arrow[r, "\eta", Rightarrow] & M \arrow[d, "\sigma", Rightarrow] \\ & \widehat {M}
            \end{tikzcd}
        \end{equation}
    \end{minipage}
    \begin{minipage}{0.48\textwidth}
        \begin{equation}\label{diag-monmap2}
            \begin{tikzcd}
                M^2 \arrow[d, "\mu"', Rightarrow] \arrow[r, "\sigma \diamond \sigma", Rightarrow] & \widehat{M}^2 \arrow[d, "\widehat{\mu}", Rightarrow] \\
                M \arrow[r, "\sigma"', Rightarrow] & \widehat{M}
            \end{tikzcd}
        \end{equation}
    \end{minipage}
\end{definition}
\begin{theorem}[6.3 \cite{TTT}]
    If $\sigma: M \Rightarrow \widehat{M}$ is a monad map, then $U^{\sigma} = (A, \alpha) \mapsto (A, \alpha \circ \sigma)$ is a functor $\EM(\widehat{M}) \rightarrow \EM(M)$.
\end{theorem}
\begin{definition}[Monad distributive law]
    Let $(M, \eta, \mu)$ and $(\widehat{M}, \widehat{\eta}, \widehat{\mu})$ be two monads, a natural transformation $\lambda: M \widehat{M}\Rightarrow \widehat{M}M$ is called a \textbf{monad distributive law of $M$ over  $\widehat{M}$} if it it makes \eqref{diag-mondistlaw1}, \eqref{diag-mondistlaw2} commute.
    \begin{equation}\label{diag-mondistlaw1}
        \begin{tikzcd}
            M \arrow[rd, "\widehat{\eta}M"', Rightarrow] \arrow[r, "M\widehat{\eta}", Rightarrow] & M\widehat{M} \arrow[d, "\lambda", Rightarrow] & \widehat{M} \arrow[l, "\eta \widehat{M}"', Rightarrow] \arrow[ld, "\widehat{M}\eta", Rightarrow] \\ & \widehat{M}M &
        \end{tikzcd}
    \end{equation}
    \begin{equation}\label{diag-mondistlaw2}
        \begin{tikzcd}
            MM\widehat{M} \arrow[d, "M\lambda"', Rightarrow] \arrow[rr, "\mu \widehat{M}", Rightarrow] & & M\widehat{M} \arrow[d, "\lambda", Rightarrow] & & M\widehat{M}\widehat{M} \arrow[d, "\lambda \widehat{M}", Rightarrow] \arrow[ll, "M\widehat{\mu}"', Rightarrow] \\
            M\widehat{M}M \arrow[r, "\lambda M"', Rightarrow]   & \widehat{M}MM \arrow[r, "\widehat{M}\mu"', Rightarrow] & \widehat{M}M & \widehat{M}\widehat{M}M \arrow[l, "\widehat{\mu}M", Rightarrow] & \widehat{M}M\widehat{M} \arrow[l, "\widehat{M}\lambda", Rightarrow]
        \end{tikzcd}
    \end{equation}
\end{definition}
\begin{proposition}
If $\lambda: M \widehat{M} \Rightarrow \widehat{M}M$ is a monad distributive law, then the composite $\overline{M} = \widehat{M}M$ is a monad with unit $\overline{\eta} = \widehat{\eta} \diamond \eta$ and multiplication $\overline{\mu} = (\widehat{\mu} \diamond \mu) \cdot \widehat{M}\lambda M$.
\end{proposition}

}

\subsection{Monads on $\Met$ and Quantitative Equational Theories}
\label{sec:met:back}

In this section, we focus on monads on the category $\Met$ of $1$--bounded metric spaces and non--expansive maps. 

\begin{definition}[Category $\Met$]
{A 1--bounded} metric space is a pair $(X,d)$ with $X$ a set and 
 $d:X\times X\rightarrow [0,1]$ such that $d(x,y) = 0$ {if and only if $x=y$}, $d(x,y)=d(y,x)$, and $d(x,y) \leq d(x,z) + d(z,y)$, for all $x,y,z\in X$. A function {$f:X \rightarrow Y$} between two 1--bounded metric spaces {$(X,d_X)$ and $(Y,d_Y)$} is \emph{non--expansive} if {$d_Y( f(x_1), f(x_2) ) \leq d_X(x_1,x_2)$ for all $x_1,x_2\in X$}. We denote with $\Met$ the category  of 1--bounded metric spaces and non--expansive maps.
\end{definition}

Since we only work with $1$--bounded metric spaces, we often refer to them simply as metric spaces.
The category $\Met$ has a terminal object  (the space $\onemet=(\{\point\}, \donemet)$, with $\donemet (\point,\point)= 0)$), products, defined as $(X_1,d_{1}) \times (X_2,d_{2})=(X_1 \times X_2,  d_1\times d_2)$ with $d_1\times d_2 ((x_1,x_2), (y_1,y_2))=\sup_{i=1,2} d_i(x_i,y_i)$,  and coproducts,\footnote{This is the property that makes $\Met$ preferable to the category of all (possibly not $1$--bounded) metric spaces, since the latter does not have coproducts.} defined as $(X_1,d_{1}) + (X_2,d_{2})=(X_1 + X_2,  d_1+d_2)$ where $X + Y$ denotes disjoint union and $(d_1+d_2)(y,w) = d_{1} (y,w)$ if both $y,w\in X_1$, $(d_1+d_2)(y,w) = d_{2} (y,w)$ if both $y,w\in X_2$ and  $(d_{1}+d_2)(y,w) =1$ otherwise. 

 We now introduce the $\Met$ monad $\lcset$, which is the Hausdorff--Kantorovich metric lifting of the $\Sets$ monad $\cset$ and which has been introduced in  \cite[\S 4]{MV20}. 

\begin{definition}[Kantorovich Lifting]\label{def:kantorovich:lifting}
Let $(X,d)$ be a $1$--bounded metric space. The Kantorovich lifting of $d$ is a $1$--bounded metric $\kant(d)$ on $\dset(X)$, the collection of finitely supported distributions on $X$, 
assigning to any pair $\distr_1, \distr_2\in \dset(X)$ the distance $\kant(d) (\distr_1,\distr_2)$ defined as:
\vvcut{$$
\kant(d) (\distr_1,\distr_2)=
\sup_{\stackrel{f:X\to [0,1]}{\textnormal{non--expansive}}} \Big\{  |  \big(\sum_{x\in X} f(x)\distr_1(x) \big) - \big(\sum_{x\in X} f(x)\distr_2(x) \big) | \Big\}
$$
where $| \_ |$ denotes the absolute value. Equivalently\footnote{The equivalence is a nontrivial result and follows from the Kantorovich-Rubinstein duality theorem.}, the metric $\kant(d)$ can be defined as:}
\vvm
\[
\inf_{\omega \in Coup(\distr_1,\distr_2)}\Big( \sum_{(x_1,x_2)\in X \times X }\omega(x_1,x_2) \cdot d(x_1,x_2) \Big)
\vvm
\]
where $Coup(\distr_1,\distr_2)$ is defined as the collection of couplings of $\distr_1$ and $\distr_2$, i.e., $
 Coup(\distr_1,\distr_2) = \{ \omega\in\dset (X\times X) \mid  \dset(\pi_1)(\omega) = \distr_1 \textnormal{ and }  \dset(\pi_2) (\omega) = \distr_2 \}
$
where $\pi_1:X_1\times X_2 \rightarrow X_1$ and $\pi_2:X_1\times X_2 \rightarrow X_2$ are the projection functions. 
\end{definition}

\begin{definition}[Hausdorff Lifting]
Let $(X,d)$ be a $1$--bounded metric space. The Hausdorff lifting of $d$ is a $1$--bounded metric $\haus(d)$ on $\kompactd X d$,  the collection of non--empty compact subsets of $X$ (with respect to the standard metric topology induced by $d$),
assigning to any pair $X_1,X_2 \in \kompactd X d$ the distance $\haus(d)\big(X_1,X_2)$ defined as:
\vvm
$$
\max\big\{  \sup_{x_1\in X_1}\inf_{x_2\in X_2}d(x_1,x_2)  \ \     , \ \ \sup_{x_2\in X_2}\inf_{x_1\in X_1}d(x_1,x_2)    \big\}.
$$
\end{definition}
Hence, for every metric space $(X,d)\in\Met$, the collection of non--empty compact sets of finitely supported probability distributions can be endowed with the Hausdorff--Kantorovich lifted metric $\haus(\kant(d))$, which we write $\hk (d)$. Since all elements of $\cset(X)$ are compact, we obtain that $(\cset(X), \hk(d))$ is a $1$--bounded metric space. This leads to the definition of the $\Met$ monad $\lcset$ of \cite[\S 4]{MV20}.

\begin{definition}[Monad $\lcset$]
The monad $(\lcset, \eta^{\lcset}, \mu^{\lcset})$ on $\Met$ is defined as follows. The functor $\lcset$ is defined as mapping objects $(X,d)$ to $\big(\cset(X), \hk(d) \big)$ and morphisms $f:X\rightarrow Y$ to $\lcset(f) = \cset(f)$ (i.e., as the $\Sets$ functor $\cset$). The unit  $\eta^{\lcset}$ and the multiplication $\mu^{\lcset}$ are defined as for the $\Sets$ monad $\cset$ (Definition \ref{def:set:cset}). This is well--defined since both  $\eta^{\lcset}$ and $\mu^{\lcset}$ are non--expansive. 
\end{definition}

As a consequence of Proposition \ref{prop-distlawcomposition} and Corollary \ref{eq:iota}, there is also a $\Met$ monad $(\lcset(\cdot+\onemet), \eta^{\lcset(+\onemet)}, \mu^{\lcset(+\onemet)})$ on the composition of $\lcset$ and $+\onemet$. 

\begin{proposition}[Monad $\lcset(+\onemet)$]\label{def:set:subcmet}
There is a $\Met$ monad $(\lcset(+\onemet), \eta^{\lcset(+\onemet)}, \mu^{\lcset(+\onemet)})$.
\end{proposition}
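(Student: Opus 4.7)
The plan is to invoke Corollary \ref{eq:iota} with $\Cat=\Met$ and $M=\lcset$. Since the excerpt has already noted that $\Met$ has a terminal object $\onemet$ and coproducts (via the metric $d_1+d_2$), and since $\lcset$ is a $\Met$-monad, all the hypotheses of the corollary are in place. Thus the monad $(\lcset(\cdot+\onemet),\eta^{\lcset(+\onemet)},\mu^{\lcset(+\onemet)})$ arises automatically with unit $\eta^{\lcset(+\onemet)}=\eta^{\lcset}\diamond\eta^{+\onemet}$ and multiplication $\mu^{\lcset(+\onemet)}=(\mu^{\lcset}\diamond\mu^{+\onemet})\circ\lcset\,\iota(+\onemet)$, where $\iota_X=[\lcset\,\inl,\,\eta^{\lcset}_{X+\onemet}\circ\inr]$ is the candidate distributive law of $+\onemet$ over $\lcset$.

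The one technical point requiring verification, which is what distinguishes this proposition from Proposition \ref{def:set:subcset}, is that the whole construction lives in $\Met$ and not merely in $\Sets$, i.e., every structural map involved is non-expansive. For the distributive law $\iota_X$: the injections $\inl\colon X\to X+\onemet$ and $\inr\colon\onemet\to X+\onemet$ are non-expansive by the very definition of the coproduct metric on $X+\onemet$; hence $\lcset\inl$ is non-expansive since $\lcset$ is a $\Met$-functor, and $\eta^{\lcset}_{X+\onemet}\circ\inr$ is non-expansive as a composition of non-expansive maps (the unit $\eta^{\lcset}$ being natural in $\Met$). By the universal property of coproducts in $\Met$, their copairing $\iota_X$ is non-expansive, so $\iota$ is a natural transformation in $\Met$. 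Non-expansiveness of $\eta^{+\onemet}=\inl$ and of the multiplication $\mu^{+\onemet}=[[\inl,\inr],\inr]$ follows similarly from the definition of the coproduct metric, so the composite $\mu^{\lcset(+\onemet)}$ is non-expansive as well.

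Once non-expansiveness is secured, the distributive law equations for $\iota$ (and, via Proposition \ref{prop-distlawcomposition}, the monad equations for $\lcset(+\onemet)$) reduce to equalities of underlying set-theoretic functions, and these are precisely the equations already established in Proposition \ref{def:set:subcset} for the $\Sets$-level construction $\cset(+\oneset)$; since $\Met$-morphisms are a subclass of $\Sets$-morphisms, the equations transfer directly. I expect the main (and only non-routine) obstacle to be the non-expansiveness check for $\iota$, but this is essentially a bookkeeping exercise in the definition of the coproduct metric together with the fact that $\lcset$ is a $\Met$-functor.
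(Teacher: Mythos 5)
Your proposal is correct and follows essentially the same route as the paper: the Proposition is obtained by applying Corollary \ref{eq:iota} (via Proposition \ref{prop-distlawcomposition}) with $\Cat=\Met$, which has coproducts and a terminal object, and with the $\Met$ monad $\lcset$. Your explicit non-expansiveness checks are harmless but already subsumed by the categorical formulation of the corollary, since $\iota_X$ and the structure maps are built from $\Met$-morphisms by copairing and composition.
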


Following Remark \ref{rem:subdistr}, we refer to $\lcset$ (respectively $\lcset(+\onemet)$) as the $\Met$ monad of non--empty finitely generated convex sets of distributions (respectively subdistributions) with the Hausdorff--Kantorovich metric. 

A main result of \cite{MV20} is a presentation result\footnote{This result can be easily adapted to obtain a presentation of $\lcset(+\onemet)$ too.} for $\lcset$, based on the recently introduced notions of quantitative algebras and quantitative equational theories introduced in \cite{radu2016} (see also \cite{DBLP:conf/lics/MardarePP17,DBLP:conf/lics/BacciMPP18,BacciBLM18,DBLP:journals/entcs/Bacci0LM18}). 
This framework is, roughly speaking, adapting many usual concepts of universal algebra to deal with \emph{quantitative algebras}, which are  structures $ (A, \{ f^A\}_{f\in \Sigma}, d)$ where $(A, \{ f^A\}_{f\in \Sigma})$ is a set with interpretations for the function symbols of a given signature $\Sigma$ and $d$ is a $1$--bounded metric such that $(A,d)\in \Met$ and, for every $f\in \Sigma$ of arity $n$, the map $f^A$ is non--expansive with respect to the metric $d$ and the product metric $d^n$. The familiar concept of equation of universal algebra is replaced by that of \emph{quantitative inference}  $\{x_i  =_{\epsilon_i} y_i\}_{i\in I} \vdash s =_\epsilon t$ where $\epsilon,\epsilon_i\in[0,1]$ and $s,t$ are terms over $\Sigma$. A quantitative algebra  $ (A, \{ f^A\}_{f\in \Sigma}, d)$ satisfies a quantitative inference if, for all interpretations $\iota(x)\in A$ of the variables $x$ as elements of  $A$, the following holds:
\vvm
\begin{equation}\label{quant_inferences}
\textnormal{ if $d\big( \iota(x_i)  , \iota( y_i) \big) \leq \epsilon_i$ for all $i\in I$, then  $\big(\iota(s),\iota(t)\big)\leq \epsilon.$}
\vvm
\end{equation}
The apparatus of equational logic is replaced by a similar apparatus (see \cite[\S 3]{DBLP:conf/lics/BacciMPP18}) for deriving quantitative inferences from a set of generating quantitative inferences (axioms). Soundness and completeness theorems then provide the link between the semantics of quantitative inferences (Property \ref{quant_inferences} above) and derivability in the deductive apparatus.


\begin{definition}[Quantitative Theory of Convex Semilattices]\label{def:convexsemilattices:met}
The quantitative equational theory $\qetcs$ of convex semilattices has signature $\sigcs= (\{ \oplus \}\cup \{ +_p \}_{p\in (0,1)})$ and is defined as the set of quantitative inferences derivable by the following axioms, stated for arbitrary $p,q\in (0,1)$ and 
{$ \epsilon_1,\epsilon_2\in [0,1]$}:

{
\begin{tabular}{@{\hskip -0.05cm}l @{\hskip 2pt}l}
$(A) $ &
$\emptyset \vdash x \oplus ( y \oplus z) =_0 (x \oplus y) \oplus z \ $  \\
$(C)$ & 
$\emptyset \vdash x \oplus y =_0 y \oplus x$\\
$(I)$ & 
$\emptyset \vdash x\oplus x =_0 x$\\
$(A_{p})$ & 
$\emptyset \vdash (x+_qy)+_pz =_0 x+_{pq}(y+_{\frac{p(1-q)}{1-pq}}z)$\\
$(C_{p})$ & 
$\emptyset \vdash x+_py  =_0  y+_{1-p}x$\\
$(I_{p})$ & 
$\emptyset \vdash x+_px  =_0  x$\\
$(D)$ & $\emptyset \vdash x +_p (y \oplus z) =_{0} (x+_p y) \oplus (x +_p z)$\\
$(H)$ & 
$\big\{x_1=_{\epsilon_1} y_1,     x_2=_{\epsilon_2} y_2\big\}\!\vdash x_1 \oplus x_2 =_{\max(\epsilon_1,\epsilon_2)} y_{1}\oplus y_{2}$
\\
(K) & 
$\big\{x_1\!=_{\epsilon_1}\! y_{1},     x_2\!=_{\epsilon_2}\! y_{2}\big\}\!\vdash x_1 \!+_p\! x_2 =_{p\cdot \epsilon_1 + (1-p)\cdot \epsilon_2} y_{1}  \!+_p\! y_{2}$
\end{tabular}
}

\end{definition}
A quantitative algebra $ (A, \{\oplus^A\} \cup \{ +^{A}_p\}_{p \in (0,1)}, d)$ over the signature $\sigcs$ is called a quantitative convex semilattice if it satisfies all quantitative inferences of $\qetcs$ in the sense of Property \ref{quant_inferences}. Similarly, the quantitative theory of pointed convex semilattices $\qetpcs$ is defined over the signature $\sigpcs= (\{ \oplus \}\cup \{ +_p \}_{p\in (0,1)}\cup \{\point\})$ by the same quantitative inferences of Definition \ref{def:convexsemilattices:met}. We denote with $\qacat(\qetcs)$ (respectively $\qacat(\qetpcs)$) the category having as objects quantitative convex semilattices  (respectively quantitative pointed convex semilattices) and as arrows homomorphisms that are non--expansive. 

The presentation theorems of the $\Met$ monads $\lcset$ and $\lcset(+\onemet$) of \cite{MV20} can now be formally stated as the following isomorphisms of categories.
\begin{proposition}\label{thm:main}
\begin{enumerate}
\item The quantitative theory $\qetcs$ of convex semilattices is a presentation of the monad $\lcset$, i.e., $\EM(\lcset)\cong \qacat(\qetcs)$.
\item The quantitative theory $\qetpcs$ of pointed convex semilattices is a presentation of the monad $\lcset(+\onemet)$, i.e., $\EM(\lcset(+\onemet))\cong \qacat(\qetpcs)$.
\end{enumerate}
\end{proposition}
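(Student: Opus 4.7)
My plan is to handle part (1) briefly (since it is essentially the main result of \cite{MV20}) and then extend the argument to part (2) by mimicking the $\Sets$-level passage of Proposition \ref{prop-knownpresentations} from $\cset$ to $\cset(+\oneset)$. Concretely I will exhibit a pair of mutually inverse functors $P$ and $P^{-1}$, analogous to those in \eqref{eq:setiso}, and show that each sends non-expansive morphisms to non-expansive morphisms.

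First I define $P : \EM(\lcset(+\onemet)) \to \qacat(\qetpcs)$ on objects by $P(A,d,\alpha) = (A,d, \oplus^\alpha, \{+_p^\alpha\}_{p\in(0,1)}, \star^\alpha)$, using exactly the same formulas as in the $\Sets$ case below \eqref{eq:setiso}. The $0$-distance axioms $(A)$, $(C)$, $(I)$, $(A_p)$, $(C_p)$, $(I_p)$, $(D)$ of $\qetpcs$ hold because $P(A,d,\alpha)$ forgetfully yields a pointed convex semilattice in $\acat(\etpcs)$ by Proposition \ref{prop-knownpresentations}.2. The two genuinely quantitative axioms $(H)$ and $(K)$ follow from the non-expansiveness of $\alpha : (\lcset(A+\onemet),\hk(d)) \to (A,d)$: indeed, $\hk(d)$ restricted to pairs of the form $(\conv\{\dirac{a_1},\dirac{a_2}\}, \conv\{\dirac{b_1},\dirac{b_2}\})$ is $\max(d(a_1,b_1),d(a_2,b_2))$ by direct computation from the definitions of $\haus$ and $\kant$, which gives $(H)$; similarly $\hk(d)(\{p\,a_1+(1-p)\,a_2\}, \{p\,b_1+(1-p)\,b_2\}) = p\cdot d(a_1,b_1) + (1-p)\cdot d(a_2,b_2)$ by the Kantorovich definition, which gives $(K)$.

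For $P^{-1}: \qacat(\qetpcs) \to \EM(\lcset(+\onemet))$, I would take the same object-level assignment as in the $\Sets$ case: given a quantitative pointed convex semilattice $\alga = (A,d,\oplus^\alga, +_p^\alga, \star^\alga)$, put
\[
\alpha(S) \;=\; \textstyle\bigcplus^{\alga}_{\distr\in \ub(S)}\big(\bigpplus^{\alga}_{b\in\support(\distr)} \distr(b)\,b\big),
\]
so that $(A,\alpha)$ is already a $\cset(+\oneset)$-algebra by Proposition \ref{prop-knownpresentations}.2. What must be checked is that $\alpha : (\cset(A+\oneset),\hk(d)) \to (A,d)$ is non-expansive; this is the principal obstacle. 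The idea is: given $S_1,S_2 \in \cset(A+\oneset)$ with $\hk(d)(S_1,S_2)\le \epsilon$, unfold $\haus$ to obtain, for each $\distr\in \ub(S_1)$, some $\distrb\in S_2$ with $\kant(d)(\distr,\distrb)$ small, and conversely; then use the Kantorovich dual via a coupling $\omega$ to write each such $\distrb$ as a convex combination that can be paired, term-by-term, against $\distr$. Iterated application of axiom $(K)$ turns a coupling-witnessed bound on $\kant(d)$ into a bound on $d$ between the $+^\alga_p$-aggregated values, and iterated application of $(H)$, together with idempotence and axiom $(D)$ to rewrite the Hausdorff part, lifts this from individual distributions to the $\oplus^\alga$-aggregates over the unique bases. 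The presence of the distinguished element $\point^\alga$ plays no role beyond being a legal value of $b$, so the same argument handles the subdistribution case.

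Finally I would verify functoriality of $P$ and $P^{-1}$ (a homomorphism on one side transports to a homomorphism on the other by the same diagram-chasing as in Proposition \ref{prop-knownpresentations}, with the extra check that non-expansiveness is preserved, which is immediate since the morphism does not change as a function on carriers) and conclude that $P\circ P^{-1}$ and $P^{-1}\circ P$ are the identities, again by reduction to the $\Sets$ case. I expect the delicate step to be the non-expansiveness of $\alpha$ in $P^{-1}$: the interaction between the unique-base description of convex sets, the Hausdorff layer, and the Kantorovich coupling requires choosing compatible couplings across all $\distr \in \ub(S_1)$ and may force either a careful $\epsilon$-argument or use of compactness of $\ub(S)$ to extract optimal witnesses.
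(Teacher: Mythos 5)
The paper does not actually prove this statement: it is marked as a ``Proposition'' precisely because it is imported from the literature --- part (1) is the main presentation theorem of \cite{MV20}, and part (2) is dismissed in a footnote as an easy adaptation of it. So your attempt is not being compared against an in-paper argument but against the cited work, and your outline is indeed a reconstruction of the MV20-style proof: transport the $\Sets$ isomorphism $P,P^{-1}$ of Proposition \ref{prop-knownpresentations}.2 and add the metric checks, with the only genuinely quantitative content being (a) that the algebra structure induced by a non-expansive $\alpha$ satisfies $(H)$ and $(K)$, and (b) that the $\alpha$ induced by an arbitrary quantitative pointed convex semilattice is non-expansive with respect to $\hk(d+\donemet)$. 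That is the right decomposition.

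Two caveats. First, a small inaccuracy: your claims that $\hk(d)\big(\conv\{\dirac{a_1},\dirac{a_2}\},\conv\{\dirac{b_1},\dirac{b_2}\}\big)=\max(d(a_1,b_1),d(a_2,b_2))$ and that $\hk(d)(\{p\,a_1+(1-p)\,a_2\},\{p\,b_1+(1-p)\,b_2\})=p\,d(a_1,b_1)+(1-p)\,d(a_2,b_2)$ are false as equalities (take $a_1=b_2$, $a_2=b_1$: the left-hand sides vanish); only the inequalities $\leq$ hold, but that is all $(H)$ and $(K)$ require, so this is cosmetic. Second, and more seriously, the step you yourself flag as the ``principal obstacle'' --- non-expansiveness of $\alpha$ in the $P^{-1}$ direction --- is the entire substance of the cited theorem, and your plan for it is incomplete as stated. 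The Hausdorff unfolding matches each $\distr\in\ub(S_1)$ with some $\distrb\in S_2$ that is in general \emph{not} in $\ub(S_2)$, so before iterating $(H)$ you must know that the interpretation $\bigcplus_{\distr\in G}\big(\bigpplus_{b\in\support(\distr)}\distr(b)\,b\big)$ is independent of the chosen finite generating set $G$ of a convex set, not just defined on the unique base; this is a derivable-equation lemma (via $x\oplus y= x\oplus y\oplus(x\pplus q y)$, which needs $(D)$) that your sketch silently assumes when it speaks of ``rewriting the Hausdorff part''. Likewise the coupling step needs the convex-algebra axioms to re-express both $\bigpplus_{b}\distr(b)\,b$ and $\bigpplus_{b'}\distrb(b')\,b'$ as sums indexed by $\support(\omega)$ before $(K)$ can be iterated, and the matching must be done in both directions of the Hausdorff max. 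None of this is wrong, and attainment of the optima is unproblematic for finitely generated (hence compact) sets and finitely supported couplings, but as written the proposal defers exactly the part that distinguishes the quantitative result from its $\Sets$ counterpart, so it should be regarded as a correct strategy rather than a complete proof.
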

The isomorphism in Proposition \ref{thm:main}.2 is given by a pair of functors
\vvm
\begin{equation}\label{eq:metiso}
\begin{split}
&\hat P: \EM(\lcset(+\onemet))\rightarrow \qacat(\qetpcs) \\
&\hat P^{-1}: \qacat(\qetpcs)\rightarrow \EM(\lcset(+\onemet))
\end{split}
\vvm
\end{equation}
whose definition is similar to the corresponding $\Sets$ isomorphisms $P, P^{-1}$ from \eqref{eq:setiso}, namely 
$\hat P((A,d),\alpha)= (A, \oplus^{\alpha}, \{\pplus p^{\alpha}\}_{p\in (0,1)}, \star^{\alpha} , d)$.
As in the $\Sets$ case, we can use the presentation to identify the free quantitative pointed convex semilattice.
\begin{proposition}\label{prop:freepcsmet}
The free quantitative pointed convex semilattice on $(X,d)$ is  (up to isomorphism):  
$$(\cset(X+\oneset), \oplus^{\cset(+\oneset)}, \{\pplus p^{\cset(+\oneset)}\}_{p\in (0,1)}, \star^{\cset(+\oneset)}, \hk(d + \donemet))$$ 
with operations interpreted as in Proposition \ref{prop:freepcs}.
\end{proposition}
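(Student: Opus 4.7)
The plan parallels the proof of Proposition \ref{prop:freepcs}, substituting the quantitative presentation (Proposition \ref{thm:main}.2) for the set-theoretic one. I would first invoke the general categorical fact that, for any monad $(M,\eta,\mu)$ on a category $\Cat$, the free $M$-algebra on an object $X \in \Cat$ is $(M(X), \mu_X)$, with canonical insertion $\eta_X$ and with the extension of $f\colon X \to A$ (to an arbitrary $M$-algebra $(A,\alpha)$) given by $\alpha \circ M(f)$. Applying this to the $\Met$ monad $\lcset(+\onemet)$ (Proposition \ref{def:set:subcmet}) and a metric space $(X,d)$, the free Eilenberg--Moore algebra is $\bigl(\lcset(+\onemet)(X,d), \mu^{\lcset(+\onemet)}_{(X,d)}\bigr)$, whose underlying metric space is precisely $\bigl(\cset(X+\oneset), \hk(d+\donemet)\bigr)$ by construction of the functor $\lcset(+\onemet)$.

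Next I would transport this free EM-algebra along the isomorphism $\hat P\colon \EM(\lcset(+\onemet)) \to \qacat(\qetpcs)$ of Proposition \ref{thm:main}.2. Since $\hat P$ is an isomorphism of categories and, by its explicit definition in \eqref{eq:metiso}, keeps the underlying metric space unchanged, it commutes strictly with the forgetful functors to $\Met$. Hence $\hat P$ maps free objects over $(X,d)$ in $\EM(\lcset(+\onemet))$ to free objects over $(X,d)$ in $\qacat(\qetpcs)$, so $\hat P\bigl(\lcset(+\onemet)(X,d), \mu^{\lcset(+\onemet)}_{(X,d)}\bigr)$ is the sought free quantitative pointed convex semilattice.

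Finally I would compute the operations of this free algebra using the formulas defining $\hat P$ (structurally identical to those of $P$ in \eqref{eq:setiso}). Since the $\Met$ monad $\lcset(+\onemet)$ is the Hausdorff--Kantorovich lifting of the $\Sets$ monad $\cset(+\oneset)$ and the lifting affects only the metric, not the action on underlying sets, the multiplication $\mu^{\lcset(+\onemet)}_{(X,d)}$ agrees pointwise with $\mu^{\cset(+\oneset)}_{X}$. The evaluation of $\mu$ on convex sets of Dirac-like distributions therefore reduces verbatim to the computation in the $\Sets$ case, yielding $S_1 \oplus S_2 = \conv(S_1 \cup S_2)$, $S_1 \pplus{p} S_2 = \wms(p\,S_1 + (1-p)\,S_2)$, and $\star = \{\dirac\star\}$, exactly as in Proposition \ref{prop:freepcs}.

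The one genuinely new point compared with the $\Sets$ case is verifying that $\hat P$ preserves freeness, which amounts to the compatibility of $\hat P$ and $\hat P^{-1}$ with the forgetful functors to $\Met$; this is immediate from the description of $\hat P$ given after Proposition \ref{thm:main}, which leaves carriers and metrics untouched. The non-expansiveness of the universal map $\hat\alpha \circ \lcset(+\onemet)(f)$, needed for the universal property in $\qacat(\qetpcs)$, is automatic because $\lcset(+\onemet)$ is a $\Met$-functor and the algebra structure maps in $\qacat(\qetpcs)$ are non-expansive by the presentation.
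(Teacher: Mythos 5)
Your proposal is correct and follows essentially the same route the paper takes: it treats the statement as a direct consequence of the standard fact that the free Eilenberg--Moore algebra on $(X,d)$ is $(\lcset(+\onemet)(X,d),\mu^{\lcset(+\onemet)}_{(X,d)})$, transported along the presentation isomorphism $\hat P$ of Proposition \ref{thm:main}.2, which preserves carriers and metrics, with the operations then computed exactly as in Proposition \ref{prop:freepcs}. Your added remarks on $\hat P$ commuting with the forgetful functors and on non-expansiveness of the universal maps just make explicit what the paper leaves implicit.
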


%
%

\section{$\Sets$ monad $\moncb$ and its presentation}
\label{sec:cplusone}
The functor $\moncb$ maps a set $X$ to the set of non--empty finitely generated convex sets of distributions on $X$ plus an additional element which we denote as $\star\!\in\! \mathbf{1}$. Equivalently, by seeing this additional element as representing the empty set of distributions on $X$, $\moncbp X$ is the set of (possibly empty) finitely generated convex sets of distributions on $X$. This is  the functor of \emph{convex Segala systems} \cite{Seg95:thesis}, which have been widely studied in the literature (see, e.g., \cite{BSV04:tcs,Sokolova11} for an overview) as models of nondeterministic and probabilistic programs.

In this section, we investigate a monad 
whose underlying functor is $\moncb$.\footnote{We let $\moncb$ denote $(+\mathbf{1}) \cset$, not to be confused with $ \cset(+\mathbf{1})$.} Following common practice, the monad $(\moncb, \eta^{\moncb}, \mu^{\moncb})$ will often be simply denoted by $\moncb$, as its underlying functor. Our main result regarding the monad $\moncb$ is a presentation theorem based on the equational theory of convex semilattices with bottom and black--hole. 
The black--hole axiom has been investigated in the literature, e.g., in the context of convex algebras (\cite{SW2018}, \cite{BSS17}) and in axiomatisations of a nondeterministic and probabilistic process algebras (\cite{MOW03}).

\begin{definition}[Theory $\etcsbb$]\label{def:csbhb}
Let $\etpcs$ be the equational theory of pointed convex semilattices. We let $\bot$ (bottom) and \emph{BH} (black--hole) denote the sets of equations
$
\bot = \{x \oplus \star = x\}$ and $BH = \{  x +_p \star = \star \mid p\in (0,1)\}
$, respectively. The equational theory of convex semilattices with bottom and black--hole, denoted by $\etcsbb$, is the theory generated by the set of equations $\etpcs\cup \bot\cup BH$.
\end{definition}

Our monad  on the functor $\moncb$ is defined using Proposition \ref{prop-distlawcomposition}, i.e., we exhibit a monad distributive law of type $\gamma: \cset (+\mathbf{1}) \Rightarrow (+\mathbf{1}) \cset  $ and this gives a monad structure $(\moncb, \eta^{\moncb}, \mu^{\moncb})$ on the composite functor $(+\mathbf{1})\cset$, i.e., on $\moncb$.

\begin{definition}\label{def:gamma}
For every set $X$, the map $\gamma_X: \cset (X+\oneset) \rightarrow \cset(X)+\oneset  $ 
is defined as follows, for any $S\in \cset(X+\oneset)$
\vvm
\[\gamma_X (S) = \begin{cases}
    \left\{ \distr \mid \distr \in S \text{ and } \distr(\point) = 0\right\} &\exists \distr \in S \text{ s.t. } \distr(\point) = 0\\\point &\text{otherwise}
\end{cases}
\vvm
\]
\end{definition}

The above definition can be understood as follows. By viewing  elements  $S\in \cset (X+\oneset)$ as non--empty finitely generated convex sets of subdistributions on $X$ (see Remark \ref{rem:subdistr}) the map $\gamma_{X}$  maps $S$ to its subset (which is convex) consisting of full probability distributions  (i.e., assigning probability $0$ to $\star\in\oneset$). It is possible, however, that such subset is empty, and therefore not in $\cset(X)$. In this case (second clause in the definition)  $S$ is mapped
to the element $\star\!\in\!\oneset$. This two--cases analysis can be further simplified if we see the element $\star\in\oneset$ as representing the empty set $\emptyset$ so that we can simply write:
$$
S \stackrel{\gamma_X}{\longmapsto}  \big\{  \distr\in S  \mid   \distr \textnormal{ is a full distribution, i.e., $\distr(\point) = 0$}  \big\}
$$


\begin{lemma}\label{lem:gammadistrlaw}
    The family of maps $\gamma_X\!:\! \cset(X+\mathbf{1})\! \rightarrow\! \moncbp X$, for $X\in \Sets$, is a monad  distributive law of the monad $\cset$ over the monad $+\oneset$.
    \end{lemma}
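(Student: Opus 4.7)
The plan is to verify the naturality of $\gamma$ together with the four coherence axioms of Definition~\ref{def:monadlaw}, specialised to $M = \cset$ and $\widehat M = {+\oneset}$. Throughout, I tacitly identify a distribution $\distr \in \dset(X + \oneset)$ satisfying $\distr(\point) = 0$ with its restriction to $X$, via the bijection $\dset(X) \cong \{\distr \in \dset(X+\oneset) \mid \distr(\point) = 0\}$ induced by $\inl$. I first check that $\gamma_X(S)$ is well-defined in $\cset(X)+\oneset$: if $S$ has a finite base $\{\distr_1,\dots,\distr_n\}$, then a convex combination $\sum_i \lambda_i\distr_i$ is full iff $\lambda_i = 0$ whenever $\distr_i(\point) > 0$, so the full elements of $S$ form the convex hull of a sub-base, hence either lie in $\cset(X)$ or are empty. \emph{Naturality} of $\gamma$ then follows at once from $(f+\oneset)^{-1}(\point) = \{\point\}$, which implies that $\inl$--pushforward commutes with the restriction to full distributions.

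The two \emph{unit} axioms are direct: $\cset(\inl_X)(T)$ consists of distributions supported in $\inl(X)$, all of which are full, so $\gamma_X$ returns the whole set (which corresponds to $T$ under the identification above); and for a generator $\eta^{\cset}_{X+\oneset}(a) = \{\dirac a\}$, the distribution $\dirac a$ is full precisely when $a \in \inl(X)$, matching the case split of $\eta^{\cset}_X + \oneset$.

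For the first \emph{multiplication} axiom, $\gamma_X \circ \mu^{\cset}_{X+\oneset} = (\mu^{\cset}_X + \oneset) \circ \gamma_{\cset(X)} \circ \cset(\gamma_X)$, the key observation is: for $\Phi = \sum_i p_i S_i \in \dset\cset(X+\oneset)$ with $p_i > 0$, a combination $\sum_i p_i \distrb_i$ with $\distrb_i \in S_i$ has zero mass on $\point$ iff each $\distrb_i$ does, since weights are non-negative. Consequently $\wms(\Phi)$ contains a full distribution iff every $S_i$ does, and in that case the full elements of $\wms(\Phi)$ coincide with $\wms(\sum_i p_i\,\gamma_X(S_i))$. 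Unfolding both compositions via $\mu^{\cset}(\mathcal{S}) = \bigcup_{\Phi \in \mathcal{S}} \wms(\Phi)$ and the definition of $\gamma_{\cset(X)}$, both sides reduce to $\bigcup_{\Phi \in \mathcal{S},\, \gamma_X(S_i) \neq \point\, \forall i} \wms(\sum_i p_i\,\gamma_X(S_i))$ when this union is non-empty, and to $\point$ otherwise.

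The second multiplication axiom, $\gamma_X \circ \cset(\mu^{+\oneset}_X) = \mu^{+\oneset}_{\cset(X)} \circ (\gamma_X + \oneset) \circ \gamma_{X+\oneset}$, is a case analysis on the two copies of $\point$ in $(X+\oneset)+\oneset$. For $S \in \cset((X+\oneset)+\oneset)$, both sides yield the set of distributions of $S$ supported in $\inl(\inl(X))$ when non-empty, else $\point$: on the left, $\cset(\mu^{+\oneset}_X)$ merges the two point coordinates and then $\gamma_X$ filters for zero mass on the resulting point; on the right, $\gamma_{X+\oneset}$ filters the outer point, $\gamma_X + \oneset$ the inner one (collapsing to $\point$ if no full distribution remains), and $\mu^{+\oneset}_{\cset(X)}$ merges the two $\point$ layers. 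The main obstacle is the first multiplication axiom, requiring a careful inspection of the interaction between the weighted Minkowski sum and the restriction to full distributions, including the $\point$--propagation case when some $S_i$ has no full element; all other axioms reduce to routine bookkeeping once the $\inl$--identification is made explicit.
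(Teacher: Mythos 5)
Your proposal is correct and follows essentially the same route as the paper: the same decomposition into naturality, the two unit triangles and the two multiplication squares, with the same key facts — that a positively-weighted convex combination is a full distribution iff each component is (the paper isolates this as a lemma on $\gamma$ versus $\wms$), that restriction to full distributions commutes with unions, and the case analysis on the two copies of $\point$ for the second multiplication law. The well-typedness argument via a sub-base of full distributions also matches the paper's.
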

\begin{proof}[Sketch]
To verify the commuting diagrams of Definition \ref{def:monadlaw}, we use the fact that $\gamma_X$ commutes with the operations of union and weighted Minkowski sum, which are used in the definition of the multiplication of the monad $\cset$.
\end{proof}

As a corollary, we obtain from Proposition \ref{prop-distlawcomposition} a monad structure  $(\moncb, \eta^{\moncb}, \mu^{\moncb})$ where $\eta^{\moncb} = \eta^\cset$ and  $\mu^{\moncb}_X= \mu^{+\mathbf{1}}_{\cset X}\circ\big((\mu^{\cset}_X+\mathbf{1})+\mathbf{1}\big)\circ \big(\gamma_{\cset X}+\mathbf{1}\big)$. 
Explicitly, $\eta^{\moncb}_X(x) = \{\dirac x \}
$ and, 
for
$S\in \cset(\cset(X)+\mathbf{1})+\mathbf{1}$, if we let $\bpoint$ denote the element of the outer $\oneset$ and $\star$ denote the element of the inner $\oneset$: 
\begin{equation}\label{eq:mubb}
   \mu^{\moncb}_X(S)
      =
%
   \begin{cases}
	\mu^{\cset}_{X}(\gamma_{\cset(X)}(S))
	& \begin{minipage}{0.5\textwidth}
	$S\neq \bpoint \text{ and } \gamma_{\cset(X)}(S) \neq \point$
	\end{minipage}\\
        \point &      		\text{otherwise}
          \end{cases}
%
\end{equation}

We can now state the main result of this section.
\begin{theorem}\label{main:theorem:moncb}
The monad $\moncb$ is presented by the equational theory of convex semilattices with bottom and black--hole, i.e.,
$\EM(\moncb )\cong\acat( \etcsbb)$.
\end{theorem}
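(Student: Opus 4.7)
The plan is to mimic the proof of Proposition~\ref{prop-knownpresentations}.2 by constructing a pair of mutually inverse functors $P_\bot \colon \EM(\moncb) \to \acat(\etcsbb)$ and $P_\bot^{-1} \colon \acat(\etcsbb) \to \EM(\moncb)$, parallel to the pair $P, P^{-1}$ of \eqref{eq:setiso}. I would define $P_\bot(A,\alpha)$ using the same operation formulas as $P$, namely $a_1 \oplus^\alpha a_2 = \alpha(\conv\{\dirac{a_1},\dirac{a_2}\})$, $a_1 \pplus p^\alpha a_2 = \alpha(\{p\dirac{a_1}+(1-p)\dirac{a_2}\})$ and $\star^\alpha = \alpha(\star)$, where the nontrivial arguments sit inside $\cset(A) \hookrightarrow \moncb(A)$. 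The convex semilattice axioms will hold because the restriction $\alpha \circ \inl \colon \cset(A) \to A$ is a $\cset$-algebra (a direct consequence of the $\moncb$-algebra laws), so Proposition~\ref{prop-knownpresentations}.1 applies.

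To obtain bottom and black-hole I would instantiate the multiplication law $\alpha \circ \moncb\alpha = \alpha \circ \mu^{\moncb}_A$ at two carefully chosen elements of $\moncb^2 A$. For bottom, at $T_\bot = \conv\{\dirac{\{\dirac a\}}, \dirac\star\} \in \cset(\moncb A)$: the left-hand side reduces to $a \oplus^\alpha \star^\alpha$, while by Definition~\ref{def:gamma} one has $\gamma_{\cset A}(T_\bot) = \{\dirac{\{\dirac a\}}\}$, so $\mu^{\moncb}_A(T_\bot) = \{\dirac a\}$ by \eqref{eq:mubb} and the right-hand side is $a$. For black-hole, at $T_{BH} = \{p\dirac{\{\dirac a\}} + (1-p)\dirac\star\}$: the unique subdistribution is not full, so $\gamma_{\cset A}(T_{BH}) = \star$ and hence $\mu^{\moncb}_A(T_{BH}) = \star$, giving $\star^\alpha$ on the right, while the left evaluates to $a \pplus p^\alpha \star^\alpha$.

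For $P_\bot^{-1}$, given $\alga \in \acat(\etcsbb) \subseteq \acat(\etpcs)$, I would first apply $P^{-1}$ to obtain a $\cset(+\oneset)$-algebra $\beta \colon \cset(A+\oneset) \to A$ and then set $\tilde\alpha(\star) = \star^\alga$ and $\tilde\alpha(S) = \beta(S)$ for $S \in \cset(A)$. The central step is a \emph{factorisation lemma} stating $\beta = \tilde\alpha \circ \gamma_A$, which I would prove by cases on whether $S \in \cset(A+\oneset)$ contains a full distribution. In the affirmative case, $\dset(A) \subseteq \dset(A+\oneset)$ being a face forces $\ub(\gamma_A(S)) = \{d \in \ub(S) : d(\star) = 0\}$; the remaining extreme points, all having $\star$ in their support, evaluate under $\bigpplus$ to $\star^\alga$ by iterating the black-hole axiom, and are then absorbed into the $\bigcplus$ via the bottom axiom, giving $\beta(S) = \beta(\gamma_A(S)) = \tilde\alpha(\gamma_A(S))$. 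If $S$ contains no full distribution, every $d \in \ub(S)$ has $\star$ in its support, each $\bigpplus$ again evaluates to $\star^\alga$ by black-hole, and so $\beta(S) = \bigcplus_{d \in \ub(S)} \star^\alga = \star^\alga = \tilde\alpha(\star)$.

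Once the factorisation lemma is in hand, the $\moncb$-algebra laws for $\tilde\alpha$ are routine: the unit law is immediate, and the multiplication law on the outer $\bpoint$ of $\moncb^2 A$ is trivial, whereas on $S \in \cset(\moncb A)$ it reduces, via \eqref{eq:mubb} and $\beta = \tilde\alpha \circ \gamma_A$, to the multiplication law of $\beta$ as a $\cset(+\oneset)$-algebra. The identities $P_\bot \circ P_\bot^{-1} = \id$ and $P_\bot^{-1} \circ P_\bot = \id$ then follow directly from the formulas, and functoriality is inherited from the corresponding properties of $P, P^{-1}$. The hard part will be the factorisation lemma: it is the one step where both the bottom and black-hole axioms are genuinely used, and it requires the face-structure of $\dset(A)$ inside $\dset(A+\oneset)$ to align the unique bases on the two sides of $\gamma_A$.
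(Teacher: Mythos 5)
Your proposal is correct and follows essentially the same route as the paper: both reduce to the known presentation $P\colon \EM(\cset(+\oneset))\cong\acat(\etpcs)$, verify that $\bot$ and $BH$ hold for algebras arising from $\moncb$, and hinge on the same key factorisation $\beta = \tilde\alpha\circ\gamma_A$ (the paper's $\alpha = \alpha'\circ\gamma_A$ in Lemma \ref{moncb-final-lemma}), proved by the same case split on whether $S$ contains a full distribution, with $BH$ collapsing the non-full extreme points of $\ub(S)$ and $\bot$ absorbing them. The only difference is packaging: the paper first shows $\gamma$ is a monad map so that $U^{\gamma}$ is a full embedding and the isomorphism is obtained by restricting $P,P^{-1}$, whereas you build $P_\bot, P_\bot^{-1}$ by hand and derive $\bot$ and $BH$ by instantiating the $\moncb$-multiplication law at chosen elements — a computation equivalent to the paper's.
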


The rest of this section outlines the proof of Theorem \ref{main:theorem:moncb} above.
Our first technical step is to prove that the distributive law $\gamma: \cset(+\oneset)\Rightarrow \moncb$ of Lemma \ref{lem:gammadistrlaw}  is also a monad map (see Definition \ref{def:monadmap}) between the monads $\cset(+\oneset)$ and $\moncb$.

\begin{lemma}\label{lem:gammamonadmap}
$\gamma: \cset(+\oneset)\Rightarrow \moncb$ is a monad map.
\end{lemma}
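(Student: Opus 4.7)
A monad map is characterised by two conditions (Definition \ref{def:monadmap}): the unit law $\eta^{\moncb} = \gamma \circ \eta^{\cset(+\oneset)}$ and the multiplication law $\gamma \circ \mu^{\cset(+\oneset)} = \mu^{\moncb} \circ (\gamma \diamond \gamma)$. Naturality of $\gamma$ has already been established in Lemma \ref{lem:gammadistrlaw}, so only these two diagrams remain.

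The unit law is immediate. For any $x \in X$, $\eta^{\cset(+\oneset)}_X(x) = \{\dirac x\}$, a singleton containing a full distribution on $X+\oneset$ (it assigns zero mass to $\point$), so the first clause of Definition \ref{def:gamma} yields $\gamma_X(\{\dirac x\}) = \{\dirac x\} = \eta^{\cset}_X(x) = \eta^{\moncb}_X(x)$.

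For the multiplication law, I would evaluate both sides on an arbitrary $T \in \cset(\cset(X+\oneset)+\oneset)$. Using Remark \ref{rem:subdistr}, each $\distr \in T$ decomposes as $\sum_i p_i S_i + p_\point \dirac{\point}$ with $S_i \in \cset(X+\oneset)$. The left-hand side $\gamma_X \circ \mu^{\cset(+\oneset)}_X$ first flattens $T$, using the explicit form of $\mu^{\cset(+\oneset)}$ induced by the distributive law $\iota$ of Corollary \ref{eq:iota}, to produce $\bigcup_{\distr \in T} \{\sum_i p_i \distr'_i + p_{\point} \dirac{\point} \mid \distr'_i \in S_i\}$, and then applies $\gamma_X$, which retains precisely the combinations with zero $\point$--mass. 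Such a combination requires $p_{\point} = 0$ and every $\distr'_i(\point) = 0$, so only those $\distr \in T$ with $p_{\point} = 0$ and each $S_i$ admitting a full distribution contribute; if none contributes the result is $\point$. The right-hand side $\mu^{\moncb}_X \circ (\gamma \diamond \gamma)_X$ performs the stripping first: $\gamma_{\cset(X+\oneset)}$ keeps only the $\distr \in T$ with $p_\point = 0$ (or returns $\point$); then $\cset(\gamma_X)$ replaces each $S_i$ by $\gamma_X(S_i) \in \cset(X) + \oneset$ in the pushforward; and formula \eqref{eq:mubb} discards those pushforwards $\dset(\gamma_X)(\distr)$ still carrying positive $\point$--mass (i.e., where some $\gamma_X(S_i) = \point$), returning $\point$ if none survives, or otherwise applying $\mu^{\cset}_X$, which for a surviving $\distr$ gives $\wms(\dset(\gamma_X)(\distr)) = \{\sum_i p_i \distr'_i \mid \distr'_i \in \gamma_X(S_i)\}$. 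By the definition of $\gamma_X$, this set of combinations coincides with the one produced on the left, and the ``return $\point$'' cases agree as well.

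The main technical obstacle is the bookkeeping of coproduct injections together with the boundary case where both sides collapse to $\point$: this happens exactly when every $\distr \in T$ either has $p_\point > 0$ or contains in its support some $S_i$ admitting no full distribution on $X$. The compatibility of $\gamma$ with convex union and weighted Minkowski sum, already used in the proof of Lemma \ref{lem:gammadistrlaw}, is reused silently throughout to ensure that the two descriptions agree also in these degenerate cases, closing the argument.
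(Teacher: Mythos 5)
Your proposal is correct and follows essentially the same route as the paper's proof: the unit law by direct evaluation on $\{\dirac x\}$, and the multiplication law by chasing an arbitrary element along both paths, with the case split on whether any member of $T$ has zero outer $\point$--mass and all its supported $S_i$ contain a full distribution, resting on the compatibility of $\gamma$ with convex unions and weighted Minkowski sums (the paper isolates these as Lemmas \ref{lem:gammaunion} and \ref{lem:gammawms}). The only substantive detail the paper spells out that you leave implicit is the final identification of the two surviving index sets, which uses that an outer $\bpoint$ in the support of $\ddistr_i$ forces $\{\dirac\point\}$ into the support of $\ddistr_i^{\point}$ and hence disqualifies it on the other side as well.
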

Using Proposition \ref{prop-MonmapFunctor}, the monad map $\gamma$ can be turned into a functor $U^\gamma: \EM(\moncb) \rightarrow \EM(\cset(+\oneset))$ between the Eilenberg-Moore categories of the two monads.
\begin{lemma}\label{embedding_lemma_cbotbh}
 There is a functor $U^{\gamma}: \EM(\moncb) \rightarrow \EM(\cset(+\oneset))$ defined on objects by $(A, \alpha) \mapsto (A, \alpha \circ \gamma_A)$ and acting as identity
 on morphisms. This functor is an embedding, i.e., it is fully faithful and injective on objects.
\end{lemma}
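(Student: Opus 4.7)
The plan is to invoke Proposition \ref{prop-MonmapFunctor} with the monad map $\gamma:\cset(+\oneset)\Rightarrow\moncb$ supplied by Lemma \ref{lem:gammamonadmap}: this immediately produces a functor $U^\gamma\colon \EM(\moncb)\rightarrow\EM(\cset(+\oneset))$ acting as $(A,\alpha)\mapsto(A,\alpha\circ\gamma_A)$ on objects and as the identity on morphisms, matching the statement verbatim. It then remains to verify that $U^\gamma$ is injective on objects, faithful, and full.

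The key observation underlying both injectivity on objects and fullness is that each component $\gamma_A\colon \cset(A+\oneset)\rightarrow\moncbp A$ is a split epimorphism in $\Sets$. Indeed, given $T\in\moncbp A$, if $T=\point$ then $\gamma_A(\{\dirac{\point}\})=\point$, since $\{\dirac{\point}\}$ contains no full distribution on $A$; and if $T\in\cset(A)$, then viewing each $\distr\in T$ as a distribution on $A+\oneset$ with $\distr(\point)=0$, one obtains $\gamma_A(T)=T$. Hence $\gamma_A$ admits a set-theoretic section and is, in particular, an epimorphism.

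Faithfulness of $U^\gamma$ is immediate since it is the identity on homsets. For injectivity on objects, if $U^\gamma(A,\alpha)=U^\gamma(A,\alpha')$ then $\alpha\circ\gamma_A=\alpha'\circ\gamma_A$, and cancelling the epimorphism $\gamma_A$ yields $\alpha=\alpha'$.

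For fullness, suppose $f\colon (A,\alpha\circ\gamma_A)\rightarrow(A',\alpha'\circ\gamma_{A'})$ is a morphism in $\EM(\cset(+\oneset))$. Combining the algebra-morphism equation with naturality of $\gamma$ gives
\[
f\circ\alpha\circ\gamma_A \;=\; \alpha'\circ\gamma_{A'}\circ\cset(f+\oneset) \;=\; \alpha'\circ(\cset(f)+\oneset)\circ\gamma_A,
\]
and cancelling $\gamma_A$ on the right (again using that it is epic) yields $f\circ\alpha=\alpha'\circ(\cset(f)+\oneset)$, which is exactly the condition for $f$ to be a $\moncb$-algebra morphism $(A,\alpha)\rightarrow(A',\alpha')$. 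I do not anticipate any real obstacle: the entire argument reduces to the surjectivity check for $\gamma_A$, after which injectivity on objects and fullness are one-line cancellations and faithfulness is automatic.
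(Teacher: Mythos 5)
Your proof is correct and follows essentially the same route as the paper: functoriality comes from Proposition \ref{prop-MonmapFunctor} applied to the monad map $\gamma$ of Lemma \ref{lem:gammamonadmap}, faithfulness is immediate because $U^{\gamma}$ is the identity on morphisms, and injectivity on objects follows from surjectivity of $\gamma_A$ (your explicit section just witnesses this). The only divergence is that you spell out the fullness check via naturality of $\gamma$ and cancellation of the epimorphism $\gamma_A$, a detail the paper dismisses as obvious; your argument for it is valid and in fact supplies exactly the verification the paper leaves implicit.
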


As a consequence of the above lemma, the category  $\EM(\moncb)$
is a full subcategory of  $\EM(\cset(+\oneset))$. Similarly, there is an embedding $\iota: \acat(\etcsbb) \rightarrow \acat(\etpcs)$ of the category of convex semilattices with bottom and black--hole (see Definition \ref{def:csbhb}) into the category of all pointed convex semilattices. This is simply the functor that ``forgets'' that elements of $\acat(\etcsbb)$ satisfy the additional axioms $\bot$ and $BH$. Hence, $\acat(\etcsbb)$ is a full subcategory of   $\acat(\etpcs)$.

Recall that by the presentation of the monad $\cset(+\oneset)$ (Proposition \ref{prop-knownpresentations}), we have the isomorphism
$P: \EM(\cset(+\oneset))\cong \acat(\etpcs) :P^{-1}$ of Equation \eqref{eq:setiso}.
Hence, we have the following diagram:
\[
            \begin{tikzcd}
            {\EM(\cset(+\oneset))} \arrow[d, "P"'] & {\EM(\moncb)} \arrow[l, hook', "U^{\gamma}"'] \\
            {\acat(\etpcs)} \arrow[u, "P^{-1}"', shift right=1.5ex]  & {\acat( \etcsbb)} \arrow[l, "\iota"', hook']
            \end{tikzcd}
\]

In order to prove  $\EM(\moncb ) \cong \acat( \etcsbb)$, which is the statement of Theorem \ref{main:theorem:moncb}, we show that the functors $P$ and $P^{-1}$, when restricted to the subcategories $\EM(\moncb )$ and  $\acat(\etcsbb)$, respectively, are isomorphisms of type:
\begin{align*}
&P: \EM(\moncb )\rightarrow \acat(\etcsbb) \\ 
&P^{-1}: \acat(\etcsbb)\rightarrow \EM(\moncb).
\end{align*}
This amounts to proving the following technical result.
\begin{lemma}\label{moncb-final-lemma}
\begin{enumerate}
    \item Given $(A, \alpha)\in \EM(\moncb)$, which is embedded via $U^\gamma$ to $(A, \alpha\circ \gamma_A) \in \EM(\cset(+\oneset))$, the pointed convex semilattice $P((A, \alpha\circ \gamma_A))$ satisfies the $\bot$ and $BH$ equations, and therefore it belongs to the subcategory $\acat(\etcsbb)$.
    \item Given any $\mathbb{A}\in \acat(\etcsbb)$, which is embedded via $\iota$ to  $\mathbb{A}\in\acat(\etpcs)$, the Eilenberg-Moore algebra $P^{-1}(\mathbb{A})\in \EM(\cset(+\oneset)) $ is in the image of $U^\gamma$, and therefore it belongs to the subcategory  $\EM(\moncb)$.
    \end{enumerate}
    \end{lemma}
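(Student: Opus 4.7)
Both parts revolve around the following computation: in $P((A,\alpha\circ\gamma_A))$ the constant $\star$ is interpreted as $\alpha(\gamma_A(\{\dirac\star\}))=\alpha(\point)$, since $\gamma_A(\{\dirac\star\})=\point$ by Definition \ref{def:gamma}. Write $\bot_A\bydef\alpha(\point)$ for this element. The key tool throughout will be the algebra law $\alpha\circ\mu^{\moncb}_A=\alpha\circ \moncb(\alpha)$, together with the explicit form of $\mu^{\moncb}$ in \eqref{eq:mubb} and the fact that $\gamma$ is a monad map (Lemma \ref{lem:gammamonadmap}).

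\textbf{Part (1).} For the $\bot$ axiom $x\oplus\star=x$, I will unfold the operations of $P((A,\alpha\circ\gamma_A))$: since $\conv\{\dirac a,\dirac{\bot_A}\}$ consists only of full distributions, $\gamma_A$ acts as the identity on it, and the claim reduces to $\alpha(\conv\{\dirac a,\dirac{\bot_A}\})=a$. I prove this by applying the algebra law to $T=\conv\{\dirac{\{\dirac a\}},\dirac{\star}\}\in \cset(\cset(A)+\oneset)$. Pushing $T$ through $\moncb(\alpha)$ yields $\conv\{\dirac a,\dirac{\bot_A}\}$, while \eqref{eq:mubb} gives $\mu^{\moncb}_A(T)=\mu^\cset_A(\{\dirac{\{\dirac a\}}\})=\{\dirac a\}$, whose image under $\alpha$ is $a$ by the unit law. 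For the $BH$ axiom $x+_p\star=\star$, I apply the same strategy to $T'=\{p\dirac{\{\dirac a\}}+(1-p)\dirac{\star}\}$: $\moncb(\alpha)(T')=\{p\dirac a+(1-p)\dirac{\bot_A}\}$, while $\gamma_{\cset(A)}(T')=\point$ (the unique distribution in $T'$ has positive $\star$-mass), so $\mu^{\moncb}_A(T')=\point$ by \eqref{eq:mubb} and its image under $\alpha$ is $\bot_A$.

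\textbf{Part (2).} Given $\alga\in\acat(\etcsbb)$ and $(A,\alpha)=P^{-1}(\alga)$, I define $\beta:\moncbp A\to A$ by $\beta(\point)=\star^\alga$ and $\beta(\bar S)=\bigcplus^{\alga}_{\distr\in\ub(\bar S)}\bigpplus^{\alga}_{a\in\support(\distr)}\distr(a)\,a$ for $\bar S\in\cset(A)$, and then show (a) $\alpha=\beta\circ\gamma_A$ and (b) $(A,\beta)\in\EM(\moncb)$. For (a), I split into the cases $\gamma_A(S)=\point$ and $\gamma_A(S)=\bar S\neq\emptyset$. In the first case every $\distr\in\ub(S)$ has $\distr(\star)>0$, so $BH$ (combined with $C_p$) forces each inner sum in $\alpha(S)$ to collapse to $\star^\alga$, and idempotence $I$ then gives $\alpha(S)=\star^\alga=\beta(\point)$. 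In the second case the non-full elements of $\ub(S)$ again collapse to $\star^\alga$ via $BH$, and the $\bot$ axiom $x\oplus\star=x$ absorbs these $\star^\alga$ summands in the outer $\oplus$; what remains are the inner sums indexed by $\ub(S)\cap\bar S$, which coincides with $\ub(\bar S)$ (an extreme point $\distr\in\bar S$ of $S$ is an extreme point of $\bar S$ since convex combinations in $\bar S$ are convex combinations in $S$, and conversely any convex decomposition of $\distr\in\bar S$ inside $S$ must use elements of mass $0$ on $\star$).

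For (b), the unit law $\beta(\{\dirac a\})=a$ is immediate. For the multiplication law, I exploit that $\gamma$ is a monad map and that $\moncb(\gamma_A)\circ\gamma_{\cset(A+\oneset)}$ is surjective onto $\moncb\moncb(A)$ (since $\gamma$ is pointwise surjective and $\cset$ preserves surjections). Writing an arbitrary $T\in\moncb\moncb(A)$ as $\moncb(\gamma_A)(\gamma_{\cset(A+\oneset)}(S))$, both $\beta(\mu^{\moncb}_A(T))$ and $\beta(\moncb(\beta)(T))$ reduce to $\beta(\gamma_A(\cset(\cdot+\oneset)(\alpha)(S)))$, using respectively the monad-map equation $\gamma\circ\mu^{\cset(+\oneset)}=\mu^{\moncb}\circ\moncb(\gamma)\circ\gamma\cset(+\oneset)$ and the algebra law for $\alpha$ on the one hand, and naturality of $\gamma$ together with $\moncb(\beta)\circ\moncb(\gamma_A)=\moncb(\alpha)$ on the other.

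\textbf{Main obstacle.} The delicate point is (a): matching the explicit formula for $\alpha$ with the factorization through $\gamma_A$ requires the observation that $\ub(\bar S)=\ub(S)\cap\bar S$, and the careful use of $C_p$ so that $BH$ (stated as $x+_p\star=\star$) can be iterated over multi-ary sums in which $\star$ appears in the support with arbitrary weight $<1$. Once this is established, part (b) follows formally from the monad-map structure and surjectivity, without further ``equational'' input.
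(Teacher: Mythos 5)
Your proof is correct, and it diverges from the paper's in one substantive place. Part (1) is the paper's computation in different bookkeeping: the paper works with the induced $\cset(+\oneset)$--algebra $(A,\alpha\circ\gamma_A)$ and invokes its two algebra laws, whereas you instantiate the $\moncb$--algebra laws of $(A,\alpha)$ on the explicitly chosen witnesses $T=\conv\{\dirac{\{\dirac a\}},\dirac{\star}\}$ and $T'=\{p\,\dirac{\{\dirac a\}}+(1-p)\,\dirac{\star}\}$, together with the formula \eqref{eq:mubb} for $\mu^{\moncb}$; both routes yield the same two identities $\alpha(\conv\{\dirac a,\dirac{\bot_A}\})=a$ and $\alpha(\{p\,\dirac a+(1-p)\,\dirac{\bot_A}\})=\bot_A$. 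Part (2)(a) is essentially the paper's argument: your $\beta$ is the paper's $\alpha'$, and the identity $\ub(\gamma_A(S))=\ub(S)\cap\gamma_A(S)$, which you justify correctly via extreme points (a full distribution can only decompose through full distributions), is exactly what the paper uses implicitly when it replaces $\ub(S)$ by its full elements after discarding the $\star$--terms via $BH$, $\bot$ and idempotence. The genuine difference is in (2)(b): the paper verifies the two $\moncb$--algebra diagrams for $\alpha'$ by a direct three--case element computation on $\cset(\cset(A)+\oneset)+\oneset$, whereas you argue abstractly, writing any $T\in\moncb\moncb(A)$ as $(\gamma\diamond\gamma)_A(S)$ (legitimate, since $\gamma$ is componentwise surjective --- a fact the paper already uses for Lemma \ref{embedding_lemma_cbotbh} --- and $\cset$ and $+\oneset$ preserve surjections) and reducing both $\beta\circ\mu^{\moncb}_A$ and $\beta\circ\moncb(\beta)$ to $\beta\circ\gamma_A\circ\cset(\alpha+\oneset)$ via the monad--map law of Lemma \ref{lem:gammamonadmap}, naturality of $\gamma$ at $\alpha$, and the $\cset(+\oneset)$--algebra law for $\alpha=P^{-1}(\alga)$. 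This is in effect the general principle that a componentwise--epi monad map transports Eilenberg--Moore structure along a factorization $\alpha=\beta\circ\gamma_A$; it buys a shorter, case--free verification of the multiplication law, at the price of requiring point (a) first and the surjectivity observation --- both of which your write--up supplies, so there is no gap.
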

\begin{proof}[Sketch]    
For item (1), we need to show that for any $(A, \alpha)\in \EM(\moncb)$ the pointed convex semilattice $P\circ U^{\gamma}((A,\alpha))=P((A,\alpha \circ \gamma_A))=(A, \oplus^{\alpha \circ \gamma_A}, \{\pplus p^{\alpha \circ \gamma_A}\}_{p\in (0,1)}, \star^{\alpha \circ \gamma_A} )$ satisfies the bottom and black-hole axioms. The following steps:
\begin{align*}
&a \oplus^{\alpha \circ \gamma_A} \point^{\alpha \circ \gamma_A} \\
&=(\alpha \circ \gamma_A)(\cc{\dirac{a}, \dirac{\alpha \circ \gamma_A(\{\dirac{\point}\})}}) & \text{def. of $P$}\\
&= \alpha \circ \gamma_A (\cc{\dirac{\alpha \circ \gamma_A(\{\dirac{a}\})}, \dirac{\alpha \circ \gamma_A(\{\dirac{\point}\})}}) &\text{Def. \ref{def:algebra-of-a-monad} eq. (1)}\\
&= \alpha \circ \gamma_A \circ \mu^{\cset(+\oneset)}_{A}(\cc{\dirac{\{\dirac{a}\}}, \dirac{\{\dirac{\point}\}}}) &\text{Def. \ref{def:algebra-of-a-monad} eq. (2)}\\
&= \alpha \circ \gamma_A (\cc{\dirac{a}, \dirac{\point}}) &\text{def. of $\mu^{\cset(+\oneset)}$}\\
&=\alpha(\{\dirac{a}\}) &\text{def. of $\gamma$}\\
&=a &\text{Def. \ref{def:algebra-of-a-monad} eq. (1)}
\end{align*}
prove that the bottom axiom holds. The proof for the black--hole axiom is similar.

For item (2), let $(A,\alpha)= P^{-1}(\mathbb{A})$ with $\mathbb{A}= (A, \oplus^{\alga}, +_p^{\alga}, \point^{\alga})\in \acat(\etcsbb)$. We prove that $(A,\alpha)$ is in the image of $U^\gamma$ by showing that
$\alpha = \alpha' \circ \gamma_A$, where $\alpha':\cset(A)+\oneset\to A$ is defined as follows, for $S\in\cset(A)+\oneset$
\[\alpha' (S) = \begin{cases}
     \alpha(S) &S\in \cset(A) \\
     \alpha(\{\dirac \point\})&S=\point
\end{cases}
\]
For every $S\in \cset(A+\mathbf{1})$ which contains no full distribution (i.e., $\point \in \support(\distr)$ for all $\distr \in S$, see Remark \ref{rem:subdistr}) it holds:
$$\alpha(S) = \oplus^{\alga}_{\distr \in \ub(S)} \alpha(\{\distr\}) = \star^{\alga} =\alpha(\{\dirac\point\})= \alpha'(\gamma_{A}(S))$$
where the second equality follows by the black--hole axiom and idempotency.
Moreover, when $S$ contains at least one full distribution (i.e., $\point \not\in \support(\distr)$ for some $\distr \in S$) we have
\begin{align*}
    \alpha(S) &= \bigoplus_{\distr \in \ub(S)}^{\alga} \alpha(\{\distr\}) &\text{def. of $P^{-1}$}\\
    &= \bigoplus_{\{\distr \in \ub(S) \mid \point \not\in \support(\distr)\}}^{\alga} \alpha(\{\distr\}) &\mbox{by $\bot$ and $BH$}\\
    &= \alpha(\cc{\distr \in \ub(S) \mid \point \not\in \support(\distr)})&\text{def. of $P^{-1}$}\\
    &= \alpha'(\gamma_A(S)) &\text{def. of $\gamma$}.
\end{align*}
This shows $\alpha = \alpha' \circ \gamma_A$. As a final step, it is sufficient to prove that  $(A,\alpha')\in \EM(\cset+\oneset)$ and this concludes the proof.
\end{proof} 

\section{$\Sets$ monad $\cset^{\downarrow}$ and its presentation}
\label{sec:cdownarrow}

In the previous section, we explored a monad structure on the $\Sets$ functor $\cset + \mathbf{1}$ which is natural and important, in the context of program semantics, since it models (simple) convex Segala systems \cite{Seg95:thesis,BSV04:tcs,Sokolova11}. The presentation we obtained for $\cset + \mathbf{1}$ is in terms of convex semilattices with bottom and black--hole. The latter axiom imposes a strong equation $x \pplus p \point=\point$  which is not necessarily adequate in all modelling situations (think, for example, about the equation $\nil +_p P = \nil$ in a probabilistic process algebra). For this reason, 
in this section we consider the equational theory of convex semilattices with bottom (i.e., without the black--hole axiom), 
which has already appeared as relevant in the study of probabilistic program equivalences (see, e.g., \cite{BSV19} for applications in trace semantics) and investigate the monad which is presented by this theory.


\begin{definition}[Theory  $\etpcs$] \label{def:etcsb}
Let $\etpcs$ denote the theory of pointed convex semilattices  (Definition \ref{def:pointedcs:set}) 
and 
let $\bot$ be the equation set $ \{x \oplus \star = x\}$.
We denote with $\etcsb$ 
the theory generated by the set of equations $\etpcs \cup \bot$ and refer to it as the \emph{theory of convex semilattices with bottom}.
\end{definition}

Before formally introducing the monad presented by the theory  $\etcsb$, we develop some useful intuitions.
First, the theory $\etcsb$ is the quotient of $\etpcs$ obtained by adding the $\bot$ axiom. Recall from Proposition \ref{prop-knownpresentations} that $\etpcs$ gives a presentation of the monad $\cset (+ \mathbf{1} )$, which maps a set $X$ to the collection  $ \cset(X + \textbf{1})$ of non-empty finitely generated convex sets of distributions on $X\cup\{\star\}$ (i.e., of subdistributions on $X$). Hence the $\bot$ axiom can be understood as the restriction of $ \cset(X + \textbf{1})$ to sets of distributions containing the $\dirac{\star}$ distribution (equivalently, the subdistribution with mass $0$). Furthermore:

\begin{lemma}\label{lem-intuitioneqncbot}
The following equality is derivable in $\etcsb$, for any $p,q\in(0,1)$:
$
x +_p y = (x +_p y) \oplus  ((x+_q \star) +_p y) \oplus (\star +_p y).
$
\end{lemma}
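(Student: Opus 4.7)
My plan is to reduce the stated equation to two small auxiliary identities derivable in $\etcsb$, and then rewrite the right-hand side using associativity/commutativity of $\oplus$ together with distributivity.

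The first auxiliary identity is a form of right-distributivity of $+_p$ over $\oplus$: $(u \oplus v) +_p w = (u +_p w) \oplus (v +_p w)$. This follows immediately from axiom $(D)$ together with skew-commutativity $(C_p)$ by swapping arguments on both sides. The second auxiliary identity, which is really where the bottom axiom does its work, is the ``absorption'' law $x = x \oplus (x +_q \star)$. To derive it, start from $x = x +_q x$ by $(I_p)$, then rewrite the second $x$ using $\bot$ as $x = x +_q (x \oplus \star)$, and finally expand by $(D)$ to get $x = (x +_q x) \oplus (x +_q \star) = x \oplus (x +_q \star)$, using $(I_p)$ once more.

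With these two identities, the computation on the right-hand side of the target equation is short. Associate the last two summands and apply right-distributivity backwards to obtain
\[
((x +_q \star) +_p y) \oplus (\star +_p y) = ((x +_q \star) \oplus \star) +_p y.
\]
The bottom axiom $\bot$ collapses $(x +_q \star) \oplus \star$ to $x +_q \star$, so the full right-hand side becomes $(x +_p y) \oplus ((x +_q \star) +_p y)$. A second application of right-distributivity rewrites this as $(x \oplus (x +_q \star)) +_p y$, and invoking the absorption identity reduces this to $x +_p y$, as required.

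I do not anticipate any real obstacle: the only nontrivial step is the absorption identity, and its derivation is essentially the standard trick of using $(I_p)$ to duplicate a variable and then absorbing a $\star$ via $\bot$ inside the scope of $(D)$. The rest is bookkeeping with associativity, commutativity, and right-distributivity of $+_p$ over $\oplus$.
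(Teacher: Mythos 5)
Your proof is correct, but it takes a genuinely different route from the paper's. The paper argues from left to right: it rewrites $x +_p y \stackrel{\bot}{=} (x\oplus\star)+_p y \stackrel{(D)}{=} (x+_p y)\oplus(\star+_p y)$, and then inserts the middle summand by combining the convex-closure identity $z\oplus w = z\oplus w\oplus(z+_q w)$ (derivable in convex semilattices) with the convex-algebra equality $(x+_p y)+_q(\star+_p y)=(x+_q\star)+_p y$, the latter justified semantically by reading both terms as the same distribution on $x,y,\star$ via the presentation of $\dset$. You instead work from right to left, using only two purely syntactic lemmas: right-distributivity of $+_p$ over $\oplus$ (from $(D)$ and $(C_p)$) and the absorption law $x = x\oplus(x+_q\star)$ (from $(I_p)$, $\bot$ and $(D)$), plus one direct use of $\bot$ to kill the bare $\star$; factoring everything into the first argument of $+_p y$ then collapses the three summands. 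Both derivations are sound and use the same axioms at bottom, but yours is shorter and entirely equational, avoiding both the convex-closure identity and the semantic detour through distributions, whereas the paper's version makes the connection to $\bot$-closure of convex sets explicit, which is the intuition it wants to set up for the subsequent definition of $\bot$-closed sets and the monad $\cset^\downarrow$.
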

\begin{proof}\let\qed\relax
Using the identity $z\oplus w = z\oplus w \oplus (z+_q w)$, which is valid in all convex semilattices, we first derive the following, by instantiating $z $ and $w $ with $x+_p y$ and $\star +_p y$, respectively:
\[(x+_p y) \oplus (\star +_p y) = (x+_p y) \oplus (\star +_p y) \oplus ((x+_p y) \pplus q (\star +_p y)).\]
From this we derive the equation
\begin{equation}
\label{eq:downclosureequation2}
(x+_p y) \oplus (\star +_p y) = (x+_p y) \oplus (\star +_p y) \oplus ((x +_q \star) +_p y)
\end{equation}
since $(x+_p y) \pplus q (\star +_p y) = (x +_q \star) +_p y$, which holds even in the theory of convex algebras as can be seen (using the fact that convex algebras present the distribution monad $\dset$) by looking at the two terms as probability distributions on $x,y,\star$, and checking that they are the same probability distribution.
We thereby conclude as follows:
\begin{align*}
x +_p y &\stackrel{\bot}{=} (x \oplus \star) +_p y \\
&\stackrel{D}{=}  (x+_p y) \oplus (\star +_p y) \\
&\stackrel{\eqref{eq:downclosureequation2}}{=} (x+_p y) \oplus (\star +_p y) \oplus ((x +_q \star) +_p y) & \ \ \ \ \ \square
\end{align*}
\end{proof}
Lemma \ref{lem-intuitioneqncbot} can be understood as stating that, under the theory $\etcsb$, if a convex set contains a subdistribution $\distr$ with $\distr(x)=p$, then it also contains any subdistribution $\distrb$ defined as $\distr$ except that $\distrb(x)< p$ (equivalently, $\distrb(x) = qp$  or $\distrb(x) = 0$).
This leads to the following definition. 

\begin{definition}[$\bot$--closed convex set]
Let $X$ be a set and let $S\in \cset(X + \oneset)$. We say that $S$ is \emph{$\bot$--closed} if
$\left\{ \distrb \in \dset(X+\mathbf{1}) \mid \forall x \in X, \distrb(x) \leq \distr(x)\right\} \subseteq S$ for all $\distr \in S$.
We denote with $\cset^{\downarrow}(X)\subseteq  \cset(X + 1)$ the set of non-empty finitely generated $\bot$--closed convex sets of subdistributions on $X$.
\end{definition}

We now give a useful alternative characterisation of $\bot$--closed sets, by defining a homomorphism of pointed convex semilattices $\bcl_X:\cset(X+\mathbf{1}) \to \cset(X+\mathbf{1})$. Recall that $\cset(X+\mathbf{1})$ is (the carrier of) the free pointed convex semilattice on $X$ (Proposition \ref{prop:freepcs}).
Hence, for any pointed convex semilattice $A$ and for any $f:X\to A$, there is a unique pointed convex semilattice homomorphism extending $f$.
\begin{definition}\label{def:K}
Let $X$ be a set and $f:X \rightarrow \cset(X+\mathbf{1})$ be defined as $f(x) = \conv \{\dirac{x}, \dirac\point\} = \{p\, x+ (1-p) \,\point \mid p \in [0,1]\}$. We denote with $\bcl_X:\cset(X+\mathbf{1}) \to \cset(X+\mathbf{1})$ the unique pointed convex semilattice homomorphism extending $f$.
\end{definition}

%
Theorem \ref{thm:Kclosure} below states that $\bcl_X$ maps a set $S\in \cset(X+\mathbf{1})$ to its $\bot$--closure. 


\begin{theorem}\label{thm:Kclosure}
Let $X$ be a set. 
Then $\bcl_X$ is the $\bot$--closure operator, i.e., for any $S\in \cset(X + \oneset)$, $\bcl_X(S)$ is the smallest $\bot$--closed set containing $S$.
\end{theorem}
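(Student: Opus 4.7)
The plan is to establish the three defining properties of a closure operator for $\bcl_X$: extensivity, $\bot$-closedness, and minimality. All three steps exploit the universal property of $\cset(X+\mathbf{1})$ as the free pointed convex semilattice on $X$ (Proposition~\ref{prop:freepcs}), which reduces the problem to induction on term structure, since $\bcl_X$ is determined by its action on generators. For \textbf{extensivity} ($S \subseteq \bcl_X(S)$), note that the identity on $\cset(X+\mathbf{1})$ is the homomorphism extending $x \mapsto \{\dirac{x}\}$, while $\bcl_X$ extends $x \mapsto \conv\{\dirac{x}, \dirac{\point}\}$. Since $\{\dirac{x}\} \subseteq \conv\{\dirac{x}, \dirac{\point}\}$ and both operations $\oplus^{\cset(+\oneset)} = \conv(\cdot \cup \cdot)$ and $\pplus p^{\cset(+\oneset)}$ (weighted Minkowski sum) are monotone under set inclusion, a straightforward induction on the term representing $S$ via $\ub(S)$ yields $S \subseteq \bcl_X(S)$.

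For \textbf{$\bot$-closedness} of $\bcl_X(S)$, I argue by induction on the term representation. The base cases $\bcl_X(\{\dirac{x}\}) = \conv\{\dirac{x}, \dirac{\point}\}$ and $\bcl_X(\{\dirac{\point}\}) = \{\dirac{\point}\}$ are manifestly $\bot$-closed. The key inductive step is that both $\oplus^{\cset(+\oneset)}$ and $\pplus p^{\cset(+\oneset)}$ preserve $\bot$-closedness: given $\distr = p\,\distr_1 + (1-p)\,\distr_2$ in $S_1 \pplus p^{\cset(+\oneset)} S_2$ or in $\conv(S_1 \cup S_2)$, and a subdistribution $\distrb \leq \distr$, I produce a decomposition $\distrb = p\,\distrb_1 + (1-p)\,\distrb_2$ via the rescaling $\distrb_i(x) = \distrb(x)\distr_i(x)/\distr(x)$ when $\distr(x) > 0$ (and $0$ otherwise). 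Then $\distrb_i \leq \distr_i$ pointwise, so the $\bot$-closedness of each $S_i$ places $\distrb_i \in S_i$, exhibiting $\distrb$ as an element of the composite set.

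For \textbf{minimality}, take any $\bot$-closed $T \in \cset(X+\oneset)$ with $S \subseteq T$. Because $\bcl_X$ is a homomorphism, the representation $S = \bigcplus_{\distr \in \ub(S)} (\bigpplus_{b \in \support(\distr)} \distr(b)\,b)$ gives $\bcl_X(S) = \bigcplus_{\distr \in \ub(S)} (\bigpplus_{b \in \support(\distr)} \distr(b)\,\bcl_X(\{\dirac{b}\}))$. A direct computation identifies each inner weighted Minkowski sum with exactly the set of subdistributions $\distrb \leq \distr$ on $X$, parametrised by choices $q_b \in [0,1]$ for $b \in \support(\distr) \cap X$. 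Since every such $\distr$ lies in $S \subseteq T$ and $T$ is $\bot$-closed, all these $\distrb$ lie in $T$, and convexity of $T$ absorbs the outer $\bigcplus$, yielding $\bcl_X(S) \subseteq T$.

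The main obstacle is the rescaling step in the $\bot$-closedness argument, which requires care around boundary cases where $\distr(x) = 0$ or the convex weight $p$ collapses; once that and the base cases are handled, the remainder amounts to bookkeeping with the free algebra representation via $\ub(S)$.
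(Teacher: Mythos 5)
Your proof is correct, and its skeleton coincides with the paper's: everything is routed through the freeness of $\cset(X+\oneset)$ and the fact that $\bcl_X$ is a pointed convex semilattice homomorphism, the image of a singleton $\bcl_X(\{\distr\})$ is identified with the pointwise down-set $\{\distrb \mid \forall x\in X,\ \distrb(x)\leq\distr(x)\}$, and minimality is obtained exactly as in the paper (take the base $\ub(S)$, note that each down-set $\bcl_X(\{\distr\})$ lands inside any $\bot$--closed superset of $S$, and let convexity of that superset absorb the outer convex union). The genuine difference is in the $\bot$--closedness step: the paper factors it through its Lemmas~\ref{lem:kclosure1} and~\ref{lem:kclosure2} (singleton closures are the smallest $\bot$--closed sets, and $\bot$--closedness is equivalent to containing the singleton closures of all elements), using the homomorphism identity $\bcl_X(\{\distr\})\subseteq\bcl_X(\conv(S\cup\{\distr\}))=\bcl_X(S)$ for $\distr\in S$, whereas you prove directly by structural induction that convex union and weighted Minkowski sum preserve $\bot$--closedness, via the rescaling $\distrb_i(x)=\distrb(x)\distr_i(x)/\distr(x)$. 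That decomposition is sound: on $X$ the weights recombine to $\distrb$, the residual mass at $\point$ matches automatically, $\distrb_i\leq\distr_i$ pointwise, and the degenerate cases ($\distr(x)=0$, or an element of $\conv(S_1\cup S_2)$ lying in a single $S_i$, using convexity of the $S_i$ to write a generic element as $p\,s_1+(1-p)\,s_2$) are handled as you indicate. What each route buys: yours yields a reusable closure-under-operations lemma and avoids the paper's auxiliary union lemma (Lemma~\ref{lem:kunion}), but it needs the harder ``decomposition'' direction of domination; the paper's route is slicker at this point because it only ever uses the easy direction, namely that a convex combination of dominated subdistributions is dominated by the corresponding combination.
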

It follows from Theorem \ref{thm:Kclosure} that, by restricting its codomain to the image, $\bcl_X$ defines a surjective function of type $\cset(X+\oneset)\to \moncdp X$, which we  also denote by $\bcl_X$.
We can now define a functor and a monad on  $\moncdp X$, with unit given by $ \bcl \circ \eta^{\cset(+\oneset)}$ and multiplication given by $\mu^{\cset(+\oneset)}$ restricted to $\bot$--closed sets.


\begin{definition}[Monad $\cset^{\downarrow}$]\label{def:set:downarrow}
The \emph{non-empty finitely generated $\bot$--closed convex sets of subdistributions} 
monad $(\cset^{\downarrow}, \eta^{\cset^{\downarrow}}, \mu^{\cset^{\downarrow}})$ on $\Sets$ is defined as follows. The functor $\cset^{\downarrow}$ maps a set $X$ to $\cset^{\downarrow}(X)$ 
and maps a morphism $f:X\rightarrow Y$ to $\cset^{\downarrow}(f): \cset^{\downarrow}(X)\rightarrow \cset^{\downarrow}(Y)$ defined as the restriction of  $\cset( f + \mathbf{1})$ to $\bot$--closed sets.
The unit 
is defined as 
$\eta^{\cset^\downarrow}(x)=\bcl_X(\{\dirac{x}\}) = \{p \,x+ (1-p) \,\point \mid p \in [0,1]\}$. 
The multiplication $ \mu^{\cset^\downarrow}$
is defined as
the restriction of $\mu^{\cset(+\oneset)}$ to $\bot$--closed sets. 
\end{definition}


%
%

\begin{theorem}\label{thm:downarrowmonad}
    The triple $(\cset^{\downarrow}, \eta^{\cset^\downarrow}, \mu^{\cset^\downarrow})$ is a monad.
\end{theorem}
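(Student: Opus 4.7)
The plan is to verify the monad laws directly by leveraging the fact that $(\cset^\downarrow, \eta^{\cset^\downarrow}, \mu^{\cset^\downarrow})$ is built by restricting the known monad $(\cset(+\oneset), \eta^{\cset(+\oneset)}, \mu^{\cset(+\oneset)})$ of Proposition \ref{def:set:subcset} to $\bot$-closed sets. Consequently, the work reduces to two closure properties plus a careful handling of the modified unit.

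First, I would verify that the functorial action is well-defined: for $f:X\to Y$ and $S\in\cset^\downarrow(X)$, the set $\cset(f+\oneset)(S)$ is $\bot$-closed in $\cset(Y+\oneset)$. Given $\distr' \in \cset(f+\oneset)(S)$ and $\distrb' \leq \distr'$ pointwise on $Y$, pick $\distr\in S$ with $\dset(f+\oneset)(\distr)=\distr'$ and construct $\distrb\leq\distr$ pointwise on $X$ by reducing mass on each fibre $f^{-1}(y)$ by the amount $\distr'(y)-\distrb'(y)$ (suitably apportioned) and adding it to $\distr(\point)$. By $\bot$-closure of $S$ we have $\distrb\in S$, and $\dset(f+\oneset)(\distrb)=\distrb'$. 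Functoriality is then immediate because $\cset^\downarrow(f)$ agrees with $\cset(f+\oneset)$ on its restricted domain.

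Next, I would show that $\mu^{\cset(+\oneset)}$ preserves $\bot$-closed sets, so that $\mu^{\cset^\downarrow}$ is genuinely defined. Using the formula $\mu^{\cset(+\oneset)}_X(S)=\bigcup_{\distr\in S}\wms(\distr)$, this reduces to the fact that $\wms(\distr)$ is $\bot$-closed whenever $\distr=\sum_i p_i S_i$ with each $S_i\in\cset^\downarrow(X)$: if $\distr^\ast=\sum_i p_i\distr_i\in\wms(\distr)$ with $\distr_i\in S_i$, and $\distrb\leq\distr^\ast$ pointwise on $X$, then $\distrb$ decomposes as $\sum_i p_i\distrb_i$ with $\distrb_i\leq\distr_i$ pointwise, hence $\distrb_i\in S_i$ by $\bot$-closure and $\distrb\in\wms(\distr)$. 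Naturality of $\eta^{\cset^\downarrow}$ and $\mu^{\cset^\downarrow}$ is then inherited from that of $\cset(+\oneset)$ together with the naturality of $\bcl$, which follows from Theorem \ref{thm:Kclosure} (the $\bot$-closure characterisation) combined with Step~1.

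Finally, I would verify the three monad laws. Associativity $\mu^{\cset^\downarrow}_X\circ\mu^{\cset^\downarrow}_{\cset^\downarrow X}=\mu^{\cset^\downarrow}_X\circ\cset^\downarrow\mu^{\cset^\downarrow}_X$ is inherited verbatim from the associativity of $\mu^{\cset(+\oneset)}$, since both sides are restrictions of equal maps. For the right unit law, $\eta^{\cset^\downarrow}_{\cset^\downarrow X}(S)=\bcl_{\cset^\downarrow X}(\{\dirac S\})=\{p\dirac S+(1-p)\dirac\point\mid p\in[0,1]\}$, and a direct computation of $\mu^{\cset(+\oneset)}_X$ on this set through weighted Minkowski sums yields exactly $\bcl_X(S)$, which equals $S$ since $S$ is already $\bot$-closed. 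The left unit law $\mu^{\cset^\downarrow}_X\circ\cset^\downarrow\eta^{\cset^\downarrow}_X=\id$ follows from the corresponding law for $\cset(+\oneset)$ together with the observation that replacing $\eta^{\cset(+\oneset)}_X$ by $\eta^{\cset^\downarrow}_X=\bcl_X\circ\eta^{\cset(+\oneset)}_X$ changes the result by a $\bot$-closure operation which is absorbed by the subsequent multiplication.

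The main obstacle will be Step~4's unit laws, since the unit $\eta^{\cset^\downarrow}$ differs from $\eta^{\cset(+\oneset)}$ by a $\bot$-closure, so the laws cannot be inherited by a one-line argument. One has to explicitly compute a weighted Minkowski sum across a parametric family of subdistributions and verify that the output is exactly the $\bot$-closure of the input. Once this explicit computation is carried out, the rest of the verification proceeds routinely from the corresponding structure on $\cset(+\oneset)$.
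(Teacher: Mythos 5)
Your proposal is correct and follows essentially the same route as the paper: treat $\cset^\downarrow$ as the restriction of $\cset(+\oneset)$ to $\bot$--closed sets, check that the functorial action, the unit and the multiplication preserve $\bot$--closedness, inherit naturality and associativity from $\cset(+\oneset)$, and verify the two unit laws by explicit computation exploiting $\bot$--closedness of the input set. The only minor divergences are that you prove closure of the multiplication by a direct fibrewise decomposition of a subdistribution dominated by a weighted Minkowski sum (the paper instead derives this from the homomorphism property of $\bcl$ via its lemmas on unions and on $\mu^{\cset(+\oneset)}$), and that you obtain the left unit law from the $\cset(+\oneset)$ law plus a commutation of $\bcl$ with the multiplication rather than the paper's from-scratch computation; both variants go through.
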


We can now state the main result of this section.

\begin{theorem}\label{main:theorem:moncd}
The monad $\moncd$ is presented by the equational theory of convex semilattices with bottom, i.e.,
 $\EM(\moncd) \cong \acat( \etcsb)$.
\end{theorem}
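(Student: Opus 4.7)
The plan is to mirror the strategy used for Theorem \ref{main:theorem:moncb}, with the closure homomorphism $\bcl$ of Definition \ref{def:K} playing the role of the distributive law $\gamma$. Specifically, viewing $\bcl_X : \cset(X+\mathbf{1}) \to \moncdp X$ as a surjection onto $\moncd(X) \subseteq \cset(X+\mathbf{1})$, the family $\bcl$ should give a monad map $\cset(+\oneset) \Rightarrow \moncd$, from which Proposition \ref{prop-MonmapFunctor} yields a functor $U^{\bcl}: \EM(\moncd) \to \EM(\cset(+\oneset))$ that we can compose with the presentation isomorphism $P:\EM(\cset(+\oneset))\cong \acat(\etpcs)$. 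The goal is then to show that the essential image of $P \circ U^{\bcl}$ is precisely the full subcategory $\acat(\etcsb)$.

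First I would verify that $\bcl$ is a natural transformation and a monad map. Naturality follows from the uniqueness clause of the universal property of free pointed convex semilattices (Proposition \ref{prop:freepcs}), since both $\bcl_Y \circ \cset(f+\mathbf{1})$ and $\moncd(f) \circ \bcl_X$ extend the same map $X \to \moncd(Y)$. The monad map equations $\bcl \circ \eta^{\cset(+\oneset)} = \eta^{\moncd}$ and $\bcl \circ \mu^{\cset(+\oneset)} = \mu^{\moncd} \circ (\bcl \diamond \bcl)$ are essentially the definitions given in Definition \ref{def:set:downarrow}, and require only a short check using that $\mu^{\moncd}$ is the restriction of $\mu^{\cset(+\oneset)}$ to $\bot$--closed sets. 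Then $U^{\bcl}(A,\alpha) = (A, \alpha \circ \bcl_A)$ is a functor that is injective on objects (because $\bcl_A$ is surjective, so $\alpha$ is determined by $\alpha \circ \bcl_A$) and fully faithful (a routine diagram chase, exactly as in Lemma \ref{embedding_lemma_cbotbh}).

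Next I would show that for any $(A,\alpha) \in \EM(\moncd)$ the pointed convex semilattice $P((A,\alpha \circ \bcl_A))$ satisfies the $\bot$ axiom. Writing $\beta = \alpha \circ \bcl_A$, one has $\star^{\beta} = \alpha(\bcl_A(\{\dirac \star\})) = \alpha(\{\dirac \star\})$, since $\{\dirac\star\}$ is already $\bot$--closed. Then, exactly as in the sketch for Theorem \ref{main:theorem:moncb}, the chain of equalities
\begin{align*}
a \oplus^{\beta} \star^{\beta}
&= \beta(\conv\{\dirac a, \dirac{\alpha(\{\dirac\star\})}\}) \\
&= \beta(\conv\{\dirac{\alpha(\bcl_A(\{\dirac a\}))}, \dirac{\alpha(\{\dirac\star\})}\}) \\
&= \alpha \circ \mu^{\moncd}_A \circ \bcl_{\moncd(A)}(\conv\{\dirac{\bcl_A(\{\dirac a\})}, \dirac{\{\dirac\star\}}\}) \\
&= \alpha(\bcl_A(\{\dirac a\})) = a
\end{align*}
uses the unit and multiplication laws of $(A,\alpha)$ together with the monad map property of $\bcl$, and gives the $\bot$ axiom. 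Hence $P \circ U^{\bcl}$ factors through the embedding $\acat(\etcsb) \hookrightarrow \acat(\etpcs)$.

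The main obstacle is the converse: given $\mathbb{A} \in \acat(\etcsb)$, show that $(A,\alpha) = P^{-1}(\mathbb{A})$ lies in the image of $U^{\bcl}$, i.e., that $\alpha$ factors as $\alpha' \circ \bcl_A$ for some $\alpha': \moncd(A) \to A$. The crux is the identity $\alpha(S) = \alpha(\bcl_A(S))$ for every $S \in \cset(A+\mathbf{1})$. Following the formula for $\alpha$ from \eqref{eq:setiso}, $\alpha(\bcl_A(S))$ is a $\bigoplus^{\mathbb{A}}$ of terms of the form $\bigpplus^{\mathbb{A}} \distr(b)\, b$ for $\distr \in \ub(\bcl_A(S))$; the elements of $\ub(\bcl_A(S)) \setminus \ub(S)$ are exactly the subdistributions obtained from distributions in $S$ by transferring some mass to $\star$. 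Using Lemma \ref{lem-intuitioneqncbot} (which is where the $\bot$ axiom enters, beyond the pointed convex semilattice laws) together with the idempotency of $\oplus^{\mathbb{A}}$, one shows by induction on the size of $\ub(S)$ that adjoining such $\bot$--subdistributions to the union does not change $\alpha$'s value. Defining $\alpha'$ as the restriction of $\alpha$ to $\moncd(A)$, the identity $\alpha = \alpha' \circ \bcl_A$ gives the factorisation; checking that $\alpha'$ satisfies the $\moncd$--algebra laws reduces, via surjectivity of $\bcl_A$, to the laws already satisfied by $\alpha$. Together with the previous paragraph, this establishes the required bijective correspondence on objects and hence the isomorphism $\EM(\moncd) \cong \acat(\etcsb)$.
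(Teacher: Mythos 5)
Your proposal follows essentially the same route as the paper's proof: $\bcl$ is shown to be a monad map (the paper's Lemma \ref{lem:kmonadmap}), yielding the embedding $U^{\bcl}$ (Lemma \ref{embedding_lemma_cbot}), and the presentation is obtained by restricting $P,P^{-1}$ to the subcategories, with item (1) proved by the same algebra-law computation and item (2) by combining a finite generating set for $\bcl_A(S)$ with an absorption equation derivable from $\bot$ (the paper's Lemmas \ref{lem:kset} and \ref{lem-techprescbot}, which your induction from Lemma \ref{lem-intuitioneqncbot} essentially re-derives). The only points you understate are the monad-map multiplication law $\bcl\circ\mu^{\cset(+\oneset)}=\mu^{\moncd}\circ(\bcl\diamond\bcl)$, which is far from ``a short check'' and takes a substantial computation in the paper, and the claim that $\ub(\bcl_A(S))\setminus\ub(S)$ consists \emph{exactly} of mass-transferred subdistributions (in general $\ub(S)\not\subseteq\ub(\bcl_A(S))$; only the containment actually used in the absorption argument holds), but neither affects the correctness of the strategy.
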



The structure of the proof of Theorem \ref{main:theorem:moncd} is similar to that of Theorem \ref{main:theorem:moncb}. In the rest of this section, we outline the main steps. First, we show that the operation $\bcl_X$ of $\bot$--closure can be seen as a monad map from $\cset(+\mathbf{1})$ to $\moncd$.
\begin{lemma}\label{lem:kmonadmap}
    The family of functions $\bcl_X: \cset(X+\mathbf{1}) \rightarrow \moncdp X$, for $X\in \Sets$, is a monad map between the monads $\cset(+\oneset)$ and $\moncd$.
\end{lemma}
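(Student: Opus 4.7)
The plan is to verify the three properties required of a monad map (Definition \ref{def:monadmap}): naturality of the family $\{\bcl_X\}_{X \in \Sets}$, the unit condition $\bcl \circ \eta^{\cset(+\oneset)} = \eta^{\cset^\downarrow}$, and the multiplication condition $\bcl \circ \mu^{\cset(+\oneset)} = \mu^{\cset^\downarrow} \circ (\bcl \diamond \bcl)$. The unit condition is immediate by unfolding definitions: $\eta^{\cset^\downarrow}_X(x) = \bcl_X(\{\dirac x\}) = \bcl_X(\eta^{\cset(+\oneset)}_X(x))$.

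The central tactic for the remaining two conditions is to exploit freeness. By Proposition \ref{prop:freepcs}, $\cset(Y+\oneset)$ is the free pointed convex semilattice on $Y$. Hence any PCS-homomorphism from $\cset(Y+\oneset)$ to a PCS $A$ is uniquely determined by its restriction to the generators $\{\dirac y\}$ for $y \in Y$. For each equation, the approach is to exhibit both sides as PCS-homomorphisms and verify that they agree on generators. The building blocks are: $\bcl_Y$ is a PCS-homomorphism by Definition \ref{def:K}; $\mu^{\cset(+\oneset)}_X$ is a PCS-homomorphism since it is the multiplication of a monad whose Eilenberg--Moore algebras are pointed convex semilattices (Proposition \ref{prop-knownpresentations}); the functorial action $\cset^\downarrow(g)$ is a PCS-homomorphism (being the restriction of $\cset(g+\oneset)$ to $\bot$-closed sets); and $\mu^{\cset^\downarrow}_X$ is a PCS-homomorphism (being the restriction of $\mu^{\cset(+\oneset)}_X$, using that $\cset^\downarrow(X)$ inherits its PCS-structure from $\cset(X+\oneset)$ as established in Theorem \ref{thm:downarrowmonad}).

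For naturality, given $f: X \to Y$, both $\bcl_Y \circ \cset(f+\oneset)$ and $\cset^{\downarrow}(f) \circ \bcl_X$ are PCS-homomorphisms of type $\cset(X+\oneset) \to \cset^{\downarrow}(Y)$, and on a generator $\{\dirac x\}$ both yield $\{p\,\dirac{f(x)} + (1-p)\,\dirac{\star} \mid p \in [0,1]\}$; so by uniqueness they coincide.

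For the multiplication condition, both composites are PCS-homomorphisms from $\cset(\cset(X+\oneset)+\oneset)$ (the free PCS on $\cset(X+\oneset)$) to $\cset^{\downarrow}(X)$. On a generator $\{\dirac T\}$ with $T \in \cset(X+\oneset)$, the left-hand side gives $\bcl_X(\mu^{\cset(+\oneset)}_X(\{\dirac T\})) = \bcl_X(T)$ by the unit-multiplication axiom of $\cset(+\oneset)$. For the right-hand side, $\bcl_{\cset(X+\oneset)}(\{\dirac T\})$ equals $\eta^{\cset^\downarrow}_{\cset(X+\oneset)}(T)$ by the already-established unit law; applying $\cset^\downarrow(\bcl_X)$ then produces $\eta^{\cset^\downarrow}_{\cset^\downarrow(X)}(\bcl_X(T))$ by naturality of $\eta^{\cset^\downarrow}$; finally applying $\mu^{\cset^\downarrow}_X$ gives $\bcl_X(T)$ by the unit-multiplication axiom of $\cset^\downarrow$. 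Since both sides agree on every generator, they coincide on the entire free PCS. The main subtle point, and the only real obstacle, is to carefully confirm that every intermediate map is a homomorphism between the correct PCS structures; once this bookkeeping is in place, freeness reduces the entire multiplication condition to the short computation above.
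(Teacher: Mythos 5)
Your proof is correct, but it takes a genuinely different route from the paper's. The paper verifies the monad-map laws by hand: naturality is shown by explicitly matching two descriptions of sets of subdistributions (via Lemma \ref{lem:kclosure1} and Lemma \ref{lem:kunion}), and the multiplication square is a lengthy computation with unique bases and weighted Minkowski sums organised around Lemma \ref{lem:kmu}. You instead exploit the universal property of $\cset(Y+\oneset)$ as the free pointed convex semilattice (Proposition \ref{prop:freepcs}), exhibit each side of the naturality and multiplication equations as a composite of PCS-homomorphisms, and reduce everything to agreement on the generators $\{\dirac y\}$, where the monad laws and naturality of $\eta^{\moncd}$ from Theorem \ref{thm:downarrowmonad} finish the computation in two lines; there is no circularity here, since Theorem \ref{thm:downarrowmonad} is proved independently of this lemma. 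What your approach buys is brevity and conceptual clarity, at the cost of the bookkeeping you already flag: you must check that $\moncd(X)$ is a sub-PCS of $\cset(X+\oneset)$ (closed under $\oplus$, $+_p$, containing $\{\dirac\star\}$) so that corestrictions of $\bcl_X$, $\moncd(f)$ and $\mu^{\moncd}_X$ are homomorphisms between the right structures. One small correction on attribution: this closure fact is not what Theorem \ref{thm:downarrowmonad} establishes (that theorem gives the monad structure); it follows instead from $\bcl_X$ being a PCS-homomorphism (Definition \ref{def:K}) together with Theorem \ref{thm:Kclosure} ($S$ is $\bot$-closed iff $\bcl_X(S)=S$), e.g.\ $S_1\oplus S_2=\bcl_X(S_1)\oplus\bcl_X(S_2)=\bcl_X(S_1\oplus S_2)$ for $\bot$-closed $S_1,S_2$. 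The paper's computational route, by contrast, produces as by-products the explicit descriptions of $\bot$-closures and of $\mu$ on closed sets (Lemmas \ref{lem:kclosure1}, \ref{lem:kunion}, \ref{lem:kmu}) that are reused later, notably in the proof of Lemma \ref{moncd-final-lemma}, so your shortcut does not fully replace that machinery elsewhere in the paper.
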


Using Proposition \ref{prop-MonmapFunctor}, the monad map $\bcl$ gives a functor $U^\bcl: \EM(\moncd) \rightarrow \EM(\cset(+\oneset))$ between the Eilenberg-Moore categories. 

\begin{lemma}\label{embedding_lemma_cbot}
   There is a functor $U^{\bcl}: \EM(\moncd) \rightarrow \EM(\cset(+\mathbf{1}))$ defined on objects by $(A, \alpha) \mapsto (A, \alpha \circ \bcl_A)$ and acting as identity on morphisms.  This functor is an embedding, i.e., it is fully faithful and injective on objects.
\end{lemma}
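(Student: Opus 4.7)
The plan is to follow exactly the same template as Lemma \ref{embedding_lemma_cbotbh}, using the monad map $\bcl$ from Lemma \ref{lem:kmonadmap} and the surjectivity of each component $\bcl_X$. The existence of the functor $U^{\bcl}$, together with its description on objects and the fact that it is identity on morphisms, is a direct instance of Proposition \ref{prop-MonmapFunctor} applied to the monad map $\bcl:\cset(+\oneset)\Rightarrow \moncd$. So the real content is showing that $U^{\bcl}$ is fully faithful and injective on objects.

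Faithfulness is immediate because $U^{\bcl}$ is the identity on morphisms. For injectivity on objects, suppose $U^{\bcl}(A,\alpha)=U^{\bcl}(A,\alpha')$, i.e., $\alpha\circ \bcl_A = \alpha'\circ \bcl_A$. By Theorem \ref{thm:Kclosure} the function $\bcl_A\colon \cset(A+\oneset)\to \moncdp A$ is surjective, hence an epimorphism in $\Sets$, so $\alpha=\alpha'$. This reuses in an essential way that the codomain of $\bcl_A$, as defined in Definition \ref{def:set:downarrow}, is precisely the image of the $\bot$--closure operator.

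For fullness, take $(A,\alpha),(B,\beta)\in\EM(\moncd)$ and a $\cset(+\oneset)$--algebra morphism $f\colon (A,\alpha\circ\bcl_A)\to(B,\beta\circ\bcl_B)$, so that $f\circ\alpha\circ\bcl_A=\beta\circ\bcl_B\circ\cset(f+\oneset)$. Because $\bcl$ is a natural transformation (it is a monad map, Lemma \ref{lem:kmonadmap}), we have $\bcl_B\circ\cset(f+\oneset)=\moncd(f)\circ\bcl_A$. Substituting, $f\circ\alpha\circ\bcl_A=\beta\circ\moncd(f)\circ\bcl_A$, and cancelling the epimorphism $\bcl_A$ on the right gives $f\circ\alpha=\beta\circ\moncd(f)$, so $f$ is already a morphism in $\EM(\moncd)$. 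Hence $U^{\bcl}$ is full.

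The only step that is not a routine diagram chase is recognising that $\bcl_A$ is surjective onto $\moncdp A$; but this is exactly what Theorem \ref{thm:Kclosure} gives us, since the image of $\bcl_A$ consists of $\bot$--closures of elements of $\cset(A+\oneset)$, which by definition is all of $\moncdp A$ (every non-empty finitely generated $\bot$--closed convex set is its own $\bot$--closure). So the argument goes through exactly as in Lemma \ref{embedding_lemma_cbotbh}, with $\bcl$ replacing $\gamma$ and surjectivity replacing the analogous property used there.
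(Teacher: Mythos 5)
Your proof is correct and follows essentially the same route as the paper's (which mirrors its proof of Lemma \ref{embedding_lemma_cbotbh}): the functor comes from Proposition \ref{prop-MonmapFunctor} applied to the monad map $\bcl$, faithfulness is immediate from acting as the identity on morphisms, and injectivity on objects follows from the surjectivity of $\bcl_A$ guaranteed by Theorem \ref{thm:Kclosure}. The only difference is that you spell out fullness explicitly via naturality of $\bcl$ and cancellation of the epimorphism $\bcl_A$, a step the paper dispatches as obvious; your argument for it is correct.
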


As a consequence of the above lemma, the category  $\EM(\moncd)$ 
is a full subcategory of  $\EM(\cset(+\oneset))$. Analogously, we have an embedding of categories $\iota: \acat(\etcsb) \rightarrow \acat(\etpcs)$ of the category of convex semilattices with bottom (see Definition \ref{def:etcsb}) into the category of pointed convex semilattices, defined as the functor which forgets that the $\bot$--axiom is satisfied. Hence $\acat(\etcsb)$ is a full subcategory of   $\acat(\etpcs)$.

Let $P: \EM(\cset(+\mathbf{1})) \cong \acat(\etpcs): P^{-1}$ be the isomorphisms of categories from Proposition \ref{prop-knownpresentations}. 
We prove $\EM(\moncd) \cong \acat( \etcsb)$ (i.e., Theorem \ref{main:theorem:moncd}) by showing that the functors $P$ and $P^{-1}$, when respectively restricted to the subcategories $\EM(\moncd )$ and  $\acat(\etcsb)$, are isomorphisms of type:
$$P: \EM(\moncd )\rightarrow \acat(\etcsb) \ \ \ \ \ \ \ P^{-1}: \acat(\etcsb)\rightarrow \EM(\moncd).
$$
This amounts to proving the following result.
\begin{lemma}\label{moncd-final-lemma}
\begin{enumerate}
    \item Given any $(A, \alpha)\in \EM(\moncd)$, which is embedded via $U^\bcl$ to $(A, \alpha\circ \bcl_A) \in \EM(\cset(+\oneset))$, the pointed convex semilattice $P((A, \alpha\circ \bcl_A))$ satisfies the $\bot$ equation, and therefore it belongs to $\acat(\etcsb)$.
    \item Given any $\mathbb{A}\in \acat(\etcsb)$, which is embedded via $\iota$ to  $\mathbb{A}\in\acat(\etpcs)$, the Eilenberg--Moore algebra $P^{-1}(\mathbb{A})\in \EM(\cset(+\oneset)) $ is in the image of $U^\bcl$, and therefore it belongs to the subcategory  $\EM(\moncd)$.
    \end{enumerate}
    \end{lemma}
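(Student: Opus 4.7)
The plan is to mirror the proof strategy of Lemma~\ref{moncb-final-lemma}, exploiting the close structural analogy between $\bcl$ and $\gamma$: both are monad maps out of $\cset(+\oneset)$, and in each case the associated Eilenberg--Moore algebra structure corresponds precisely to imposing one extra axiom on $\star$. The two parts accordingly reduce to (1) a direct equational verification that the pointed convex semilattice obtained via $P \circ U^\bcl$ validates $\bot$, and (2) a factorisation argument showing that the algebra structure on any $\mathbb{A} \in \acat(\etcsb)$ descends through $\bcl_A$.

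For part (1), let $(A, \alpha) \in \EM(\moncd)$ and set $\beta = \alpha \circ \bcl_A$. Following the template used for BH, I would compute $a \oplus^\beta \star^\beta = \beta(\conv\{\dirac a, \dirac{\star^\beta}\})$ by definition of $P$, use the unit axiom to rewrite $a$ as $\beta(\{\dirac a\})$ and $\star^\beta$ as $\beta(\{\dirac \star\})$, and then apply the multiplication axiom of $\beta$ as a $\cset(+\oneset)$-algebra (well-defined by Lemma~\ref{embedding_lemma_cbot}) to collapse the expression to $\beta(\conv\{\dirac a, \dirac \star\})$. The crucial final observation is that $\conv\{\dirac a, \dirac \star\}$ is already $\bot$-closed, hence fixed by $\bcl_A$, and coincides with $\bcl_A(\{\dirac a\}) = \eta^{\cset^\downarrow}(a)$; the unit axiom of $(A,\alpha)$ as a $\moncd$-algebra then yields $a$, proving $\bot$.

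For part (2), let $(A, \alpha) = P^{-1}(\mathbb{A})$ for $\mathbb{A} = (A, \oplus^A, +^A_p, \star^A) \in \acat(\etcsb)$. I would define $\alpha' : \cset^\downarrow(A) \to A$ as the restriction of $\alpha$ to $\bot$-closed sets and proceed in two steps. To establish $\alpha = \alpha' \circ \bcl_A$, I would observe that both $\alpha$ and $\alpha \circ \bcl_A$ are pointed convex semilattice homomorphisms from $\cset(A + \oneset)$ to $\mathbb{A}$; by the universal property of the free pointed convex semilattice (Proposition~\ref{prop:freepcs}) it suffices to check they agree on the generators $\{\dirac a\}$. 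There $\alpha(\{\dirac a\}) = a$ by the unit axiom, while $\alpha(\bcl_A(\{\dirac a\})) = \alpha(\conv\{\dirac a, \dirac \star\}) = a \oplus^A \star^A = a$, where the last equality is precisely the bottom axiom of $\mathbb{A}$. Since $\bcl_A$ fixes every $\bot$-closed set, the factorisation follows.

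The final step is to verify that $(A, \alpha') \in \EM(\moncd)$. The unit axiom is immediate from $\eta^{\cset^\downarrow}(a) = \bcl_A(\{\dirac a\})$ combined with the factorisation just established. The step I expect to be the main obstacle is the multiplication axiom, because elements of $\cset^\downarrow \cset^\downarrow(A)$ live naturally in $\cset(\cset^\downarrow(A) + \oneset)$ rather than in $\cset(\cset(A+\oneset) + \oneset)$, so the types must be aligned with care. I plan to handle this by invoking the monad map property of $\bcl$ (Lemma~\ref{lem:kmonadmap}), which relates $\mu^{\cset^\downarrow}$ with $\mu^{\cset(+\oneset)}$, and thereby reducing the required identity to the multiplication axiom of $(A, \alpha) \in \EM(\cset(+\oneset))$, which is already known. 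This is essentially the same manoeuvre implicit in the corresponding step of Lemma~\ref{moncb-final-lemma}.
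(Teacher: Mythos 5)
Your proof is correct, and your part (1) coincides with the paper's own argument. For part (2), however, you take a genuinely different route to the key factorisation $\alpha=\alpha'\circ\bcl_A$: you argue abstractly that both $\alpha$ and $\alpha\circ\bcl_A$ are pointed convex semilattice homomorphisms from the free algebra $\cset(A+\oneset)$ of Proposition \ref{prop:freepcs} into $\mathbb{A}$ --- the former because the structure map of an Eilenberg--Moore algebra is a morphism out of the free algebra, the latter because $\bcl_A$ is by construction (Definition \ref{def:K}) a homomorphism --- and that the two agree on the generators $\{\dirac a\}$ precisely by the bottom axiom, so uniqueness of homomorphic extensions finishes the job. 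The paper instead proves $\alpha(\bcl_A(S))=\alpha(S)$ by an explicit syntactic computation: it first derives the $\etcsb$-equation of Lemma \ref{lem-techprescbot} and the concrete finite-base description of the $\bot$-closure in Lemma \ref{lem:kset}, and then evaluates both sides through the definition of $P^{-1}$. Your argument is shorter and avoids the combinatorial identity entirely; the paper's computation has the side benefit of exhibiting an explicit finite base of $\bcl_A(S)$, which is of independent use (it is the characterisation quoted in the proof sketch in the main text). One detail to tighten in your final step: the reduction of the multiplication law for $(A,\alpha')$ via the monad-map property of $\bcl$ does work, but only after cancelling the natural transformation $\bcl\diamond\bcl$ on the left, so you need surjectivity of $\bcl_A$ (immediate from Theorem \ref{thm:Kclosure}); alternatively, the paper's simpler observation that $\mu^{\moncd}$ and $\moncd(\alpha')$ are literal restrictions of $\mu^{\cset(+\oneset)}$ and $\cset(\alpha+\oneset)$ makes the required diagram a restriction of the one already known to commute.
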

\begin{proof}[Sketch]   
Item (1) follows similarly to Lemma \ref{moncb-final-lemma}.1.
The proof of item (2) requires to show that the Eilenberg--Moore algebra $(A,\alpha)=P^{-1}(\mathbb{A})$ satisfies $\alpha=\alpha'\circ \bcl_A$, 
with $\alpha':\moncd(A)\to A$ defined as the restriction of $\alpha$ to $\bot$--closed sets.
Let $S = \conv (\bigcup_{0\leq i \leq n} \{\distr_{i}\}) \in \cset(A+\mathbf{1})$, with $\bigcup_{0\leq i \leq n} \{\distr_{i}\}$ the unique base for $S$. In order to prove that $\alpha'(\bcl_A(S)) = \alpha(S)$, we use two results. 
First, we identify a finite base for the set $\bcl_A(S)$ as follows:
\begin{equation}
\label{eq:botfinitebase1}
\bcl_A(S)= \conv\Big(\bigcup_{0\leq i \leq n} \big(\bigcup_{F\subseteq \support(\distr_{i})\backslash \{\star\}}\{\distr_{i|_{F}}\}\big)\Big)
\end{equation}
    where for any $\distr$ and $F\subseteq \support(\distr)\backslash \{\star\}$ we define
    $$\distr_{|_{F}}=(\sum_{a\in F} \distr(a) a) + (1-(\sum_{a\in F} \distr(a))) \point.$$
Then we show that the following equation, with variables $x_{i}$ universally quantified and with $\sum_{0\leq i\leq n} p_{i}=1$, is derivable in the theory $\etcsb$:
\begin{equation}
\label{eq:botfinitebase2}
\bigpplus_{0\leq i\leq n} p_{i} x_{i}= \bigoplus_{F\subseteq \{1,...,n\}} \Big((\bigpplus_{i\in F} p_{i} x_{i}) + (1-(\sum_{i\in F} p_{i}))\point \Big)
\end{equation}
Now, we can apply the definition of $P^{-1}$ and the characterization in Equation \eqref{eq:botfinitebase1}  to derive that $\alpha'(\bcl_A(S))$ is equal to the interpretation in $\alga$ of the term:
$$\bigoplus_{0\leq i \leq n}\Big(\bigoplus_{F \subseteq \support(\distr_{i})\backslash\{\star\}} \big((\bigpplus_{a\in F} \distr_{i}(a) a) + (1-(\sum_{a\in F} \distr_{i}(a)))\point\big)\Big)$$
and then using the Equation \eqref{eq:botfinitebase2} of $\etcsb$
we derive that this is in turn equal to the interpretation in $\alga$ of the term
\[\bigoplus_{0\leq i \leq n} \big(\bigpplus_{b\in \support (\distr_{i})} \distr_{i}(b) b\big)\]
with $b$ ranging over $A\cup\{\point\}$. By the definition of $P^{-1}$, the interpretation in $\alga$ of this term is equal to $\alpha(S)$.
\end{proof}

\section{Results about monads on $\Met$}

%

\label{sec:presentationsinmet}

In Section \ref{sec:cplusone}, we investigated a $\Sets$ monad whose underlying functor is $\moncb$ and obtained its presentation in terms of convex semilattices with bottom and black--hole. In Section \ref{sec:cdownarrow}, we investigated the $\Sets$ monad $\moncd$ and proved that it is presented by the theory of convex semilattices with bottom. In this section, we investigate similar questions but in the category $\Met$ of 1--bounded metric spaces.
First, take the functor $\lcset+\onemet$, which is the $\Met$ lifting of  the functor $\cset+\oneset$ obtained with the Hausdorff--Kantorovich lifting. 
%
Is there a $\Met$ monad whose underlying functor is $\lcset+\onemet$?
We do not answer the question in full generality, 
yet we provide some negative 
results by showing that
any such monad:
\begin{enumerate}
\item cannot have a multiplication defined as the one
of the $\Sets$ monad $\cset + \oneset$, and
\item cannot be presented by the quantitative theory of convex semilattices with bottom and black--hole, since this theory is trivial.
\end{enumerate}

Secondly, the question is to find the $\Met$ monad presented by the quantitative theory of convex semilattices with bottom. 
In this case, we are successful and we show that this monad is exactly the lifting of the $\Sets$ monad $\moncd$ to $\Met$ via the Hausdorff-Kantorovich distance.
%
%


\subsection{Negative results on monad structures on $\lcset+\onemet$}
\label{sec:met:cplusone}

Recall from Section \ref{sec:cplusone} that the $\Sets$ monad $\moncb$ is obtained from the distributive law $\gamma:\cset(+\oneset) \Rightarrow \moncb$. We now show that $\gamma$ fails to be non--expansive when its domain and codomain are equipped with the Hausdorff--Kantorovich lifted metrics. This implies that most of the machinery developed in Section \ref{sec:cplusone} to obtain the $\Sets$ monad $\moncb$ and its presentation is not applicable in $\Met$.

\begin{lemma}\label{lem:gammane}
There is a metric space $(X,d)$
such that $\hat \gamma_{(X,d)}: ({\cset(X+\oneset)}, \hk(d + \donemet)) \to (\cset (X)+\oneset, \hk(d) +  \donemet) $, defined as the $\Sets$ function $\gamma_{X}$ from Definition \ref{def:gamma},  is not non-expansive. 
\end{lemma}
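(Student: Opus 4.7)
The plan is to exhibit an explicit small counterexample rather than prove a structural obstruction. Since the lemma only asks for the existence of some $(X,d)$ on which non-expansiveness fails, it suffices to display two elements $S_1,S_2 \in \cset(X+\oneset)$ whose Hausdorff--Kantorovich distance is strictly smaller than the coproduct distance between $\gamma_X(S_1)$ and $\gamma_X(S_2)$ in $\cset(X)+\oneset$. The key asymmetry to exploit is that $\gamma_X$ is a ``cliff'' function: it behaves continuously as long as a full distribution remains in the set, but it collapses discontinuously to $\point$ as soon as the last full distribution disappears, even if the set still contains subdistributions with arbitrarily large mass on $X$.

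First I would take the simplest possible $X$, namely a one--point metric space $X = \{x\}$ with the trivial metric. Then $X+\oneset = \{x,\point\}$ with $(d+\donemet)(x,\point)=1$, and every distribution on $X+\oneset$ has the form $\distr_p = p\,x + (1-p)\,\point$ for some $p\in[0,1]$, with Kantorovich distance $\kant(d+\donemet)(\distr_p,\distr_q)=|p-q|$. I would then set $S_1 = \{\dirac x\}$ and $S_2 = \{\distr_\epsilon\} = \{\epsilon\, x + (1-\epsilon)\,\point\}$ for some small $\epsilon\in(0,1)$; both are singletons, hence trivially non--empty, convex and finitely generated, so both lie in $\cset(X+\oneset)$.

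Next I would compute both sides. Since $\dirac x$ is the unique full distribution in $S_1$, we have $\gamma_X(S_1) = \{\dirac x\} \in \cset(X)$; since every distribution in $S_2$ assigns positive mass to $\point$, we have $\gamma_X(S_2) = \point$. The images lie in different summands of the coproduct $\cset(X)+\oneset$, so by definition of the coproduct metric $\hk(d)+\donemet$ their distance is $1$. On the other hand, the source distance is $\hk(d+\donemet)(S_1,S_2) = \kant(d+\donemet)(\dirac x, \distr_\epsilon) = 1-\epsilon$. Picking any $\epsilon\in(0,1)$ yields $1-\epsilon < 1$, contradicting non-expansiveness of $\hat\gamma_{(X,d)}$.

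I do not expect a real obstacle here; the main thing to get right is checking that both $S_1$ and $S_2$ genuinely belong to $\cset(X+\oneset)$ (singletons are fine) and that the two cases of the definition of $\gamma_X$ are correctly triggered, so that the images land in different coproduct components and the coproduct distance is forced to be $1$. The rest is a direct calculation of two Kantorovich distances between two-point distributions.
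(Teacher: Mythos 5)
Your proposal is correct and follows essentially the same approach as the paper: the paper's counterexample takes $S_1=\{\onehalf\,x+\onehalf\,\star\}$, $S_2=\{\dirac x\}$ over a discrete space (source distance $\onehalf$, target distance $1$ since $\gamma_X(S_1)=\point$ lands in the other coproduct summand), which is just your construction with $\epsilon=\onehalf$. All your computations (singleton Hausdorff distance reduces to the Kantorovich distance $1-\epsilon$, and the coproduct metric forces distance $1$ between $\{\dirac x\}$ and $\point$) check out.
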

\begin{proof}
Let $X$ be non--empty and take $(X,d)$ with $d$ the discrete metric. Consider $S_1,S_2\in \cset(X+\oneset)$ defined as $S_{1}=\{ \onehalf\,x +\onehalf\, \star\}$ and $S_{2}= \{\dirac x\}$. Then 
$$(\hk(d)+ \donemet)(\gamma_{X}(S_{1}), \gamma_{X}(S_{2}))= (\hk(d)+ \donemet)(\star, \{\dirac x\})=1$$
which is strictly greater than 
$\frac{1}{2}= \hk(d+ \donemet)(S_{1}, S_{2}).$
\end{proof}

As the multiplication $\mu^{\moncb}$ of the $\Sets$ monad $\moncb$ is defined using $\gamma$ (see Equation \eqref{eq:mubb} after Lemma \ref{lem:gammadistrlaw}), this counterexample can be adapted to show that also the multiplication $\mu^{\moncb}$ is not non--expansive. 
Hence, no $\Met$ monad $M$ whose underlying functor is $\lcset + \onemet$ can have a multiplication $\mu^M$  which, once the metric is forgotten, coincides with $\mu^{\moncb}$.
Furthermore, no such monad $M$ can be presented by the quantitative theory of convex semilattices with bottom and black--hole ($\qetcsbb$). This is because $\qetcsbb$ is trivial in the sense that the quantitative inference $\emptyset \vdash x=_{0} y$, expressing that all elements are at distance $0$, is derivable from the axioms or, equivalently, any quantitative algebra in $\qacat(\qetcsbb)$ has the singleton metric space $\onemet$ as carrier.


\begin{definition}\label{def:qetcsbb}
The quantitative theory $\qetcsbb$ of \emph{quantitative convex semilattices with bottom and black--hole} has the signature $\sigpcs$ of pointed convex semilattices and is generated by the quantitative inferences $\qetpcs \cup \qbot \cup \qbh$, with  $\qbot=\{\emptyset \vdash x\oplus \star =_{0} x\}$ and $\qbh=\{\emptyset\vdash x \pplus p \star =_{0} \star \mid p\in (0,1)\}$.
\end{definition}

\begin{theorem}\label{thm:bhtrivial}
Any quantitative equational theory $\qet$ containing $\qetpcs$ and $\qbh$ is trivial, i.e., the quantitative inference $\emptyset\vdash x=_{0} y$ is derivable in $\qet$.
\end{theorem}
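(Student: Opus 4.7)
The plan is to establish $\emptyset \vdash y =_0 \star$ for every $y$; once this holds, the desired $\emptyset \vdash x =_0 y$ follows immediately for any $x, y$ by symmetry (giving $\star =_0 y$) and the triangle inequality applied to $x =_0 \star$ and $\star =_0 y$.

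The core technical ingredient is a contraction provided by the Kantorovich rule (K). Assuming $y =_\epsilon \star$ and combining it with the reflexivity judgment $y =_0 y$, I would instantiate (K) with any $p \in (0,1)$ to obtain $y +_p y =_{(1-p)\epsilon} y +_p \star$. Axiom $I_p$ rewrites the left-hand side as $y$, and the black-hole axiom $\qbh$ rewrites the right-hand side as $\star$; transitivity then yields the reduction $(y =_\epsilon \star) \Rightarrow (y =_{(1-p)\epsilon} \star)$. This step strictly contracts the distance from $\star$ by a factor $1-p \in (0,1)$.

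Starting from the base case $\emptyset \vdash y =_1 \star$, which is available in the 1-bounded variant of the deductive calculus, iterating the contraction $n$ times with a fixed $p = 1/2$ gives $y =_{2^{-n}} \star$ for every $n \in \mathbb{N}$. The Archimedean rule of quantitative equational logic, which infers $x =_\epsilon y$ from the family $\{x =_{\epsilon'} y \mid \epsilon' > \epsilon\}$, then collapses this infinite family to $y =_0 \star$, completing the derivation.

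The main obstacle I expect is a purely bookkeeping one: confirming that the base case $\vdash y =_1 \star$ and the Archimedean rule are indeed part of the deductive apparatus used in the paper. Both are standard in the 1-bounded setting of the Mardare, Panangaden, and Plotkin framework cited here. As a conceptually cleaner fallback, a semantic route is to observe that in any quantitative convex semilattice with bottom and black-hole, $y +_{1/2} y = y$ and $y +_{1/2} \star = \star$, so non-expansiveness of $+_{1/2}$ (encoded by (K)) forces $d(y, \star) \leq \tfrac{1}{2} d(y, \star)$, hence $d(y, \star) = 0$, so every model is the singleton space $\onemet$; completeness of the logic then transports this to derivability of $\emptyset \vdash x =_0 y$.
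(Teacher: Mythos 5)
Your proposal is correct and follows essentially the same route as the paper's proof: the key derivation — combining (K) with reflexivity and the $1$--bound $\vdash x =_{1} \star$, then using $I_p$, the $\qbh$ axiom and the triangle inequality to bring $x$ within distance $<1$ of $\star$, and finally invoking (Arch) — is exactly what the paper does, the only cosmetic difference being that the paper lets $p$ range over $(0,1)$ to obtain every distance $1-p$ in a single step, whereas you iterate a fixed contraction $p=\tfrac12$. The one bookkeeping point to make explicit is that (Arch) at $0$ needs $y =_{\epsilon'} \star$ for \emph{every} $\epsilon'>0$, not just $\epsilon'=2^{-n}$, which the (Max) rule supplies — precisely as in the paper's own proof.
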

\begin{proof}
The deductive system of quantitative equational logic is the one of \cite[\S 3]{DBLP:conf/lics/BacciMPP18}. Let $p\in (0,1)$. From the axioms $\vdash x=_{0} x $ and $\vdash x=_{1} \star$, we derive by the (K) rule $\vdash x\pplus p \star =_{1-p} x\pplus p x$. By $(I_{p})$, we have $\vdash x\pplus p x =_{0} x$, so by triangular inequality we derive $\vdash x\pplus p \star =_{(1-p)} x$. Now, by the $\qbh$ axiom ($\vdash x\pplus p \star =_{0} \star$), symmetry and triangular inequality, we have $\vdash \star=_{(1-p)} x$. Since $p \in (0,1)$ was arbitrary, we have equivalently derived that $\vdash \star=_{p} x$ belongs to $\qet$. For any $y$, we analogously obtain $\vdash \star =_{p} y$. Then, by symmetry and triangular inequality we derive $\vdash x=_{p} y$ for all $p\in (0,1)$, and by (Max) we have $\vdash x=_{\epsilon} y$ for all $\epsilon >0$. We conclude by applying (Arch) $\{x=_{\epsilon} y\}_{\epsilon>0} \vdash x=_{0} y$.
\end{proof}



\subsection{Lifting of $\moncd$ to metric spaces and its presentation}
\label{sec:met:cdownarrow}

The $\Sets$ monad $\moncd$ (Definition \ref{def:set:downarrow}) is obtained using $\bcl \circ \eta^{\cset(+\oneset)}$ as unit, where  $\bcl_X:\cset(X+\oneset)\rightarrow\cset^\downarrow(X)$ is the operation of $\bot$--closure (see Definition \ref{def:K}), and as multiplication the restriction of $\mu^{\cset(+\oneset)}$ to $\bot$--closed sets.
We can give a similar definition in the category $\Met$ using the unit and multiplication of the $\Met$ monad $\lcset(+\onemet)$ and the natural transformation $\bcl$, provided the latter exists in $\Met$, i.e., it is non--expansive. 


\begin{lemma}\label{lem:met:kne}
For any metric space $(X,d)$, 
the function 
$\hat \bcl_{(X,d)}: ({\cset(X+\oneset)}, \hk(d + \donemet)) \to (\cset (X+\oneset), \hk(d) + \donemet)$, 
defined as the $\Sets$ function $\bcl_{X}$ from Definition \ref{def:K},  is non-expansive.
\end{lemma}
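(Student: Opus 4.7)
The plan is to fix $S_1, S_2 \in \cset(X+\oneset)$ with $\hk(d+\donemet)(S_1, S_2) \leq \epsilon$ and, for each $\distrb_1 \in \bcl_X(S_1)$, construct a matching $\distrb_2 \in \bcl_X(S_2)$ at $\kant$-distance $\leq \epsilon$ via a thinning-and-coupling argument; symmetry then delivers the Hausdorff bound.

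By Theorem~\ref{thm:Kclosure} applied to $\distrb_1$, there is $\distr_1 \in S_1$ with $\distrb_1(x) \leq \distr_1(x)$ for all $x \in X$. Pick $\distr_2 \in S_2$ realizing the Hausdorff bound, i.e., with $\kant(d+\donemet)(\distr_1,\distr_2) \leq \epsilon$, and let $\omega \in Coup(\distr_1,\distr_2)$ be an optimal coupling. Define retention ratios $r(a) = \distrb_1(a)/\distr_1(a)$ for $a \in X$ with $\distr_1(a) > 0$ and $r(a) = 0$ otherwise, and set
\[
\distrb_2(b) = \sum_{a \in X} r(a)\,\omega(a,b) \quad (b \in X), \qquad \distrb_2(\star) = 1 - \textstyle\sum_{b \in X}\distrb_2(b).
\]
Since $\distrb_2(b) \leq \sum_{a \in X+\oneset}\omega(a,b) = \distr_2(b)$ for $b \in X$, Theorem~\ref{thm:Kclosure} yields $\distrb_2 \in \bcl_X(\{\distr_2\}) \subseteq \bcl_X(S_2)$.

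To bound $\kant(d+\donemet)(\distrb_1,\distrb_2)$, define a measure $\omega'$ on $(X+\oneset)^2$ by $\omega'(a,b) = r(a)\omega(a,b)$ for $a \in X, b \in X+\oneset$, $\omega'(\star,b) = 0$ for $b \in X$, and $\omega'(\star,\star) = \distrb_1(\star)$. Checking the marginals reduces to the identity $\distrb_2(\star) = \distrb_1(\star) + \sum_{a \in X} r(a)\omega(a,\star)$, obtained by expanding the definition of $\distrb_2(\star)$ and using $\sum_{b \in X+\oneset}\omega(a,b) = \distr_1(a)$. The cost of $\omega'$ with respect to $d + \donemet$ equals $\sum_{a,b \in X} r(a)\omega(a,b)d(a,b) + \sum_{a \in X}r(a)\omega(a,\star)$; this is bounded by the cost of $\omega$ since each $r(a) \leq 1$ and the $\sum_{b \in X}\omega(\star,b)$ contribution is discarded entirely, hence is $\leq \epsilon$.

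I expect the chief subtlety to be the marginal bookkeeping for $\omega'$ — in particular, absorbing all mass released by the thinning into the single entry $\omega'(\star,\star)$ and verifying it exactly matches $\distrb_2(\star)$. A secondary but nontrivial issue is the stated codomain metric $\hk(d)+\donemet$: taken literally, $\hk(d)+\donemet$ is the coproduct metric living on $\cset(X)+\oneset$ rather than on $\cset(X+\oneset)$, as witnessed by the analogous Lemma~\ref{lem:gammane}. Any sensible reading for $\bcl_X$ (whose image lies in $\cset(X+\oneset)$) must compare against a metric on that space; the reading compatible with the construction of $\lmoncd$ in Section~\ref{sec:met:cdownarrow} takes the codomain to carry $\hk(d+\donemet)$, with respect to which the above construction gives non-expansiveness. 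A discrepant reading — for instance, pulling back $\hk(d)+\donemet$ along $\gamma_X$ — fails already in the example $S_1=\{\dirac x\}$, $S_2=\{\tfrac12 x + \tfrac12\star\}$ used in Lemma~\ref{lem:gammane}, so the proof hinges on adopting the $\hk(d+\donemet)$ interpretation.
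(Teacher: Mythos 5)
Your argument is correct, but it takes a genuinely different route from the paper's. The paper proves non-expansiveness abstractly: it checks that the generator map $\hat f(x)=\conv\{\dirac x,\dirac\point\}$ is an isometry from $(X,d)$ into $(\cset(X+\oneset),\hk(d+\donemet))$, and then invokes the universal property of the free quantitative pointed convex semilattice (Proposition~\ref{prop:freepcsmet}): the unique homomorphic extension of a non-expansive map into a quantitative pointed convex semilattice is automatically non-expansive, and by uniqueness of the $\Sets$-level extension this homomorphism is $\bcl_X$. That argument is short, reuses the presentation machinery already in place, and only requires a computation on generators. Your proof is instead a direct, self-contained coupling construction: dominate $\distrb_1$ by some $\distr_1\in S_1$, thin an optimal coupling $\omega$ of $\distr_1$ with a nearby $\distr_2\in S_2$ by the retention ratios $r(a)$, and absorb the released mass into $\omega'(\star,\star)$. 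I checked the marginal bookkeeping and the cost estimate; both are right (the cost of $\omega'$ drops the $\sum_{b\in X}\omega(\star,b)$ term and scales the rest by $r(a)\le 1$), and your resolution of the codomain-metric typo --- reading it as $\hk(d+\donemet)$ on $\cset(X+\oneset)$, consistent with Definition~\ref{def:met:downarrow} --- matches the paper's intent. What your approach buys is an explicit quantitative witness (thinning never increases transport cost) with no appeal to freeness; what it costs is extra care on two small points: (i) the domination step uses the identity $\bcl_X(S)=\bigcup_{\distr\in S}\{\distrb\mid\forall x,\ \distrb(x)\le\distr(x)\}$, which needs the convexity of that union and not just the minimality statement of Theorem~\ref{thm:Kclosure} (the paper proves this separately as a lemma in the appendix); and (ii) the choice of $\distr_2$ attaining the Hausdorff bound needs either compactness of the finitely generated convex sets or an $\epsilon+\delta$ slack that you let tend to zero. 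Both are easily repaired.
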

\begin{proof}
By Definition  \ref{def:K}, $\bcl_{X}$ is the unique pointed semilattice homomorphism extending $f:X\to \cset(X+\oneset)$, with $f(x)=\conv(\{\dirac x, \dirac \point\})$. The function 
$$\hat f: (X,d) \to (\cset(X+\oneset), \hk(d + \donemet)),$$
defined as $f$ on $X$, is easily seen to be an isometry, and thus non-expansive. Hence, $\hat f$ is a morphism in $\Met$.
Now, recall that $((\cset(X+\oneset), \hk(d+\donemet))$ is the free quantitative pointed convex semilattice on $(X,d)$ and since the unique extension of $\hat{f}$ is also a pointed convex semilattice homomorphism, its action on sets must coincide with $\bcl_X$. Hence, $\hat \bcl_{(X,d)}$ is the unique quantitative pointed convex semilattice homomorphism extending $\hat f$. Therefore, $\hat \bcl_{(X,d)}$ is a morphism in $\Met$, which means that it is non-expansive.
\end{proof}


Based on Lemma \ref{lem:met:kne} and on $\lcset(+\onemet)$ being a monad in $\Met$, we can obtain the following result in a way similar to Theorem \ref{thm:downarrowmonad}.


\begin{definition}[Monad $\lmoncd$ in $\Met$]\label{def:met:downarrow}
The monad $(\lmoncd, \eta^{{{\ldownarrow}}}, \mu^{{{\ldownarrow}})})$ in $\Met$ is defined as follows. The functor $\lmoncd$ maps a metric space $(X,d)$ to $(\moncd(X),\hk(d+\donemet))$. The action of the functor on arrows, the unit and the multiplication are defined as those of the $\Sets$ monad $\moncd$.
\end{definition}

\begin{theorem}\label{thm:downarrowmonadmet}
    The triple $(\lmoncd, \eta^{\ldownarrow}, \mu^{\ldownarrow})$ is a  $\Met$ monad.
\end{theorem}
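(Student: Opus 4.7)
The plan is to bootstrap the $\Met$ monad structure of $\lmoncd$ directly from the $\Sets$ monad structure of $\moncd$ (Theorem \ref{thm:downarrowmonad}) together with the $\Met$ monad structure of $\lcset(+\onemet)$ (Proposition \ref{def:set:subcmet}). Since the underlying set-valued functor, unit, and multiplication coincide with those of $\moncd$, the monad laws (naturality, unitality, associativity) hold as set-theoretic equations. What remains is to verify that all the structure maps are morphisms in $\Met$, i.e., non-expansive.

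First, I would check that $\lmoncd$ is a well-defined functor in $\Met$. For any $(X,d)\in \Met$, the set $\moncd(X)$ is a subset of $\cset(X+\oneset)$, and equipping it with $\hk(d+\donemet)$ amounts to taking the subspace metric inherited from $\lcset(+\onemet)(X,d)$, so $\lmoncd(X,d)$ is an object of $\Met$. For a non-expansive $f\colon (X,d_X)\to (Y,d_Y)$, $\moncd(f)$ is by Definition \ref{def:set:downarrow} the restriction of $\cset(f+\mathbf{1})$ to $\bot$-closed sets; since $\lcset(+\onemet)(f)$ is non-expansive, its restriction to the subspace $\lmoncd(X,d)$ is non-expansive and lands in $\lmoncd(Y,d)$.

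Next, for the unit, $\eta^{\ldownarrow}_{(X,d)} = \bcl_X \circ \eta^{\lcset(+\onemet)}_{(X,d)}$ is the composition of two non-expansive maps: $\eta^{\lcset(+\onemet)}_{(X,d)}$ because $\lcset(+\onemet)$ is a $\Met$ monad, and $\bcl_X$ by Lemma \ref{lem:met:kne}. For the multiplication, $\mu^{\ldownarrow}_{(X,d)}$ is defined as the restriction of $\mu^{\lcset(+\onemet)}_{(X,d)}$ to $\bot$-closed sets of $\bot$-closed sets; the inclusion $\moncd(X)\hookrightarrow \cset(X+\oneset)$ is an isometry, so it induces an isometric embedding $\lmoncd\lmoncd(X,d)\hookrightarrow \lcset(+\onemet)\lcset(+\onemet)(X,d)$ by naturality of the Hausdorff--Kantorovich lifting applied twice. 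Composing with the non-expansive $\mu^{\lcset(+\onemet)}_{(X,d)}$ gives a non-expansive map landing in $\lmoncd(X,d)$ by the $\Sets$ result. Naturality of $\eta^{\ldownarrow}$ and $\mu^{\ldownarrow}$ and the monad equations then follow directly from the corresponding $\Sets$ statements in Theorem \ref{thm:downarrowmonad}, since they are pointwise identities between functions.

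The only nontrivial ingredient is the verification that the Hausdorff--Kantorovich lifting behaves well with respect to the subspace inclusion $\moncd(X)\hookrightarrow \cset(X+\oneset)$, ensuring that the metric on $\lmoncd\lmoncd(X,d)$ really is the restriction of the metric on $\lcset(+\onemet)\lcset(+\onemet)(X,d)$. This is a routine but essential check: since Hausdorff and Kantorovich liftings are defined purely in terms of the underlying pointwise distances (which agree on the subspace by construction), the embedding of $\lmoncd$-spaces into $\lcset(+\onemet)$-spaces is isometric, and all non-expansiveness statements transfer without loss.
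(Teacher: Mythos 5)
Your proposal is correct and follows essentially the same route as the paper's proof: transfer naturality and the monad laws from the $\Sets$ monad $\moncd$ (Theorem \ref{thm:downarrowmonad}), obtain non-expansiveness of the unit from Lemma \ref{lem:met:kne} composed with $\eta^{\lcset(+\onemet)}$, and obtain non-expansiveness of the multiplication by viewing $\mu^{\ldownarrow}$ as the restriction of the non-expansive $\mu^{\lcset(+\onemet)}$ to $\bot$--closed sets. Your explicit remark that the inclusion $\moncd(X)\hookrightarrow\cset(X+\oneset)$ is isometric and is preserved by the Hausdorff--Kantorovich lifting is exactly the point the paper uses implicitly, so nothing is missing.
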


We now introduce the quantitative equational theory of quantitative convex semilattices with bottom and state the main result of this section.

\begin{definition}\label{def:qetcsb}
The quantitative equational theory $\qetcsb$ of quantitative convex semilattices with bottom is the quantitative equational theory generated by the set of quantitative inferences $\qetpcs \cup \qbot$, with
$\qbot=\{\vdash x\oplus \star =_{0} x\}.$
\end{definition}

\begin{theorem}\label{thm:maincdownarrowmet}
The monad $\lmoncd$ is presented by the quantitative equational theory of quantitative convex semilattices with bottom, i.e.,
$\EM(\lmoncd)\cong\qacat( \qetcsb)$.
\end{theorem}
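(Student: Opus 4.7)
The plan is to mirror the proof architecture of Theorem \ref{main:theorem:moncd} in the quantitative setting, using the presentation of $\lcset(+\onemet)$ by $\qetpcs$ (Proposition \ref{thm:main}.2) as the starting point, together with the $\Met$--version of the $\bot$--closure monad map $\bcl$ made available by Lemma \ref{lem:met:kne}. Concretely, I proceed in four steps.

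First, I would upgrade $\bcl: \cset(+\oneset) \Rightarrow \moncd$ from a monad map in $\Sets$ (Lemma \ref{lem:kmonadmap}) to a monad map $\bcl: \lcset(+\onemet) \Rightarrow \lmoncd$ in $\Met$. Its $\Sets$ components are already known to satisfy the two monad map equations, and Lemma \ref{lem:met:kne} gives non--expansiveness of each $\bcl_{(X,d)}$, so $\bcl$ is a natural transformation in $\Met$ satisfying the required laws. Via Proposition \ref{prop-MonmapFunctor} this induces a functor
\[
U^{\bcl}: \EM(\lmoncd) \rightarrow \EM(\lcset(+\onemet)), \quad ((A,d),\alpha) \mapsto ((A,d), \alpha \circ \bcl_{(A,d)}),
\]
and I would show (as in Lemma \ref{embedding_lemma_cbot}) that $U^{\bcl}$ is full, faithful, and injective on objects. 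Fullness uses surjectivity of each $\bcl_{(A,d)}$ onto $\moncdp A$, exactly as in the $\Sets$ case; the metric structure carries over for free since morphisms in both $\EM$--categories are underlying $\Met$--morphisms.

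Second, I would compose $U^{\bcl}$ with the isomorphism $\hat P: \EM(\lcset(+\onemet)) \cong \qacat(\qetpcs)$ of Proposition \ref{thm:main}.2 and show, in analogy with Lemma \ref{moncd-final-lemma}, that $\hat P$ and $\hat P^{-1}$ restrict to mutually inverse functors between the subcategories $\EM(\lmoncd)$ and $\qacat(\qetcsb)$. There are two things to verify:
\begin{enumerate}
\item For $((A,d),\alpha) \in \EM(\lmoncd)$, the quantitative pointed convex semilattice $\hat P(U^{\bcl}((A,d),\alpha))$ satisfies the quantitative axiom $\qbot = \{ \emptyset \vdash x \oplus \star =_0 x\}$. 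This is a purely $\Sets$--level (in fact, $=_0$) computation and goes through by the same chain of equalities used in the proof sketch of Lemma \ref{moncd-final-lemma}.1, since the extra metric data plays no role in checking a $=_0$ identity.
\item For every $\mathbb{A} \in \qacat(\qetcsb)$, the Eilenberg--Moore algebra $\hat P^{-1}(\mathbb{A}) = ((A,d), \alpha)$ is in the image of $U^{\bcl}$, meaning $\alpha = \alpha' \circ \bcl_{(A,d)}$ for a (necessarily unique) non--expansive $\alpha': \lmoncd(A,d) \to (A,d)$.
\end{enumerate}

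Third, the heart of the argument is (2). I would define $\alpha'$ on $\bot$--closed sets exactly as in the $\Sets$ proof of Lemma \ref{moncd-final-lemma}.2, and use the finite--base identity \eqref{eq:botfinitebase1} together with the derived equation \eqref{eq:botfinitebase2}, reinterpreted as $=_0$ quantitative inferences in $\qetcsb$, to check that $\alpha(S) = \alpha'(\bcl_A(S))$ for every $S \in \cset(A+\oneset)$. This forces $\alpha' : \cset^{\downarrow}(A) \to A$ to be well--defined as a set map and to equip $((A,d),\alpha')$ with the $\lcset(+\onemet)$--algebra laws restricted to $\bot$--closed sets, i.e. with the structure of an $\lmoncd$--algebra. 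The only new ingredient over the $\Sets$ case is non--expansiveness of $\alpha'$: but $\alpha'$ is the restriction of the non--expansive $\alpha$ to a sub--metric space $(\moncd(A), \hk(d+\donemet)) \hookrightarrow (\cset(A+\oneset), \hk(d+\donemet))$ (the inclusion of $\bot$--closed sets into all non--empty finitely generated convex sets is an isometry), so non--expansiveness is automatic.

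The main obstacle I anticipate is making precise the fact that $\qbot$ really does cut out, under $\hat P$, exactly the sub--category of $\lcset(+\onemet)$--algebras that factor through $\bcl$: the "only if" direction (item 1) is formal, but the "if" direction (item 3) relies on re--deriving \eqref{eq:botfinitebase1} and \eqref{eq:botfinitebase2} inside the quantitative deductive calculus, and on invoking the concrete description of $\hat P^{-1}$ as the Minkowski--sum combinator on unique bases (the metric analog of the formula in \eqref{eq:setiso}). Once this bookkeeping is in place, the two restricted functors are mutually inverse on the nose, which yields the desired isomorphism $\EM(\lmoncd) \cong \qacat(\qetcsb)$.
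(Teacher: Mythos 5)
Your proposal is correct and follows essentially the same route as the paper: lift $\bcl$ to a $\Met$ monad map using Lemma \ref{lem:met:kne}, obtain the embedding $U^{\hat\bcl}$ via Proposition \ref{prop-MonmapFunctor}, and restrict the isomorphism of Proposition \ref{thm:main}.2 by verifying the two items of the metric analogue of Lemma \ref{moncd-final-lemma}. The one small difference is that the paper avoids the obstacle you anticipate in item (2): since satisfaction of the $=_0$ axioms of $\qetcsb$ makes the underlying set-algebra an ordinary convex semilattice with bottom, it simply reuses the $\Sets$-level proof of Lemma \ref{moncd-final-lemma}.2 (no re-derivation of \eqref{eq:botfinitebase1} and \eqref{eq:botfinitebase2} inside the quantitative deductive calculus is needed) and then adds, exactly as you observe, that the restricted structure map is automatically non-expansive.
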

The proof of Theorem \ref{thm:maincdownarrowmet} is similar to that of Theorem \ref{main:theorem:moncd}. First, we identify $\EM(\lmoncd)$ and $\qacat( \qetcsb)$ as full subcategories of $\EM(\lcset(+\onemet))$ and $\qacat( \qetpcs)$, respectively. Then, we obtain the isomorphism $\EM(\lmoncd)\!\cong\!\qacat( \qetcsb)$ by restricting the isomorphism $\EM(\lcset(+\onemet))\!\cong\!\qacat( \qetpcs)$ of Proposition \ref{thm:main}.2. 

\section{Examples of Applications}

\label{examples:section}


The results presented in this paper are summarised in Table \ref{tab:summary}.
\begin{table}
\centering
\begin{tabular}{| c | c |  }
\hline
$\Sets$ Monad & Eq. Theory\\[2pt]
\hline
$\cset(X+\oneset)$ & $\etpcs$\\[2pt]
\hline
$\cset(X)+\oneset$ & $\etcsbb$ \\[2pt]
\hline
$\cset^\downarrow(X)$ & $\etcsb$ \\[2pt]
\hline
\end{tabular}
\;
\begin{tabular}{| c | c |  }
\hline
$\Met$ Monad & Quant. Eq. Theory\\[2pt]
\hline
$\lcset(X+\onemet)$ &  $\qetpcs$ \\[2pt]
\hline
trivial & $\qetcsbb$ \\[2pt]
\hline
$\lmoncd(X)$ & $\qetcsb$ \\[2pt]
\hline
\end{tabular}
\caption{Summary of results}
\label{tab:summary}
\end{table}
%
The equational theories $\etpcs$, $\etcsbb$ and $\etcsb$, or closely related variants, have appeared in several works on  mathematical formalisations of semantics of programming languages combining probability and nondeterminism, with applications including: SOS process algebras \cite{DArgenioGL14,GeblerLT16,Bartels02}, axiomatisations of bisimulation  \cite{MOW03,BS01}, trace semantics  \cite{BSV19} and up--to techniques \cite{BSV19}. The functor $\cset+\oneset$, defining the well--known class of \emph{convex Segala systems} \cite{Seg95:thesis,Sokolova11,BSV04:tcs,DBLP:conf/fossacs/Mio14}, has also been considered in many works. The value of our contribution is to have established the mathematical foundation for unifying, modifying and extending several of these works. The goal of this section is to illustrate the general usefulness of our results by means of some examples. 



\begin{figure}
\centering 
\fbox{
\begin{minipage}{7.5cm}
\vspace{0.3cm}
$$
\AxiomC{}	
\LeftLabel{action}
\UnaryInfC{$a.P \trans P$}
\DisplayProof 
\qquad
\AxiomC{}	
\LeftLabel{$\nil$}
\UnaryInfC{$ \nil \trans \star$}
\DisplayProof 
$$
\vspace{0cm}
$$
\AxiomC{ $P_{1} \trans t_{1}$}	
\AxiomC{ $P_{2} \trans t_{2}$}
  \LeftLabel{$\ndplusccs$}
  \BinaryInfC{ $P_{1} \ndplusccs P_{2} \trans  t_{1} \oplus t_{2}$ }
\DisplayProof
$$
\vspace{0cm}
$$
\AxiomC{ $P_{1} \trans t_{1}$}	
\AxiomC{ $P_{2} \trans t_{2}$}
  \LeftLabel{$\pplusccs p$}
  \BinaryInfC{ $P_{1} \pplusccs p P_{2} \trans  {t_{1}} \pplus p {t_{2}}$ }
\DisplayProof
$$ 
\vspace{0cm}
\end{minipage}
}
\caption{Operational semantics.}
\label{fig:pccs}
\end{figure}
We start by introducing a minimalistic process algebra with both nondeterministic and probabilistic choice. Process terms are defined by the grammar:
\vvm
 $$P ::= \nil\ | \ a.P\  |\  P_{1} \ndplusccs P_{2}\ |\ P_{1} \pplusccs p P_{2} 
 \vvm
 $$
 for $p\in (0,1)$.
 We let $\pccs$ denote the set of all process terms. Intuitively, $\nil$ is the terminating process, $ a.P$ does an $a$--action and then behaves as $P$, and $\ndplusccs$ and  $\pplusccs p {}{}$ are (convex) nondeterministic and probabilistic choice operators, respectively. We assume that $a.(\_)$ has binding priority over the other language operators and, for $n\geq 0$, we define $a^{n}. P$ inductively as $a^{0}. P= P$ and $a^{n+1}. P= a. (a^{n}.P)$.
For the sake of simplicity, we just consider a single action label $a$. Variants with multiple labels can be easily given. The transition function is defined as the map: $$\tau : {\pccs } \rightarrow \terms  \pccs {\sigpcs}$$
inductively defined in Figure \ref{fig:pccs}, assigning to each process a term in $\terms  \pccs {\sigpcs}$, i.e., a term in the signature of pointed convex semilattices built using process terms as generators. If $\tau(P)= t$ we say that  $t$ is the \emph{continuation} of $P$ and write, with infix notation, that $P \trans t$.

\begin{figure}
\centering
\fbox{
\begin{minipage}{7.5cm}
\centering
\begin{tikzpicture}[thick]


\matrix[matrix of nodes, row sep= 0.5cm, column sep=.2cm,ampersand replacement=\&, every node/.style={scale=0.8}]
{ 	
			\&\node (x) {$P_{1}$};		\\
			\& \node (x1) {$a^{2}.\nil$};	 \\
			\& \node (x3) {$a.\nil$};	 \\
			\& \node (z) {$\nil$};	 \\
	};
\draw[-latex] (x) to node[right] {$ $} (x1);
\draw[-latex] (x1) to node[left] {$ $} (x3);
\draw[-latex] (x3) to node[left] {$ $} (z);


\begin{scope}[xshift=3.5cm]
\matrix[matrix of nodes, row sep= 0.5cm, column sep=-0.6cm,ampersand replacement=\&, every node/.style={scale=0.8}]
{ 	
	\&					\&\node (x) {$P_{2}$};		\\
	\&					\&\node (d) {}; 				\& \\[-0.2cm]
	\&\node (x1) {$\quad\qquad a^{2}.\nil\qquad \quad$};			\& 						\& \node (x2) {$a^{2} .\nil \ndplusccs a. (a. \nil \ndplusccs \nil)$};	 \\[-0.2cm]
  	\&					\&\node (x3) {$a.\nil$};		\& 						\&	\node (x4) {\hspace{-0.6cm}$a.\nil\ndplusccs \nil$};			 	 \\[-0.2cm]
	\&					\&						\& 	\node (z) {$\nil$};		\&				 	 \\[-0.2cm]
	};
\draw[-latex] (x) to node[right] {$ $} (d);
\draw[-latex] (x1) to node[left] {$ $} (x3);
\draw[-latex] (x2) to node[left] {$ $} (x3);
\draw[-latex] (x2) to node[right] {$ $} (x4);
\draw[-latex] (x3) to node[left] {$ $} (z);
\draw[-latex] (x4) to node[right] {$ $} (z);
\node[below left=0.03cm and -1.15cm of x2] (x2c) {};
\draw[thick,red] (x2c) arc (-120:-60:1cm);

\draw[dotted,->] (d) to node[above] {$\frac 1 2$}  (x1);
\draw[dotted,->] (d) to node[above] {$\frac 1 2$}  (x2);


\end{scope}

\end{tikzpicture}
\end{minipage}}
\caption{$\cset+1$ transition semantics of $P_{1},P_{2}$.}
\label{fig:ts}
\end{figure}
%

%
%
Our presentation results guarantee that the monads $\cset(+\oneset), \cset +\oneset$ and $\cset^\downarrow$ are respectively  isomorphic to the term monads given by the theories $\etpcs$, $\etcsbb$ and $\etcsb$. This implies that, once we fix one such theory $\et$, we have an isomorphism between elements of $\terms \pccs \sigpcs_{\!/\et}$, i.e., process continuations quotiented by $\et$, and elements of $F(\pccs)$, for $F$ respectively denoting the functor $\cset(+\oneset), \cset +\oneset $ and $\cset^\downarrow$.
Hence, the interesting point about the definition of $\tau$ is that, depending on which equational theory is applied to continuations, we obtain different transition semantics (i.e., $\Sets$ coalgebras): 
\vvm
$$
\tau_{F}: \pccs\rightarrow F(\pccs)  
\vvm
$$
for $F$ a functor in $\{ \cset(+\oneset), \cset +\oneset, \cset^\downarrow \}$.

For instance, by choosing the theory $\etcsbb$ of convex semilattices with bottom and black--hole we obtain $F=  \cset +\oneset$, which is  the well--studied functor of \emph{convex Segala systems} \cite{Seg95:thesis},
and our process algebra can be considered as the core of the calculus of \cite{MOW03} (see also \cite{BS01}). Following the discussion in \cite{MOW03}, the black--hole axiom\footnote{In \cite{MOW03}, the black--hole axiom appears as axiom (D2).} can be understood as follows from the point of view of program semantics: in a process $P \pplusccs p  Q$ the nondeterministic choices of $P$ and $Q$ are resolved first, and then combined probabilistically; if one of them offers no choices at all (i.e., if $P=\nil$ or $Q=\nil$), then $P\pplusccs p Q$ is also inactive (i.e., $P\pplusccs p Q = \nil$).
Adopting this semantics, the continuation of  $\nil$ is the element $\point\in\oneset$ (which can be identified with the emptyset $\emptyset$, see the remark after Definition \ref{def:gamma}) and the semantics of the process terms 
\begin{align*}
&P_1 = \,a^{3}.\nil \ndplusccs \nil\\
&P_2 = \,a^{3}.\nil \pplusccs \onehalf a. (a^{2} .\nil \ndplusccs a. (a. \nil \ndplusccs \nil))
\end{align*}
can be depicted as in Figure \ref{fig:ts}.  As customary, we omit the dotted probabilistic arrow when the probability is $1$ and only depict some
 of the reached distributions in the convex set (the red arc indicates their convex closure).
We can now reason about process behaviours using standard definitions. For instance, the following is one way (see, e.g., \cite[\S 2]{Sokolova11} for a detailed exposition) of defining behavioural equivalence coalgebraically.   

\begin{definition}\label{def:bis}
Let $F$ be a $\Sets$ endofunctor,  $c:X \to F(X)$ a coalgebra for $F$, $R\, \subseteq X\times X$ an equivalence relation, $X/_{R}$ the collection of $R$-equivalence classes and $q_{R}: X\rightarrow X/_{R}$ the quotient map. We say that $R$ is a behavioural equivalence if for all $(x,y)\in R$ it holds that
$\big(F(q_{R}) \circ c\big)(x) = \big(F(q_{R}) \circ c\big)(y).$
Elements $x,y\in X$ are \emph{behaviourally equivalent} (written $x\simeq_F y$) if there is a behavioural equivalence $R$ such that $(x,y)\in R$.  
\end{definition}
If, for instance, we let $F=\cset + \oneset$, the above definition coincides with the standard convex bisimulation equivalence of Segala (see \cite{Seg95:thesis,BSV04:tcs,Sokolova11}).

Our main results for $\Sets$ monads presentations via equational theories, summarised in Table \ref{tab:summary}, allow for the definition of a simple sound and complete proof system for behavioural equivalence of $F$--coalgebras described by process terms, for $F \in \{ \cset(+\oneset), \cset +\oneset, \cset^\downarrow \}$. For a chosen equational theory $\et\in  \{\etpcs, \etcsbb, \etcsb \}$, the proof system allows for the derivation of judgments of the form $P \stackrel{\et}{\sim} Q$ using the simple deductive rule:
\begin{equation}\label{deduction_rule}
\infer{P \stackrel{\et}{\sim} Q}{ (\,\et \ \uplus \ \stackrel{\et}{\sim}\,) \ \, \vdash_{E} \  \tau(P) = \tau(Q)}
\end{equation}
In other words, in order to derive the equality between the process terms $P$ and $Q$, it is sufficient to show that their continuations $\tau(P)$ and $\tau(Q)$ are provably equal (in the apparatus of equational logic) from the axioms of the chosen equational theory $\et$ extended with the set of axioms $\{P_i = Q_j \mid P_i \stackrel{\et}{\sim} Q_j \}$, representing pairs of process terms provably equal in the proof system which may appear as generators in the terms $\tau(P)$ and $\tau(Q)$. Formally, we have the following inductive definition.

\begin{definition}
The relation $ {\stackrel{\et}{\sim}}\subseteq \pccs \times \pccs$ of \emph{derivability in the proof system} is defined as the smallest equivalence relation $R\subseteq \pccs \times \pccs$ such that
$$
R = \{ (P,Q) \mid  (\et \ \uplus \ R) \ \vdash_{E} \  \tau(P) = \tau(Q). \}
$$
where $\vdash_{E}$ is derivability in equational logic from a set of axioms and $(\et \ \uplus \ R)$ is the disjoint union of $\et$ and of the set of axioms $\{ P=Q \mid P \,R\, Q\}$.
\end{definition}
Before providing some simple illustrative examples of usage of the proof system, we state the following result.
\begin{theorem}[Soundness and completeness]
\label{thm:soundcomp}
For  $\et\in  \{\etpcs, \etcsbb, \etcsb \}$, let  $F \in \{ \cset(+\oneset), \cset +\oneset, \cset^\downarrow \}$ be the corresponding functor (see Table \ref{tab:summary}).  The following holds for all $P,Q\in Proc$:
\begin{center}
$P\simeq_F Q$ if and only if $P \stackrel{\et}{\sim}Q$.
\end{center}
\end{theorem}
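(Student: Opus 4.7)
The plan is to reduce both directions to the following key equivalence, valid for every equivalence relation $R$ on $\pccs$ and every $P, Q \in \pccs$:
\begin{equation}\label{plan:key}
F(q_R)(\tau_F(P)) = F(q_R)(\tau_F(Q))
\;\iff\;
(\et \uplus R) \vdash_{E} \tau(P) = \tau(Q).
\end{equation}
This equivalence follows from the presentation theorems summarised in Table~\ref{tab:summary}, which identify $F(\pccs)$ with the free $\et$-algebra $\terms{\pccs}{\sigpcs}_{/\et}$ and send $\tau_F(P)$ to $[\tau(P)]_{\et}$. Under this identification, $F(q_R)$ is induced by substituting each generator $P'$ with $q_R(P')$ in a term, and equality in $\terms{\pccs/_R}{\sigpcs}_{/\et}$ coincides, via the standard isomorphism $\terms{\pccs/_R}{\sigpcs}_{/\et} \cong \terms{\pccs}{\sigpcs}_{/(\et \cup R)}$, with derivability of $\tau(P) = \tau(Q)$ from $\et$ together with the ground equations $\{P'=Q' \mid (P',Q') \in R\}$.

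Soundness ($\stackrel{\et}{\sim} \subseteq \simeq_F$) is then immediate. By the defining fixed-point property, $\stackrel{\et}{\sim}$ witnesses the right-hand side of \eqref{plan:key} at $R = \stackrel{\et}{\sim}$, so the key equivalence certifies $\stackrel{\et}{\sim}$ as a behavioural equivalence of the coalgebra $\tau_F$ in the sense of Definition~\ref{def:bis}, whence $\stackrel{\et}{\sim} \subseteq \simeq_F$.

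For completeness ($\simeq_F \subseteq \stackrel{\et}{\sim}$) I would proceed by strong induction on $n = \max(h(P), h(Q))$, where $h(P)$ is the maximal nesting depth of action prefixes in $P$: $h(\nil) = 0$, $h(a.P) = 1 + h(P)$, and $h(P_1 \ndplusccs P_2) = h(P_1 \pplusccs p P_2) = \max(h(P_1), h(P_2))$. A structural induction on the rules of Figure~\ref{fig:pccs} shows that every process appearing as a generator in $\tau(P)$ has depth strictly less than $h(P)$ when $h(P) \geq 1$, while $\tau(P)$ is a closed term over $\{\star, \oplus, +_p\}$ (with no generators) when $h(P) = 0$. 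In the base case $n=0$, the idempotency axioms $(I)$ and $(I_p)$ reduce each such closed term to $\star$ in all three theories, so $\et \vdash_{E} \tau(P) = \tau(Q)$ and $P \stackrel{\et}{\sim} Q$ holds by rule \eqref{deduction_rule}. In the inductive step, given $P \simeq_F Q$ with depths at most $n+1$, pick a behavioural equivalence $R$ containing $(P,Q)$; then \eqref{plan:key} yields $(\et \uplus R) \vdash_{E} \tau(P) = \tau(Q)$. Let $T$ be the finite set of processes occurring as generators in $\tau(P)$ or $\tau(Q)$, all of depth at most $n$. Since each of the three functors $\cset(\cdot + \oneset)$, $\cset + \oneset$, and $\cset^{\downarrow}$ preserves injections, \eqref{plan:key} remains valid after replacing $R$ with $R \cap (T \times T)$; applying the induction hypothesis to pairs in this restriction (which lie in $\simeq_F$ since $R \subseteq \simeq_F$) places them in $\stackrel{\et}{\sim}$, so $(\et \uplus \stackrel{\et}{\sim}) \vdash_{E} \tau(P) = \tau(Q)$ and $P \stackrel{\et}{\sim} Q$.

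The main obstacle will be the rigorous verification of \eqref{plan:key}, which requires unpacking how the presentation isomorphism $F(X) \cong \terms{X}{\sigpcs}_{/\et}$ interacts with the functorial action of $F$ on quotient maps (where this action corresponds to substitution of generators followed by re-quotienting). A secondary technical point is the support-restriction step in the induction, which relies on injection preservation by the three functors --- easy for $\cset$ and for $+\oneset$ taken separately, and inherited by their compositions as well as by the subfunctor $\cset^{\downarrow}$.
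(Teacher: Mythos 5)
Your proposal is correct and takes essentially the same route as the paper: your key equivalence (behavioural equality under $F(q_R)$ of the continuations iff derivability of $\tau(P)=\tau(Q)$ from $\et \uplus R$, obtained from the presentation isomorphism with the free term algebra) is exactly the content of the paper's transfer lemmas, soundness is proved identically by exhibiting $\stackrel{\et}{\sim}$ as a behavioural equivalence, and completeness uses the same induction on a depth stratification of processes with the inductive hypothesis applied to the generators occurring in the continuations. The only differences are cosmetic: your syntactic prefix-depth replaces the paper's stratification $\pccsn{n}$, and your explicit restriction of $R$ to the finite set of generators (justified by preservation of injections) is a slightly more careful rendering of the paper's appeal to the inclusion of $R$-classes in $\stackrel{\et}{\sim}$-classes on those generators.
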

\begin{proof}[Sketch]
Both directions are proved using our presentation results, in the form of an isomorphism between the monads $\cset(+\oneset), \cset +\oneset$ and $\cset^\downarrow$ and the term monads given by the theories $\etpcs$, $\etcsbb$ and $\etcsb$, respectively.
For soundness (right to left implication), we show that $ \stackrel{\et}{\sim}$ is a behavioural equivalence (Definition \ref{def:bis}). For completeness  (left to right implication), we firstly inductively stratify process terms by the complexity of their continuations as follows: $\pccsn 0  = \emptyset$ and 
$\pccsn {n+1} =  \{P\in \pccs \mid \tau(P) \in \terms {\pccsn n} {\sigpcs}\} \cup \pccsn n$ 
 and observe that $\pccs=\bigcup_{n\geq 0} \pccsn n$. Secondly, we prove by induction on $n$ that completeness holds for all terms in $\pccsn n$.
%
\end{proof}

Let us now consider a concrete example by fixing the equational theory $\et=\etcsbb$ and the functor $F=\cset + \oneset$ of Segala systems. Our goal is to prove, using the proof system, that the process terms  $P_1$ and $P_2$ are behaviourally equivalent. 

In order to establish that $P_1 \stackrel{\et}{\sim} P_2$ we need to derive, in equational logic, that  $\et \ \uplus \  \stackrel{\et}{\sim} \ \vdash_{E} \  \tau(P) = \tau(Q)$ (see Deductive Rule \eqref{deduction_rule} above). This can be done as follows:
\begin{align*}
\tau(P_{1})
&  =  a^{2}.\nil\oplus \star &\text{definition of $\tau$}\\
&  =  a^{2}.\nil & \text{$\bot$ axiom}\\
&  =  a^{2}.\nil  +_\frac{1}{2}  a^{2}.\nil  & \text{$I_{p}$ axiom}\\
&  = a^{2}.\nil  +_\frac{1}{2}  P_{3}   & a^{2}.\nil   \stackrel{\et}{\sim} P_{3}\\
&  = \tau(P_{2}) &\text{definition of $\tau$}
\end{align*}
where $P_3$ is the process term $a^{2} .\nil \ndplusccs a. (a. \nil \ndplusccs \nil)$. Note that the above derivation includes the equational step ($a^{2}.\nil   \stackrel{\et}{\sim} P_{3}$). Hence to conclude the proof we need to prove the subgoal $a^{2}.\nil   \stackrel{\et}{\sim} P_{3}$. This can be derived 
as follows:
\begin{align*}
\tau(a^{2}.\nil)
&  = a.\nil & \text{definition of $\tau$}\\
&  =a.\nil \oplus a.\nil   & \text{$I$ axiom}\\
&  =a.\nil  \oplus ( a.\nil \ndplusccs \nil)   & a.\nil   \stackrel{\et}{\sim}a.\nil \ndplusccs \nil \\
&  =\tau(P_{3}) &\text{definition of $\tau$}.
\end{align*}
We can then conclude the proof by proving the final subgoal   $a.\nil   \stackrel{\et}{\sim}a.\nil \ndplusccs \nil $ as follows:
\begin{align*}
\tau(a.\nil)
&  =\nil &\text{definition of $\tau$}\\
&  =\nil \oplus \star & \text{$\bot$ axiom}\\
&  =\tau(a.\nil \ndplusccs \nil ) &\text{definition of $\tau$.}
\end{align*}
Note that in the above proof of $P_1 \stackrel{\et}{\sim} P_2$ the bottom axiom ($\bot$) is used but the black--hole axiom is not. Indeed, the two process terms $P_{1}$ and $P_{2}$ are also equated in the weaker theory $\etcsb$ of convex semilattices with bottom ($F=\cset^\downarrow$), but not in the  even weaker theory $\etpcs$ of pointed convex semilattices ($F=\cset(X+\oneset)$), which does not include the bottom axiom. Similarly, the two process terms $\nil$ and $(\nil \pplusccs  \onehalf P)$ are equated, for all $P$,  when $\etcsbb$ is used ($F=\cset(X)+\oneset$, i.e., Segala system) but not in $\etcsb$ or $\etpcs$, as proving them equivalent requires the black--hole axiom. 
Which choice of functor $F \in \{ \cset(+\oneset), \cset +\oneset, \cset^\downarrow \}$ is best suited in a specific modelling situation is of course beyond the scope of this paper. But once the choice is made, appropriate equational theories are automatically provided by our results.

Hence, our equational reasoning apparatus can be used in program 
equivalence proofs, for various types of transition systems ($F \in \{ \cset(+\oneset), \cset +\oneset, \cset^\downarrow \}$).  While the process algebra considered in this section is deliberately simple, more programming features (e.g., parallel composition, recursion, etc.) as well as more deductive principles (e.g., bisimulation up--to methods \cite{DBLP:journals/acta/BonchiPPR17}) can be considered and built on top of this core equational framework.

Our results also provide quantitative equational theories for metric reasoning when the chosen functor is  $F \in \{ \cset(+\oneset), \cset^\downarrow \}$. Following standard ideas \cite{DBLP:journals/lmcs/BreugelSW08,baldan14,DGJP03} it is possible to endow the set $\pccs$ of process terms with the \emph{bisimulation metric} using the Hausdorff--Kantorovich liftings.

\begin{definition}\label{def:bis:metric}
Let $F \in \{ \cset(+\oneset), \cset^\downarrow \}$ be a functor on $\Sets$ and let $\tau:X \to F(X)$ be a coalgebra for $F$. 
A $1$--bounded metric $d:X\times X\rightarrow[0,1]$ is a \emph{bisimulation metric} if for all $x,y\!\in\! X$ it holds that 
$
\Delta(d)(\tau_{F}(x), \tau_{F}(y)\big) \leq d(x,y)
$, for $\Delta(d)=\hk(d+\donemet)$.
The bisimilarity metric on $X$, denoted by $d_\simeq$, is the point--wise infimum of all bisimulation metrics.
\end{definition}
Once $Proc$ is endowed with $d_\simeq$, the transition function 
$\tau_{\hat F}: (\pccs,d_\simeq) \rightarrow \hat{F}(\pccs,d_\simeq)$, for  $\hat{F} \in \{ \lcset(+\onemet), \ldownarrow \}$
 is non--expansive and so we obtain a $\Met$ coalgebra on process terms. For a simple example, consider the quantitative theory $\qetpcs$ of pointed convex semilattices (i.e., $\hat{F}= \lcset(+\onemet)$) and the two process terms 
\[Q_1 = \nil \pplusccs{\frac{1}{2}} a.\nil\qquad Q_2 = \nil \pplusccs{\frac{1}{4}} a.\nil.\]
 These two processes are not equivalent (Definition \ref{def:bis}) but it is possible to prove that $d_\simeq(Q_1,Q_2)\leq \frac{1}{4}$, using the metric $d$ defined as $d(Q_{1},Q_{2})=d(Q_{2},Q_{1})=\frac 1  4$ and as the discrete metric on all other pairs.
We have to show that $\Delta(d)\big(\tau(Q_1), \tau(Q_2)\big) \leq \frac 1 4$.
The deductive apparatus of the quantitative equational theory $\qetpcs$ makes this easy:
\vvm

$ \Delta(d)\big(\tau(Q_1), \tau(Q_2)\big) $\\
{\centering
\begin{tabular}{l l r}
$=$ & $\Delta(d)\big( \star +_{\frac{1}{2}} a.\nil      ,    \star +_{\frac{1}{4}} a.\nil  \big)$ & definition of $\tau$\\

$=$ & $\Delta(d)\big( (\star \pplus \onehalf \star )+_{\frac{1}{2}} a.\nil      ,    \star +_{\frac{1}{4}} a.\nil  \big)$ & $I_{p}$ axiom\\

$=$ & $\Delta(d)\big( \star \pplus {\frac 1 4} (\star +_{\frac{1}{3}} a.\nil)      ,    \star +_{\frac{1}{4}} a.\nil  \big)$ & $A_{p}$ axiom\\

$\leq$ & ${\frac 1 4}\cdot 0 + \frac 3 4 \cdot  \frac 1 3 = \frac 1 4$ 
\end{tabular}
}

where the last inequality is deduced with the (K) inference rule (see Definition \ref{def:convexsemilattices:met}), with premises $\Delta(d)( \star,\star)\!=\! 0$ and
\vvm

$\Delta(d)( \star +_{\frac{1}{3}} a.\nil     ,  a.\nil  \big)$ \\
\begin{tabular}{l l r}
${=}$ &$\Delta(d)\big( \star +_{\frac{1}{3}} a.\nil     ,    a.\nil +_{\frac{1}{3}} a.\nil  \big)$ & \ \ \ \ \ \ \ \ \ $I_p$ { axiom}\\
 $\leq$ &$\frac{1}{3}$
\end{tabular}
\vvm

where, again, the inequality is derived by the (K) inference rule, using as premises $\Delta(d)( \star,a.\nil ) \leq 1$ and $\Delta(d)( a.\nil,a.\nil )= 0$.

\section{Directions for future work}

A technical question left open in this work (see Section \ref{sec:met:cplusone}) is the following: is there a monad structure on the $\Met$ functor $\lcset+\onemet$? 
A source of insight (of algebraic nature) is \cite[Thm. 5.3]{SW2018} which is a unique-existence theorem. This possibly suggests that all $\Sets$ monads on $\cset +\oneset$ are presented by equational theories of convex semilattices including the black--hole axiom. This would imply, by our Theorem  \ref{thm:bhtrivial}, that they cannot be lifted to $\Met$.





Another technical question, naturally emerging from our work in Section \ref{sec:cplusone} on the monad on the $\Sets$ functor $\cset + \oneset$ and its corresponding theory of convex semilattices with bottom and black--hole, is the following: is it possible to obtain another monad structure on the functor $\cset + \oneset$ presented by the theory of convex semilattices with top ($x \oplus \star = \star$) and black--hole? Preliminary work suggests that this is indeed the case by using, following the same procedure described in Section \ref{sec:cplusone}, the monad distributive law $\gamma^\top$ defined as:
\vvm
\[\gamma^\top_X (S) = \begin{cases}
    \left\{ \distr \mid \distr \in S \text{ and } \distr(\point) = 0\right\} &\forall \distr \in S \text{, } \distr(\point) = 0\\\point &\text{otherwise.}
\end{cases}
\vvm
\]
 We remark that, since the equational theory involves the black--hole axiom, when moving to the category $\Met$, the result of Theorem \ref{thm:bhtrivial} applies.
Analogously to our work in Section \ref{sec:cdownarrow}, we can also aim at characterizing the monad presented by the  theory of convex semilattices with top
(without black--hole) and its metric lifting. Such theory based on the top axiom has applications in testing--based equivalences \cite{DH84,BSV19}.

A main line for future research is the development of compositional verification techniques, along the lines of the illustrative examples presented in Section \ref{examples:section}.
This includes the axiomatization of behavioural equivalences and distances in expressive probabilistic programming languages, with features such as recursion and parallel composition.


Other directions for future research are inspired by recent related works in the literature. The machinery of \emph{weak distributive laws} of \cite{DBLP:conf/lics/0003P20} might offer a convenient alternative route for obtaining some of our results by compositionally combining results on the simpler monads $\pset$ and $\dset$, rather than working directly with $\cset$. In this direction, using weak distributive laws, the authors of \cite{DBLP:journals/corr/abs-2012-14778} have studied (possibly empty) convex powersets of left--semimodules over a class of semirings and shown presentation results. In another direction, several works (e.g., \cite{DBLP:conf/lics/0003P20,radu2016, DBLP:journals/corr/abs-2012-14778, DBLP:journals/corr/abs-1802-07366}) have considered more general variants of the monads $\pset$, $\dset$ and $\cset$, e.g., allowing non--finitely generated convex sets, probability distributions with infinite support or even probability measures. Adapting the results of our work to these variants likely requires, at the level of the equational presentations, the introduction of appropriate infinitary operations and rules.

\newpage
\bibliographystyle{IEEEtranS}
\bibliography{biblio}

\newpage
\onecolumn

\appendix

\subsection{Appendix to background}

In this section, we recall and expand some definitions given in Section \ref{sec:back} in order to help the reader understand the more technical proofs in the remainder of the appendix. In particular, we draw the commutative diagrams corresponding to some definitions; we use single arrows to represent morphisms and double arrows to represent natural transformations.

\textbf{Monad}

Given a category $\Cat$, a monad on $\Cat$ is a triple $(\mon, \eta, \mu)$ composed of a functor $\mon\colon\Cat \rightarrow \Cat$ together with two
natural transformations: a unit $\eta\colon id
\Rightarrow \mon$, where $id$ is the identity functor on $\Cat$, and a multiplication $\mu \colon \mon^{2} \Rightarrow
\mon$, satisfying the two laws \eqref{diag-unitmonad} 
$\mu \circ \eta\mon = \mu \circ \mon\eta = id_{\Cat} $ and 
\eqref{diag-multmonad} 
$  \mu\circ \mon\mu = \mu \circ\mu\mon$.

    \begin{minipage}{0.48\textwidth}
        \begin{equation}\label{diag-unitmonad}
            \begin{tikzcd}
                \mon  \arrow[rd, Rightarrow, "\one_\mon "'] \arrow[r, Rightarrow, "\mon \eta"] & \mon^2 \arrow[d, Rightarrow, "\mu"] & \mon  \arrow[ld, Rightarrow, "\one_\mon"] \arrow[l, Rightarrow, "\eta \mon "'] \\ & \mon  &
            \end{tikzcd}
        \end{equation}
    \end{minipage}
    \begin{minipage}{0.48\textwidth}
        \begin{equation}\label{diag-multmonad}
            \begin{tikzcd}
                \mon^3 \arrow[d, Rightarrow, "\mu \mon "'] \arrow[r, Rightarrow, "\mon \mu"] & \mon^2 \arrow[d, Rightarrow, "\mu"] \\
		\mon^2 \arrow[r, Rightarrow, "\mu"'] & \mon 
            \end{tikzcd}
        \end{equation}
    \end{minipage}

\textbf{Monad Distributive Law}    
    
     Let $(M, \eta, \mu)$ and $(\widehat{M}, \widehat{\eta}, \widehat{\mu})$ be two monads on $\Cat$. A natural transformation $\lambda: M \widehat{M}\Rightarrow \widehat{M}M$ is called a \emph{monad distributive law of $M$ over  $\widehat{M}$} if it it satisfies the laws
    $\lambda \circ M\widehat\eta=\widehat\eta M$, $\lambda \circ \eta \widehat M=\widehat M \eta $, $\lambda \circ \mu \widehat M= \widehat M \mu \circ \lambda M \circ M \lambda$ and $\lambda \circ M \widehat \mu= \widehat \mu M \circ \widehat M \lambda \circ \lambda \widehat M$, i.e., it 
makes \eqref{diag-mondistlaw1} and \eqref{diag-mondistlaw2} commute.\\
        \begin{minipage}{0.43\textwidth}
    \begin{equation}\label{diag-mondistlaw1}
        \begin{tikzcd}
            M \arrow[rd, "\widehat{\eta}M"', Rightarrow] \arrow[r, "M\widehat{\eta}", Rightarrow] & M\widehat{M} \arrow[d, "\lambda", Rightarrow] & \widehat{M} \arrow[l, "\eta \widehat{M}"', Rightarrow] \arrow[ld, "\widehat{M}\eta", Rightarrow] \\ & \widehat{M}M &
        \end{tikzcd}
    \end{equation}
    \end{minipage}
        \begin{minipage}{0.55\textwidth}
    \begin{equation}\label{diag-mondistlaw2}
        \begin{tikzcd}
            MM\widehat{M} \arrow[d, "M\lambda"', Rightarrow] \arrow[r, "\mu \widehat{M}", Rightarrow] 
            & M\widehat{M} \arrow[dd, "\lambda", Rightarrow] 
            & M\widehat{M}\widehat{M} \arrow[d, "\lambda \widehat{M}", Rightarrow] \arrow[l, "M\widehat{\mu}"', Rightarrow] \\
            M\widehat{M}M \arrow[d, "\lambda M"', Rightarrow]   
                    &    & \widehat{M}M\widehat{M} \arrow[d, "\widehat{M}\lambda", Rightarrow]\\
            \widehat{M}MM \arrow[r, "\widehat{M}\mu"', Rightarrow] 
            & \widehat{M}M 
            & \widehat{M}\widehat{M}M \arrow[l, "\widehat{\mu}M", Rightarrow] 
        \end{tikzcd}
    \end{equation}
    \end{minipage}

\textbf{$\mon$--algebra}

Let $(\mon:\Cat \rightarrow \Cat,\eta,\mu)$ be a monad. An algebra for $\mon$ is a pair $(A,\alpha)$ where $A\in\Cat$ is an object and $\alpha:\mon (A)\rightarrow A$ is a morphism such that 
\eqref{diag-algunit} 
$ \alpha \circ  \eta_A = \id_A$ and 
\eqref{diag-algmult} 
$\alpha \circ \mon \alpha= \alpha \circ \mu_A $ hold.\\
\begin{minipage}{0.32\textwidth}
        \begin{equation}\label{diag-algunit}
            \begin{tikzcd}
                A \arrow[rd, "\id_A"'] \arrow[r, "\eta_A"] & \mon A \arrow[d, "\alpha"] \\ & A
            \end{tikzcd}
        \end{equation}
    \end{minipage}
    \begin{minipage}{0.36\textwidth}
        \begin{equation}\label{diag-algmult}
            \begin{tikzcd}
                \mon ^2A \arrow[d, "\mon (\alpha)"'] \arrow[r, "\mu_A"] & \mon A \arrow[d, "\alpha"] \\
                \mon A \arrow[r, "\alpha"']  & A  
                \end{tikzcd}
        \end{equation}
    \end{minipage}
    \begin{minipage}{0.32\textwidth}
        \begin{equation}\label{diag-algmor}
            \begin{tikzcd}
                \mon A \arrow[d, "\alpha"'] \arrow[r, "\mon(f)"] & \mon A' \arrow[d, "\alpha'"] \\
                \mon A \arrow[r, "\alpha"']  & A'  
                \end{tikzcd}
        \end{equation}
    \end{minipage}\\
Given two $\mon$--algebras $(A,\alpha)$ and $(A^\prime,\alpha^\prime)$, a \emph{$\mon$--algebra morphism} is an arrow $f:A\rightarrow A^\prime$ in $\Cat$ such that \eqref{diag-algmor}
$
 f\circ \alpha = \alpha^\prime \circ \mon(f)  
$.

\textbf{Monad Map}

    Let $(\mon , \eta, \mu)$ and $(\widehat{\mon }, \widehat{\eta}, \widehat{\mu})$ be two monads, a natural transformation $\sigma: \mon  \Rightarrow \widehat{\mon }$ is called a \emph{monad map} if it satisfies the equations \eqref{diag-monmap1} $\widehat \eta=\sigma \circ \eta$ and \eqref{diag-monmap2} $\sigma \circ \mu=\widehat \mu \circ (\sigma \diamond \sigma)$.\\
    \begin{minipage}{0.48\textwidth}
        \begin{equation}\label{diag-monmap1}
            \begin{tikzcd}
                \id_{\mathbf{C}} \arrow[rd, "\widehat{\eta}"', Rightarrow] \arrow[r, "\eta", Rightarrow] & \mon  \arrow[d, "\sigma", Rightarrow] \\ & \widehat {\mon }
            \end{tikzcd}
        \end{equation}
    \end{minipage}
    \begin{minipage}{0.48\textwidth}
        \begin{equation}\label{diag-monmap2}
            \begin{tikzcd}
                \mon^2 \arrow[d, "\mu"', Rightarrow] \arrow[r, "\sigma \diamond \sigma", Rightarrow] & \widehat{\mon }^2 \arrow[d, "\widehat{\mu}", Rightarrow] \\
                \mon  \arrow[r, "\sigma"', Rightarrow] & \widehat{\mon }
            \end{tikzcd}
        \end{equation}
    \end{minipage}

\textbf{Monad $\cdot+\oneset$}

The \emph{termination} monad on $\Sets$ is the triple $(\cdot + \oneset, \eta^{+ \oneset},\mu^{+ \oneset})$ defined as in Proposition $\ref{prop:terminationcat}$.
For objects $X$ in $\Sets$, the functor $\cdot+\oneset$ maps $X$ to the coproduct $X+\oneset$, i.e., the disjoint union of the sets $X$ and $\oneset=\{\star\}$.
For arrows $f\!:\!X\rightarrow\! Y$ in $\Sets$, the functor $\cdot+\oneset$ maps $f$ to $f+\oneset \colon X + \oneset \to Y+\oneset$, defined as $f+\oneset = [\inl\circ f, \inr]$.
The unit $\eta^{+\oneset}_X:X\rightarrow X+\oneset$ is $\eta^{+\oneset}_X(x) = \inl(x)$ and
the multiplication $\mu^{+\oneset}_X:  (X+\oneset)+\oneset \rightarrow X+\oneset$ is defined as 
$\mu_X^{+\mathbf{1}} = [[\inl,\inr], \inr]$.
 If clear from the context, we may omit explicit mentioning of the injections, and write for example $(f+\oneset)(x) = x$ for $x \in X$ and $(f+\oneset)(\star) = \star$. 
 For the unit $\eta^{+\oneset}_X:X\rightarrow X+\oneset$ we write $\eta^{+\oneset}_X(x) = x$.
 For the multiplication $\mu^{+\oneset}_X:  (X+\oneset)+\oneset \rightarrow X+\oneset$ we let $\bpoint$ denote the element of the outer $\oneset$ and $\point$ denote the element of the inner $\oneset$, and we write $\mu_X^{+\mathbf{1}}(x) = x$ for $x\in X$,  $\mu_X^{+\mathbf{1}}(\point) = \point$ and $\mu_X^{+\mathbf{1}}(\bpoint) = \point$.

\textbf{Monad $\cset(+\oneset)$}

The \emph{finitely generated non-empty convex powerset of subdistributions} monad $(\cset(\cdot+\oneset), \eta^{\cset(+\oneset)}, \mu^{\cset(+\oneset)})$ in $\Sets$ is defined as follows. Given an object $X$ in $\Sets$, $\cset(X)$ is the collection of non-empty finitely generated convex sets of probability subdistributions on $X$, i.e., $\cset(X+\oneset) = \{ \conv(S) \mid S\in \fpset\dset{(X+\oneset)} \}$.
Given an arrow $f:X\rightarrow Y$ in $\Sets$, the arrow $\cset{(f)}: \cset(X+\oneset)\rightarrow \cset(Y+\oneset)$ is defined as $\cset(+\oneset){(f)}(S)= \{ \dset{(f+\oneset)}(\distr) \mid \distr\in S \} $.
The unit $\eta_X^{\cset(\cdot+\oneset)}:X\rightarrow \cset(X+\oneset)$ is defined as $\eta^{\cset(+\oneset)}_X (x) = \{\dirac x\}$. 
The multiplication $\mu^{\cset(+\oneset)}_X:  \cset({\cset}(X+\oneset)+\oneset)) \rightarrow {\cset}(X+\oneset)$ is defined using Proposition \ref{prop-distlawcomposition} and the distributive law in Corollary \ref{eq:iota}, we have
that for any $S\in \cset({\cset}(X+\oneset)+\oneset))$
\begin{equation}\label{eqn:defofmuCp1}
    \mu^{\cset(+\oneset)}_X(S)= \mu^{\cset}_{X+\oneset}(\bigcup_{\ddistr \in S}  \{\ddistr^{\point}\})= \bigcup_{\ddistr \in S}  \wms(\ddistr^{\point})
\end{equation}
where, if we let $\bpoint$ denote the element of the outer $\oneset$ and $\point$ denote the element of the inner $\oneset$, for any $\ddistr \in \dset(\cset(X+\oneset)+\oneset)$
we define  $\ddistr^{\point}\in \dset(\cset(X+\oneset))$ as
\[\ddistr^{\point} =\dset(\iota_{X+\oneset})\circ \cset\cset(\mu_{X}^{+\oneset}) (\ddistr)= \Big(\sum_{\substack{U\in \cset(X+\oneset)\\U\neq \{\dirac \point\}}} \ddistr(U) U \Big)+\big(\ddistr(\{\dirac \point\})+\ddistr(\bpoint)\big)\{\dirac\point\}.\]

\textbf{The isomorphism 
$\ics : \cset(X+\mathbf{1}) \to {\terms  X {\sigpcs}}_{/\etcs} $}

As explained in
Proposition \ref{prop:freepcs},
the presentation of the monad $\cset(+\oneset)$ in terms of the theory $\etpcs$ of pointed convex semilattices (Proposition \ref{prop-knownpresentations}) implies that the free pointed convex semilattice generated by $X$ is isomorphic to the pointed convex semilattice $(\cset(X+\mathbf{1}) ,\oplus,+_p, \{\dirac\point\})$ where for all $S_1,S_2\in\cset (X+\mathbf{1})$, $S_1\oplus S_2 = \conv(S_1\cup S_2)$ (convex union), $S_1 +_p S_2 = \wms ( p S_1 + (1-p)S_2)$ (weighted Minkowski sum), and the distinguished element is $\{\dirac\point\}$ . In other words, the set ${\terms  X {\sigpcs}}_{/\etpcs}$ of pointed convex semilattice terms modulo the equational theory $\etpcs$ can be identified with the set $\cset (X+\mathbf{1})$ of non-empty, finitely generated convex sets of finitely supported probability subdistributions on $X$.
The isomorphism is a simple variant of the isomorphism described in \cite{BSV20ar} for the theory of convex semilattices (without a point). It is given by
$\ics : \cset(X+\mathbf{1}) \to {\terms  X {\sigpcs}}_{/\etpcs} $
defined as 
$\ics(S) = [\bigcplus_{\distr\in \ub(S) } (\bigpplus_{x \in \support(\distr) } \distr(x)\,x)]_{/\etpcs}$, where $\bigcplus_{i\in I} x_{i}$ and $\bigpplus_{i\in I} p_{i} \,x$ are respectively notations for the binary operations $\cplus$ and $\pplus p$ extended to operations of arity $I$, for $I$ finite (see, e.g., \cite{stone:1949,BSS17}), and where $\ub(S)$ is the unique base of $S$ defined as follows.
Given a finitely generated convex set $S\subseteq \dset (X)$, there exists one minimal (with respect to the inclusion order) finite set $\ub(S)\subseteq \dset {(X)}$  such that $S=\conv(\ub(S))$. The finite set $\ub(S)$ is referred to as the \emph{unique base} of $S$ (see, e.g., \cite{BSV20ar}). The distributions in $\ub(S)$ are convex--linear independent, i.e.,  if $\ub(S) = \{\distr_1,\dots, \distr_n\}$, then for all $i$, $\distr_i \notin \conv(\{ \distr_j \, |\, j\neq i \})$.
{We remark that the equation $x \oplus y = x \oplus y \oplus (x \pplus p y)$, which explicitly expresses closure under taking convex combinations, is derivable from the theory of convex semilattices (see, e.g., \cite[Lemma 14]{BSV20ar}), and that this derivation critically uses the distributivity axiom (D).}

%

\subsection{Proofs for Section \ref{sec:cplusone}}



We first prove two useful lemmas, showing that $\gamma$ commutes with the operations of (possibly infinite) union and of weighted Minkowski sum, respectively. Since we are dealing with (generally not convex) unions, we consider the function $\gamma$ as defined (see Definition \ref{def:gamma}) on arbitrary subsets of $S\subseteq\dset(X+\oneset)$, rather than convex subsets $S\in\cset(X+\oneset)$. So the generalised $\gamma$ maps an arbitrary set $S\subseteq\dset (X+\oneset)$ to its subset of full probability distributions (i.e., such that $\distr(\point)=0$) if such set is nonempty, and to $\point\in\oneset$ otherwise.

\begin{lemma}\label{lem:gammaunion}
For any family of non--empty sets $\{S_i \subseteq \dset(X+\oneset)\}_{i \in I}$, we have 
    \[\gamma_X\left( \bigcup_{i \in I} S_i \right) = \begin{cases}
        \bigcup_{\substack{i \in I\\ \gamma_X(S_i) \neq \point}} \gamma_X(S_i) &\exists i \in I, \gamma_X(S_i) \neq\point\\\point &\text{o/w}
    \end{cases},\]
\end{lemma}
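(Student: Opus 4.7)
The plan is to prove this by a direct case analysis matching the two cases in the piecewise definition of the right-hand side, using the alternative reformulation of $\gamma_X$ mentioned just after Definition \ref{def:gamma}: viewing $\point$ as the empty set, $\gamma_X(S)$ is simply the set $\{\distr \in S \mid \distr(\point) = 0\}$ of full distributions in $S$ (which may be empty).

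First I would handle the case in which there exists some $i_0 \in I$ with $\gamma_X(S_{i_0}) \neq \point$. By definition of $\gamma_X$, this means $S_{i_0}$ contains at least one distribution $\distr$ with $\distr(\point) = 0$, and hence so does $\bigcup_{i \in I} S_i$. Therefore the first clause in Definition \ref{def:gamma} applies to $\bigcup_{i \in I} S_i$, giving
\[
\gamma_X\Big(\bigcup_{i \in I} S_i\Big) = \Big\{\distr \in \bigcup_{i \in I} S_i \;\Big|\; \distr(\point) = 0\Big\} = \bigcup_{i \in I} \{\distr \in S_i \mid \distr(\point) = 0\}.
\]
For each $i$, the inner set is $\gamma_X(S_i)$ when $\gamma_X(S_i) \neq \point$, and is the empty set when $\gamma_X(S_i) = \point$; dropping the empty contributions yields exactly $\bigcup_{\gamma_X(S_i) \neq \point} \gamma_X(S_i)$, as required.

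For the second case, assume $\gamma_X(S_i) = \point$ for every $i \in I$. Then no $S_i$ contains any full distribution, so $\bigcup_{i \in I} S_i$ contains no full distribution either. The second clause of Definition \ref{def:gamma} then gives $\gamma_X(\bigcup_{i \in I} S_i) = \point$, matching the right-hand side. No real obstacle is expected here: the argument is essentially bookkeeping around the single conditional in the definition of $\gamma_X$, together with the fact that the property $\distr(\point) = 0$ is local to the distribution and independent of which $S_i$ contains it.
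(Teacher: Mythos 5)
Your proposal is correct and follows essentially the same argument as the paper's proof: a case split on whether some $\gamma_X(S_i)\neq\point$, using that $\gamma_X(S_i)=\point$ exactly when $S_i$ contains no full distribution, and that fullness of a distribution does not depend on which $S_i$ it lies in. Nothing further is needed.
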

\begin{proof}
    It is clear that if $\gamma_X(S_i) = \point$ for all $i \in I$, then all distributions $\distr$ in $\bigcup_{i \in I} S_i$ are not full (i.e., $\distr(\point)>0$), thus $\gamma_X(\cup_i S_i) = \point$. Now, suppose $\exists i \in I, \gamma_X(S_i) \neq \point$, or equivalently, there is at least one full distribution in $\cup_{i \in I}S_i$. Then, $\gamma_X(\cup_i S_i)$ is, by definition, the union of all full distributions in each $S_i$. Finally, since there are no full distributions in $S_i$ if and only if $\gamma_X(S_i) = \point$, we obtain
    \[\gamma_X\left( \bigcup_{i \in I} S_i \right) = \bigcup_{\substack{i \in I\\ \gamma_X(S_i) \neq \point}} \gamma_X(S_i)\]
\end{proof}
\begin{lemma}\label{lem:gammawms}
  For any $\ddistr \in \dset\cset(X+\mathbf{1})$, we have
    \[\gamma_X(\wms(\ddistr)) = \begin{cases}
        \wms(\dset(\gamma_X)(\ddistr)) & \forall U \in \support(\ddistr), \gamma_X(U) \neq \point\\
        \point & \text{o/w}
    \end{cases}.\]
\end{lemma}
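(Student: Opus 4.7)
The plan is to fix a representation $\ddistr = \sum_{i=1}^n p_i U_i$ with $p_i \in (0,1]$ summing to $1$ and each $U_i \in \cset(X+\oneset)$, and then to reason elementwise about full distributions (those $\distr$ with $\distr(\point)=0$). The core observation, on which both cases hinge, is the following: for $\distr_i \in U_i$, the convex combination $\distr = \sum_i p_i \distr_i$ satisfies $\distr(\point) = \sum_i p_i \distr_i(\point)$, and since all $p_i>0$ and $\distr_i(\point)\geq 0$, this sum vanishes if and only if $\distr_i(\point)=0$ for every $i$. In other words, $\distr$ is full iff each $\distr_i$ is full.

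With this in hand, I would split into the two cases stated in the lemma. First, suppose $\gamma_X(U_i)\neq\point$ for every $i\in\support(\ddistr)$, i.e., each $U_i$ contains at least one full distribution. Then a full distribution of the form $\sum_i p_i \distr_i$ exists in $\wms(\ddistr)$, so $\gamma_X(\wms(\ddistr))\neq \point$, and by the elementwise characterisation above I would compute
\[
\gamma_X(\wms(\ddistr)) \;=\; \{\,\textstyle\sum_i p_i \distr_i \mid \distr_i \in U_i,\ \distr_i(\point)=0\,\} \;=\; \{\,\textstyle\sum_i p_i \distr_i \mid \distr_i \in \gamma_X(U_i)\,\}.
\]
Since under the hypothesis $\dset(\gamma_X)(\ddistr) = \sum_i p_i\, \gamma_X(U_i) \in \dset\cset(X)$, the right-hand side is exactly $\wms(\dset(\gamma_X)(\ddistr))$, as required.

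For the second case, suppose there exists some $i_0 \in \support(\ddistr)$ with $\gamma_X(U_{i_0}) = \point$, i.e., every $\distr_{i_0} \in U_{i_0}$ satisfies $\distr_{i_0}(\point) > 0$. Then for any choice of $\distr_i \in U_i$ the sum $\sum_i p_i \distr_i$ satisfies $\distr(\point) \geq p_{i_0}\distr_{i_0}(\point) > 0$, so $\wms(\ddistr)$ contains no full distribution. By the definition of $\gamma_X$, this gives $\gamma_X(\wms(\ddistr)) = \point$.

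The proof is essentially a bookkeeping exercise, so there is no real obstacle; the only subtlety to watch is the typing, namely making sure that in the positive case $\dset(\gamma_X)(\ddistr)$ is genuinely a distribution in $\dset\cset(X)$ (so that $\wms$ can be applied to it), which is precisely what the hypothesis $\gamma_X(U_i)\neq\point$ for all $i\in\support(\ddistr)$ ensures.
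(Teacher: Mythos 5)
Your proof is correct and follows essentially the same route as the paper's: the same case split, and the same key observation that a convex combination with strictly positive weights is a full distribution if and only if every component is full, which yields both the second case and the elementwise computation $\gamma_X(\wms(\ddistr)) = \{\sum_i p_i\distr_i \mid \distr_i \in \gamma_X(U_i)\} = \wms(\dset(\gamma_X)(\ddistr))$ in the first.
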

\begin{proof}
Note that if there exists $V \in \support(\ddistr)$ with $\gamma_X (V) = \point$ (i.e., if $V$ does not contain full distributions) then all distributions $\sum_{U \in \support(\ddistr)} \ddistr(U)\cdot d_U \in \wms(\ddistr)$ are not full, i.e., $\gamma_X(\wms(\ddistr)) = \point$. This proves the second condition of the lemma. Now, for the first one, assume that $\forall U \in \support(\ddistr), \gamma_X(U) \neq \point$. We then have the following derivation.
    \begin{align*}
        \gamma_X(\wms(\ddistr)) &= \gamma_X\left\{ \sum_{U \in \support(\ddistr)} \ddistr(U)\cdot d_U: d_U \in U \right\}\\
        &= \left\{ \sum_{U \in \support(\ddistr)} \ddistr(U)\cdot d_U: d_U \in U \textnormal{ is full, i.e., } d_U(\point) =0 \right\}\\
        &= \left\{ \sum_{U \in \support(\ddistr)} \ddistr(U)\cdot d_U: d_U \in \gamma_X(U) \right\}\\
        &= \wms(\dset(\gamma_X)(\ddistr))
    \end{align*}
where the second equality holds because $\sum_{U \in \support(\ddistr)} \ddistr(U)\cdot d_U$ is full if and only if all $d_U\in U$ are full.
\end{proof}

We are now ready to prove the results of Section \ref{sec:cplusone}.


\textbf{Proof of Lemma \ref{lem:gammadistrlaw}.}

We show that the family 
\[\gamma_X (\left\{ \distr_i \mid i \in I\right\}) = \begin{cases}
    \left\{ \distr_i \mid i \in I, \distr_i(\point) = 0\right\} &\exists i, \distr_i(\point) = 0\\\point &\text{o/w}
\end{cases}\]
is a natural transformation. First, $\gamma_X$ is well-typed because when $S \in \cset(X+\mathbf{1})$, there is a finite set $\ub(S)$ of distributions satisfying $\conv(\ub(S)) = S$. Then, if $S$ contains at least one full distribution, one can verify that \[\gamma_X(S) = \cc{\distr \in \ub(S) \mid \distr(\point)=0},\]thus $\gamma_X(S)$ is a convex and finitely generated subset of $\dset(X)$.
Second, $\gamma$ is natural by the following derivation, for any $f: X \rightarrow Y$ and $S = \left\{ \distr_i \mid i \in I\right\} \in \cset(X+\mathbf{1})$:
\begin{align*}
    (\cset(f)+\mathbf{1})(\gamma_X(S)) &= \begin{cases}
        \cset(f)\left\{ \distr_i \mid \distr_i(\point) = 0\right\} &\exists i, \distr_i(\point) = 0\\\point &\text{o/w}
    \end{cases}\\
    &= \begin{cases}
        \cset(f+\mathbf{1})\left\{ \distr_i \mid \distr_i(\point) = 0\right\} &\exists i, \distr_i(\point) = 0\\\point &\text{o/w}
    \end{cases}\\
    &= \gamma_Y(\cset(f+\mathbf{1})(S))
\end{align*}
The last equality holds because $f(x) \neq \point$ for any $x \in X$, since $\point \in \oneset$ is assumed to not belong to $Y$.

We need to show that $\gamma: \cset(\cdot+\mathbf{1}) \Rightarrow \cset+\mathbf{1}$ is a monad distributive law, i.e., that it satisfies the two commuting diagrams of \eqref{diag-mondistlaw1} and \eqref{diag-mondistlaw2}.

First we show that \eqref{diag-botconvmondistlaw1} commutes.
\begin{equation}\label{diag-botconvmondistlaw1}
    \begin{tikzcd}
        \mathcal{C}X \arrow[rd, "\inl"'] \arrow[r, "\mathcal{C}(\inl)"] & \mathcal{C}(X + \mathbf{1}) \arrow[d, "\gamma_X"] & X + \mathbf{1} \arrow[ld, "\eta^{\mathcal{C}}_X +\mathbf{1}"] \arrow[l, "\eta^{\mathcal{C}}_{X+\mathbf{1}}"'] \\
        & \mathcal{C}X+\mathbf{1} &
    \end{tikzcd}
\end{equation}
For the L.H.S. (Left Hand Side), note that $\mathcal{C}(\inl)$ maps distributions $\distr\in\cset(X)$ to the corresponding ``full distribution'' $\distr\in\cset(X+\oneset)$, i.e., such that $\distr(\point)=0$. We have 
\[\gamma_X(\cset(\inl)(\left\{ \distr_i \mid i \in I\right\})) = \gamma_X(\left\{ \distr_i \mid i \in I\right\}) = \left\{ \distr_i \mid i \in I\right\}\]
where the last equality holds as $\left\{ \distr_i \mid i \in I\right\}$ only contains full distributions.
For the R.H.S., take an element $\omega \in X+\mathbf{1}$. If $\omega=x \in X$, it is first sent by $\eta^{\mathcal{C}}_{X+\mathbf{1}}$ to $\{\dirac{x}\}$ and then it is sent to $\{\dirac{x}\}$ by $\gamma_{X}$, and indeed we have  $(\eta^{\cset}_{X}+\mathbf{1})(x) =\{\dirac{x}\}$. 
If $\omega=\point \in \oneset$, then it is first sent to $\{\dirac\star\}$ and then to $\star$, and we have $(\eta^\cset_{X}+\mathbf{1})(\star) =\star$.


Finally, we show that \eqref{diag-botconvmondistlaw2} commutes:
\begin{equation}\label{diag-botconvmondistlaw2}
    \begin{tikzcd}
        \mathcal{C}\mathcal{C}(X+\mathbf{1}) \arrow[rr, "\mu^{\mathcal{C}}_{X+\mathbf{1}}"] \arrow[d, "\mathcal{C}(\gamma_X)"'] 
        &  & \mathcal{C}(X+ \mathbf{1}) \arrow[dd, "\gamma_X"] 
        &  & \mathcal{C}((X+\mathbf{1})+\mathbf{1}) \arrow[d, "\gamma_{X+\mathbf{1}}"] \arrow[ll, "{\mathcal{C}(\mu^{+\mathbf{1}}_X)}"'] \\
        \mathcal{C}(\mathcal{C}X+\mathbf{1}) \arrow[d, "\gamma_{\mathcal{C}X}"'] & & & & \mathcal{C}(X+\mathbf{1})+\mathbf{1} \arrow[d, "\gamma_X +\mathbf{1}"] \\
        \mathcal{C}\mathcal{C}X+\mathbf{1} \arrow[rr, "\mu^{\mathcal{C}}_X +\mathbf{1}"'] &  & \mathcal{C}X+\mathbf{1} &  & (\mathcal{C}X+\mathbf{1})+\mathbf{1} \arrow[ll, "{\mu^{+\mathbf{1}}_{\mathcal{C}X}}"]       
    \end{tikzcd}
\end{equation}
We first prove the L.H.S. of \eqref{diag-botconvmondistlaw2}. Starting with a convex set $S = \{\ddistr_i\}_{i \in I} \subseteq \cset\cset(X+\mathbf{1})$, the right-then-down path yields, by definition of $\mu^\cset$ (Definition \ref{def:set:cset}) and applying Lemma \ref{lem:gammaunion}:
\begin{align*}
    \gamma_X(\mu^{\cset}_{X+\mathbf{1}}(S)) &= \gamma_X\left(\bigcup_{i}\wms(\ddistr_i) \right)\\
    &= \begin{cases}
        \bigcup_{\substack{\{i\mid \gamma_X(\wms(\ddistr_i)) \neq \point\}}}\gamma_X(\wms(\ddistr_i))& \exists i, \gamma_X(\wms(\ddistr_i)) \neq \point\\
        \point &\text{o/w}
    \end{cases}\\
\end{align*}    
Then, by Lemma \ref{lem:gammawms}, we derive that $\gamma_X(\wms(\ddistr_i)) \neq \point$ if and only if $\forall U \in \support(\ddistr_i), \gamma_X(U) \neq \point$ and we can rewrite the function above as: 
\begin{align*}    
    &= \begin{cases}
        \bigcup_{\{i\mid \forall U \in \support(\ddistr_i), \gamma_X(U) \neq \point\}}\wms(\dset(\gamma_X)(\ddistr_i))& \exists i, \forall U \in \support(\ddistr_i), \gamma_X(U) \neq \point\\
        \point &\text{o/w}
    \end{cases}
\end{align*}
Taking the down-then-right path, we have the following derivation.
\begin{align*}
    S &\supp{\cset(\gamma_X)} \left\{ \dset(\gamma_X)(\ddistr_i)\right\}\\
    &\supp{\gamma_{\cset X}} \begin{cases}
        \left\{ \dset(\gamma_X)(\ddistr_i) \mid \star\not\in \support(\dset(\gamma_X)(\ddistr_i))\right\}
        & \exists i,  \star\not\in \support(\dset(\gamma_X)(\ddistr_i))\\
        \point &\text{o/w}
    \end{cases}\\
    &= \begin{cases}
        \left\{ \dset(\gamma_X)(\ddistr_i) \mid \forall U \in \support(\ddistr_i), \gamma_X(U) \neq \point\right\} & \exists i, \forall U \in \support(\ddistr_i), \gamma_X(U) \neq \point\\
        \point &\text{o/w}
    \end{cases}\\
    &\supp{\mu^{\cset}_X+\mathbf{1}} \begin{cases}
        \bigcup_{\substack{\{i\mid \forall U \in \support(\ddistr_i), \gamma_X(U) \neq \point\}}}\wms(\dset(\gamma_X)(\ddistr_i))& \exists i, \forall U \in \support(\ddistr_i), \gamma_X(U) \neq \point\\
        \point &\text{o/w}
    \end{cases}
\end{align*}
where the equality in the derivation holds as 
$$\point \notin\support(\dset(\gamma_{X})(\ddistr_i))\text{ if and only if } \forall U \in \support(\ddistr_{i}), \gamma_{X}(U)\neq\point.$$
Hence, the L.H.S commutes.

For the R.H.S. of \eqref{diag-botconvmondistlaw2}, let $S = \{\distr_i\}_{i \in I} \in \cset ((X+\mathbf{1})+\mathbf{1})$. In the sequel, $\point$ will denote the element of the innermost $\oneset$  and $\bpoint$ the element of the outermost $\oneset$. Taking the top arrow, the morphism $\mathcal{C}(\mu^{+\mathbf{1}}_X)$ identifies both stars together by sending $S$ to $\{\bar\distr_i\}_{i \in I}$ where, again omitting injections, we let \[\bar\distr_i= (\distr_i(\point)+\distr_i(\bpoint))\point + \sum_{x \in X}\distr_i(x)x.\] 
Applying $\gamma_X$ to $\{\bar\distr_i\}_{i \in I}$ then leads to 
\[
\begin{cases}
    \point &\forall i, \point \in \support(\bar\distr_i)\\
    \left\{ \bar\distr_i \mid \point \notin \support(\bar\distr_i)\right\}& \text{o/w}
\end{cases}.\]
Note that for every $i$ it holds
\[\point \in \support(\bar\distr_i) \Leftrightarrow (\bpoint \in \support(\distr_i)\text{ or }\point \in \support(\distr_i))\]
and that, if there exists a $\bar\distr_i$  such that $\point \not\in \support(\bar\distr_i)$, then 
$\left\{ \bar\distr_i \mid \point \notin \support(\bar\distr_i)\right\}= \left\{ \distr_i \mid \point, \bpoint\notin \support(\distr_i)\right\}$.
We then derive that the left-then-down path gives
\[
{ \gamma_X \circ \mathcal{C}(\mu^{+\mathbf{1}}_X)} (S)=
\begin{cases}
    \point &\forall i, (\bpoint \in \support(\distr_i)\text{ or }\point \in \support(\distr_i))\\
    \left\{ \distr_i \mid \point, \bpoint\notin \support(\distr_i)\right\}& \text{o/w}
\end{cases}.\]

Taking the down-then-left path, we have the following chain
\begin{align*}
    S &\supp{\gamma_{X +\mathbf{1}}}\begin{cases}
        \bpoint & \forall i, \bpoint \in \support(\distr_i)\\
        \left\{ \distr_i \mid \bpoint \notin \support(\distr_i)\right\}&\text{o/w}
    \end{cases}\\
    &\supp{\gamma_{X}+\mathbf{1}}\begin{cases}
        \bpoint & \forall i, \bpoint \in \support(\distr_i)\\
        \point & \exists i, \bpoint \not\in \support(\distr_i)\text{ and }\forall i (\bpoint\notin \support(\distr_i)\Rightarrow \point \in \support(\distr_i))\\
        \left\{ \distr_i \mid \point,\bpoint \notin \support(\distr_i)\right\}&\text{o/w}
    \end{cases}\\
    &\supp{\mu_{\cset X}^{+\mathbf{1}}}\begin{cases}
        \point &(A)\\
        \left\{ \distr_i \mid \point,\bpoint \notin \support(\distr_i)\right\}& \text{o/w}
    \end{cases}
\end{align*}
where (A) is the condition
$$(\forall i, \bpoint \in \support(\distr_i)) \text{ or }
(\exists i, \bpoint \not\in \support(\distr_i)\text{ and }\forall i (\bpoint\notin \support(\distr_i)\Rightarrow \point \in \support(\distr_i))).$$
Condition (A) is equivalent to
$$\forall i (\point \in \support(\distr_i)\text{ or }\point \in \support(\distr_i))$$
and we thereby conclude that the R.H.S. commutes.


We conclude that $\gamma: \cset(\cdot+\mathbf{1}) \Rightarrow \cset+\mathbf{1}$ is a distributive law. \qed


\textbf{Proof of Lemma \ref{lem:gammamonadmap}}

We need to show that $\gamma$ is a monad map from $\cset(\cdot+\mathbf{1})$ to $\cset+\mathbf{1}$. First, the unit diagram in \eqref{diag-botmonmapunitbeta} commutes because both units send $x\in X$ to $\{\dirac{x}\}$ and $\gamma_X\{\dirac{x}\} = \{\dirac{x}\}$ by definition.
\begin{equation}\label{diag-botmonmapunitbeta}
    \begin{tikzcd}
        X \arrow[r, "\eta^{\cset(+\mathbf{1})}_X"] \arrow[rd, "\eta^{\cset+\mathbf{1}}_X"'] & \cset(X+\mathbf{1}) \arrow[d, "\gamma_X"] \\
        & \cset X+\mathbf{1}
        \end{tikzcd}
\end{equation}
Then, it is left to show that the following diagram commutes.
\begin{equation}\label{diag-botmonmapmultbetasimple}
 \begin{tikzcd}
 \cset ( \cset (X+\mathbf{1})+\mathbf{1}) 
 \arrow[rr, "\gamma\diamond \gamma", Rightarrow, bend left] 
 \arrow[r, "\gamma_{ \cset (X+\mathbf{1})}"'] 
 \arrow[d, "\mu^{\cset(+\oneset)}_{X}"'] 
 & \cset ( \cset (X+\mathbf{1}))+\mathbf{1} \arrow[r, " \cset (\gamma_{X})+\mathbf{1}"'] 
 & \cset ( \cset X+\mathbf{1})+\mathbf{1} 
 \arrow[d, "\mu_{X}^{\moncb}"]
 \\
 \cset (X+\mathbf{1}) \arrow[rr, "\gamma_X"'] & & \cset X+\mathbf{1} 
 \end{tikzcd}
\end{equation}
Let $\point$ denotes the element of the innermost $\oneset$ and $\bpoint$ the element of the outermost $\oneset$ and consider an arbitrary $S = \{\ddistr_i\}_{i\in I} \in \cset(\cset(X+\mathbf{1})+\mathbf{1})$. We have the following derivation for the top path.

\begin{align*}
    S &\supp{\gamma_{\cset(X+\mathbf{1})}} \begin{cases}
        \left\{ \ddistr_i \mid\bpoint\not\in \support(\ddistr_{i})\right\} & \exists i, \bpoint\not\in \support(\ddistr_{i})\\
        \bpoint & \text{o/w}
    \end{cases}\\
    &\supp{\cset(\gamma_X)+\mathbf{1}} \begin{cases}
        \left\{ \dset(\gamma_{X})(\ddistr_i) \mid \bpoint\not\in \support(\ddistr_{i}) \right\} 	&         \exists i, \bpoint\not\in \support(\ddistr_{i})  \\
        \bpoint & \text{o/w}\  \textnormal{ (i.e.,   $\forall i, \bpoint\in \support(\ddistr_{i})$ )  } \end{cases}\\
\end{align*}
Then, by applying $\mu^{\moncb}_{X}$ as defined in Equation \ref{eq:mubb} (Section \ref{sec:cplusone}),  we obtain that the right-then-down path yields: 

\begin{align*}
 \begin{cases}
	\mu^{\cset}_{X}(\gamma_{\cset(X)}\big( \left\{ \dset(\gamma_{X})(\ddistr_i) \mid \bpoint\not\in \support(\ddistr_{i}) \right\} )\big)
	& \begin{minipage}{0.5\textwidth}
	$   \exists i, \bpoint\not\in \support(\ddistr_{i}) \text{ and } \\
	\gamma_{\cset(X)}\big( \left\{ \dset(\gamma_{X})(\ddistr_i) \mid \bpoint\not\in \support(\ddistr_{i}) \right\} \big) \!\neq\! \point$
	\end{minipage}\\
        \point &      		\text{o/w} \\
\end{cases}\\
\end{align*}

Note that the condition 
$$
\gamma_{\cset(X)}\big( \left\{ \dset(\gamma_{X})(\ddistr_i) \mid \bpoint\not\in \support(\ddistr_{i}) \right\} \big) \!\neq\! \point
$$
 holds if there exists some $i$ such that $\dset(\gamma_{X})(\ddistr_i)$ is a full distribution, which is the case if and only if $\forall U \in \support(\ddistr_{i}),\gamma_{X}(U)\neq\point$. Hence we can rewrite as follows:

\begin{align*}
\begin{cases}
	\mu^{\cset}_{X}(\gamma_{\cset(X)}\big( \left\{ \dset(\gamma_{X})(\ddistr_i) \mid \bpoint\not\in \support(\ddistr_{i}) \right\} )\big)
	&        \begin{minipage}{0.5\textwidth}
        $\exists i.\Big( \bpoint\not\in \support(\ddistr_{i}) $ \\
         \text{ and } $\forall U \in \support(\ddistr_{i}),\gamma_{X}(U)\neq\point\Big)$
        \end{minipage}\\
        \point &      		\text{o/w} \\
          \end{cases}\\
\end{align*}

Finally, by applying the definition of $\mu^{\cset}_{X}$ and of $\gamma_{\cset(X)}$ we have the equality:

\begin{center}
$\mu^{\cset}_{X}(\gamma_{\cset(X)}\big( \left\{ \dset(\gamma_{X})(\ddistr_i) \mid \bpoint\not\in \support(\ddistr_{i}) \right\} )\big) $
\\
$=$\\
$  \bigcup \left\{ \wms (\dset(\gamma_{X})(\ddistr_i)) \mid \bpoint\not\in \support(\ddistr_{i}) \text{ and }\forall U \in \support(\ddistr_{i}),\gamma_{X}(U)\neq\point \right\} 	
$
\end{center}
which allows us to rewrite as follows:
\begin{align*}
    \begin{cases}
        \bigcup \left\{ \wms (\dset(\gamma_{X})(\ddistr_i)) \mid \bpoint\not\in \support(\ddistr_{i}) \text{ and }\forall U \in \support(\ddistr_{i}),\gamma_{X}(U)\neq\point \right\} 	
        &  
       \begin{minipage}{0.5\textwidth}
        $\exists i.\Big( \bpoint\not\in \support(\ddistr_{i}) $ \\
         \text{ and } $\forall U \in \support(\ddistr_{i}),\gamma_{X}(U)\neq\point\Big)$
        \end{minipage}
	\\
        \point & \text{o/w}    \end{cases}
\end{align*}

Let us now consider the down-then-right path of diagram \eqref{diag-botmonmapmultbetasimple}. By applying $\mu^{\cset(+\oneset)}_{X}$ as given in \eqref{eqn:defofmuCp1}, we obtain:
$$\mu^{\cset(+\oneset)}_{X}(S)=
        \bigcup \{ \wms (\ddistr_i^{\point}) \mid i \in I\}. $$
Then $\gamma_{X}$ gives, by applying Lemma \ref{lem:gammaunion} and Lemma \ref{lem:gammawms}: 
\begin{align*}
    \gamma_{X}(\mu^{\cset+\oneset}_{X}(S))
    &=\gamma_{X}(\bigcup \{ \wms (\ddistr_i^{\point}) \mid i \in I\})\\
            &= \begin{cases}
        \bigcup_{\{i\mid \gamma_X(\wms(\ddistr_i^{\point})) \neq \point\}}\gamma_X(\wms(\ddistr_i^{\star}))
        & \exists i, \gamma_X(\wms(\ddistr_i^{\point}))\neq \point\\
        \point &\text{o/w}
    \end{cases}\\
        &= \begin{cases}
        \bigcup_{\{i\mid \forall U \in \support(\ddistr_i^{\star}), \gamma_X(U) \neq \point \}}\wms(\dset(\gamma_X)(\ddistr_i^{\star}))
        & \exists i, \forall U \in \support(\ddistr_i^{\star}), \gamma_X(U) \neq \point\\
        \point &\text{o/w}
    \end{cases}
\end{align*}
We then conclude that diagram \eqref{diag-botmonmapmultbetasimple} commutes, as the following equality holds:
\[\{\ddistr_{i}^{\point}\mid \forall U \in \support(\ddistr_i^{\star}), \gamma_X(U) \neq \point\} 
= \{\ddistr_{i}\mid \bpoint\not\in \support(\ddistr_{i}) \text{ and }\forall U \in \support(\ddistr_{i}),\gamma_{X}(U)\neq\point\} \]
For the left-to-right inclusion ($\subseteq$), suppose that $\ddistr_{i}^{\point}$ is such that $\forall U \in \support(\ddistr_i^{\star}), \gamma_X(U) \neq \point$.
Note that $\ddistr_i^{\star}= \ddistr_i$ because $\bpoint \notin \support(\ddistr_{i})$. Indeed, it is not possible that $\bpoint \in \support(\ddistr_{i})$ as this leads to a contradiction because it implies that $\{\dirac \point\}\in \support{(\ddistr_{i}^{\point})}$, with $\gamma_X(\{\dirac \point\}) = \point$. Hence, it follows from the hypothesis that $\forall U \in \support(\ddistr_i), \gamma_X(U) \neq \point$.

For the right-to-left inclusion $(\supseteq)$, if $\bpoint \not\in \support(\ddistr_{i})$ then $\ddistr_i^{\star}= \ddistr_i$, and thus $\forall U \in \support(\ddistr_i), \gamma_X(U) \neq \point$ implies $\forall U \in \support(\ddistr_i^{\point}), \gamma_X(U) \neq \point$. \qed

\textbf{Proof of Lemma \ref{embedding_lemma_cbotbh}}.

    The fact that $U^{\gamma}$ is a functor is the content of Proposition \ref{prop-MonmapFunctor}. The fact that it is fully faithfull is obvious since $U^{\gamma}$ acts like the identity on morphisms. The fact that it is injective on objects follows from surjectivity of $\gamma_A$ for any set $A$. Indeed, if $(A, \alpha)$ and $(A, \alpha')$ are such that $\alpha \circ \gamma_A = \alpha' \circ \gamma_A$, then $\alpha = \alpha'$. \qed

\textbf{Proof of Lemma \ref{moncb-final-lemma}.}

For the first point (1), we need to show that the pointed convex semilattice $P\circ U^{\gamma}((A,\alpha))=P((A,\alpha \circ \gamma_A))=(A, \oplus^{\alpha \circ \gamma_A}, \{\pplus p^{\alpha \circ \gamma_A}\}_{p\in (0,1)}, \star^{\alpha \circ \gamma_A} )$ satisfies the bottom and black-hole axioms. This is proved by the following equational reasoning steps.
\begin{align*}
a \oplus^{\alpha \circ \gamma_A} \point^{\alpha \circ \gamma_A} &=(\alpha \circ \gamma_A)(\cc{\dirac{a}, \dirac{\alpha \circ \gamma_A(\{\dirac{\point}\})}}) & \text{definition of $P$}\\
&= \alpha \circ \gamma_A (\cc{\dirac{\alpha \circ \gamma_A(\{\dirac{a}\})}, \dirac{\alpha \circ \gamma_A(\{\dirac{\point}\})}}) &\text{definition of $\EM(\cset(+\oneset))$}\\
&= \alpha \circ \gamma_A \circ \mu^{\cset(+\oneset)}_{A}(\cc{\dirac{\{\dirac{a}\}}, \dirac{\{\dirac{\point}\}}}) &\text{definition of $\EM(\cset(+\oneset))$}\\
&= \alpha \circ \gamma_A (\cc{\dirac{a}, \dirac{\point}}) &\text{definition of $\mu^{\cset(+\oneset)}$}\\
&=\alpha(\{\dirac{a}\}) &\text{definition of $\gamma$}\\
&=a &\text{definition of $\EM(\cset+\oneset)$}
\end{align*}
\begin{align*}
a \pplus p^{\alpha \circ \gamma_A} \point^{\alpha \circ \gamma_A} &=(\alpha \circ \gamma_A)(\{p\, a + (1-p) {\alpha \circ \gamma_A(\{\dirac{\point}\})}\}) & \text{definition of $P$}\\
&= \alpha \circ \gamma_A (\{ p\, {\alpha \circ \gamma_A(\{\dirac{a}\})} + (1-p) {\alpha \circ \gamma_A(\{\dirac{\point}\})}\}) &\text{definition of $\EM(\cset(+\oneset))$}\\
&= \alpha \circ \gamma_A \circ \mu^{\cset(+\oneset)}_{A}(\{p\,\{\dirac{a}\} + (1-p) \{\dirac{\point}\}\}) &\text{definition of $\EM(\cset(+\oneset))$}\\
&= \alpha \circ \gamma_A (\{p\,{a} + (1-p) {\point}\}) &\text{definition of $\mu^{\cset(+\oneset)}$}\\
&=\alpha \circ \gamma_A (\{\dirac{\point}\}) &\text{definition of $\gamma$}\\
&=\point^{\alpha \circ \gamma_A} &\text{definition of $P$}
\end{align*}
For the second point (2), let $(A, \alpha)$ be the $(\cset(\cdot+\mathbf{1}))$--algebra corresponding via $P^{-1}$ to a pointed convex semilattice $\mathbb{A} = (A, \oplus^{\alga}, +_p^{\alga}, \point^{\alga})$ satisfying bottom and black-hole, i.e.,  $(A, \alpha) = P^{-1}(\mathbb{A})$. This $(\cset(\cdot+\mathbf{1}))$--algebra has two key properties. First, $\alpha(S) = \alpha(\{\dirac\point\})$ whenever $\forall \distr \in S, \point \in \support(\distr)$. Indeed, we have $\alpha(S) = \oplus_{\distr \in \ub(S)} \alpha(\{\distr\})$ and since each term of this sum contains $\point$, the whole sum is equal to the interpretation of $\point$, due to the black--hole axiom. Second, when $S$ contains at least one full distribution, we have $\alpha(S) = \alpha(\cset(\inl^{A+\mathbf{1}})(\gamma_A(S)))$. This follows from the following derivation.
\begin{align*}
    \alpha(S) &= \bigoplus_{\distr \in \ub(S)}^{\alga} \alpha(\{\distr\})\\
    &= \bigoplus_{\distr \in \ub(S) : \distr(\point) = 0}^{\alga} \alpha(\{\distr\}) &&\mbox{terms with $\point$ are ignored (by $\bot$ and $BH$)}\\
    &= \alpha(\cc{\distr \in \ub(S) : \distr(\point) = 0})\\
    &= \alpha(\cset(\inl^{A+\mathbf{1}})(\gamma_A(S))) &&\text{by definition of $\gamma_A$}
\end{align*}
From these properties, we conclude that $\alpha = \alpha' \circ \gamma_A$, where \[\alpha': \cset A+\mathbf{1} \rightarrow A = [\alpha \circ \cset(\inl^{A+\mathbf{1}}), \alpha(\{\dirac\point\})].\]
Therefore, it is left to show that $(A, \alpha')$ is a $(\cset+\mathbf{1})$-algebra to conclude that $(A, \alpha)$ is in the image of $U^{\gamma}$.

We have to show two diagrams commute.
\begin{equation}\label{diag-botCp1algebraunit}
    \begin{tikzcd}
    A \arrow[rd, "\id_{A}"'] \arrow[r, "a \mapsto \{\dirac{a}\}"] & \cset A+\mathbf{1} \arrow[d, "\alpha'"] \\ & A
        \end{tikzcd}
\end{equation}
Diagram \eqref{diag-botCp1algebraunit} commutes since $x\in A$ is sent to $\{\dirac{x}\}$ and $\alpha'\{\dirac{x}\} = \alpha\{\dirac{x}\} = x$.
\begin{equation}\label{diag-botCp1algebraassoc}
\begin{tikzcd}
\cset(\cset A +\mathbf{1})+\mathbf{1} \arrow[rrr, "{\cset(\alpha')+\mathbf{1}}"] \arrow[d, "\gamma_{\cset A}+\mathbf{1}"'] & & & \cset A+\mathbf{1} \arrow[dd, "\alpha'"] \\
\cset\cset A+\mathbf{1}+\mathbf{1} \arrow[d, "{[\mu^{\cset}_A, \inr^{\cset A+\mathbf{1}}, \inr^{\cset A+\mathbf{1}}]}"'] & & & \\
\cset A+\mathbf{1} \arrow[rrr, "\alpha'"'] & & & A 
\end{tikzcd}
\end{equation}

For Diagram \eqref{diag-botCp1algebraassoc}, let $S \in \cset(\cset A+\mathbf{1})+\mathbf{1}$. We need to distinguish three cases.

First,  if $S = \bpoint$ then both path send $S$ to $\alpha\{\dirac\point\}$.

Second, if $\forall \ddistr\in S,\point \in \support(\ddistr)$, then the down-then-right path sends $S$ to $\alpha\{\dirac\point\}$ again and taking the right-then-down path and denoting $S = \{\ddistr_i + p_i\point\}_i$, we have (we omit all inclusions so $\alpha'(U) = \alpha(U)$ when $U \neq \bpoint \in \cset A +\mathbf{1}$):
\begin{align*}
    S &\supp{\cset(\alpha')+\mathbf{1}} \left\{ \mathcal{D}(\alpha)(\ddistr_i + p_i\{\dirac\point
    \}) \right\}\\
    &\supp{\alpha'} \alpha\left\{\mathcal{D}(\alpha)(\ddistr_i + p_i\{\dirac\point\}) \right\}\\
    &= \alpha(\cset(\alpha+\mathbf{1})\left( \left\{\ddistr_i+ p_i\{\dirac\point\} \right\} \right))\\
    &= \alpha\left( \mu^{\cset(+\mathbf{1})}_A\left( \left\{ \ddistr_i + p_i\{ \dirac\point\}\right\} \right) \right)&&\text{$\alpha$ is a $\cset(\cdot+\mathbf{1})$--algebra}\\
    &= \alpha\{\dirac\point\}
\end{align*}
Where the last equality follows because every distribution in the multiplication will contain $\point$ in its support and $\alpha$ satisfies the bottom and black-hole axioms.

Third, if $S$ contains at least one full distribution, the down-then-right path first sends $S = \left\{ \ddistr_i \mid i \in I\right\}$ to $\left\{ \ddistr_i \mid \ddistr_i(\point) = 0\right\}$. Then, this is sent to $$\alpha'(\mu^{\cset(+\mathbf{1})}_A\left\{ \ddistr_i \mid \ddistr_i(\point) = 0\right\}) = \alpha(\mu^{\cset(+\mathbf{1})}_A\left\{ \ddistr_i \mid \ddistr_i(\point) = 0\right\}).$$ On the right-then-down path, we have the following derivation.
\begin{align*}
    \left\{ \ddistr_i + p_i\point\right\} &\supp{\cset(\alpha')+\mathbf{1}} \left\{ \sum_{U \in \support(\ddistr_i)} \ddistr_i(U)\alpha(U) +p_i\alpha\{\dirac\point\} \mid i \in I\right\}\\
    &\supp{\alpha'} \alpha\left\{ \sum_{U \in \support(\ddistr_i)} \ddistr_i(U)\alpha(U) +p_i\alpha\{\dirac\point\} \mid i \in I\right\}\\
    &=\alpha\left( \cset(\alpha+\mathbf{1})\left\{ \sum_{U \in \support(\ddistr_i)} \ddistr_i(U)U+p_i\{\dirac\point\} \mid i \in I\right\}\right)\\
    &= \alpha\left( \mu^{\cset(+\mathbf{1})}_A \left\{ \sum_{U \in \support(\ddistr_i)} \ddistr_i(U)U +p_i\{\dirac\point\} \mid i \in I\right\}\right)\\
    &= \alpha\left( \mu^{\cset(+\mathbf{1})}_A\left\{ \ddistr_i \mid p_i = 0\right\} \right)
\end{align*}
The last equality holds because $\alpha$ satisfies the bottom and black-hole axioms.
\qed


\subsection{Proofs for Section \ref{sec:cdownarrow}}

We first prove the following useful lemma.

\begin{lemma}\label{lem:kclosure1}
Let $\distr \in \dset(X+\mathbf{1})$. Then $\bcl_X(\{\distr\})$ is the smallest $\bot$--closed set containing $\distr$ and
        $\bcl_X(\{\distr\}) 
        = \left\{ \distrb \in \dset(X+\mathbf{1}) \mid \forall x \in X, \distrb(x) \leq \distr(x)\right\}$.
        \end{lemma}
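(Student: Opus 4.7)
The plan is to prove both claims by computing $\bcl_X(\{\distr\})$ explicitly, using the fact that $\cset(X+\mathbf{1})$ is the free pointed convex semilattice on $X$ and that $\bcl_X$ is the unique homomorphism extending $f$. The explicit formula then readily implies both the characterisation and the universal (smallest $\bot$--closed) property.

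First I would write the singleton $\{\distr\}$ as the interpretation in $\cset(X+\mathbf{1})$ of an explicit pointed convex semilattice term in the generators. Letting $\support(\distr)\cap X = \{x_1,\dots,x_n\}$ with $p_i = \distr(x_i)$ and $p_0 = \distr(\star) = 1 - \sum_{i=1}^n p_i$, one checks (using Proposition \ref{prop:freepcs} and the fact that $\wms$ applied to a singleton distribution over singletons yields a singleton) that $\{\distr\}$ is the interpretation of the term $t = \bigpplus_{i\in\{0,\dots,n\}} p_i y_i$, where $y_0 = \star$ and $y_i = x_i$ for $i\geq 1$. Since $\bcl_X$ is a pointed convex semilattice homomorphism extending $f$, its value on $\{\distr\}$ is obtained by interpreting the same term $t$ with $y_i$ replaced by $f(y_i)$, i.e.\ by $\conv\{\dirac{x_i},\dirac{\star}\}$ for $i\geq 1$ and by $\{\dirac{\star}\}$ for $i=0$. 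Using the definition of $\pplus p^{\cset(+\oneset)}$ in terms of $\wms$, this gives
\[
\bcl_X(\{\distr\}) = \Big\{\sum_{i=1}^n p_i\big(q_i\dirac{x_i} + (1-q_i)\dirac{\star}\big) + p_0\dirac{\star} \;\Big|\; q_i\in[0,1]\Big\}.
\]

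Next I would simplify this set: a generic element is the distribution $\sum_{i=1}^n p_iq_i\dirac{x_i} + \big(1-\sum_{i=1}^n p_iq_i\big)\dirac{\star}$. As $q_i$ ranges over $[0,1]$, its coordinate at $x_i$ ranges independently over $[0,p_i]$, while coordinates at $x\in X\setminus\{x_1,\dots,x_n\}$ are forced to $0$ (which is allowed since $\distr(x)=0$ there). This shows exactly
\[
\bcl_X(\{\distr\}) = \{\distrb\in\dset(X+\mathbf{1}) \mid \forall x\in X,\ \distrb(x)\leq\distr(x)\},
\]
proving the displayed equality.

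Finally I would derive the ``smallest $\bot$--closed set containing $\distr$'' property from this formula. It contains $\distr$ (taking all $q_i=1$). It is $\bot$--closed: if $\distrb$ satisfies $\distrb(x)\leq\distr(x)$ for all $x\in X$ and $\distrb'(x)\leq\distrb(x)$ for all $x\in X$, then by transitivity $\distrb'$ satisfies the same bound with respect to $\distr$. For minimality, any $\bot$--closed set $T$ containing $\distr$ must, by definition of $\bot$--closure, contain every $\distrb\in\dset(X+\mathbf{1})$ with $\distrb(x)\leq\distr(x)$ for all $x\in X$, hence contain $\bcl_X(\{\distr\})$.

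The main technical step is the first one, namely identifying the term that yields $\{\distr\}$ and tracking the operations carefully so that the $\wms$ computation produces the claimed parametric family; once this is in place, the characterisation and the minimality statement follow immediately.
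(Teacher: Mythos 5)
Your proposal is correct and follows essentially the same route as the paper: both exploit that $\bcl_X$ is the pointed convex semilattice homomorphism extending $f$ to compute $\bcl_X(\{\distr\})$ as a weighted Minkowski sum of the sets $\conv\{\dirac{x},\dirac\point\}$ (plus $\{\dirac\point\}$ for the mass at $\point$), simplify to the dominance characterisation $\{\distrb \mid \forall x,\ \distrb(x)\leq\distr(x)\}$, and then read off $\bot$--closedness and minimality directly from the definition. Your explicit treatment of the $\point$--component (the $y_0=\star$ term) is, if anything, slightly more careful than the paper's.
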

\begin{proof}
        We first prove that for any $\distr \in \dset(X+\mathbf{1})$,
        $$\bcl_X\{\distr\} 
        = \left\{ \distrb \in \dset(X+\mathbf{1}) \mid \forall x \in X, \distrb(x) \leq \distr(x)\right\}.$$
        By $\bcl_X$ being a homomorphism, we obtain:
        \begin{align*}
            \bcl_X\{\distr\} &= \bcl_X\left( \wms \Big(\sum_{x \in \support(\distr)} \distr(x) \,\{\dirac{x}\} \Big)\right)\\
            &= \wms\Big(\sum_{x \in \support(\distr)} \distr(x) \,\bcl_X(\{\dirac{x}\})\Big)\\
            &= \wms\Big(\sum_{x \in \support(\distr)} \distr(x) \,\{p \,x+(1-p)\,\point\mid p \in [0,1]\}\Big)\\
            &= \left\{ \sum_{x \in \support(\distr)}\distr(x)(p_x \,x+(1-p_x)\,\point) \mid \forall x, p_{x} \in [0,1]\right\}\\
            &= \left\{ \distrb \in \dset(X+\mathbf{1}) \mid \forall x \in X, \distrb(x) \leq \distr(x)\right\}
        \end{align*}
        The last equality follows because the $p_x$s are chosen independently in $[0,1]$, thus any distribution with weight lower than $\distr$ at all $x \in X$ can be obtained by choosing the right $p_x$s.

This set is clearly $\bot$--closed, and any $\bot$--closed containing $\distr$ must contain $\bcl_X\{\distr\}$ by definition. Thus, $\bcl_X\{\distr\}$ is the smallest $\bot$--closed set containing $\distr$.

\end{proof}
%

As a corollary and using the definition of $\bot$--closure we derive the following:
\begin{lemma}\label{lem:kclosure2}
A set $S \in \cset(X+\mathbf{1})$ is $\bot$--closed if and only if for any $\distr \in S$, $\bcl_X(\{\distr\}) \subseteq S$.
\end{lemma}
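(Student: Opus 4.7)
The plan is to obtain this lemma as an essentially immediate corollary of Lemma \ref{lem:kclosure1}. Since the statement is a biconditional, I would unfold the definition of $\bot$--closed on one side and substitute the explicit characterisation of $\bcl_X(\{\distr\})$ provided by Lemma \ref{lem:kclosure1} on the other.

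Specifically, by the definition of $\bot$--closure, $S\in\cset(X+\oneset)$ is $\bot$--closed if and only if for every $\distr\in S$ we have
\[
\left\{\distrb\in\dset(X+\oneset)\mid \forall x\in X,\ \distrb(x)\leq \distr(x)\right\}\subseteq S.
\]
Lemma \ref{lem:kclosure1} states that this set equals $\bcl_X(\{\distr\})$. A direct substitution then yields the desired equivalence: $S$ is $\bot$--closed iff $\bcl_X(\{\distr\})\subseteq S$ for all $\distr\in S$.

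There is essentially no obstacle here, since all the nontrivial content (computing the homomorphic image $\bcl_X(\{\distr\})$ via the weighted Minkowski sum and recognising it as the set of subdistributions pointwise below $\distr$) has already been carried out in the proof of Lemma \ref{lem:kclosure1}. The proof can therefore be presented as a single line invoking that lemma together with the definition of $\bot$--closure; no induction, no closure argument, and no appeal to the convex semilattice axioms is required at this stage.
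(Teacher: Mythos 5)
Your proof is correct and matches the paper's treatment exactly: the paper derives Lemma \ref{lem:kclosure2} as an immediate corollary of Lemma \ref{lem:kclosure1} together with the definition of $\bot$--closedness, which is precisely the substitution you describe.
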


\textbf{Proof of Theorem \ref{thm:Kclosure}}

    Let $S \in \cset(X+\mathbf{1})$. For each $\distr \in S$, $\bcl_X(\{\distr\}) \subseteq \conv(\bcl_X(S) \cup \bcl_X(\{\distr\})) = \bcl_X(\conv(S \cup \{\distr\})) = \bcl_X(S)$. Hence, Lemma \ref{lem:kclosure2} implies that $\bcl_X(S)$ is $\bot$--closed, i.e., $\bcl_X(S)\in  \cset^{\downarrow}(X)$.
    Moreover, as for all $\distr \in S$ it holds that $\distr \in \bcl_X(\{\distr\})$, by the same reasoning, we derive that $S\subseteq \bcl_X(S)$.
    To prove minimality, let $S'$ be a $\bot$--closed such that $S\subseteq S'$. We need to show that $\bcl_X(S)\subseteq S'$. 
    Let $\{\distr_1, \dots,\distr_n\}$ be a basis for $S$ (i.e., $S = \conv(\bigcup_{i =1}^n \{\distr_i\})$). By Lemma \ref{lem:kclosure2}, as $S'$ is $\bot$--closed and $\{\distr_1, \dots,\distr_n\}\subseteq S\subseteq S'$, we derive that $\bcl_X(\{\distr_{i}\})\subseteq S'$ for any $\distr_{i}$ in the basis.
    Thus, by convexity of $S'$ and by $\bcl_X$ being a homomorphism, we conclude that 
    \begin{equation}\label{eq:kbot}
    \bcl_X(S) =\bcl_X\left( \conv(\bigcup_{i =1}^n \{\distr_i\}) \right) =\conv(\bigcup_{i =1}^n \bcl_X(\{\distr_i\}))\subseteq S'.
    \end{equation}
    Hence, $\bcl_X(S)$ is the smallest $\bot$--closed set containing $S$.
    
We can now prove that $S$ is $\bot$--closed if and only if $S=\bcl_X(S)$. Suppose that $S$ is $\bot$--closed. Then, by \eqref{eq:kbot}, we have $\bcl_X(S)\subseteq S$. Since we already knew that $S\subseteq \bcl_X(S)$, we conclude that $S=\bcl_X(S)$.
For the converse implication, if $S=\bcl_X(S)$ then, as $\bcl_X(S)$ is $\bot$--closed, we have that $S$ is $\bot$--closed.\qed

\textbf{Proof of Theorem \ref{thm:downarrowmonad}}

To show that the triple $(\cset^{\downarrow}, \eta^{\moncd}, \mu^{\moncd})$ is a monad, we first prove some useful commuting property of $\bcl$ with respect to (arbitrary) unions and with respect to the multiplication of the monad $\cset(\cdot+1)$.

\begin{lemma}\label{lem:kunion}
Let $S\in \cset(X+\oneset)$. Then $\bcl_X(S)= \cup_{\distr\in S}\bcl_X(\{\distr\})$.
\end{lemma}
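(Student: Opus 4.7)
The proof naturally splits into two inclusions, and both rely on the tools already developed in Lemma \ref{lem:kclosure1} and in the proof of Theorem \ref{thm:Kclosure}.

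For the inclusion $\supseteq$, the plan is to use monotonicity of $\bcl_X$, which follows immediately from the characterization in Theorem \ref{thm:Kclosure}: for any $\distr \in S$, the set $\bcl_X(S)$ is a $\bot$--closed set containing the singleton $\{\distr\}$, so by minimality $\bcl_X(\{\distr\}) \subseteq \bcl_X(S)$. Taking the union over all $\distr \in S$ yields $\bigcup_{\distr \in S} \bcl_X(\{\distr\}) \subseteq \bcl_X(S)$.

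For the harder inclusion $\subseteq$, the plan is to exploit the finite-base formula \eqref{eq:kbot} from the proof of Theorem \ref{thm:Kclosure}. Let $\{\distr_1, \dots, \distr_n\}$ be the unique base of $S$, so that $\bcl_X(S) = \conv(\bigcup_{i=1}^n \bcl_X(\{\distr_i\}))$. Then any $\distrb \in \bcl_X(S)$ can be written as a convex combination $\distrb = \sum_{i=1}^n p_i \distrb_i$ with $\distrb_i \in \bcl_X(\{\distr_i\})$. By Lemma \ref{lem:kclosure1}, each $\distrb_i$ satisfies $\distrb_i(x) \leq \distr_i(x)$ for all $x \in X$. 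Setting $\distr = \sum_{i=1}^n p_i \distr_i$, which lies in $S$ by convexity, one checks pointwise that
\[
\distrb(x) = \sum_{i=1}^n p_i \distrb_i(x) \leq \sum_{i=1}^n p_i \distr_i(x) = \distr(x)
\]
for every $x \in X$. Applying Lemma \ref{lem:kclosure1} again gives $\distrb \in \bcl_X(\{\distr\})$, whence $\distrb \in \bigcup_{\distr' \in S} \bcl_X(\{\distr'\})$.

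The main (minor) obstacle is ensuring that the convex combination witnessing membership in $\bcl_X(S)$ can be transported coherently between the sets $\bcl_X(\{\distr_i\})$ and the corresponding base distributions $\distr_i$ of $S$; this is precisely what the pointwise inequality $\distrb_i(x) \leq \distr_i(x)$ from Lemma \ref{lem:kclosure1} secures, since convex combinations preserve pointwise inequalities between (sub)distributions. Apart from this bookkeeping, no new machinery is required.
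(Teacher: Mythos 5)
Your proof is correct, and for the hard inclusion it takes a route that differs from the paper's. The easy inclusion $\bigcup_{\distr\in S}\bcl_X(\{\distr\})\subseteq \bcl_X(S)$ is handled as in the paper (via $\bot$--closedness of $\bcl_X(S)$ plus minimality of singleton closures from Lemma \ref{lem:kclosure1}, where the paper instead cites Lemma \ref{lem:kclosure2} — an equivalent step). For the converse, the paper proves that $\bigcup_{\distr\in S}\bcl_X(\{\distr\})$ is itself $\bot$--closed (checking convexity and downward closure separately) and then invokes the minimality clause of Theorem \ref{thm:Kclosure}; you instead start from the identity $\bcl_X(S)=\conv\big(\bigcup_{i=1}^n \bcl_X(\{\distr_i\})\big)$ — Equation \eqref{eq:kbot}, which indeed depends only on $\bcl_X$ being a $\oplus$--homomorphism applied to a finite base, not on the ambient set $S'$ — and show directly that every convex combination $\sum_i p_i\distrb_i$ with $\distrb_i\in\bcl_X(\{\distr_i\})$ is pointwise dominated on $X$ by $\sum_i p_i\distr_i\in S$, hence lies in $\bcl_X(\{\sum_i p_i\distr_i\})$ by Lemma \ref{lem:kclosure1}. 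The underlying computation (pointwise inequalities are preserved by convex combinations) is the same one the paper uses to prove convexity of the union, but your decomposition avoids verifying downward closure of the union and avoids the minimality appeal for this direction, at the modest cost of handling $n$-ary convex combinations and the standard fact that an element of the convex hull of finitely many convex sets can be written with one representative per set. Both routes rest on the same two ingredients (Lemma \ref{lem:kclosure1} and Theorem \ref{thm:Kclosure}), so neither is stronger; yours is slightly more computational, the paper's slightly more structural.
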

\begin{proof}
Let $S=\conv(\bigcup_{i =1}^n \{\distr_{i}\})\in \cset(X+\oneset)$.
By Theorem \ref{thm:Kclosure}, $\bcl_X(S)$ is $\bot$--closed, and thus by Lemma \ref{lem:kclosure2} we derive
that for all  $\distr\in \bcl_X(S)$, $\bcl_X(\{\distr\})\subseteq \bcl_X(S)$. As $S\subseteq \bcl_X(S)$, we have that $\bigcup_{\distr\in S}\bcl_X(\{\distr\})\subseteq \bcl_X(S)$. For the converse inclusion, using Theorem \ref{thm:Kclosure}, we find that it is enough to show $\bigcup_{\distr\in S}\bcl_X(\{\distr\})$ is $\bot$--closed as it clearly contains $S$.
First, we show it is convex. Let $\distrb_{1}, \distrb_{2}\in \bigcup_{\distr \in S} \bcl_X(\{\distr\})$.  
By Lemma \ref{lem:kclosure1}, there are $\distrc_{1},\distrc_{2} \in S$ such that $\forall x, \distrb_{1}(x)\leq \distrc_{1}(x)$ and $\forall x, \distrb_{2}(x)\leq \distrc_{2}(x)$.
We want to prove that 
$$p\cdot \distrb_{1} + (1-p)\cdot \distrb_{2} \in \bigcup_{\distr \in S} \bcl_X(\{\distr\})$$
which by Lemma \ref{lem:kclosure1} is equivalent to showing that there is a $\distr\in S$ such that $$\forall x, p\cdot \distrb_{1} + (1-p)\cdot \distrb_{2}(x)\leq \distr(x).$$
This holds by taking $\distr=p\cdot \distrc_{1} + (1-p)\cdot \distrc_{2}$, as 
$$\forall x, p\cdot \distrb_{1} + (1-p)\cdot \distrb_{2}(x)\leq p\cdot \distrc_{1} + (1-p)\cdot \distrc_{2}(x)$$
and, since $S$ is convex, we indeed have $p\cdot \distrc_{1} + (1-p)\cdot \distrc_{2}\in S$. Next, for any $\distrb \in \bigcup_{\distr \in S} \bcl_X(\{\distr\})$, there exists $\distr \in S$ such that $\distrb \in \bcl_X\{\distr\}$. We infer (using Lemma \ref{lem:kclosure1}) that $\bcl_X\{\distrb\} \subseteq \bcl_X\{\distr\} \subseteq \bigcup_{\distr \in S} \bcl_X(\{\distr\})$. Therefore, $\bigcup_{\distr \in S} \bcl_X(\{\distr\})$ is $\bot$--closed by Lemma \ref{lem:kclosure2}.

\end{proof}

\begin{lemma}\label{lem:kmu}
\begin{enumerate}
\item Let $S = \{\ddistr_i\}_{i \in I} \in \cset(\cset(X+\oneset)+\oneset)$. Then 
$$\bcl_X(\mu^{\cset(+\oneset)}_{X}(S))
=\bigcup_{i} \wms\Big(\sum_{\{\bcl_X(U) \mid U\in \support(\ddistr_i^{\point})\}}\big(\sum_{U\in \bcl_X^{-1}(\bcl_X(U))}\ddistr_i^{\point}(U)\big) \, \bcl_X(U)\Big)
$$
\item Let $S = \{\ddistr_i\}_{i \in I} \in \cset(\cset^{\downarrow}(X)+\oneset)$. Then
$$\bcl_X(\mu^{\cset(+\oneset)}_{X}(S))=\mu^{\cset(+\oneset)}_{X}(S)$$.
\end{enumerate}

\end{lemma}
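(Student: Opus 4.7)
The plan is to prove part (1) first, by exploiting that $\bcl_X$ is a pointed convex semilattice homomorphism from $\cset(X+\mathbf{1})$ to itself (Definition \ref{def:K}), and to deduce part (2) as an immediate consequence.

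For part (1), the proof proceeds in three steps. First, I unfold the multiplication using Equation \eqref{eqn:defofmuCp1} to obtain $\mu^{\cset(+\oneset)}_X(S) = \bigcup_i \wms(\ddistr_i^{\point})$. Second, I distribute $\bcl_X$ over the union by applying Lemma \ref{lem:kunion} (once to $\mu^{\cset(+\oneset)}_X(S)$ itself, and once to each convex set $\wms(\ddistr_i^{\point})$), yielding
$\bcl_X(\mu^{\cset(+\oneset)}_X(S)) = \bigcup_i \bcl_X(\wms(\ddistr_i^{\point}))$.
Third, and this is the key step, I compute $\bcl_X(\wms(\ddistr_i^{\point}))$ by observing that $\wms(\ddistr_i^{\point})$ is the interpretation, in the free pointed convex semilattice $\cset(X+\mathbf{1})$ of Proposition \ref{prop:freepcs}, of the iterated probabilistic sum $\bigpplus_{U \in \support(\ddistr_i^{\point})} \ddistr_i^{\point}(U) \cdot U$, viewing each set $U$ as a generator via the identity. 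Since $\bcl_X$ preserves every binary $+_p$ and the constant $\point$ by Definition \ref{def:K}, it preserves the iterated sum as well, so that $\bcl_X(\wms(\ddistr_i^{\point})) = \wms(\dset(\bcl_X)(\ddistr_i^{\point}))$. Writing the pushforward $\dset(\bcl_X)(\ddistr_i^{\point})$ explicitly by grouping contributions from all preimages of each value $\bcl_X(U)$ produces exactly the right-hand side of the statement.

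For part (2), if $S \in \cset(\cset^{\downarrow}(X) + \oneset)$, then every element $U$ in the support of any $\ddistr_i^{\point}$ is either a $\bot$-closed set in $\cset^{\downarrow}(X)$ or the set $\{\dirac{\point}\}$, which is trivially $\bot$-closed. By Theorem \ref{thm:Kclosure}, $\bcl_X(U) = U$ for every such $U$. Hence in the formula from part (1), the preimage $\bcl_X^{-1}(\bcl_X(U)) \cap \support(\ddistr_i^{\point})$ reduces to $\{U\}$, each inner sum collapses to $\ddistr_i^{\point}(U)$, and the outer summation reduces to $\wms(\ddistr_i^{\point})$. Taking the union over $i$ and invoking Equation \eqref{eqn:defofmuCp1} once more gives $\mu^{\cset(+\oneset)}_X(S)$.

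The main obstacle is the third step of part (1): turning the informal remark that ``$\bcl_X$ commutes with $\wms$'' into a precise argument. To do so rigorously, one would express $\wms(\ddistr_i^{\point})$ as a formally nested $+_p$-term using a chosen linear order on the support of $\ddistr_i^{\point}$ (as done, e.g., in \cite{stone:1949,BSS17}) and then apply the homomorphism property of $\bcl_X$ inductively on the cardinality of the support, taking care that merging of terms arising from collisions $\bcl_X(U) = \bcl_X(U')$ is compatible with idempotence and the algebraic laws of $+_p$.
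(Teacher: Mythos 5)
Your proposal is correct and follows essentially the same route as the paper: unfold $\mu^{\cset(+\oneset)}_X$, distribute $\bcl_X$ over the union via Lemma \ref{lem:kunion}, use that $\bcl_X$ is a (pointed convex semilattice) homomorphism so it commutes with the weighted Minkowski sum, which yields the grouped/pushforward expression, and then for part (2) observe that all sets in the supports are $\bot$--closed so $\bcl_X(U)=U$ by Theorem \ref{thm:Kclosure} and the formula collapses to $\mu^{\cset(+\oneset)}_X(S)$. The only difference is that you spell out more explicitly the iterated-$+_p$ justification of the homomorphism step, which the paper states more briefly.
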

\begin{proof}
\begin{enumerate}
\item It follows from Lemma \ref{lem:kunion} that
$$\bcl_X(\mu^{\cset(+\oneset)}_{X}(S))
=\bigcup_{i} \bcl_X(\wms(\ddistr_i^{\point}))$$
and, as $\bcl_X$ is a homomorphism with respect to weighted Minkowski sums (i.e., the interpretation of $+_{p}$ on $C(X+\oneset)$),
$$\bcl_X(\wms(\ddistr_i^{\point})) = \wms(\sum_{\{\bcl_X(U) \mid U\in \support(\ddistr_i^{\point})\}}(\sum_{U\in \bcl_X^{-1}(\bcl_X(U))}\ddistr_i^{\point}(U))\, \bcl_X(U)).$$
    
\item Let $S = \{\ddistr_i\}_{i \in I} \in \cset(\cset^{\downarrow}(X)+\oneset)$. By applying property 1 we have
$$\bcl_X(\mu^{\cset(+\oneset)}_{X}(S))
= \bigcup_{i} \wms(\sum_{\{\bcl_X(U) \mid U\in \support(\ddistr_i^{\point})\}}(\sum_{U\in \bcl_X^{-1}(\bcl_X(U))}\ddistr_i^{\point}(U))\, \bcl_X(U)).$$
For each $\ddistr_{i}\in S$ and for each $U \in \support(\ddistr_i^{\point})$, either $U\in \cset^{\downarrow}(X)$, and is thereby $\bot$--closed by definition, or $U=\{\dirac \star\}$, which is $\bot$--closed. Hence, each $U \in \support(\ddistr_i^{\point})$ is $\bot$--closed, and by Theorem \ref{thm:Kclosure} we derive that $\bcl_X(U) = U$ for any $U \in \support(\ddistr_i^{\point})$.
From this we derive
$$\bigcup_{i} \wms(\sum_{\{\bcl_X(U) \mid U\in \support(\ddistr_i^{\point})\}}(\sum_{U\in \bcl_X^{-1}(\bcl_X(U))}\ddistr_i^{\point}(U))\, \bcl_X(U))
=\bigcup_{i} \wms(\sum_{U \in \support(\ddistr_i^{\point})}\ddistr_i^{\point}(U) \, U)
$$
which is in turn equal to $\bigcup_{i} \wms(\ddistr_i^{\point})$. By definition, this is $\mu^{\cset(+\oneset)}_{X}(S)$.
    \end{enumerate}
\end{proof}

We can now show that the triple $(\cset^{\downarrow}, \eta^{\moncd}, \mu^{\moncd})$ is a monad.

It is immediate to verify that given any $f:X\rightarrow Y$ and $S\in \cset^{\downarrow}(X)$, it holds that $\cset^{\downarrow}(f)(S) \in \cset^{\downarrow}(Y)$, i.e., that $\cset^{\downarrow}(f)$ has indeed the correct type, and that $\cset^{\downarrow}$ is a functor.
    Then, we check the typing of the unit and multiplication. 
    For any set $X$ and $x \in X$, the set $\eta^{\moncd}_X(x) = \bcl_X\{\dirac{x}\}$ is $\bot$--closed by Lemma \ref{lem:kclosure1}. 
    For any $S\in \moncdp{\moncdp{X}}$, by Lemma \ref{lem:kmu}.2 it holds $\bcl_X(\mu^{\cset(+\oneset)}_{X}(S))=\mu^{\cset(+\oneset)}_{X}(S)$, which in turn is equal to $\mu^{\moncd}_{X}(S)$ by definition of  $\mu^{\moncd}$ and as $S$ is $\bot$--closed. Hence, the set $\mu^{\moncd}_{X}(S)$ is $\bot$--closed by Theorem \ref{thm:Kclosure}.
    
    The unit $\eta^{\moncd}$ is natural, as for any function $f: X \rightarrow Y$ we have
    \begin{align*}
        \cset(f+\mathbf{1})(\bcl_X\{\dirac{x}\}) &= \cset(f+\mathbf{1})(\{p\,x+ (1-p)\,\point\mid p \in [0,1]\})\\
        &= \{p\,f(x)+ (1-p)\, \point\mid p \in [0,1]\}\\
        &= \bcl_Y(\{\dirac{f(x)}\}).
    \end{align*}
    The naturality of $\mu^{\moncd}$ follows from the naturality of $\mu^{\cset(+\mathbf{1})}$ as the multiplication maps and the functors are defined similarly.
    
    Second, we show that the unit diagram \eqref{diag-unitmonad} commutes, namely, for any set $X$, $\mu^{\moncd}_X \circ  \cset^{\downarrow}(\eta^{\moncd}_X) = \id_{\cset^{\downarrow}(X)} = \mu^{\moncd}_X \circ \eta^{\moncd}_{\cset^{\downarrow}(X)}$. For the L.H.S., let $S = \bigcup_{i}\{\distr_i\}$, we have
    \begin{align*}
        &\mu^{\moncd}_X(\cset^{\downarrow}(\eta^{\moncd}_X)(S)) \\
        &= \mu^{\moncd}_X(\cset(\eta^{\moncd}_X+\oneset)(S))\\
        &= \mu^{\moncd}\left( \cup_{i}\{\dset(\eta^{\moncd}_X+\oneset)(\distr_i)\} \right)\\
        &= \bigcup_{i} \wms\left((\dset(\eta^{\moncd}_X+\oneset)(\distr_i))^{\point}\right)\\
        &= \bigcup_{i} \wms\left(\Big(\big(\sum_{x \in \support(\distr_i)}\distr_i(x) \{p_{x} \,x+ (1-p_{x})\,\point \mid \forall x \in \support(\distr_i),  p_{x} \in [0,1] \} \big)+ \big(1-\sum_{x \in \support(\distr_i)}\distr_i(x)\big) \bpoint \Big)^{\point}\right)\\
                &= \bigcup_{i} \wms\left(\Big(\sum_{x \in \support(\distr_i)}\distr_i(x) \{p_{x} \,x+ (1-p_{x})\,\point \mid \forall x \in \support(\distr_i),  p_{x} \in [0,1] \} \Big)+ \Big(1-\sum_{x \in \support(\distr_i)}\distr_i(x)\Big) \{\dirac\point\} \right)\\
        &= \bigcup_{i} \left\{ \Big(\sum_{x \in \support(\distr_i)}\distr_i(x) \cdot (p_{x} \,x+ (1-p_{x})\,\point)\Big) + \Big(1-\sum_{x \in \support(\distr_i)}\distr_i(x)\Big) \point \mid \forall x \in \support(\distr_i),p_{x} \in [0,1] \right\}\\
        &= \bigcup_{i} \left\{ \Big(\sum_{x \in \support(\distr_i)}(\distr_i(x) \cdot p_{x}) \,x\Big)+  \Big(1-\sum_{x \in \support(\distr_i)} (\distr_i(x) \cdot p_{x})\Big)\,\point \mid \forall x \in \support(\distr_i), p_{x} \in [0,1] \right\}\\
        &= \bigcup_{i} \left\{ \Big(\sum_{x \in \support(\distr_i)}q_{x} \,x\Big)+  \Big(1-\sum_{x \in \support(\distr_i)} q_{x}\Big)\,\point \mid \forall x \in \support(\distr_i), 0\leq q_{x} \leq \distr_{i}(x) \right\}\\
    \end{align*}
    Since it is clear that each $\distr_i$ is in this set, we infer that $S \subseteq \mu^{\moncd}_X(\cset(\eta^{\moncd}_X)(S))$. For the other inclusion, we have that $S$ is $\bot$-closed, and thus if $\distr_{i} \in S$ then whenever $0\leq q_{x} \leq \distr_{i}(x)$ it holds that $(\sum_{x \in \support(\distr_i)}q_{x} \,x)+  (1-\sum_{x \in \support(\distr_i)} q_{x}) \point \in S$.
%
%
%
For the R.H.S., we have 
    \begin{align*}
    \mu^{\moncd}_X(\eta^{\moncd}_{\cset^{\downarrow}(X)}(S)) 
    &= \mu^{\moncd}_X\{p\,S+ (1-p) \,\bpoint \mid p \in [0,1]\} \\
    &= \bigcup_{p \in [0,1]} \wms\big((p\,S+ (1-p) \,\bpoint)^{\point}\big) \\
    &= \bigcup_{p \in [0,1]} \wms\big(p\,S+ (1-p) \,\{\dirac \point\}\big) \\
    &= \bigcup_{p \in [0,1]} \left\{p\cdot \distr + (1-p)\cdot \dirac{\point}\mid \distr \in S \right\} \\
    &= S
    \end{align*}
    The last equality holds because is $\bot$--closed.

    Finally, we need to show that the associativity diagram \eqref{diag-multmonad} commutes. Again, this holds merely from the fact that $\mu^{\moncd}$ is defined exactly as $\mu^{\cset(+\mathbf{1})}$, and the diagram commutes for  $\mu^{\cset(+\mathbf{1})}$ by monadicity of $\cset(+\mathbf{1})$.

\subsection{Proof Theorem \ref{main:theorem:moncd}}

\textbf{Proof of Lemma \ref{lem:kmonadmap}}

    The fact that each $\bcl_X$ is well-typed was proven in Theorem \ref{thm:Kclosure}. We now show that $\bcl$ is natural, i.e., $\cset(f+\mathbf{1})(\bcl_X(S))= \bcl_Y(\cset(f+\mathbf{1})(S))$. Take $S = \cup_{i}\{\distr_i\} \in \cset(X+\mathbf{1})$ and $f: X \rightarrow Y$. Then by applying first Lemma \ref{lem:kunion} and then Lemma \ref{lem:kclosure1} we get
        \begin{align*}
        \cset(f+\mathbf{1})(\bcl_X(S))
        &= \cset(f+\mathbf{1})\left( \bigcup_{i}\bcl_X(\{\distr_i\})\right)\\
        &= \cset(f+\mathbf{1})\left( \bigcup_{i}\left\{ \distrb \mid \distrb \in \dset(X+\mathbf{1})\text{ and }\forall x \in X, \distrb(x) \leq \distr_{i}(x)\right\}\right)\\
        &= \bigcup_{i}\left\{ \dset(f+\oneset)(\distrb)  \mid \distrb \in \dset(X+\mathbf{1})\text{ and }\forall x \in X, \distrb(x) \leq \distr_{i}(x)\right\}
    \end{align*}
On the other side, by the same properties of $\bcl$ we have:
        \begin{align*}
        \bcl_Y(\cset(f+\mathbf{1})(S))
        &= \bcl_Y\left( \bigcup_{i}\{\dset(f+\mathbf{1})(\distr_i)\}\right)\\    
        &= \bigcup_{i}\bcl_Y\left( \{\dset(f+\mathbf{1})(\distr_i)\}\right)\\      
        &= \bigcup_{i}\left\{ \distrc \mid \distrc \in \dset(Y+\mathbf{1})\text{ and }\forall y \in Y, \distrc(y) \leq (\dset(f+\mathbf{1})(\distr_i))(y)\right\}
    \end{align*}
We prove that for any $\distr_{i}\in S$, the sets
\begin{enumerate}
\item $\left\{ \dset(f+\oneset)(\distrb)  \mid \distrb \in \dset(X+\mathbf{1})\text{ and }\forall x \in X, \distrb(x) \leq \distr_{i}(x)\right\}$
\item $\left\{ \distrc \mid \distrc \in \dset(Y+\mathbf{1})\text{ and }\forall y \in Y, \distrc(y) \leq (\dset(f+\mathbf{1})(\distr_i))(y)\right\}$
\end{enumerate}
coincide.
For $1\subseteq 2$, let $\dset(f+\oneset)(\distrb)$ with $\distrb \in \dset(X+\mathbf{1})\text{ and }\forall x \in X, \distrb(x) \leq \distr_{i}(x)$. Then $\dset(f+\oneset)(\distrb)\in \dset(Y+\mathbf{1})$ and for each $y\in Y$, $\dset(f+\oneset)(\distrb)(y)= \sum_{x\in f^{-1}(y)} \distrb(x) \leq \sum_{x\in f^{-1}(y)} \distr_{i}(x) = (\dset(f+\mathbf{1})(\distr_i))(y)$. 
Hence, $\dset(f+\oneset)(\distrb)$ is in the second set. 
For $2\subseteq 1$, 
let $\distrc \in \dset(Y+\mathbf{1})$ with $\forall y \in Y, \distrc(y) \leq (\dset(f+\mathbf{1})(\distr_i))(y)=\sum_{x\in f^{-1}(y)} \distr_{i}(x)$. Then, as $f^{-1}$ partitions $X$, we can assign to each $x$ a probability value $p_{x} \leq \distr_{i}(x)$ such that for all $y$, $\sum_{x\in f^{-1}(y)} p_{x} = \distrc(y)$. This in turn gives the probability distribution $\distrb\in \dset(X+\mathbf{1})$ defined as $\distrb(x)=p_{x}$ and $\distrb(\star) = 1-\sum_{x} p_{x}$, which indeed satisfies that $\forall x \in X, \distrb(x) \leq \distr_{i}(x)$.
As $\distrc= \dset(f+\oneset)(\distrb)$, we conclude that $\distrc$ is in the second set.

    Next, we show commutativity of the monad map diagrams. First, the unit diagram \eqref{diag-monmap1} commutes, because for any set $X$ and $x \in X$, \[(\bcl_X \circ \eta^{\cset(+\mathbf{1})}_{X})(x)= \bcl_X(\{\dirac{x}\}) = \{p\,x+ (1-p)\, \point \mid p \in [0,1]\} = \eta^{\moncd}_{X}(x).\]
    Second, we need to show \eqref{diag-monmap2} commutes, i.e. $\bcl \circ \mu^{\cset(+\mathbf{1})} = \mu^{\moncd} \circ (\bcl \diamond \bcl)$.
    
    Let $S \in \cset(\cset(X+\mathbf{1})+\mathbf{1})$.
    By Lemma \ref{lem:kmu}.1, applying the L.H.S. yields
\begin{equation}\label{eq:kmm1}
    \bcl_X(\mu^{\cset(+\oneset)}_{X}(S))
=\bigcup_{\ddistr \in S} \wms\Big(\sum_{\{\bcl_X(U) \mid U\in \support(\ddistr^{\point})\}}\big(\sum_{U\in \bcl_X^{-1}(\bcl_X(U))}\ddistr^{\point}(U)\big) \, \bcl_X(U)\Big)
\end{equation}
    For the R.H.S., we first apply $\bcl \diamond \bcl$, which yields:
    \begin{align*}
        S &\supp{\cset(\bcl_X+\mathbf{1})} \bigcup_{\ddistr \in S} \{\Big(\sum_{\{\bcl_X(U) \mid U\in \support(\ddistr), U \neq \bpoint\}}\big(\sum_{U\in \bcl_X^{-1}(\bcl_X(U))}\ddistr(U)\big) \, \bcl_X(U)\Big) + \ddistr(\bpoint ) \bpoint\}\\
        &\supp{\bcl_{\cset(X+\mathbf{1})}} \bcl_X(\bigcup_{\ddistr \in S}  \{\Big(\sum_{\{\bcl_X(U) \mid U\in \support(\ddistr), U \neq \bpoint\}}\big(\sum_{U\in \bcl_X^{-1}(\bcl_X(U))}\ddistr(U)\big) \, \bcl_X(U)\Big) + \ddistr(\bpoint ) \bpoint\})
          \end{align*}
       By Lemma \ref{lem:kunion}, the latter is equal to
          $$\bigcup_{\ddistr \in S}  A_{\ddistr}  \text{ with } A_{\ddistr}=\bcl_{\cset(X+\mathbf{1})}(\{\Big(\sum_{\{\bcl_X(U) \mid U\in \support(\ddistr), U \neq \bpoint\}}\big(\sum_{U\in \bcl_X^{-1}(\bcl_X(U))}\ddistr(U)\big) \, \bcl_X(U)\Big) + \ddistr(\bpoint ) \bpoint\})$$
           Then we apply $\mu^{\moncd}_{X}$ and we get by Lemma \ref{lem:kmu}.2
  \begin{equation}\label{eq:kmm2}        
          \bigcup_{\ddistr \in S}  \bigcup_{\Theta \in A_{\ddistr}} \wms(\Theta^{\point})
  \end{equation}
     We first prove that \eqref{eq:kmm1} is included in \eqref{eq:kmm2}.     
     Note that, by definition of $\ddistr^{\point}$,
     \begin{align*}
     &\sum_{\{\bcl_X(U) \mid U\in \support(\ddistr^{\point})\}}\big(\sum_{U\in \bcl_X^{-1}(\bcl_X(U))}\ddistr^{\point}(U)\big) \, \bcl_X(U) \\
     &= \Big(\sum_{\{\bcl_X(U) \mid U\in \support(\ddistr), U \neq \bpoint, U\neq \{\dirac \point\}\}}\big(\sum_{U\in \bcl_X^{-1}(\bcl_X(U))}\ddistr(U)\big) \, \bcl_X(U) \Big)+ \big(\ddistr (\{\dirac \point\})+\ddistr (\bpoint)\big) \{\dirac \point\}\\
     &=\left(\Big(\sum_{\{\bcl_X(U) \mid U\in \support(\ddistr), U \neq \bpoint\}}\big(\sum_{U\in \bcl_X^{-1}(\bcl_X(U))}\ddistr(U)\big) \, \bcl_X(U)\Big) + \ddistr(\bpoint ) \bpoint\right)^{\point}
     \end{align*}
     Hence, as 
     $$\Big(\sum_{\{\bcl_X(U) \mid U\in \support(\ddistr),U \neq \bpoint\}}\big(\sum_{U\in \bcl_X^{-1}(\bcl_X(U))}\ddistr(U)\big) \, \bcl_X(U)\Big) + \ddistr(\bpoint ) \bpoint \;\in A_{\ddistr}$$
    we derive that there is a $\Theta^{\point}\in A_{\ddistr}$ such that
    $$\sum_{\{\bcl_X(U) \mid U\in \support(\ddistr^{\point})\}}\big(\sum_{U\in \bcl_X^{-1}(\bcl_X(U))}\ddistr^{\point}(U)\big) \, \bcl_X(U)= \Theta^{\point}$$
    Then we conclude that \eqref{eq:kmm1} is included in \eqref{eq:kmm2}. For the converse inclusion,
     let $\theta\in \eqref{eq:kmm2}$. Then $\theta\in\wms(\Theta^{\point})$ for some $\Theta \in A_{\ddistr}$ and for some $\ddistr \in S$. 
    By definition of $A_{\ddistr}$, the distribution $\Theta\in \dset(\cset(X+1)+1)$ is such that, by Lemma \ref{lem:kclosure1},
    for each $U'\in \cset (X+\oneset)$ it holds
    \begin{equation}\label{eq:kmm22}
    \Theta(U')\leq \left(\Big(\sum_{\{\bcl_X(U) \mid U\in \support(\ddistr), U \neq \bpoint\}}\big(\sum_{U\in \bcl_X^{-1}(\bcl_X(U))}\ddistr(U)\big) \, \bcl_X(U)\Big) + \ddistr(\bpoint ) \bpoint\right) (U')
    \end{equation}
    By definition, $\Theta^{\point}$ equals 
    $$\Big(\sum_{\{\bcl_X(U) \mid U\in \support(\ddistr),U\neq \{\dirac\point\}, U \neq \bpoint\}}\Theta(\bcl_X(U)) \, \bcl_X(U)\Big) + \Big(1- (\sum_{\{\bcl_X(U) \mid U\in \support(\ddistr),U\neq \{\dirac\point\}, U \neq \bpoint\}}\Theta(\bcl_X(U)))\Big) \{\dirac\point\}$$
%
By definition of $\ddistr^{\point}$, this is equivalent to
    $$\Theta^{\point}= \Big(\sum_{\{\bcl_X(U) \mid U\in \support(\ddistr^{\point}),U\neq \{\dirac\point\}\}}\Theta(\bcl_X(U)) \, \bcl_X(U)\Big) + \Big(1- (\sum_{\{\bcl_X(U) \mid U\in \support(\ddistr^{\point}),U\neq \{\dirac\point\}\}}\Theta(\bcl_X(U)))\Big) \{\dirac\point\}$$

Then, as $\theta\in\wms(\Theta^{\point})$, there are distributions $\{\distr_{\bcl_X(U)} \mid U\in \support(\ddistr) \text{ and } U\neq \{\dirac\point\}\}$ such that
$$\theta=\Big(\sum_{\{\bcl_X(U) \mid U\in \support(\ddistr^{\point}),U\neq \{\dirac\point\}\}}\Theta(\bcl_X(U)) \cdot \distr_{\bcl_X(U)}\Big) + \Big(1- (\sum_{\{\bcl_X(U) \mid U\in \support(\ddistr^{\point}),U\neq \{\dirac\point\}\}}\Theta(\bcl_X(U)))\Big)\cdot \dirac\point$$
By \eqref{eq:kmm22} and the definition of $\ddistr^{\point}$, it holds that for each $\bcl_X(U)$ such that $U\in \support(\ddistr^{\point}) \text{ and } U\neq \{\dirac\point\}$,
$$
\Theta(\bcl_X(U))\leq \sum_{U\in \bcl_X^{-1}(\bcl_X(U))}\ddistr^{\point}(U)
$$
Hence, for each $x$ it holds that $\theta(x)\leq \psi(x)$, for $\psi$ the distribution
\begin{align*}
&\big(\sum_{\{\bcl_X(U) \mid U\in \support(\ddistr^{\point}),U\neq \{\dirac\point\}\}} (\sum_{U\in \bcl_X^{-1}(\bcl_X(U))}\ddistr^{\point}(U)) \cdot \distr_{\bcl_X(U)}\big) \\
\quad &+ \left(1-\Big(\sum_{x}\Big(\sum_{\{\bcl_X(U) \mid U\in \support(\ddistr),U\neq \{\dirac\point\}\}} \big(\sum_{U\in \bcl_X^{-1}(\bcl_X(U))}\ddistr^{\point}(U)\big) \cdot \distr_{\bcl_X(U)}\Big) (x) \Big) \right)\cdot \dirac\point
\end{align*}
As $\bcl_X(\{\dirac \point\})=\{\dirac \point\}$, we have
$$
\psi \in \wms\Big(\sum_{\{\bcl_X(U) \mid U\in \support(\ddistr^{\point})\}}\big(\sum_{U\in \bcl_X^{-1}(\bcl_X(U))}\ddistr^{\point}(U)\big) \, \bcl_X(U)\Big)
$$
and thus, as $\theta(x)\leq \psi(x)$ for all $x$,
by Lemma \ref{lem:kclosure1} we derive that $\theta$ is in the set
$$\bcl_X\left(\wms\Big(\sum_{\{\bcl_X(U) \mid U\in \support(\ddistr^{\point})\}}\big(\sum_{U\in \bcl_X^{-1}(\bcl_X(U))}\ddistr^{\point}(U)\big) \, \bcl_X(U)\Big)\right)$$
By $\bcl_X$ being a homomorphism and by $\bcl_X(\bcl_X(U))=U$ (Theorem \ref{thm:Kclosure}), we have
\begin{align*}
&\bcl_X\left(\wms\Big(\sum_{\{\bcl_X(U) \mid U\in \support(\ddistr^{\point})\}}\big(\sum_{U\in \bcl_X^{-1}(\bcl_X(U))}\ddistr^{\point}(U)\big) \, \bcl_X(U)\Big)\right)\\
&=\wms\Big(\sum_{\{\bcl_X(U) \mid U\in \support(\ddistr^{\point})\}}\big(\sum_{U\in \bcl_X^{-1}(\bcl_X(U))}\ddistr^{\point}(U)\big) \, \bcl_X(\bcl_X(U))\Big)\\
&=\wms\Big(\sum_{\{\bcl_X(U) \mid U\in \support(\ddistr^{\point})\}}\big(\sum_{U\in \bcl_X^{-1}(\bcl_X(U))}\ddistr^{\point}(U)\big) \, \bcl_X(U)\Big)
\end{align*}
This set is included in \eqref{eq:kmm1}, so we have proved that for each $\theta \in$ \eqref{eq:kmm2}, $\theta \in$ \eqref{eq:kmm1}.

\textbf{Proof of Lemma \ref{moncd-final-lemma}.1}

Let $(A, \alpha)\in \EM(\moncd)$ and let $(A, \alpha\circ \bcl_A) \in \EM(\cset(+\oneset))$ be its embedding via $U^{\bcl}$. 
By definition of the pointed convex semilattice $P((A, \alpha\circ \bcl_A))$ and by definition of $\bcl$, for any $a\in A$ it holds
\begin{align*}
a \oplus^{\alpha \circ \bcl_A} \point^{\alpha \circ \bcl_A} &=(\alpha \circ \gamma_A)(\cc{\dirac{a}, \dirac{\alpha \circ \bcl_A(\{\dirac{\point}\})}}) & \text{definition of $P$}\\
&= \alpha \circ \bcl_A (\cc{\dirac{\alpha \circ \bcl_A(\{\dirac{a}\})}, \dirac{\alpha \circ \bcl_A(\{\dirac{\point}\})}}) &\text{definition of $\EM(\cset(+\oneset))$}\\
&= \alpha \circ \bcl_A \circ \mu^{\cset(+\oneset)}_{A}(\cc{\dirac{\{\dirac{a}\}}, \dirac{\{\dirac{\point}\}}}) &\text{definition of $\EM(\cset(+\oneset))$}\\
&= \alpha \circ \bcl_A (\cc{\dirac{a}, \dirac{\point}}) &\text{definition of $\mu^{\cset(+\oneset)}$}\\
&=\alpha (\cc{\dirac{a}, \dirac{\point}}) &\text{definition of $\bcl$}\\
&=a &\text{definition of $\EM(\moncd)$}
\end{align*}
Hence, the pointed convex semilattice $P((A, \alpha\circ \bcl_A))$ satisfies the $\bot$ equation, and therefore it belongs to $\acat(\etcsb)$.

\textbf{Proof of Lemma \ref{moncd-final-lemma}.2}

Let $\mathbb{A}\in \acat(\etcsb)$, which is embedded via $\iota$ to  $\mathbb{A}\in\acat(\etpcs)$.
We want to show that  $P^{-1}(\mathbb{A})=(A,\alpha)\in \EM(\cset(+\oneset))$ is in the image of $U^{\bcl}$, i.e., that there exists an algebra
$(A, \alpha') \in \EM(\moncd)$ such that $\alpha=\alpha' \circ \bcl_{A}$.
We show that by taking as $\alpha'$ the restriction of $\alpha$ to $\bot$--closed sets, i.e., by letting $\alpha'=\alpha|_{\cset^{\downarrow}(A)}$, we have that $(A, \alpha') \in \EM(\moncd)$ and $\alpha=\alpha' \circ \bcl_{A}$.

We first prove that $(A, \alpha')$ is a $\cset^{\downarrow}$-algebra, i.e., that \eqref{diag-algunit} and \eqref{diag-algmult} commute. 
Note that, by definition of $P^{-1}$ and by $\alga$ satisfying the bottom axiom, for each $a\in A$ the following equation holds.
$$\alpha(\conv\{\dirac{a}, \dirac{\point}\}) = a \oplus^{\alga} \point^{\alga}  \stackrel{(\bot)}{=} a $$
Then the unit diagram commutes by
\[\alpha|_{\cset^{\downarrow}(A)}(\eta^{\cset^{\downarrow}}(a)) = \alpha(\conv\{\dirac{a}, \dirac{\point}\}) = a.\]
The multiplication diagram commutes because it is a restriction of the multiplication diagram for $(A, \alpha)$.

It is left to show that $\alpha|_{\cset^{\downarrow}} \circ \bcl_A = \alpha$, i.e., that $\alpha(\bcl(S)) = \alpha(S)$ for any $S \in \cset(A+\mathbf{1})$. 

We need the following lemma generalizing Lemma \ref{lem-intuitioneqncbot}.
\begin{lemma}\label{lem-techprescbot}
The following equation is derivable in $\etcsb$:
$$\bigpplus_{0\leq i\leq n} p_{i} x_{i}= \bigoplus_{F\subseteq \{1,...,n\}} \Big((\bigpplus_{i\in F} p_{i} x_{i}) + (1-(\sum_{i\in F} p_{i}))\point \Big)$$
\end{lemma}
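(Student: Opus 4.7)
The plan is to prove the identity syntactically in $\etcsb$ by iterating a one-variable ``dropping'' step that replaces a variable by $\point$ and absorbs the resulting term into a nondeterministic sum. Write $I$ for the index set of variables appearing with nonzero weight (so that the coefficients $p_i$ for $i \in I$ plus the coefficient of $\point$ sum to $1$), and for $F \subseteq I$ let $t_F$ denote the term $(\bigpplus_{i \in F} p_i x_i) + \bigl(1 - \sum_{i \in F} p_i\bigr)\point$, with the convention $t_\emptyset = \point$. The goal becomes $t_I = \bigoplus_{F \subseteq I} t_F$.

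The key step is the identity $t_F = t_F \oplus t_{F \setminus \{j\}}$ for each $F \subseteq I$ with $j \in F$, derivable in $\etcsb$ as follows. Using $A_p$ and $C_p$, rewrite $t_F$ in the form $Y +_{1-p_j} x_j$, where $Y$ is the normalized convex combination of the remaining $x_i$'s (for $i\in F\setminus\{j\}$) together with the existing $\point$-contribution. Applying the bottom axiom in the form $x_j = x_j \oplus \point$, followed by distributivity $(D)$, yields
\[
t_F \;=\; Y +_{1-p_j}(x_j \oplus \point) \;\stackrel{D}{=}\; (Y +_{1-p_j} x_j) \oplus (Y +_{1-p_j} \point) \;=\; t_F \oplus t_{F\setminus\{j\}},
\]
where the last equality is obtained by redistributing coefficients with $A_p$ and $C_p$ so that the mass $p_j$ is pooled with the existing $\point$-mass, giving exactly $t_{F\setminus\{j\}}$. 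This generalizes the argument used in Lemma \ref{lem-intuitioneqncbot}.

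With this key step available, the proof proceeds by downward induction on subset size. Starting from the trivial $t_I = t_I$, at each stage the key step is applied to every top-size summand $t_{F'}$ at some $j \in F'$, thereby enlarging the right-hand side to include all $t_{F''}$ with $|F''| = |F'|-1$. Idempotence $(I)$, commutativity $(C)$, and associativity $(A)$ of $\oplus$ ensure that identical subsets produced along distinct branches collapse to a single copy. Iterating down to singletons, a final application of the key step at the unique $j \in \{j\}$ produces $t_\emptyset = \point$ and completes the sum over all $F \subseteq I$.

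The main technical obstacle is bookkeeping: each invocation of the key step requires re-expressing a convex combination via $A_p$ and $C_p$ so that the variable being dropped sits in the distinguished right position of the outermost $+_{1-p_j}$, and verifying that the side produced by $(D)$ can be syntactically recognized as $t_{F\setminus\{j\}}$ after renormalization. A clean presentation essentially maintains a canonical form for each $t_F$; idempotence of $\oplus$ is essential to avoid an accounting explosion when many branches of the induction converge on the same subset.
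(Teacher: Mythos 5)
Your proof is correct, but it is organized differently from the paper's. The paper's argument applies the bottom axiom simultaneously to every variable, rewriting $\bigpplus_{i} p_{i} x_{i}$ as $\bigpplus_{i} p_{i} (x_{i}\oplus\point)$, and then invokes a single $n$-ary distributivity identity (obtained by iterating $(D)$) that expands a convex combination of $\oplus$-sums into an $\oplus$-sum over all choice functions; identifying each choice function with the subset $F$ of indices where $x_i$ was chosen gives the right-hand side in one step. You instead isolate a local absorption law $t_F = t_F \oplus t_{F\setminus\{j\}}$ (your key step is exactly the paper's Lemma \ref{lem-intuitioneqncbot} in general form, derived from the same ingredients: $\bot$, $(D)$, and the fact that convex-combination terms denoting the same distribution are provably equal via $A_p, C_p, I_p$), and then run a downward induction on subset size, using $(A)$, $(C)$, $(I)$ to merge duplicate summands. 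Both routes are sound and use the same axioms; the paper's is more compact once the iterated distributivity law is stated, while yours trades that global lemma for the bookkeeping you acknowledge (re-normalization at each application, duplicate collapsing). Two small points to tighten: in the inductive step you should apply the key step at \emph{every} $j\in F'$ (or at least enough pairs $(F',j)$ to realize every subset of size $|F'|-1$), not merely ``at some $j$''; and the rewriting $t_F = Y +_{1-p_j} x_j$ needs $p_j<1$, so the degenerate case $F=I=\{j\}$ with $p_j=1$ must be handled directly by $x_j = x_j\oplus\point$, which is immediate.
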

\begin{proof}
First, we note that, by iterating the distributivity axiom (D), we derive in the theory of convex semilattices that:
\[\bigpplus_{1\leq i\leq k} p_{i}\, (t^{i}_{1}\oplus ...\oplus t^{i}_{n_{i}}) =
\bigoplus_{(t_{1},...,t_{k})\in \{(t_{1},...,t_{k})|\, t_{i} \in \{t^{i}_{1},...,t^{i}_{n_{i}}\}\}} (\bigpplus_{1\leq i\leq k} p_{i} \,t_{i})\]
and this law can be alternatively written as follows, whenever for each $i$ we have a set of terms $S_{i}$:
\[\bigpplus_{1\leq i\leq k} p_{i}\, (\bigoplus _{t\in S_{i}} t)= 
\bigoplus_{f\in \{f: \{1,...,k\} \to \terms X \sigpcs |\, f({i}) \in S_{i}\}} (\bigpplus_{1\leq i\leq k} p_{i}\, f(i))\]
where $\{f: \{1,...,k\} \to \terms X \sigpcs |\, f({i}) \in S_{i}\}$ is the set of functions choosing one term in each $S_{i}$.

By the bottom axiom, we have
$$\bigpplus_{0\leq i\leq n} p_{i} x_{i}=\bigpplus_{0\leq i\leq n} p_{i} (x_{i} \oplus \point)$$
Then by applying the iterated version of the distributivity axiom (D) shown above we have
$$\bigpplus_{0\leq i\leq n} p_{i} (x_{i} \oplus \point)
=\bigoplus_{f\in \{f:\{1,...n\} \to \terms X \sigpcs \mid f(i)\in\{x_{i},\point\}\}} \big(\bigpplus_{0\leq i\leq n} p_{i} f(i)\big)
$$
which can be in turn proved equal to
$$\bigoplus_{F\subseteq \{1,...,n\}} \Big((\bigpplus_{i\in F} p_{i} x_{i}) + (1-(\sum_{i\in F} p_{i}))\point \Big).$$

\end{proof}

Moreover, we have the following characterization of $\bcl_X(S)$, which explicits a finite base for the set.
\begin{lemma}\label{lem:kset}
    Let $S = \conv (\bigcup_{0\leq i \leq n} \{\distr_{i}\}) \in \cset(X+\mathbf{1})$. Then 
    $$\bcl_X(S)= \conv\Big(\bigcup_{0\leq i \leq n} \bigcup_{F\subseteq \support(\distr_{i})\backslash \{\star\}}\{\distr_{i|_{F}}\}\Big)$$
    where for any $\distr$ and $F\subseteq \support(\distr)\backslash \{\star\}$ we define
    $$\distr_{|_{F}}=(\sum_{x\in F} \distr(x) x) + (1-(\sum_{x\in F} \distr)) \point.$$
\end{lemma}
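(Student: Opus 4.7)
The plan is to reduce the statement, via the homomorphism property of $\bcl_X$ (Definition \ref{def:K}) together with Lemma \ref{lem:kunion}, to a single-generator case and then to a convex decomposition problem in a finite-dimensional polytope. Since $\bcl_X$ is a pointed convex semilattice homomorphism, it commutes with the operation $\oplus$ which is interpreted as $\conv(\cdot \cup \cdot)$ in $\cset(X+\mathbf{1})$. Hence from $S = \conv(\bigcup_{0\leq i\leq n} \{\distr_i\})$, which can be written as $\bigoplus_{0\leq i \leq n} \{\distr_i\}$, one obtains $\bcl_X(S) = \conv\bigl(\bigcup_{0\leq i\leq n} \bcl_X(\{\distr_i\})\bigr)$. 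So it suffices to prove, for each $i$, that $\bcl_X(\{\distr_i\}) = \conv\bigl(\bigcup_{F \subseteq \support(\distr_i)\setminus\{\star\}} \{\distr_{i|_F}\}\bigr)$.

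For the inclusion $\supseteq$, observe that each $\distr_{i|_F}$ satisfies $\distr_{i|_F}(x) \leq \distr_i(x)$ for every $x \in X$ (with equality on $F$ and value $0$ off $F$), so by Lemma \ref{lem:kclosure1} it belongs to $\bcl_X(\{\distr_i\})$. Since $\bcl_X(\{\distr_i\})$ is itself a convex set (being an element of $\cset(X+\mathbf{1})$), all convex combinations of the $\distr_{i|_F}$'s lie inside it.

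For the inclusion $\subseteq$, the key step will be to exhibit, for an arbitrary $\distrb$ with $\distrb(x) \leq \distr_i(x)$ for all $x\in X$, explicit coefficients expressing $\distrb$ as a convex combination of the $\distr_{i|_F}$. Writing $\support(\distr_i)\setminus\{\star\} = \{x_1,\dots,x_k\}$, I set $q_j = \distrb(x_j)/\distr_i(x_j) \in [0,1]$ (with the convention $q_j = 0$ when $\distr_i(x_j)=0$, in which case $\distrb(x_j) = 0$ as well), and define
\[
\lambda_F \;=\; \prod_{j \in F} q_j \,\prod_{j \notin F} (1 - q_j), \qquad F \subseteq \{x_1,\dots,x_k\}.
\]
These coefficients are non-negative and sum to $1$, and a direct computation yields $\sum_F \lambda_F \,\distr_{i|_F}(x_j) = q_j\,\distr_i(x_j) = \distrb(x_j)$ for each $j$, whence also the $\star$-coordinate matches by conservation of mass. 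This realizes $\distrb$ as the required convex combination.

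The one slightly delicate point, and in my view the main obstacle, is the bookkeeping in the last calculation: one must make sure that distributions $\distr_i$ whose support contains $\star$ are handled correctly (the construction of $F$ ranges only over $\support(\distr_i)\setminus\{\star\}$, and the $\star$-mass of $\distr_{i|_F}$ is the complementary mass $1 - \sum_{x\in F}\distr_i(x)$, not $\distr_i(\star)$). Geometrically, $\bcl_X(\{\distr_i\})$ is a product of intervals $[0,\distr_i(x_j)]$ with the $\star$-coordinate determined by the others, so its extreme points are precisely the $\distr_{i|_F}$'s and the Bernoulli-product coefficients above provide the convex decomposition. This completes both inclusions.
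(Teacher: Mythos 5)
Your proposal is correct, and its overall architecture matches the paper's: reduce to the single-generator case by using that $\bcl_X$ is a (pointed convex semilattice) homomorphism, so it commutes with $\oplus$ interpreted as convex union (the paper does this directly, you additionally cite Lemma~\ref{lem:kunion}, which is equivalent here); then invoke Lemma~\ref{lem:kclosure1} to rewrite $\bcl_X(\{\distr_i\})$ as the set of pointwise-dominated subdistributions; then prove two inclusions, with the easy one ($\supseteq$) argued identically in both proofs via $\distr_{i|_F}(x)\leq\distr_i(x)$ and convexity. Where you genuinely add something is the hard inclusion: you exhibit an explicit convex decomposition of an arbitrary dominated $\distrb$ as $\sum_F \lambda_F\,\distr_{i|_F}$ with Bernoulli-product coefficients $\lambda_F=\prod_{j\in F}q_j\prod_{j\notin F}(1-q_j)$, $q_j=\distrb(x_j)/\distr_i(x_j)$, and the verification ($\sum_F\lambda_F=1$, $\sum_F\lambda_F\distr_{i|_F}(x_j)=\distrb(x_j)$, $\star$-mass by conservation) is correct; your bookkeeping remark about $\star$ is also right, since domination forces $\support(\distrb)\setminus\{\star\}\subseteq\support(\distr_i)\setminus\{\star\}$. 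The paper's appendix, by contrast, dispatches this direction with only the observation that convex combinations of the $\distr_{i|_F}$ are dominated by $\distr_i$ (a max-of-weights inequality), which as stated really supports the easy inclusion rather than producing the required decomposition; so your constructive argument is more explicit and self-contained at precisely the step the paper treats most tersely, at the cost of a slightly longer computation.
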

\begin{proof}
We first prove that for any $\distr$, 
$$\bcl_X(\{\distr\})=\conv\Big(\bigcup_{F\subseteq \support(\distr)\backslash \{\star\}}\{\distr_{|_{F}}\}\Big).$$
By Lemma \ref{lem:kclosure1}, this is equivalent to proving
$$\left\{ \distrb \in \dset(X+\mathbf{1}) \mid \forall x \in X, \distrb(x) \leq \distr(x)\right\}
        =\conv\Big(\bigcup_{F\subseteq \support(\distr)\backslash \{\star\}}\{\distr_{|_{F}}\}\Big).$$
We first prove the right-to-left set inclusion.
For any $F \subseteq \support(\distr)\backslash \{\star\}$, $\distr_{|_{F}}$ is such that for all $x\in X$,
\[\distr_{|_{F}}(x) =\begin{cases}
    \distr(x) &x \in F\\
    0 &\text{o/w}
\end{cases} \leq \distr(x).\]
Then by $\bcl_X(\{\distr\})$ being convex we conclude the $\supseteq$ inclusion.
For the converse inclusion,
note that for any $p \in [0,1]$, $\distrb_1, \distrb_2 \in \dset(X+\mathbf{1})$ and $x \in X$, we have \[p\distrb_1(x)+ (1-p)\distrb_2(x) \leq \max\{\distrb_1(x), \distrb_2(x)\}.\]
        Hence, all convex combinations of elements in $\{\distr|_F : F \subseteq \support(\distr)\backslash \{\star\}\}$ have less weight at $x$ than $\distr(x)$ for any $x \in X$. This implies the $\subseteq$ inclusion.

Then, for $S = \conv (\bigcup_{0\leq i \leq n} \{\distr_{i}\})$, as $\bcl_X$ commutes over convex union (by being a homomorphism) and by $\conv(\cup_{i} S_{i})=\conv (\cup_{i}\conv (S_{i}))$,
we conclude that
\begin{align*}
    \bcl_X(S)&= \conv\Big(\bigcup_{0\leq i \leq n} \bcl_X(\{\distr_{i}\})\Big)\\
    &= \conv\Big(\bigcup_{0\leq i \leq n} \conv(\bigcup_{F\subseteq \support(\distr_{i})\backslash \{\star\}}\{\distr_{i|_{F}}\})\Big)\\
    &= \conv\Big(\bigcup_{0\leq i \leq n} \bigcup_{F\subseteq \support(\distr_{i})\backslash \{\star\}}\{\distr_{i|_{F}}\}\Big)
\end{align*}
\end{proof}

Now, let $S = \conv (\bigcup_{0\leq i \leq n} \{\distr_{i}\}) \in \cset(A+\mathbf{1})$, with $\{\distr_{i}\}$ the unique base for $S$. By applying the lemmas and the definition of $\alpha$, we can now show that $\alpha(\bcl_A(S)) = \alpha(S)$:
\begin{align*}
    \alpha(S) &= \Big(\bigoplus_{0\leq i \leq n} \big(\bigpplus_{x\in \support (\distr_{i})} \distr_{i}(x) x\big)\Big)^{\alga}& \text{definition of $\alpha$}\\
    &= \Big(\bigoplus_{0\leq i \leq n}\big(\bigoplus_{F \subseteq \support(\distr_{i})} \big((\bigpplus_{x\in F} \distr_{i}(x) x) + (1-(\sum_{x\in F} \distr_{i}(x)))\point\big)\big)\Big)^{\alga} &\text{Lemma \ref{lem-techprescbot}}\\
    &= \Big(\bigoplus_{0\leq i \leq n}\big(\bigoplus_{F \subseteq \support(\distr_{i})\backslash\{\star\}} \big((\bigpplus_{a\in F} \distr_{i}(a) a) + (1-(\sum_{a\in F} \distr_{i}(a)))\point\big)\big)\Big)^{\alga} &\text{idempotency axiom $(I)$}\\
    &= \alpha\Big(\conv\Big(\bigcup_{0\leq i \leq n} \bigcup_{F\subseteq \support(\distr_{i})\backslash \{\star\}}\{\distr_{i|_{F}}\}\Big)\Big) &\text{definition of $\alpha$}\\
    &= \alpha(\bcl_A(S)) &\text{Lemma \ref{lem:kset}}
\end{align*}
Note that in the above derivation $x$ ranges over $\support{(\distr_{i})}\subseteq A+\oneset$.

\subsection{Proofs for Section \ref{sec:met:cplusone}}

We prove that the multiplication $\mu^{\moncb}$ of the $\Sets$ monad $\moncb$ is not non--expansive. 

\begin{lemma}
Given a metric space $(X,d)$, the function $\mu^{\moncb}_{(X,d)}: (\moncbp {\moncbp X}, \hk(\hk(d)+\donemet)+\donemet) \to ( \moncbp{X}, \hk(d)+\donemet)$ is not non-expansive. 
\end{lemma}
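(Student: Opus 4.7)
The plan is to adapt the counterexample from Lemma \ref{lem:gammane} one level up, exploiting the fact that, by Equation \eqref{eq:mubb}, the multiplication $\mu^{\moncb}_X$ first applies $\gamma_{\cset X}$ to its input and then follows up with $\mu^{\cset}_X$. Concretely, I would take $(X,d)$ with $d$ the discrete metric on a nonempty $X$, pick any $x\in X$, and lift the pair used in Lemma \ref{lem:gammane} through the Dirac embedding $\cset(X+\oneset)\to \cset(\cset(X)+\oneset)$ induced by identifying $\{\dirac x\}\in \cset(X)$ with the generator and $\star\in\oneset$ with the termination. This yields the two candidate inputs
\[
T_1 = \bigl\{\,\tfrac{1}{2}\,\{\dirac x\} + \tfrac{1}{2}\,\star\,\bigr\}\qquad\text{and}\qquad T_2 = \bigl\{\,\dirac{\{\dirac x\}}\,\bigr\}
\]
viewed as elements of $\cset(\cset(X)+\oneset)\subseteq \moncbp{\moncbp X}$.

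The next step is to compute the two outputs. The unique distribution in $T_1$ has $\star$ in its support, so there is no full distribution in $T_1$, hence $\gamma_{\cset(X)}(T_1)=\point$ and, by the second clause of \eqref{eq:mubb}, $\mu^{\moncb}_X(T_1)=\point$. By contrast, $\dirac{\{\dirac x\}}$ is full, so $\gamma_{\cset(X)}(T_2)=\{\dirac{\{\dirac x\}}\}$ and a direct computation using Definition \ref{def:set:cset} gives $\mu^{\moncb}_X(T_2)=\mu^{\cset}_X(\{\dirac{\{\dirac x\}}\})=\wms(\dirac{\{\dirac x\}})=\{\dirac x\}$. Since $\{\dirac x\}\in \cset(X)$ and $\point\in\oneset$ lie in different summands, the coproduct metric yields $(\hk(d)+\donemet)\bigl(\mu^{\moncb}_X(T_1),\mu^{\moncb}_X(T_2)\bigr)=1$.

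It then remains to bound the input distance $\hk(\hk(d)+\donemet)(T_1,T_2)$ by something strictly less than $1$. Both $T_1$ and $T_2$ are singletons, so the Hausdorff part collapses to the Kantorovich distance between their unique distributions, taken with respect to the base metric $\hk(d)+\donemet$ on $\cset(X)+\oneset$. The coupling that places mass $\tfrac{1}{2}$ on the pair $(\{\dirac x\},\{\dirac x\})$ and mass $\tfrac{1}{2}$ on the pair $(\star,\{\dirac x\})$ has cost $\tfrac{1}{2}\cdot 0 + \tfrac{1}{2}\cdot 1 = \tfrac{1}{2}$, so the input distance is at most $\tfrac{1}{2}<1$, contradicting non-expansiveness.

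The main obstacle, and really the only conceptual check, is verifying that the discontinuity of $\gamma$ witnessed in Lemma \ref{lem:gammane} is not smoothed out by the subsequent application of $\mu^{\cset}_X$; this is the reason for the specific choice of $T_1,T_2$ as singletons at the outer level, so that $\mu^{\cset}_X$ acts essentially as the identity on the nontrivial branch and the gap created by $\gamma_{\cset(X)}$ is preserved.
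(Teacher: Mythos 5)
Your proposal is correct and is essentially the paper's own proof: the same discrete metric space, the same pair $S_1=\{\onehalf\,\{\dirac x\}+\onehalf\,\star\}$, $S_2=\{\dirac{\{\dirac x\}}\}$, output distance $1$ versus input distance $\onehalf$ (the paper computes the input distance exactly as $\onehalf$, while you only bound it above, which suffices).
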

\begin{proof}
We give a counterexample to non-expansiveness. Let $X$ be endowed with the discrete metric and take $S_{1}=\{ \onehalf\,\{\dirac x\} +\onehalf\, \star\}$ and $S_{2}= \{\dirac {\{\dirac x\}}\}$. Then we obtain
$$\hk(d)+\donemet (\mu^{\moncb}_{(X,d)}(S_{1}), \mu^{\moncb}_{(X,d)}(S_{2}))= \hk(d)+\donemet (\star, \{\dirac x\})=1$$
and 
$$\hk(\hk(d)+\donemet)+\donemet (S_{1}, S_{2}) =\onehalf$$
\end{proof}

\textbf{Proof of Theorem \ref{thm:bhtrivial}}

The deductive system of quantitative equational logic is the one of \cite[\S 3]{DBLP:conf/lics/BacciMPP18}.

Let $p\in (0,1)$. From the axioms $\vdash x=_{0} x $ and $\vdash x=_{1} \star$, we derive by the (K) rule $\vdash x\pplus p \star =_{1-p} x\pplus p x$. By $(I_{p})$, we have $\vdash x\pplus p x =_{0} x$, so by triangular inequality we derive $\vdash x\pplus p \star =_{(1-p)} x$. Now, by the $\qbh$ axiom ($\vdash x\pplus p \star =_{0} \star$), symmetry and triangular inequality, we have $\vdash \star=_{(1-p)} x$. Since $p \in (0,1)$ was arbitrary, we have equivalently derived that $\vdash \star=_{p} x$ belongs to $\qet$. For any $y$, we analogously obtain $\vdash \star =_{p} y$. Then, by symmetry and triangular inequality we derive $\vdash x=_{p} y$ for all $p\in (0,1)$, and by (Max) we have $\vdash x=_{\epsilon} y$ for all $\epsilon >0$. We conclude by applying (Arch) $\{x=_{\epsilon} y\}_{\epsilon>0} \vdash x=_{0} y$.

\subsection{Proofs for Section \ref{sec:met:cdownarrow}}

\textbf{Proof of Lemma \ref{lem:met:kne}}

By definition, $\bcl_{X}$ is the unique pointed semilattice homomorphism extending $f:X\to \cset(X+\oneset)$, with $f(x)=\conv(\{\dirac x, \dirac \point\})$. We first show that the function $\hat f: (X,d) \to (\cset(X+\oneset), \hk(d)+\oneset)$, defined as $f$ on $X$, is an isometry, i.e., $(\hk(d)+\oneset)(f(x),f(y))=d(x,y)$. To see this, note that 
$$f(x)=\{p\,x +(1-p)\, \star\mid p\in [0,1]\}\qquad f(y)=\{p\,y +(1-p)\, \star\mid p\in [0,1]\}$$ 
and that for every $p,q\in [0,1]$, 
\begin{equation}\label{eq:kne}
\kant(d)(p\,x +(1-p)\, \star, q\,y +(1-q)\, \star) \geq p\cdot d(x,y)
\end{equation}
Indeed, if $p\leq q$ then 
\begin{align*}
\kant(d)(p\,x +(1-p)\, \star, q\,y +(1-q)\, \star)&= p \cdot d(x,y) +(q-p) \cdot (d(\star, y)) + (1-q) \cdot d(\star, \star) \\
&=p \cdot d(x,y)+(q-p) \\
&\geq p\cdot d(x,y)
\end{align*} 
and if $p=q+r > q$ then 
\begin{align*}
\kant(d)(p\,x +(1-p)\, \star, q\,y +(1-q)\, \star)&= q \cdot d(x,y) +(p-q) \cdot (d(x, \star)) + (1-q) \cdot d(\star, \star) \\
&= q\cdot d(x,y) +(p-q)\\
&= q\cdot d(x,y) + r \\
&\geq q\cdot d(x,y) + r \cdot d(x,y)\\
&= p \cdot d(x,y).
\end{align*}
We derive by \eqref{eq:kne} that for any $\distr=p\,x +(1-p)\, \star \in f(x)$ it holds
$$\inf_{\distrb \in f(y)} \kant(d) (\distr,\distrb)= \kant(d) (p\,x +(1-p)\, \star,p\,y +(1-p)\, \star)=p\cdot d(x,y)$$
and thus 
$$\sup_{\distr \in f(x)}\inf_{\distrb \in f(y)} \kant(d) (\distr,\distrb)= \kant(d) (\dirac x, \dirac y)= d(x,y).$$

Symmetrically, we obtain 
$$\sup_{\distrb\in f(y)}\inf_{\distr\in f(x)} \kant(d) (\distr,\distrb)= \kant(d) (\dirac x, \dirac y)= d(x,y).$$

We can now conclude
\begin{align*}
(\hk(d)+\oneset)(f(x),f(y))&= \max\{\sup_{\distr\in f(x)}\inf_{\distrb\in f(y)} \kant(d) (\distr,\distrb), \sup_{\distrb\in f(y)}\inf_{\distr\in f(x)} \kant(d) (\distr,\distrb)\}\\
&= d(x,y).
\end{align*}

Since $\hat f$ is an isometry, it is non-expansive, and thus a morphism in $\Met$.
Given a metric space $(X,d)$, the metric space $((\cset(X+\oneset), \hk(d)+\oneset)$ equipped with the operations of convex union, weighted Minkowski sum and $\{\dirac \point\}$ (respectively interpreting $\oplus$, $\pplus p$, and $\point$) is the free quantitative pointed convex semilattice on $(X,d)$.
As $((\cset(X+\oneset), \hk(d)+\oneset)$ is free, there is a unique quantitative pointed convex semilattice homomorphism extending $\hat f$. 
It follows from the uniqueness of $\bcl_{X}$ and the definition of  $\hat \bcl_{(X,d)}$ that $\hat \bcl_{(X,d)}$ is the unique quantitative pointed convex semilattice homomorphism extending $\hat f$. Hence, as $\hat \bcl_{(X,d)}$ is a morphism in $\Met$, it is non-expansive.\qed

\textbf{Proof of Theorem \ref{thm:downarrowmonadmet}}

First, we prove that $\eta^{\ldownarrow}$ and $\mu^{\ldownarrow}$ are natural transformations in $\Met$, i.e., that the naturality diagrams commute and that for any $(X,d)$,  $\eta^{\ldownarrow}_{(X,d)}$ and $\mu^{\ldownarrow}_{(X,d)}$ are non-expansive. 
As the unit $\eta^{\ldownarrow}$  and multiplication $\mu^{\ldownarrow}$ are respectively defined as the unit $\eta^{\cset^\downarrow}$  and multiplication $\mu^{\cset^\downarrow}$ of the $\Sets$ monad $\moncd$, and as we know that the naturality diagrams  commute for $\eta^{\ldownarrow}$ and $\mu^{\ldownarrow}$, we derive that they also commute for $\eta^{\ldownarrow}$  and $\mu^{\ldownarrow}$. 
As $\eta^{\ldownarrow}_{(X,d)}=\hat \bcl_{(X,d)} \circ \eta^{\lcset(+\oneset)}_{(X,d)}$, non-expansiveness of $\eta^{\ldownarrow}_{(X,d)}$ follows directly from non-expansiveness of $\hat \bcl_{(X,d)}$ (Lemma \ref{lem:met:kne}) and non-expansiveness of $\eta^{{\lcset(+\oneset)}}_{(X,d)}$.
As $\mu^{\ldownarrow}$ is defined as $\mu^{\cset^\downarrow}$, which in turn is the restriction of $\mu^{\cset(+\oneset)}$ to $\bot$--closed sets, and as 
$\mu^{{\lcset(+\oneset)}}$ is defined as $\mu^{\cset(+\oneset)}$, we have that $\mu^{\ldownarrow}$ is the restriction of $\mu^{\lcset(+\oneset)}$ to metric spaces whose sets are $\bot$--closed.
Then non-expansiveness of $\mu^{\ldownarrow}_{(X,d)}$ follows from non-expansiveness of $\mu^{\lcset(+\oneset)}_{(X,d)}$.
To conclude, it remains to verify that $\eta^{\ldownarrow}$  and $\mu^{\ldownarrow}$ satisfy the monad laws \eqref{diag-unitmonad} and \eqref{diag-multmonad}. This follows as $\eta^{\ldownarrow}$  and $\mu^{\ldownarrow}$ are respectively defined as the unit $\eta^{\cset^\downarrow}$  and multiplication $\mu^{\cset^\downarrow}$ of the $\Sets$ monad $\moncd$, and as we know that monad laws \eqref{diag-unitmonad} and \eqref{diag-multmonad} hold for $\eta^{\ldownarrow}$ and $\mu^{\ldownarrow}$, we derive that the laws also hold for $\eta^{\ldownarrow}$  and $\mu^{\ldownarrow}$. \qed

\textbf{Proof of Theorem \ref{thm:maincdownarrowmet}} 

The structure of the proof of Theorem \ref{thm:maincdownarrowmet} is very similar to that of Theorem \ref{main:theorem:moncd} and is based on the following technical lemmas.

\begin{lemma}\label{lem:kmonadmapmet}
    The family $\hat \bcl_{(X,d)}: ((\cset(X+\oneset), \hk(d)) \rightarrow (\lmoncd(X), \hk(d))$ is a monad map from the monad ${(\lcset(+\oneset))}$ to the monad $\lmoncd$.
\end{lemma}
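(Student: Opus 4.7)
The plan is to reduce the statement essentially entirely to its $\Sets$-level counterpart, Lemma \ref{lem:kmonadmap}, by exploiting the fact that both $\lcset(+\onemet)$ and $\lmoncd$ are constructed as liftings whose unit, multiplication, and functor action on arrows coincide (once metrics are forgotten) with those of their $\Sets$ counterparts $\cset(+\oneset)$ and $\moncd$, and that $\hat\bcl_{(X,d)}$ is by definition the $\Sets$-function $\bcl_X$.

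The first step is to verify that $\hat\bcl$ is a natural transformation in $\Met$, i.e., that each component is an arrow in $\Met$ and that the naturality square commutes. Non-expansiveness of each $\hat\bcl_{(X,d)}$ is exactly the content of Lemma \ref{lem:met:kne}, so the components are morphisms in $\Met$. Naturality as a family of set-functions was established in Lemma \ref{lem:kmonadmap}; since naturality in $\Met$ is just naturality of the underlying $\Sets$ natural transformation together with non-expansiveness of each component (already noted), this part is immediate.

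The second step is to verify the two monad map equations $\widehat{\eta^{\ldownarrow}} = \hat\bcl\circ \eta^{\lcset(+\onemet)}$ and $\hat\bcl\circ \mu^{\lcset(+\onemet)} = \mu^{\ldownarrow}\circ (\hat\bcl\diamond\hat\bcl)$. By definition of the $\Met$ monads $\lcset(+\onemet)$ (Proposition \ref{def:set:subcmet}) and $\lmoncd$ (Definition \ref{def:met:downarrow}), their units and multiplications are the same set-theoretic maps as those of $\cset(+\oneset)$ and $\moncd$, and $\hat\bcl_{(X,d)} = \bcl_X$ as a function. Hence these two equations hold as equalities of set-functions between the carriers, and Lemma \ref{lem:kmonadmap} already guarantees these equalities.

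There is no real obstacle in this proof: the only non-trivial ingredient needed beyond the $\Sets$ case is the non-expansiveness of $\hat\bcl_{(X,d)}$, and that has already been provided by Lemma \ref{lem:met:kne}. The proof therefore reduces to assembling two prior results: Lemma \ref{lem:met:kne} for the $\Met$-enrichment and Lemma \ref{lem:kmonadmap} for the underlying monad map axioms, observing at each point that the relevant data on the $\Met$ side agrees with the $\Sets$ side on underlying sets and functions.
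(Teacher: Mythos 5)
Your proposal is correct and follows essentially the same route as the paper's proof: non-expansiveness of each component is taken from Lemma \ref{lem:met:kne}, and the monad map laws are inherited from the $\Sets$-level Lemma \ref{lem:kmonadmap} since $\hat\bcl_{(X,d)}$ and the units and multiplications of $\lcset(+\onemet)$ and $\lmoncd$ coincide with their $\Sets$ counterparts as functions. No gaps.
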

\begin{proof}
By Lemma \ref{lem:met:kne}, $\hat \bcl_{(X,d)}$ is non-expansive, so it is a morphism in $\Met$. As $\hat \bcl_{(X,d)}$ is defined as $\bcl_{X}$ on $X$, by Lemma \ref{lem:kmonadmap} it satisfies the monad map laws \eqref{diag-monmap1} and \eqref{diag-monmap2}. 
\end{proof}

\begin{lemma}\label{embedding_lemma_cbot_met}
    There is a functor $U^{\hat \bcl}: \EM(\lmoncd) \rightarrow \EM({\lcset(+\mathbf{1})})$ defined on objects by $((A, d), \alpha) \mapsto ((A, d), \alpha \circ \hat \bcl_{(A, d)})$ and acting as identity on morphisms which is an embedding.
\end{lemma}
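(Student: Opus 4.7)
The plan is to mirror the proof strategy of the analogous $\Sets$-level lemma (Lemma \ref{embedding_lemma_cbotbh}), lifting each ingredient to the metric setting using the non-expansiveness result already secured in Lemma \ref{lem:met:kne} and the monad-map result of Lemma \ref{lem:kmonadmapmet}.

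First, I would invoke Proposition \ref{prop-MonmapFunctor} with the monad map $\hat\bcl : \lcset(+\onemet) \Rightarrow \lmoncd$ from Lemma \ref{lem:kmonadmapmet}. Since the proposition is stated for an arbitrary ambient category $\Cat$, it applies directly in $\Met$, yielding that $U^{\hat\bcl}((A,d),\alpha) := ((A,d),\alpha \circ \hat\bcl_{(A,d)})$ is a well-defined functor $\EM(\lmoncd) \to \EM(\lcset(+\onemet))$ that acts as the identity on morphisms. The only subtlety is checking that $\alpha \circ \hat\bcl_{(A,d)}$ is a morphism in $\Met$, but this is automatic: $\hat\bcl_{(A,d)}$ is non-expansive by Lemma \ref{lem:met:kne}, and $\alpha$ is non-expansive since $((A,d),\alpha) \in \EM(\lmoncd)$.

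Next, I would verify that $U^{\hat\bcl}$ is fully faithful. This is immediate from the fact that on morphisms it acts as the identity: a morphism $f\colon ((A,d),\alpha)\to ((A',d'),\alpha')$ in $\EM(\lmoncd)$ is a non-expansive map $f:(A,d)\to(A',d')$ satisfying $f\circ\alpha = \alpha'\circ\lmoncd(f)$, and its image is the same $f$ viewed as a morphism between $(A,\alpha\circ\hat\bcl_{(A,d)})$ and $(A',\alpha'\circ\hat\bcl_{(A',d')})$; the commutation condition on the image side reduces to the original one by naturality of $\hat\bcl$.

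Finally, for injectivity on objects, I would use that each component $\hat\bcl_{(A,d)}$ is surjective as a set function. Indeed, by Theorem \ref{thm:Kclosure}, the image of $\bcl_A$ is precisely $\moncdp A = \cset^\downarrow(A)$, which is the underlying set of $\lmoncd(A,d)$; thus any $\bot$--closed set $S$ equals $\bcl_A(S)$. Consequently, if $((A,d),\alpha)$ and $((A,d),\alpha')$ in $\EM(\lmoncd)$ satisfy $\alpha\circ\hat\bcl_{(A,d)} = \alpha'\circ\hat\bcl_{(A,d)}$, then $\alpha = \alpha'$ on all of $\lmoncd(A)$, so the two algebras coincide. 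None of these steps presents a real obstacle: the heavy lifting has already been done in Lemma \ref{lem:met:kne} and Lemma \ref{lem:kmonadmapmet}, and the present lemma is essentially a bookkeeping transfer of Lemma \ref{embedding_lemma_cbotbh} from $\Sets$ to $\Met$.
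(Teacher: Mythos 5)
Your proposal is correct and follows essentially the same route as the paper's proof: functoriality via Proposition \ref{prop-MonmapFunctor} applied to the monad map of Lemma \ref{lem:kmonadmapmet}, full faithfulness from the identity action on morphisms, and injectivity on objects from surjectivity of $\hat\bcl_{(A,d)}$, which the paper likewise deduces from Theorem \ref{thm:Kclosure}. The extra checks you spell out (non-expansiveness of the composite and the naturality argument for morphisms) are consistent with, and implicitly contained in, the paper's argument.
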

\begin{proof}
    The fact that $U^{\hat \bcl}$ is a functor follows from Lemma \ref{lem:kmonadmapmet} and  Proposition \ref{prop-MonmapFunctor}. Fully faithfulness follows as $U^{\hat \bcl}$ acts like the identity on morphisms. Injectivity on objects follows from surjectivity of $\hat \bcl_{(X,d)}$ for any metric space $(X,d)$, which in turn follows from surjectivity of $\bcl_{X}$ (Theorem \ref{thm:Kclosure}). Indeed, if $U^{\hat \bcl}(((A, d), \alpha))=U^{\hat \bcl}((A', d'), \alpha')$ then $A=A'$, $d=d'$, and $\alpha \circ \hat \bcl_{(A, d)} = \alpha' \circ \hat \bcl_{(A, d)}$, which in turn implies by surjectivity of $\hat \bcl_{(A,d)}$ that $\alpha = \alpha'$.
\end{proof}

 And lastly we obtain the isomorphism of the two categories  $\EM(\lmoncd)$ and $\qacat( \qetcsb)$ by restricting the isomorphisms of $\EM(\lcset(+\oneset))$ and $\qacat( \qetcs)$ witnessing the presentation of the $\Met$ monad $\lcset(+\oneset)$ with the theory of quantitative pointed convex semilattices. This amounts to proving the following two points:

\begin{lemma}\label{lmoncd-final-lemma}The following hold:
\begin{enumerate}
    \item Given any $((A,d), \alpha)\in \EM(\lmoncd)$, which is embedded via $U^{\hat \bcl}$ to $((A, d), \alpha\circ \hat \bcl_{(A, d)}) \in \EM({\lcset(+\oneset)})$, the quantitative pointed convex semilattice $P(((A, d), \alpha\circ \hat \bcl_{(A,d)}))$ satisfies the $\qbot$ quantitative equation, and therefore it belongs to $\qacat(\qetcsb)$.
    \item Given $\mathbb{A}\in \qacat(\qetcsb)$, which is embedded via $\iota$ to  $\mathbb{A}\in\qacat(\qetpcs)$, the Eilenberg-Moore algebra $P^{-1}(\mathbb{A})\in \EM({\lcset(+\oneset)}) $ belongs to the subcategory  $\EM(\lmoncd)$, i.e., it is in the image of $U^{\hat \bcl}$.
    \end{enumerate}
    \end{lemma}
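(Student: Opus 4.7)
The plan is to recycle the $\Sets$-level argument from Lemma \ref{moncd-final-lemma} and upgrade it to $\Met$ by checking that the relevant maps remain non-expansive. The key observation that makes this upgrade painless is that the metric on $\lmoncd(X)$ is, by Definition \ref{def:met:downarrow}, literally the restriction of the metric $\hk(d + \donemet)$ on $\lcset(X+\onemet)$ to the subset of $\bot$--closed sets. Hence $\lmoncd$ embeds as a metric subfunctor of $\lcset(+\onemet)$, and the restriction of any non-expansive map defined on the larger space remains non-expansive.

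For item (1), I would observe that the quantitative axiom $\qbot$ in Definition \ref{def:qetcsb} is $\emptyset \vdash x \oplus \star =_0 x$, i.e., a quantitative inference with empty premises and distance bound $0$. By the semantics of quantitative inferences recalled in Equation \eqref{quant_inferences}, this asserts precisely the set-theoretic equation $a \oplus^{\alpha \circ \hat \bcl_A} \point^{\alpha \circ \hat \bcl_A} = a$ for every $a \in A$. Since $\hat \bcl_{(A,d)}$ is defined on the underlying set as the $\Sets$ map $\bcl_A$, the chain of equalities used in the proof of Lemma \ref{moncd-final-lemma}.1 (unfolding the definition of $\hat P$, applying the $\EM$ unit and multiplication laws of Definition \ref{def:algebra-of-a-monad}, the definition of $\mu^{\cset(+\oneset)}$, and of $\bcl$) carries over verbatim. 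Therefore $\hat P((A,d), \alpha \circ \hat \bcl_{(A,d)})$ satisfies $\qbot$, and thus lies in $\qacat(\qetcsb)$.

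For item (2), given $\mathbb{A} \in \qacat(\qetcsb)$ with carrier $(A,d)$, let $((A,d),\alpha) = \hat P^{-1}(\mathbb{A})$. Forgetting the metric yields an algebra in $\EM(\cset(+\oneset))$ which, by Lemma \ref{moncd-final-lemma}.2, is already in the image of $U^{\bcl}$: there is a $\Sets$-map $\alpha' : \moncd(A) \to A$, defined as the restriction of $\alpha$ to $\bot$--closed sets, such that $\alpha = \alpha' \circ \bcl_A$ and $(A, \alpha') \in \EM(\moncd)$. The only new content is to lift $(A,\alpha')$ to $\EM(\lmoncd)$, i.e., to verify that $\alpha'$ is non-expansive from $(\lmoncd(A), \hk(d+\donemet))$ to $(A,d)$. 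This is immediate: since $(\lmoncd(A), \hk(d+\donemet))$ is a metric subspace of $(\lcset(A+\onemet), \hk(d+\donemet))$ and $\alpha$ is non-expansive on the larger space (being an arrow in $\EM(\lcset(+\onemet))$), its restriction $\alpha'$ is non-expansive. The $\EM(\lmoncd)$ unit and associativity diagrams for $\alpha'$ commute as $\Met$ diagrams because they already commute as $\Sets$ diagrams and all arrows involved are non-expansive. Finally $\alpha = \alpha' \circ \hat \bcl_{(A,d)}$ still holds at the level of sets, hence as a $\Met$ equation, showing $((A,d),\alpha)$ is in the image of $U^{\hat \bcl}$.

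The proof is therefore essentially a ``two-line'' reduction to Lemma \ref{moncd-final-lemma} and Lemma \ref{lem:met:kne}; I do not anticipate a genuine obstacle. The only subtle point worth spelling out is the semantic justification that $\qbot$ (a quantitative inference with no premises and bound $0$) reduces to an ordinary equation, so that one may invoke the $\Sets$ derivation unchanged in item (1); and dually, that $\qacat(\qetcsb)$ is a full subcategory of $\qacat(\qetpcs)$ characterized by satisfaction of exactly the same equation that cuts out $\acat(\etcsb)$ inside $\acat(\etpcs)$, so that the ``in the image of $U^{\hat \bcl}$'' check in item (2) need only add the non-expansiveness of the restriction $\alpha'$.
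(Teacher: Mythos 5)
Your proposal is correct and takes essentially the same route as the paper's own proof: both reduce items (1) and (2) to the $\Sets$-level Lemma~\ref{moncd-final-lemma} by noting that $\hat P$, $\hat P^{-1}$ and $\hat\bcl$ act as their $\Sets$ counterparts on underlying sets (with the bound-$0$ inference $\qbot$ amounting to the ordinary $\bot$ equation), so the only genuinely metric point is non-expansiveness of the restricted structure map $\hat\alpha|_{\cset^{\downarrow}(A)}$. The paper settles that point exactly as you do, by restricting the non-expansive $\hat\alpha$ to the metric subspace of $\bot$--closed sets.
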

    \begin{proof}
For item (1), let $((A,d), \hat \alpha)\in \EM(\lmoncd)$. Then $(A, \alpha)\in \EM(\moncd)$, where $\alpha$ is $\hat\alpha$ seen as a $\Sets$ function, and by Lemma  \ref{moncd-final-lemma}.1 we know that the pointed convex semilattice $P((A, \alpha\circ \bcl_A))$ satisfies the $\bot$ equation.
By definition of $\hat P$, the interpretation of the pointed convex semilattice operations in $\hat P(((A, d), \hat\alpha\circ \hat \bcl_{(A,d_{A})}))$ is the same as in $P((A, \alpha\circ \bcl_A))$, thus $\hat P(((A, d), \hat\alpha\circ \hat \bcl_{(A,d_{A})}))$ satisfies the $\qbot$ quantitative equation as well.

For item (2), let $\mathbb{A}= (A, \oplus^{\alga}, {\pplus p}^{\alga}, \star^{\alga}, d)\in \qacat(\qetcsb)$, which we see as a quantitative pointed convex semilattice via the embedding $\iota$, and let $\hat P^{-1}(\mathbb{A})=((A,d),\hat\alpha)\in \EM({\lcset(+\oneset)})$.
We show that $((A,d_{A}),\hat\alpha)$ is in the image of $U^{\hat \bcl}$ by proving that $\hat\alpha=\hat\alpha|_{\cset^{\downarrow}(A)} \circ \hat \bcl_{(A,d)}$ and $((A, d), \hat \alpha|_{\cset^{\downarrow}(A)}) \in \EM(\lmoncd)$, with $\hat\alpha|_{\cset^{\downarrow}(A)}: (\cset^{\downarrow}(A), \hk(d)) \to (A,d)$ defined as the restriction of $\hat\alpha$ to $\bot$--closed sets.

First, note that $(A, \oplus^{\alga}, {\pplus p}^{\alga}, \star^{\alga})$ is a pointed convex semilattice. 
By the definition of $\hat P^{-1}$, we have that $P^{-1}((A, \oplus^{\alga}, {\pplus p}^{\alga}, \star^{\alga}))=(A,\alpha)$, where $\alpha$ is $\hat\alpha$ seen as a $\Sets$ function. By the proof of Lemma \ref{moncd-final-lemma}.2, we know that $\alpha=\alpha|_{\cset^{\downarrow}(A)} \circ \bcl_{A}$, with $(A, \alpha|_{\cset^{\downarrow}(A)}) \in \EM(\moncd)$.
As $\hat \bcl_{(A,d)}$ is defined as $\bcl_{A}$, we derive from $\alpha=\alpha|_{\cset^{\downarrow}(A)} \circ \bcl_{A}$ in $\Sets$ that $\hat \alpha=\hat \alpha|_{\cset^{\downarrow}(A)} \circ \hat \bcl_{(A,)}$ in $\Met$. 
From $(A, \alpha|_{\cset^{\downarrow}(A)}) \in \EM(\moncd)$ we derive that $((A, d), \hat \alpha|_{\cset^{\downarrow}(A)})$ satisfies the laws \eqref{diag-algunit} and \eqref{diag-algmult} for $\moncd$-algebras. Moreover, as $\hat\alpha$ is non-expansive, also its restriction $\hat\alpha|_{\cset^{\downarrow}(A)}$ is non-expansive. 
Hence, $((A, d), \hat\alpha|_{\cset^{\downarrow}(A)}) \in \EM(\lmoncd)$.
\end{proof}

\subsection{Proofs for Section \ref{examples:section}}


We recall and establish some results needed for the proof of soundness and completeness of the proof system (Theorem \ref{thm:soundcomp}).
In what follows, we fix a theory $\et\in  \{\etpcs, \etcsbb, \etcsb \}$ and the corresponding monad $M \in \{ \cset(+\oneset), \cset +\oneset, \cset^\downarrow \}$ it presents, with isomorphism $P: \EM(M) \cong \acat(\et) :P^{-1}$ (following our presentation results for $\Sets$ monads). As standard, we let $M$ denote both the monad and the functor underlying the monad.

As the theory $\et$ presents the monad $M$, the monad $(M, \eta^{M},\mu^{M})$ and the term monad $(\termmon, \eta^{\termmon}, \mu^{\termmon})$ are isomorphic.
This means that there is a monad map $\sigma: \termmon \Rightarrow M$ which is an isomorphism, with $\sigma_{X}$ mapping each equivalence class $[t]_{\!/\et}\in \termmon(X)$ to the corresponding element of $M(X)$.

Hence, for the considered functor $M$, the coalgebra $\tau_{M}$ maps a process term $P$ to the element of $M(\pccs)$ defined as $\tau_{M}(P)=\sigma_{\pccs}([\tau(P)]_{\!/\et})$
and we can instantiate the definition of behavioural equivalence (Definition \ref{def:bis}) on process terms as follows:
$R$ is a behavioural equivalence if for all $P,Q\in R$ it holds that
$M(q_{R}) (\sigma_{\pccs}([\tau(P)]_{\!/\et})) = M(q_{R}) (\sigma_{\pccs}([\tau(Q)]_{\!/\et})).$

By the properties of the isomorphism monad map $\sigma$ we have the following lemma.
\begin{lemma}
\label{lem:isomm}
Let $f:X\to Y$ and let $\sigma: \termmon \Rightarrow M$ be an isomorphism monad map. Then for all $t,t'\in \terms {X} {\sigpcs}$ it holds:
\[ \termmon (f) ([t]_{\!/\et})=\termmon (f) ([t']_{\!/\et}) \text{ iff } M(f)(\sigma_{X}([t]_{\!/\et}))= M(f)(\sigma_{X}([t']_{\!/\et}))\]
\end{lemma}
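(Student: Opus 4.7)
The plan is to derive the biconditional directly from two facts: the naturality of $\sigma$ and the fact that $\sigma$ being an isomorphism of monads entails that each component $\sigma_Y$ is a bijection (and in particular, injective) in $\Sets$.

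First I would spell out naturality. Since $\sigma:\termmon \Rightarrow M$ is a natural transformation, applied to the morphism $f:X\to Y$ it gives the commutative square $M(f)\circ \sigma_X = \sigma_Y \circ \termmon(f)$. Applied to the elements $[t]_{\!/\et}, [t']_{\!/\et}\in \termmon(X)$ this yields
\[
M(f)(\sigma_X([t]_{\!/\et})) = \sigma_Y(\termmon(f)([t]_{\!/\et})), \qquad M(f)(\sigma_X([t']_{\!/\et})) = \sigma_Y(\termmon(f)([t']_{\!/\et})).
\]

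Next, I would use that $\sigma$ is an isomorphism of monads, so each component $\sigma_Y: \termmon(Y)\to M(Y)$ is a bijection and hence both injective and function-like. For the forward direction ($\Rightarrow$), assume $\termmon(f)([t]_{\!/\et}) = \termmon(f)([t']_{\!/\et})$; applying $\sigma_Y$ and then the two naturality identities above immediately gives $M(f)(\sigma_X([t]_{\!/\et})) = M(f)(\sigma_X([t']_{\!/\et}))$. For the backward direction ($\Leftarrow$), assume $M(f)(\sigma_X([t]_{\!/\et})) = M(f)(\sigma_X([t']_{\!/\et}))$; the same naturality identities rewrite this as $\sigma_Y(\termmon(f)([t]_{\!/\et})) = \sigma_Y(\termmon(f)([t']_{\!/\et}))$, and injectivity of $\sigma_Y$ then gives $\termmon(f)([t]_{\!/\et}) = \termmon(f)([t']_{\!/\et})$.

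There is no real obstacle here: the lemma is essentially a ``transport along a natural iso'' statement. The only subtlety worth making explicit is that the monad map $\sigma$ being an isomorphism (as asserted by our presentation results such as Proposition \ref{prop-knownpresentations}) must be read as a natural isomorphism in $\Sets$, so that each $\sigma_Y$ is indeed a bijection; this is what powers the injectivity step in the $(\Leftarrow)$ direction. Once this is noted, the proof is just the two lines of naturality plus injectivity above.
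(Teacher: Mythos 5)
Your proof is correct and follows essentially the same route as the paper's: apply the naturality square $M(f)\circ\sigma_X=\sigma_Y\circ\termmon(f)$ and then use that each component of the isomorphism monad map is a bijection to transfer the equality in both directions. You even state the naturality identity with the correct component $\sigma_Y$ on the codomain side, where the paper's write-up has a small subscript slip; nothing further is needed.
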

\begin{proof}
By the naturality of $\sigma$ it holds that for any $f:X\to Y$ and for any term $t$,  
\[\sigma_{X}(\termmon (f) ([t]_{\!/\et}))=M(f)(\sigma_{X}([t]_{\!/\et})).\]
As $\sigma_{X}$ is an isomorphism we have that for all $t,t' \in \terms X \sigpcs$:
\[[t]_{\!/\et}=[t']_{\!/\et} \text{ iff }\sigma_{X}([t]_{\!/\et})=\sigma_{X}([t']_{\!/\et}) \]
which allows us to conclude:
\[\termmon (f) ([t]_{\!/\et})=\termmon (f) ([t']_{\!/\et})
\text{ iff }\sigma_{X}(\termmon (f) ([t]_{\!/\et}))=\sigma_{X}(\termmon (f) ([t']_{\!/\et})) 
\text{ iff }M(f)(\sigma_{X}([t]_{\!/\et}))= M(f)(\sigma_{X}([t']_{\!/\et})).\]
\end{proof}

Given an equivalence relation $R\subseteq \pccs\times\pccs$ and a term $t\in \terms {\pccs} {\sigpcs}$, where $P_{1},...,P_{n}$ are the $\pccs$ terms occurring in $t$, we let $t_{\!/R} \in \terms {\pccs_{\!/R}} {\sigpcs}$ denote the term obtained by substituting $P_{i}$ with the equivalence class $[P_{i}]_{\!/R}$, for every $i$ from $1$ to $n$.

\begin{lemma}
\label{lem:prov1}
For all $t,t'\in \terms {\pccs} {\sigpcs}$,
\[\et \ \uplus \ \prov \ \vdash_E \  t = t' \text{ iff } \et  \ \vdash_E \  t_{\!/\prov} = t'_{\!/\prov} \]
\end{lemma}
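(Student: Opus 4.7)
The plan is to treat the quotient map $\pccs \to \pccs_{\!/\prov}$ as a substitution of generators and exploit the fact that equational derivations are closed under substitution of both generators and variables. Let $q: \pccs \to \pccs_{\!/\prov}$ denote the quotient $P \mapsto [P]_{\!/\prov}$, viewed as a substitution on the generators of the free $\sigpcs$-algebra, and let $q^*: \terms{\pccs}{\sigpcs} \to \terms{\pccs_{\!/\prov}}{\sigpcs}$ be its term-level extension. By construction $q^*(t) = t_{\!/\prov}$.

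For the forward direction ($\Rightarrow$), assume a derivation witnessing $\et \uplus \prov \vdash_E t = t'$ and apply the substitution $q$ to every term appearing in it. Each axiom of $\et$ remains an axiom instance of $\et$: the variables universally quantified in the axioms of $\et$ are disjoint from the generators in $\pccs$, so a generator-level substitution turns one instance into another. Each generating axiom $P = Q$ coming from $\prov$ is transformed into $[P]_{\!/\prov} = [Q]_{\!/\prov}$, which is reflexivity, since $P \prov Q$ precisely means that $P$ and $Q$ share an equivalence class. The transformed derivation therefore uses only $\et$ and establishes $q^*(t) = q^*(t')$, i.e. $t_{\!/\prov} = t'_{\!/\prov}$.

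For the backward direction ($\Leftarrow$), choose any section $s: \pccs_{\!/\prov} \to \pccs$ of $q$, that is, pick a representative $s([P]_{\!/\prov}) \in [P]_{\!/\prov}$ for each equivalence class. Given $\et \vdash_E t_{\!/\prov} = t'_{\!/\prov}$, apply the substitution $s$ term-by-term throughout the derivation; again each $\et$-axiom instance is preserved, so we obtain $\et \vdash_E s^*(t_{\!/\prov}) = s^*(t'_{\!/\prov})$. Now $s^*(t_{\!/\prov})$ is exactly $t$ with each generator $P$ replaced by $s([P]_{\!/\prov})$, and by the choice of section we have $P \prov s([P]_{\!/\prov})$ for every such $P$. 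The equations $P = s([P]_{\!/\prov})$ are therefore available in $\et \uplus \prov$, and repeated application of the congruence rules yields $t = s^*(t_{\!/\prov})$ and $t' = s^*(t'_{\!/\prov})$ in $\et \uplus \prov$. Chaining these with the lifted derivation by transitivity gives $\et \uplus \prov \vdash_E t = t'$, as required.

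The argument is essentially bookkeeping over the shape of an equational derivation; the only mildly delicate point, not really an obstacle, is to keep straight that the axioms of $\et$ are universally quantified over variables disjoint from $\pccs$ (so they survive any substitution of generators), whereas the axioms of $\prov$ are concrete equations between generators (so they collapse to reflexivity under $q$ and to congruence rewrites under $s$).
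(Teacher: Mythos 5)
Your proposal is correct and follows essentially the same route as the paper's proof: the forward direction replaces every generator by its $\prov$-class throughout the derivation (turning $\prov$-axioms into reflexivity), and the backward direction picks representatives of the classes, transports the $\et$-derivation along that choice, and then uses the $\prov$-axioms together with congruence and transitivity to pass between $t$, $t'$ and their representative-substituted versions. The only difference is presentational — you phrase the representative choice as a section of the quotient map and make explicit that $\et$'s axioms survive generator substitution because their variables are disjoint from $\pccs$, which the paper leaves implicit.
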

\begin{proof}
For the right-to-left implication, we show that any derivation of $t_{\!/\prov} = t'_{\!/\prov}$ in equational logic from the axioms of $\et$ can be turned into a derivation of $t = t'$ from the axioms $\et \ \uplus \ \prov$ as follows. First, we choose for every equivalence class $\pccs_{\!/\prov}$ a unique representative and we consider the terms $t[P_{1},...,P_{n}],t'[Q_{1},...,Q_{m}]$ obtained by substituting to each occurrence of elements of $\pccs_{\!/\prov}$ in $t_{\!/\prov},t'_{\!/\prov}$ the chosen representative of the equivalence classes. Then, by substituting in the proof of $t_{\!/\prov} = t'_{\!/\prov}$ any occurrence of an equivalence class with the corresponding representative, we obtain a proof of $t[P_{1},...,P_{n}]=t'[Q_{1},...,Q_{m}]$ from axioms $\et \ \uplus \ \prov$ (note that all occurrences of $P_{\!/\prov} = Q_{\!/\prov}$ as an axiom are now substituted by $P=Q$, which is an axiom in the theory $\et \ \uplus \ \prov$).
Then, by using the axioms $\prov$ and the congruence deductive rule of equational logic, we derive in the theory with axioms $\et \ \uplus \ \prov$ that $t=t[P_{1},...,P_{n}]$.
Analogously, we derive $t'[Q_{1},...,Q_{m}]=t'$, so by transitivity we obtain a proof of $t=t'$ in the theory with axioms $\et \ \uplus \ \prov $.

For the left-to-right implication we prove, by structural induction on the derivation, that any derivation of $t = t'$ in the theory with axioms $\et \ \uplus \ \prov$ becomes a derivation of $t_{\!/\prov} = t'_{\!/\prov}$ in the theory  with axioms $\et$ by substituting all occurrences of $P\in \pccs$ in the proof of $t = t'$ with the corresponding equivalence class $P_{\!/\prov}$. In the proof, axioms $P=Q$ in $\prov$ are substituted by the reflexivity axiom $P_{\!/\prov}=Q_{\!/\prov}$.
\end{proof}

From the previous lemmas we derive the following result which is at the basis of the proof of Theorem \ref{thm:soundcomp}.

\begin{lemma}
\label{lem:prov2}
For all $t,t'\in \terms {\pccs} {\sigpcs}$,
\[\et \ \uplus \ \prov \ \vdash_E \  t = t' \text{ iff } M(q_{\prov})(\sigma_{X}([t]_{\!/\et})) = M(q_{\prov})(\sigma_{X}([t]_{\!/\et})) \]
\end{lemma}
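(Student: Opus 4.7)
The plan is to chain together the two preceding lemmas via the standard description of how the term monad $\termmon$ acts on morphisms. Specifically, I will show the equivalence by passing through three intermediate conditions: (i) derivability of $t_{/\prov}=t'_{/\prov}$ from $\et$ alone, (ii) equality of the corresponding $\et$-equivalence classes in $\termmon(\pccs_{/\prov})$, and (iii) equality after applying $\termmon(q_{\prov})$.

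First I would apply Lemma \ref{lem:prov1}, which gives the equivalence
$$\et \ \uplus \ \prov \ \vdash_E \ t = t' \quad \Longleftrightarrow \quad \et \ \vdash_E \ t_{/\prov} = t'_{/\prov}.$$
By the very definition of the quotient $\termmon(\pccs_{/\prov}) = \terms{\pccs_{/\prov}}{\sigpcs}_{/\et}$, the right-hand side is equivalent to the equality $[t_{/\prov}]_{/\et} = [t'_{/\prov}]_{/\et}$ in $\termmon(\pccs_{/\prov})$.

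Next, I would use the explicit description of the action of the term functor on morphisms: for any $f\colon X\to Y$, the map $\termmon(f)\colon \termmon(X)\to\termmon(Y)$ sends $[s]_{/\et}$ to $[s[f(x_1)/x_1,\ldots,f(x_n)/x_n]]_{/\et}$, i.e., it substitutes each variable of $s$ according to $f$. Applied to $f = q_{\prov}\colon \pccs\to \pccs_{/\prov}$, this yields exactly
$$\termmon(q_{\prov})([t]_{/\et}) = [t_{/\prov}]_{/\et}, \qquad \termmon(q_{\prov})([t']_{/\et}) = [t'_{/\prov}]_{/\et}.$$
Hence $[t_{/\prov}]_{/\et} = [t'_{/\prov}]_{/\et}$ is equivalent to $\termmon(q_{\prov})([t]_{/\et}) = \termmon(q_{\prov})([t']_{/\et})$.

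Finally, I would invoke Lemma \ref{lem:isomm}, applied to $f = q_{\prov}$ and to $t,t'$, to transport this equality from $\termmon$ to $M$, obtaining
$$M(q_{\prov})(\sigma_{\pccs}([t]_{/\et})) = M(q_{\prov})(\sigma_{\pccs}([t']_{/\et})).$$
Concatenating these equivalences yields the statement. There is no real obstacle here: Lemmas \ref{lem:prov1} and \ref{lem:isomm} do the substantive work, and the only thing to check carefully is the description of $\termmon(q_{\prov})$ on equivalence classes of terms, which is a routine consequence of $\termmon$ being the free algebra functor modulo $\et$.
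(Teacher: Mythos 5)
Your proposal is correct and follows essentially the same route as the paper's proof: Lemma \ref{lem:prov1}, then the identification $\termmon(q_{\prov})([t]_{/\et}) = [t_{/\prov}]_{/\et}$ (which the paper compresses into ``by the definition of $t_{/\prov}$'' and you spell out via the action of the term functor on morphisms), then Lemma \ref{lem:isomm} to transfer the equality along the isomorphism $\sigma$ to $M$.
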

\begin{proof}
By Lemma \ref{lem:prov1} we have
\[\et \ \uplus \ \prov \ \vdash_E \  t = t' \text{ iff } \et  \ \vdash_E \  t_{\!/\prov} = t'_{\!/\prov} \]
and by the definition of $t_{\!/\prov}$ we have
\[\et  \ \vdash_E \  t_{\!/\prov} = t'_{\!/\prov} \text{ iff } \termmon(q_{\prov})([t]_{\!/\et}) = \termmon(q_{\prov})([t']_{\!/\et})\]
By Lemma \ref{lem:isomm} it holds
\[\termmon(q_{\prov})([t]_{\!/\et}) = \termmon(q_{\prov})([t']_{\!/\et}) \text{ iff } M(q_{\prov})(\sigma_{X}([t]_{\!/\et})) = M(q_{\prov})(\sigma_{X}([t']_{\!/\et}))\]
and so we conclude.
\end{proof}

For the completeness result, we also use the following inductive characterisation of process terms of depth at most $n$.

For any $n\geq0$, we define the set $\pccsn n $ of process terms of depth at most $n$, by induction on $n$.
\[\pccsn 0=\emptyset \quad \pccsn {n+1}= \{P\in \pccs \mid \tau(P) \in \terms {\pccsn n} {\sigpcs}\}\cup \pccsn n
\]

Note that for $n\leq m$ it holds $\pccsn {n} \subseteq \pccsn m$.

\begin{lemma}
For all $P\in \pccs$ there exists an $n\geq 0$ such that $\tau(P) \in \terms {\pccsn n} {\sigpcs}$.
\end{lemma}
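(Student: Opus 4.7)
The plan is to proceed by straightforward structural induction on the process term $P \in \pccs$, using the inductive definition of $\pccsn n$ together with the defining clauses of $\tau$ from Figure \ref{fig:pccs}. A useful preliminary observation, immediate from the definition, is the monotonicity $\pccsn {n_1} \subseteq \pccsn {n_2}$ whenever $n_1 \leq n_2$; consequently $\terms {\pccsn {n_1}} \sigpcs \subseteq \terms {\pccsn {n_2}} \sigpcs$. This will allow me to harmonise bounds coming from different subterms.

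For the base case $P = \nil$, the continuation $\tau(\nil) = \point$ is a closed term of the signature $\sigpcs$ (the constant symbol $\point$), hence it lies in $\terms X \sigpcs$ for any set $X$; in particular $\tau(\nil) \in \terms {\pccsn 0} \sigpcs$, so $n = 0$ suffices. For the prefix case $P = a.Q$, the continuation is just $\tau(a.Q) = Q$, i.e.\ a term consisting of a single generator from $\pccs$. Applying the induction hypothesis to $Q$ yields $m$ with $\tau(Q) \in \terms {\pccsn m} \sigpcs$, which by the defining clause of $\pccsn{m+1}$ gives $Q \in \pccsn{m+1}$; hence $\tau(a.Q) = Q \in \terms {\pccsn{m+1}} \sigpcs$, and $n = m+1$ works.

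For the two binary cases $P = P_1 \ndplusccs P_2$ and $P = P_1 \pplusccs p P_2$, the continuations are $\tau(P_1) \oplus \tau(P_2)$ and $\tau(P_1) +_p \tau(P_2)$ respectively. By the induction hypothesis applied to $P_1$ and $P_2$, there exist $n_1,n_2$ with $\tau(P_i) \in \terms{\pccsn {n_i}}\sigpcs$. Setting $n = \max(n_1,n_2)$ and invoking the monotonicity observation above, both $\tau(P_1)$ and $\tau(P_2)$ belong to $\terms{\pccsn n}\sigpcs$, and since this set is closed under applying the signature operations $\oplus$ and $+_p$, the whole continuation also lies in $\terms{\pccsn n}\sigpcs$.

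There is really no obstacle here: the statement is a purely syntactic bookkeeping fact about the depth of $\tau(P)$, and no property of the equational theory $\et$ or of the chosen monad $F$ enters the argument. The only point to be slightly careful about is that $\pccsn {n+1}$ is defined to contain $\pccsn n$ explicitly, which ensures the monotonicity used in the binary cases; the argument would otherwise be identical.
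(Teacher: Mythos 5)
Your proof is correct and follows essentially the same route as the paper: structural induction on the process grammar, with $\star \in \terms{\pccsn 0}{\sigpcs}$ for $\nil$, the maximum of the two bounds (via monotonicity of $\pccsn n$) for the binary operators, and the bound increased by one in the prefix case. If anything, your handling of $P = a.Q$ is slightly more careful than the paper's, which states $\tau(a.P_1)=P_1\in\terms{\pccsn n}{\sigpcs}$ where the bound should be $\pccsn{n+1}$, exactly as you argue.
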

\begin{proof}
We show by induction on the definition of the process grammar that for all $P\in \pccs$ there exists an $n\geq 0$ such that $\tau(P) \in \terms {\pccsn n} {\sigpcs}$. 
For $P=\nil$ we have $\tau(\nil)=\star\in \terms {\emptyset} {\sigpcs}= \terms {\pccsn 0} {\sigpcs}$.
Now assume by inductive hypothesis that $\tau(P_{1}) \in \terms {\pccsn n} {\sigpcs}$ and $\tau(P_{2}) \in \terms {\pccsn m} {\sigpcs}$ for some $n,m\geq 0$.
If $P=P_1  \ndplusccs P_2$ we have $\tau(P_1  \ndplusccs P_2)= \tau(P_{1}) \oplus \tau(P_{2})$, which is in $\terms {\pccsn {\max\{n,m\}}} \sigpcs$ by the inductive hypothesis.
The case $P=P_{1} \pplusccs p P_{2}$ follows analogously.
For $P=a.P_{1}$ we have $\tau(a.P_{1})=P_1 \in \terms {\pccsn n} {\sigpcs}$.
\end{proof}


This allows us to derive that:


\begin{corollary}
\label{cor:pccsn}
$\pccs=\cup_{n\geq0} \pccsn n$.
\end{corollary}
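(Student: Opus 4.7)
The plan is to derive this corollary almost immediately from the preceding lemma, which asserts that for every $P \in \pccs$ there exists some $n \geq 0$ with $\tau(P) \in \terms{\pccsn n}{\sigpcs}$. The only thing to check is how that conclusion lifts from statements about continuations to statements about membership of $P$ itself in some $\pccsn k$.

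First I would observe that the inclusion $\cup_{n\geq 0} \pccsn n \subseteq \pccs$ is immediate from the very definition of $\pccsn n$, since each $\pccsn n$ is constructed as a subset of $\pccs$ (the base case is empty and the successor case unions with $\pccs$-terms only). So the content lies entirely in the reverse inclusion $\pccs \subseteq \cup_{n\geq 0} \pccsn n$.

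For the reverse inclusion, I would fix an arbitrary $P \in \pccs$ and apply the preceding lemma to obtain an $n \geq 0$ with $\tau(P) \in \terms{\pccsn n}{\sigpcs}$. By the inductive clause of the definition of $\pccsn{n+1}$, this yields $P \in \pccsn{n+1}$, and hence $P \in \cup_{n\geq 0} \pccsn n$, as required. This completes the two inclusions.

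There is no real obstacle here: the work has already been done in the preceding lemma (the structural induction on the process grammar), and the corollary is just a repackaging of its conclusion in terms of the stratification $\{\pccsn n\}_{n \geq 0}$. The only thing one might want to make explicit, for the sake of clarity, is the trivial bookkeeping observation that $\tau(P) \in \terms{\pccsn n}{\sigpcs}$ is precisely the condition used in the defining clause of $\pccsn{n+1}$, so the index shift by one is harmless.
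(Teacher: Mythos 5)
Your proposal is correct and matches the paper's intended derivation: the corollary is stated right after the lemma giving $\tau(P)\in\terms{\pccsn n}{\sigpcs}$ for some $n$, and the paper's (implicit) argument is exactly your observation that this condition puts $P$ into $\pccsn{n+1}$, the converse inclusion being trivial by construction.
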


We are finally ready to prove Theorem \ref{thm:soundcomp}.

{\bf Proof of Theorem \ref{thm:soundcomp}}

We first prove the soundness of the proof system, i.e.,
\[P \prov Q \text{ implies } P \simeq_M Q\]
by showing that $\prov$ is a behavioural equivalence.
By definition $P \prov Q$ implies $\et \ \uplus \ \prov  \ \vdash_E \  \tau(P) = \tau(Q)$. By Lemma \ref{lem:prov2}, this is equivalent to 
$M(q_{\prov})(\sigma_{X}([\tau(P)]_{\!/\et}))) = M(q_{\prov})(\sigma_{X}([\tau(Q)]_{\!/\et})))$.
Hence, by Definition \ref{def:bis}, $P \prov Q$ is a behavioural equivalence and the proof is completed.

For the completeness proof, we need to show that for all $P,Q \in \pccs$ and for any behavioural equivalence $R$, 
$$P \,R\, Q \text{ implies } P\prov Q.$$

Using Corollary \ref{cor:pccsn}, we proceed by induction on $n\geq 0$ showing that the implication holds for all $P,Q \in \pccsn n$.

The case $n=0$ is trivial, as $\pccsn 0=\emptyset$.

Suppose $P,Q \in \pccsn {n+1}$ and let $R$ be a behavioural equivalence. If $P \,R \,Q$ then by the definition of behavioural equivalence it holds $M(q_{R})(\sigma_{X}([\tau(P)]_{\!/\et})) = M(q_{R})(\sigma_{X}([\tau(Q)]_{\!/\et}))$. Now, as by the definition of $\pccsn {n+1}$ all process terms $P'$
occurring in $\tau(P)$, $\tau(Q)$ have depth at most $n$, we can apply the inductive hypothesis to derive that on all such process terms $P'$ it holds $q_{R}(P')\subseteq q_{\prov}(P')$, i.e., the set of process terms $R$--equivalent to $P'$ is included in the set of terms $q_{\prov}$--equivalent to $P'$.
Hence, it follows from $M(q_{R})(\sigma_{X}([\tau(P)]_{\!/\et})) = M(q_{R})(\sigma_{X}([\tau(Q)]_{\!/\et}))$
 that $M(q_{\prov})(\sigma_{X}([\tau(P)]_{\!/\et})) = M(q_{\prov})(\sigma_{X}([\tau(Q)]_{\!/\et}))$.
 By Lemma \ref{lem:prov2}, this is equivalent to $\et \ \uplus \ R \ \vdash_E \  \tau(P) = \tau(Q)$, which implies $P \prov Q$.

\end{document}